\theoremstyle{plain}
\newtheorem{corollary}[thm]{Corollary}
\newtheorem{lemma}[thm]{Lemma}
\newtheorem{proposition}[thm]{Proposition}
\newtheorem{theorem}[thm]{Theorem}
\theoremstyle{definition}
\newtheorem{definition}[thm]{Definition}
\newtheorem{remarks}[thm]{Remark}
\newtheorem{notation}[thm]{Notation}
 \newif\ifcolour
\newcommand{\mfb}[1]{\ifcolour
            {\color{yellow}{#1}} \else {#1} \fi}
 \newcommand{\martinb}[1]{\ifcolour
                          {\color{red}{#1}} \else {#1} \fi}
 \newcommand{\someCon}[2]{\ENCan{!#1}{#2}}
 \newcommand{\allCon}[2]{[!#1]{#2}}
 \newcommand{\ARROWb}[1]{\FS}
 \newcommand{\ARROW}[1]{\Rightarrow}
 \newcommand{\ENCan}[1]{\langle #1 \rangle}
 \newcommand{\NI}{\noindent}
 \newcommand{\TVX}{{\TYPEVAR{X}}}
 \newcommand{\TVXscript}{{\TYPEVARsc{X}}}
 \newcommand{\TVY}{{\TYPEVAR{Y}}}
 \newcommand{\TYPEVAR}[1]{{\text{\footnotesize\sc #1}}}
 \newcommand{\TYPEVARsc}[1]{{\text{\scriptsize\sc #1}}}
 \newcommand{\ASET}[1]{\{ #1 \}}
 \newcommand{\VEC}[1]{{\tilde{#1}}}                    
 \newcommand{\inj}[2]{\mathtt{in}_{#1}(#2)}
 \newcommand{\LOGICINJ}[3]{\mathsf{inj}_{#2}^{#1}(#3)}
 \newcommand{\CASE}[3]{\mathtt{case}\ #1\ \mathtt{of}\ \{\inj{i}{#2}.#3\}_{i \in \{1, 2\}}}
 \newcommand{\LET}[3]{\mathtt{let}\ #1=\, #2\ \mathtt{in}\ #3}
 \newcommand{\UNIT}{\mathsf{Unit}} 
 \newcommand{\FS}{\!\Rightarrow\!}
 \newcommand{\SKIP}{\mathtt{skip}}
 \newcommand{\FV}[1]{{\sf fv}(#1)}
 \newcommand{\FL}[1]{{\sf fl}(#1)}
 \newcommand{\fv}[1]{\mbox{\sf fv}(#1)}
 \newcommand{\BOOL}{\mathsf{Bool}}
 \newcommand{\NAT}{\mathsf{Nat}}
 \newcommand{\proves}{\vdash} 
 \newcommand{\WB}{\approx}                
 \newcommand{\MSUBS}[2]{[#1/#2]}
 \newcommand{\MMSUBS}[2]{[#1/#2;#2/#1]}
 \newcommand{\RED}{\longrightarrow}  
 \newcommand{\DEFEQ}{\stackrel{\mbox{\scriptsize def}}{=}}  
 \newcommand{\NUL}{\varepsilon}
 \newcommand{\IFTHENELSE}[3]{\mathtt{if}\ #1\ \mathtt{then}\ #2\ \mathtt{else}\ #3}
 \newenvironment{myfigure}{\begin{figure}\rule{\linewidth}{.5pt}\vspace{3.2mm}}{\rule{\linewidth}{.5pt}\end{figure}}
 \newcommand{\infer}[2]{\frac{\displaystyle{ #1 }}{\displaystyle{ #2 }}}
 \newcommand{\inferBase}[2]{#2}
 \newcommand{\ATb}[2]{#1\!:\!#2}
 \newcommand{\AT}[2]{#1\!:\!#2}
 \newcommand{\truth}{\mathsf{T}}
 \newcommand{\falsity}{\mathsf{F}}
 \newcommand{\minus}{{\mbox{\bf\small -}}}
 \newcommand{\CONVERGES}{\Downarrow}
 \newcommand{\ASSERT}[4]{\{#1\}\; #2 :_{#3} \{#4\}}
 \newcommand{\SHORTASSERT}[3]{\{#1\}\; #2  \{#3\}}
 \newcommand{\EVALASSERT}[5]{\{#1\}\; #2 \bullet #3 \comesbefore #4\; \{#5\}}
\newcommand{\CAL}[1]{\mathcal{#1}}
 \newcommand{\semb}[1]{\lbrack\!\lbrack #1 \rbrack\!\rbrack}
 \newcommand{\true}{\mathsf{t}}
 \newcommand{\false}{\mathsf{f}}
 \newcommand{\TYPES}[3]{#1 \vdash #2 : #3}
 \newcommand{\LITEQ}{\equiv}              
 \newcommand{\RULENAME}[1]{\mbox{\emph{#1}}}
  \newcommand{\PAIR}[2]{\langle #1, #2 \rangle}
 \renewcommand{\boldsymbol}[1]{\mathbf{#1}}
 \newcommand{\bmq}{\begin{quote}$}
 \newcommand{\emq}{$\end{quote}}
 \newcommand{\bitem}{\begin{list}{$\bullet$}{%
 \setlength{\parsep}{0mm}%
 \setlength{\rightmargin}{\leftmargin}%
 \setlength{\topsep}{0mm}%
 \setlength{\parskip}{0mm}}}
 \newcommand{\eitem}{\end{list}}
 \newcommand{\OOR}{\bigvee}
 \newcommand{\THENS}{{  \Rightarrow }}
 \newcommand{\THENs}{{\  \Rightarrow\ }}
 \newcommand{\THEN}{{\,\,\, \; \Rightarrow \; \,\,\,}}
 \newcommand{\IFFs}{\Leftrightarrow}
 \newcommand{\IFFDEF}{\stackrel{\mbox{\scriptsize def}}{\ \LITEQ \ }}
 \newcommand{\SET}[1]{{\bf #1}}                     
 \newcommand{\ENCda}[1]{\langle\!\langle #1 \rangle\!\rangle}
 \newcommand{\ENCdb}[1]{\lbrack\!\lbrack #1 \rbrack\!\rbrack}
 \newcommand{\OL}{\overline}  
 \newcounter{anumber}
 \newcounter{analphabet}
\newcommand{\TPL}[2]{\ENCan{#1,\; #2}}
 \newcommand{\CD}{\cdot}
 \newcommand{\CDs}{\!\cdot\!}
 \newcommand{\MRED}{\rightarrow\!\!\!\!\rightarrow}
 \newenvironment{reasoning}
 {\begin{displaymath} \begin{array}{rclll}}
 {\end{array} \end{displaymath}}
 \newcommand{\dom}[1]{{\sf dom}(#1)}      
 \newcommand{\ID}{\mathsf{id}}      
 \newcommand{\IGNORE}[1]{}
 \newcommand{\FPRM}[2]{\mbox{${{#1}\choose{#2}}$}}
 \newcommand{\OMEGAalg}[1]{{\bf $\mbox{$\Omega$-Alg}^{\SET{C}}$}}
 \newcommand{\new}{\nu}
 \newcommand{\rulenameB}[1]{[\text{\em #1}]}
 \newcommand{\comesbefore}{=}
 \newcommand{\ACTIVE}[1]{{\ENCan{\VEC{y}_i\;\mbox{\footnotesize
 active}}}}
 \def\fps@figure{tp}      
 \def\fps@table{tp}
 \newcommand{\DEREF}[1]{{!{#1}}}
 \newenvironment{BITEM}%
 {\rm%
  \begin{list}%
  {$\bullet$}%
  {\addtolength{\labelwidth}{10mm}%
   \addtolength{\leftmargin}{0mm}%
   \setlength{\rightmargin}{0pt}%
   \setlength{\itemsep}{1mm}%
   \setlength{\parsep}{0pt}}}%
  {\end{list}}
 \newcommand{\SHORTVERSIONONLY}[1]{{}}
 \newenvironment{myfiguretwo}{\begin{figure*}\rule{\linewidth}{.5pt}}
 {\rule{\linewidth}{.5pt}\end{figure*}}
 \newcommand{\PCFv}{PCFv}
 \newcommand{\LETNEW}[3]{\mathtt{new}\ #1:=#2\ \mathtt{in}\ #3}
 \newcommand{\factorial}[1]{#1\boldsymbol{!}}
 \newcommand{\New}[2]{{(\nu\: #1)#2}}
 \newcommand{\MAP}[1]{\ENCdb{#1}}
 \newcommand{\MMMM}{\CAL{M}}
 \renewcommand{\ARROW}[1]{\stackrel{\!\!\!#1}{\Rightarrow}}
 \newcommand{\linebreakA}{\\[1.2mm]}
 \newcommand{\commentOut}[1]{}
 \newcommand{\commentout}[1]{}
 \newcommand{\RC}[1]{\CAL{E}\![#1]}
 \newcommand{\RCCD}{\RC{\,\CD\,}}
 \newcommand{\thankstoBig}[1]{\hfill\pcom{(#1)}\nextLineBig}
 \newcommand{\nothanks}{\nextLineBig}
 \newenvironment{inference}
 {\begin{displaymath} \begin{array}{lrl}}%
 {\end{array} \end{displaymath}}
 \newcommand{\LSUBS}[2]{\llbracket #1 / #2 \rrbracket}
 \renewcommand{\LSUBS}[2]{\{\!\!|#1/#2|\!\!\}}
 \newcommand{\nextLine}{\\[1mm] \hline \\[-2.5mm]}
 \newcommand{\nextLineBig}{\\[1mm] \hline\\[-2.5mm]}
 \newcommand{\LINESKELETON}[3]{\!\!\!#1#2\;\hfill\quad\text{{#3}}\!\!\!}
 \newcommand{\LASTLINEs}[3]{#1#2}
 \newcommand{\LASTLINE}[3]{\LINESKELETON{#1}{#2}{#3}}
 \newcommand{\LINE}[3]{\LINESKELETON{#1}{#2}{#3}\nextLine}
 \newcommand{\LINEs}[2]{#1#2\nextLine}
 \newenvironment{DERIVATION}{\[\begin{array}{l}}{\end{array}\]}
 \newcommand{\NOANCHORASSERT}[3]{\{#1\}\ #2\ \{ #3\}}
 \newcommand{\ONEPREMISERULE}[2]{\infer{#1}{#2}}
 \newcommand{\TWOPREMISERULE}[3]{\infer{#1 \quad #2}{#3}}
\renewcommand{\RULENAME}[1]{\text{\emph{#1}}}
\newcommand{\LEFTONEPREMISERULENAMED}[3]{[\RULENAME{#1}]\,\ONEPREMISERULE{#2}{#3}}
\newcommand{\LEFTTWOPREMISERULENAMED}[4]{[\RULENAME{#1}]\,\TWOPREMISERULE{#2}{#3}{#4}}
\newcommand{\ASSERTSURFACE}[5]{\ASSERT{#1}{#2}{#3}{#4} @ #5}
\newcommand{\NOANCHORASSERTSURFACE}[4]{\NOANCHORASSERT{#1}{#2}{#3} @ #4}
\newcommand{\MODELTYPES}[2]{#1 \vdash #2}
\newcommand{\EXPRESSIONTYPES}[3]{#1 \vdash #2 : #3}
\newcommand{\LOGIC}[1]{\mathsf{#1}}
\newcommand{\INJ}[2]{\mathtt{inj}_{#1}(#2)}
\newcommand{\INJtp}[3]{\mathtt{inj}^{#3}_{#1}(#2)}
\newcommand{\SUBST}[2]{[#1/#2]}
\newcommand{\congLogic}{\cong_{\CAL{L}}}
\newcommand{\converges}{\Downarrow}
\newcommand{\ncl}[2]{\mathsf{lc}(#1, #2)}
\newcommand{\LETREF}[3]{\LET{#1}{\refPrg{#2}}{#3}}
\newcommand{\EVOLVES}{\!\leadsto\!}
\newcommand{\EVOLVESinc}[1]{\!\stackrel{#1}{\leadsto}\!}
\newcommand{\LocalInv}[3]{\mathsf{Inv}(#1, #2, #3)}
\newcommand{\injection}{\mathsf{inj}}
\newcommand{\fAct}[1]{#1\pmb{!}}
\newcommand{\reftype}[1]{\mathsf{Ref}(#1)}
\newcommand{\AIHax}{{\sf (AIH)}}
\newcommand{\DIVPROG}{\Omega}
\newcommand{\IncPrg}{\mathtt{Inc}}
\newcommand{\IncPrgTwo}{\mathtt{Inc2}}
\newcommand{\IncShared}{\mathtt{incShared}}
\newcommand{\IncUnshared}{\mathtt{incUnShared}}
\newcommand{\REF}[1]{\TYPE{Ref}(#1)}
\newcommand{\refType}[1]{\mathsf{Ref}(#1)}
\newcommand{\refPrg}[1]{\mathtt{ref}(#1)}
\newcommand{\noreach}{\,\#\,}
\newcommand{\eval}[5]{\ASET{#1}\APP{#2}{#3} =  #4\ASET{#5}}
\newcommand{\notreach}{\noreach}
\newcommand{\reach}{\hookrightarrow}
\newcommand{\reachable}{\hookrightarrow}
\newcommand{\reachB}[2]{#1 \hookrightarrow^\circ  #2}
\newcommand{\congmodel}{\WB}
\newcommand{\EVAL}[5]{\eval{#1}{#2}{#3}{#4}{#5}}
\newcommand{\expands}[3]{#1[#2\!:\!#3]}
\newcommand{\updates}[3]{#1[#2\mapsto #3]}
\newcommand{\NEW}[3]{\PROGRAM{new}\ #1\ := #2\ \PROGRAM{in}\ #3}
\newcommand{\PROGRAM}[1]{\mathtt{#1}}
\newcommand{\TYPE}[1]{\mathsf{#1}}
\newcommand{\NOTREACH}[2]{#1 \notreach  #2}
\newcommand{\REACH}[2]{#1 \hookrightarrow  #2}
\newcommand{\REFPROG}[1]{\PROGRAM{ref}(#1)}
\newcommand{\REFPRG}[1]{\PROGRAM{ref}(#1)}
\newcommand{\EVALEFFECT}[6]{\EVAL{#1}{#2}{#3}{#4}{#5} @ #6}
\newcommand{\TRUTH}{\LOGIC{T}}
\newcommand{\PFN}[1]{\mathsf{fpn}(#1)}
\newcommand{\INCSPEC}[1]{\mathsf{inc}(#1)}
\def\squareforqed{\hbox{\rlap{$\sqcap$}$\sqcup$}}
\def\qed{\ifmmode\squareforqed\else{\unskip\nobreak\hfil
\penalty50\hskip1em\null\nobreak\hfil\squareforqed
\parfillskip=0pt\finalhyphendemerits=0\endgraf}\fi}
\newcommand{\MMM}{\CAL{M}}
\newcommand{\QQQ}{\CAL{Q}}
\newcommand{\FRESHQUANT}{\reflectbox{$\mathsf{N}$}}
\newcommand{\INCSPECb}[1]{\mathsf{inc'}(#1)}
\renewcommand{\SHORTASSERT}[3]{\{#1\}\; #2\  \{#3\}}
\renewcommand{\eval}[5]{\ASET{#1}\APP{#2}{#3}\!=\! #4\ASET{#5}}
\newcommand{\iReachable}[3]{\mathsf{reach}(#1,\ #2,\ #3)}
\newcommand{\directreach}{\rhd}
\newcommand{\HIDEa}{\OL{\nu}}
\newcommand{\HIDEe}{\nu}
\newcommand{\HIDEforall}{\HIDEa}
\newcommand{\allworlds}{\square\,}
\newcommand{\always}{\square\,}
\newcommand{\someworld}{\diamondsuit\,}
\newcommand{\piprm}[2]{\FPRM{#1}{#2}}
\newcommand{\ONEEVAL}[4]{#1 \bullet #2 \comesbefore #3\{#4\}}
\newcommand{\SIMPLEONEEVAL}[3]{#1 \bullet #2 \{#3\}}
\newenvironment{theoremapp}[2]{\begin{trivlist}\item[\hskip \labelsep {\bfseries Theorem #1.}] {\bfseries #2} \it}{\end{trivlist}}
\newenvironment{propositionapp}[2]{\begin{trivlist}\item[\hskip \labelsep {\bfseries Proposition #1.}] {{\bfseries #2}} \it}{\end{trivlist}}
\newcommand{\ORGEF}[1]{{#1}^\star} 
\def\fps@figure{tp}      
\def\fps@table{tp}
\newcommand{\modelsBsym}{\models_{\mathsf b}}
\newcommand{\modelsGuar}[2]%
{#1\modelsBsym #2}
\newcommand{\provesLA}[3]%
{#1\proves \! \mathbf{rely}\, #2 \, \mathbf{guar}\, #3} 
\newcommand{\provesLAs}[3]%
{#1\proves' \! \mathbf{rely}\, #2 \, \mathbf{guar}\, #3} 
\newcommand{\modelsLA}[3]%
{#1\models \! \mathbf{rely}\, #2 \, \mathbf{guar}\, #3}
\newcommand{\modelsILA}[3]%
{#1\models^I \! \mathbf{rely}\, #2 \, \mathbf{guar}\, #3} 
\newcommand{\LETa}[3]%
{\mathsf{let}\,#1\!=\!#2\,\mathsf{in}\,#3}
\newcommand{\LETb}[3]%
{\ENCda{#2/#1}#3}
\newcommand{\LETsubs}[3]%
{#3\MSUBS{#2}{#1}}
\newcommand{\LETpartial}[3]%
{\ENCda{#2/#1}#3}
\newcommand{\LETq}[5]%
{\ASET{#1}#2:_{(#3)}^{\nu#4}\ASET{#5}}
\newcommand{\LETprefix}[3]
{[#2/#1]#3}
\newcommand{\LETprefixB}[5]%
{\ASET{#1}#2\!=\!(#3)#4\ASET{#5}}
\newcommand{\SSUBSb}[2]{[#1/#2]}
\newcommand{\LETprefixb}[3]%
{\SSUBSb{#2}{#1}#3}
\newcommand{\LETc}[2]%
{\ENCda{#1}#2}
\newcommand{\LETstate}[5]%
{\ENCda{\ASET{#3}#2\ASET{#4}/#1}#5}
\newcommand{\provesState}[5]%
{#1\proves\! \ASET{#2}\,\mathbf{rely}\,#3\, \mathbf{guar}\, #4\,\ASET{#5}}
\newcommand{\provesLocal}[6]%
{#1\proves\!\,\ASET{#2}\,\mathbf{rely}\,#3\, \mathbf{guar}^{\mathbf{\nu}#4}\,#5\,\ASET{#6}}
\newcommand{\provesStateDist}[6]%
{#1\proves^{\!#6}\!\! \ASET{#2}\,\mathbf{rely}\,#3\, \mathbf{guar}\, #4\,\ASET{#5}}
\newcommand{\modelsStateDist}[6]%
{#1\models^{\!#6}\!\! \ASET{#2}\,\mathbf{rely}\,#3\, \mathbf{guar}\, #4\,\ASET{#5}}
\newcommand{\provesLAdist}[4]%
{#1\proves^{\!#4}\!\! \,\mathbf{rely}\,#2\, \mathbf{guar}\, #3}
\newcommand{\modelsLAdist}[4]%
{#1\models^{\!#4}\!\! \,\mathbf{rely}\,#2\, \mathbf{guar}\, #3}
\newcommand{\provesStateB}[5]%
{#1\proves\!\! \ASET{#2}\,\mathbf{rely}\,#3\, \mathbf{guar}\, #4\,\ASET{#5}}
\newcommand{\provesStateC}[7]%
{#1\proves\! \ASET{#2}\\
 #6 \mathbf{rely}\,#3\, \mathbf{guar}\,#4\\
 #7 \ASET{#5}}
\newcommand{\provesStateGuar}[2]%
{#1\proves \! \mathbf{guar}\, #2} 
\newcommand{\provesStateVal}[3]%
{\provesLA{#1}{#2}{#3}}
\newcommand{\modelsState}[5]%
{#1\models\! \ASET{#2}\,\mathbf{rely}\,#3\, \mathbf{guar}\, #4\,\ASET{#5}}
\newcommand{\modelsStateVal}[3]%
{\modelsLA{#1}{#2}{#3}}
\newcommand{\modelsStateGuar}[4]%
{#1\modelsBsym\! \ASET{#2}\, \mathbf{guar}\, #3\, \ASET{#4}}
\newcommand{\provesBT}[5]%
{#1\proves\! \ASET{#2}\,\mathbf{rely}\,#3\, \mathbf{guar}\, #4\,\ASET{#5}}
\newcommand{\modelsBT}[5]%
{#1\models\! \ASET{#2}\,\mathbf{rely}\,#3\, \mathbf{guar}\, #4\,\ASET{#5}}
\newcommand{\modelsBTguar}[4]%
{#1\modelsBsym\! \ASET{#2}\, \mathbf{guar}\, #3\, \ASET{#4}}
\newcommand{\provesBTram}[5]%
{#1\proves\! \ASET{#2}\, \mathbf{rely}\,#3\ \mathbf{guar}\, #4\, \ASET{#5}}
\newcommand{\modelsBTram}[5]%
{#1\models\! \ASET{#2}\, \mathbf{rely}\,#3\ \mathbf{guar}\, #4\ \ASET{#5}}
\newcommand{\modelsBTramI}[5]%
{#1\models^I\! \ASET{#2}\, \mathbf{rely}\,#3\ \mathbf{guar}\, #4\, \ASET{#5}}
\newcommand{\provesLocalVal}[5]%
{#1\proves\!\,(\NEW #3)\,\mathbf{rely}\,#4\, \mathbf{guar}\,#5}
\newcommand{\provesLocalb}[7]%
{#1\proves\!\,(\NEW #3)\,\ASET{#4}\,\mathbf{rely}\,#5\, \mathbf{guar}\,#6\,\ASET{#7}}
\newcommand{\provesLocalGuar}[3]%
{#1\proves\!\, (\NEW #2)\, \mathbf{guar}\, #3}
\newcommand{\provesLocalValNew}[5]%
{#1\models\!\,(\NEW #3)\,\mathbf{rely}\,#4\, \mathbf{guar}\,#5}
\newcommand{\provesLocalValCoNew}[4]%
{#1\proves\!\,(\CONEW #2)\,\mathbf{rely}\,#3\, \mathbf{guar}\,#4}
\newcommand{\modelsLocalVal}[5]%
{#1\models\!\,(\NEW #3)\,\mathbf{rely}\,#4\, \mathbf{guar}\,#5}
\newcommand{\modelsLocal}[7]%
{#1\models\!\,(\NEW #3)\,\ASET{#4}\,\mathbf{rely}\,#5\, \mathbf{guar}\,#6\,\ASET{#7}}
\newcommand{\modelsLocalGuar}[3]%
{#1\models\! (\NEW #2)\, \mathbf{guar}\, #3}
\newcommand{\modelsLocalValNew}[5]%
{#1\models\!\,(\NEW #3)\,\mathbf{rely}\,#4\, \mathbf{guar}\,#5}
\newcommand{\modelsLocalNew}[7]%
{#1\models\!\,(\NEW #3)\,\ASET{#4}\,\mathbf{rely}#5\, \mathbf{guar}\,#6\,\ASET{#7}}
\newcommand{\CONEW}{\OL{\new\!}\,}
\newcommand{\provesLocalRam}[5]%
\newcommand{\modelsLocalRam}[5]%
{#1\models\! \ASET{#2}\, \mathbf{rely}\,#3\ \mathbf{guar}\, #4\ \ASET{#5}}
\newcommand{\modelsStateRamI}[5]%
{#1\models^I\! \ASET{#2}\, \mathbf{rely}\,#3\ \mathbf{guar}\, #4\, \ASET{#5}}
\newcommand{\APP}[2]{#1\bullet#2}
\newcommand{\APPs}[2]{#1\!\bullet\!#2}
\newcommand{\ENTAILS}{\supset}
\newcommand{\entails}{\supset}
\newcommand{\IFF}{\equiv}
\newcommand{\AND}{\wedge}
\newcommand{\OR}{\vee}
\newcommand{\ANDl}{\ \wedge\ }
\newcommand{\ANDs}{\!\wedge\!}
\newcommand{\ANDls}{\,\wedge\,}
\newcommand{\modelsBTs}[5]%
{#1\modelsBsym\!%
\mathbf{pre}\,#2\, 
\mathbf{rel}\,#3\, 
\mathbf{gua}\, #4\,
\mathbf{post}\, #5}
\newcommand{\NewLine}%
{\qquad\qquad\qquad    \qquad\qquad\qquad    \qquad\qquad\qquad    \qquad\qquad\qquad\\  {\ }\\[1.8mm]}
\newcommand{\NewLineB}%
{\qquad\qquad\qquad    \qquad\qquad\qquad    \qquad\qquad\qquad    \qquad\qquad\qquad\\  {\ }\\[0.2mm]}
\newcommand{\NewLineBB}%
{\qquad\qquad\qquad    \qquad\qquad\qquad    \qquad\qquad\qquad    \qquad\qquad\qquad\\  {\ }\\[1.2mm]}
\newcommand{\NewLineC}%
{\qquad\qquad\qquad    \qquad\qquad\qquad    \qquad\qquad\qquad    \qquad\qquad\qquad\\  {\ }\\[-0.2mm]}
\newcommand{\NewLineE}[1]%
{\qquad\qquad\qquad    \qquad\qquad\qquad    \qquad\qquad\qquad    \qquad\qquad\qquad\\  {\ }\\[#1mm]}
\newcommand{\JustNewLine}%
{{\ }\\  {\ }\\[1.8mm]}
\newcommand{\pcom}[1]{\quad\text{\sf #1}}
\newcommand{\IMPLIES}{\ENTAILS}
\newcommand{\LOGICEQ}{\stackrel{\mathsf{def}}{\LITEQ}}
\def\doi{4 (4:2) 2008}
\begin{document}

\title[Logical Reasoning for Higher-Order Functions with Local State]{Logical Reasoning for Higher-Order Functions with Local State}

\author[N.~Yoshida]{Nobuko Yoshida\rsuper a}      
\address{{\lsuper{a,c}}Department of Computing, Imperial College London, 180 Queen's Gate, London, 2AZ SW7}   
\email{\{yoshida,M.Berger\}@doc.ic.ac.uk}  

\author[K.~Honda]{Kohei Honda\rsuper b}  
\address{{\lsuper b}Department of Computer Science, Queen Mary, University of London, Mile End Road, London, 1E 4NS}       
\email{kohei@dcs.qmul.ac.uk}  

\author[M.~Berger]{Martin Berger\rsuper c}        



\keywords{Languages, Theory, Verification} 

\subjclass{D.3.1, D.3.3, D.3.2, F.3.1, F.3.2, F.4.1}


\begin{abstract}
We introduce an extension of Hoare logic for call-by-value
higher-order functions with ML-like local reference generation.  Local
references may be generated dynamically and exported outside their
scope, may store higher-order functions and may be used to construct
complex mutable data structures. This primitive is captured logically
using a predicate asserting reachability of a reference name from a
possibly higher-order datum and quantifiers over hidden
references. 
We explore the logic's descriptive and reasoning power with non-trivial
programming examples combining higher-order procedures and dynamically
generated local state.  Axioms for reachability and local invariant
play a central role for reasoning about the examples.

\end{abstract}

\maketitle


\setcounter{tocdepth}{2}
\tableofcontents



\section{Introduction}
\label{sec:introduction}

\paragraph{\bf Reference Generation in Higher-Order Programming.}\ 
This paper proposes an extension of Hoare Logic \cite{HOARE} for
call-by-value higher-order functions with ML-like new reference
generation \cite{SML,CAML}, and demonstrates its use through
non-trivial reasoning examples.  New reference generation, embodied
for example in ML's $\tt ref$-construct, is a highly expressive
programming primitive.  The first key functionality of this construct
is to introduce local state into the dynamics of programs by generating a
fresh reference inaccessible from the outside.  Consider the following
program:
 \begin{equation} \label{ex:IncPrg}
        \IncPrg \DEFEQ \mathtt{let} \ x =\refPrg{0} \ \mathtt{in} \
\lambda ().(x:=!x+1;\,!x)
\end{equation}
where ``$\refPrg{M}$'' returns a fresh reference whose content is the value which $M$ evaluates
to; ``$!x$'' denotes dereferencing the imperative
variable $x$; and ``$;$'' is sequential composition.
In (\ref{ex:IncPrg}), a reference with content $0$ is
newly created,  but  never exported to the outside. 
When the anonymous function in $\IncPrg$ is invoked, it 
increments the content of the local variable $x$, and returns the new
content. The procedure returns a different
result at each call, whose source is hidden from  external
observers. This is different from $\lambda ().(x:=!x+1;\,!x)$ where
$x$ is globally accessible. 

Secondly, local references may be exported outside of
their original scope and be shared, contributing 
to the expressivity of significant imperative idioms.
Let us show how stored procedures interact with 
new reference generation and  sharing of references.  
We consider the following program from \cite[\S~6]{PittsAM:operfl}:
\begin{equation}\label{ex:IncPrgshared}
\IncShared
\;\DEFEQ\;
a\!:=\!\IncPrg;
b\!:=\!!a;
z_1\!:=\!(!a)();
z_2\!:=\!(!b)();
(!z_1+!z_2)
\end{equation}
The initial content of the hidden $x$ is $0$. Following the standard
semantics of ML \cite{MilnerR:defostaML}, the assignment $b:=!a$
copies the code (or a pointer to the code) from $a$ to $b$ while
sharing the store $x$.  Hence the content of $x$ is incremented every
time the functions stored in $a$ and $b$, sharing the same
store $x$, are called,  returning $3$ at the 
end of the program $\IncShared$.  To
understand the behaviour of $\IncShared$ precisely and give it an
appropriate specification, we must capture the sharing of $x$ between
the procedures assigned to $a$ and $b$.  From the viewpoint of
visibility, the scope of $x$ is originally restricted to the function
stored in $a$ but gets extruded to and shared by the one stored in
$b$.  If we replace $b:= !a$ by $b:= \mathtt{Inc}$ as follows, two
separate instances of $\IncPrg$ (hence with separate hidden stores)
are assigned to $a$ and $b$, and the final result is not $3$ but $2$. 
\begin{equation}\label{ex:IncPrgshared2}
\IncUnshared
\;\DEFEQ\;
a\!:=\!\IncPrg;
b\!:=\!\IncPrg;
z_1\!:=\!(!a)();
z_2\!:=\!(!b)();
(!z_1+!z_2)
\end{equation}
Controlling the  sharing of local references
is essential 
for writing concise algorithms that manipulate
functions with shared store, or
mutable data structures such as trees and graphs, 
but complicates  formal reasoning,
even for relatively small programs 
\cite{HOAREDataRep,MilnerDataSim,MeyerAR:towfasflv}.



Thirdly, through information hiding, local references can be used for
efficient implementations of highly regular observable behaviour, for
example, purely functional behaviour.
The following program, taken from 
\cite[\S~1]{PittsAM:operfl}, called $\mathtt{memFact}$, 
is a simple memoised factorial.
 \begin{eqnarray}
 \mathtt{memFact}
 & \DEFEQ & 
 \mathtt{let}
 \ a =\refPrg{0}, \ 
 \ b =\refPrg{1} 
 \ \mathtt{in}\nonumber\\ 
& &  
\lambda x.\IFTHENELSE{x=!a}{!b}{(a:=x\,;\,b:=\mathtt{fact}(x)\,;\,!b)}
\label{memfact}
 \end{eqnarray}
Here $\mathtt{fact}$ is the standard factorial function.  To external
observers, $\mathtt{memFact}$ behaves purely functionally.  The
program implements a simple case of memoisation: when
$\mathtt{memFact}$ is called with a stored argument in $a$, it
  immediately returns the stored value $!b$ without calculation. If $x$
differs from  $a$'s content, the factorial $f x$ is calculated
and the new pair is stored. 
\martinb{For complex functions, memoisation can lead to substantial speedups, but}
for this to
be meaningful we  need a memoised function to behave
indistinguishably from the original function except for efficiency.
So we ask: why can we say $\mathtt{memFact}$ is indistinguishable from
the pure factorial function? The answer to this question can be articulated 
clearly through a {\em local
invariant property} \cite{PittsAM:operfl} which 
can be  stated informally as follows:
\begin{quote}
{\em Throughout all possible invocations of $\mathtt{memFact}$,
  the content of $b$ is the factorial of
  the content of $a$.}
\end{quote}
Such local invariants capture one of the basic patterns in programming with
local state, and play a key role in preceding studies of
operational reasoning about program equivalence in the presence of local state
\cite{PittsAM:operfl,stark:namhof,PittsAM:realvo,KoutovasWand06}. 
Can we distill this principle axiomatically and use it
to
validate efficiently properties of higher-order programs with local state
such as $\mathtt{memFact}$?  

As a further example of local invariants,  this time
involving mutually recursive stored functions,   
consider the following program:
\begin{equation}
\begin{array}{rcl}
\mathtt{mutualParity}  \DEFEQ
&
x  :=  \lambda n.\IFTHENELSE{n\!=\!0}{\mathtt{f}}{\mathtt{not}((!y)(n\!-\!1))};\\
&
y  :=  \lambda n.\IFTHENELSE{n\!=\!0}{\mathtt{t}}{\mathtt{not}((!x)(n\!-\!1))}
\end{array}
\end{equation}
After running $\mathtt{mutualParity}$, the application $(!x)n$
returns $\mathtt{t}$  if $n$ is odd and otherwise $\mathtt{f}$;  $(!y)n$ acts dually.
But since $x$ and $y$ are free, a program may disturb 
$\mathtt{mutualParity}$'s functioning by inappropriate assignment: 
if a program reads from $x$ and stores
it in another variable, say $z$, assigns 
a diverging function to $x$, and feeds 
the content of $z$ with $7$, then the program diverges rather than
returning $\mathtt{t}$. 

With  local state, 
we can avoid unexpected interference at $x$
and $y$.
\begin{align}
\mathtt{safeOdd} \ & \ \DEFEQ  & 
\mathtt{let} \ x= \refPrg{\lambda n.\mathtt{t}},  
 \ y = \refPrg{\lambda n.\mathtt{t}} 
\ \mathtt{in} \ (\mathtt{mutualParity};!x)
\label{safeOdd}
\\[-2mm]
\mathtt{safeEven} \ & \ \DEFEQ  & 
\mathtt{let} \ x= \refPrg{\lambda n.\mathtt{t}}, 
\ y = \refPrg{\lambda n.\mathtt{t}} 
\ \mathtt{in} \ (\mathtt{mutualParity};!y)
\label{safeEven}
\end{align}
(Here $\lambda n.\mathtt{t}$ can be any initialising value.)  
Now that $x,y$ are inaccessible,
the programs behave like pure functions, 
e.g.~$\mathtt{safeOdd}(3)$ always returns $\mathtt{true}$ without any side effects.
Similarly $\mathtt{safeOdd}(16)$ always returns $\mathtt{f}$.
In this
case, the invariant says: 
\begin{quote}
{\em Throughout all possible invocations, 
\martinb{$\mathtt{safeOdd}$}
is a procedure which checks if its argument is odd, provided
$y$ stores a procedure which does the dual, whereas
\martinb{$\mathtt{safeEven}$}
is a 
procedure which checks if its argument is even, whenever 
 $x$ stores a dual  procedure}.  
\end{quote}
Later we present
general reasoning principles for local invariants which can verify
properties of these two and many other non-trivial examples 
\cite{MeyerAR:towfasflv,PittsAM:realvo,PittsAM:operfl,MT92a,MT92b,KoutovasWand06}.  

\paragraph{\it \bfseries {Contribution.}}
This paper studies a Hoare logic for imperative higher-order functions
with dynamic reference generation, a core part of ML-like languages.
Starting from their origins in the $\lambda$-calculus, the syntactic
and semantic properties of typed higher-order functional programming
languages such as Haskell and ML have been  studied extensively, making
them an ideal target for the formal validation of properties of
programs on a rigorous semantic basis.  Further, given the expressive
power of imperative higher-order functions (attested to by the  encodability
of objects
\cite{ComparingObjectEncodings,PierceTurner:Object,PierceBC:typsysfpl}
and of low-level idioms \cite{shao97:_overv_flint_ml_compil}), a study
of logics for these languages may have wide repercussions on logics of
programming languages in general.

Such languages 
\cite{SML,CAML} 
combine higher-order
functions and imperative features including new reference generation.
Extending Hoare logic to these languages 
leads to technical difficulties 
due to  three fundamental features:
\begin{BITEM}

\item Higher-order functions, including stored ones.

\item General forms of aliasing induced by nested reference
  types. 

\item Dynamically generated local references and scope extrusion.

\end{BITEM}
The first is the central feature of these languages; the second arises
by allowing reference types to occur in other types;  the third 
feature has been discussed above.
In  preceding studies, we built Hoare logics
for  core parts of ML which cover the first two features
\cite{SHORT1,HY04PPDP,GLOBAL,ALIAS}.
On the basis of these works, the present work introduces an extension
of Hoare logic for ML-like local reference generation.  As noted
above, this construct  enriches programs' behaviour radically,
and has so far defied  clean logical and axiomatic treatment.
A central challenge is to identify simple but expressive logical
primitives, proof rules (for Hoare triples) and axioms (for
assertions), enabling tractable assertions and verification.

The program logic proposed in the present paper introduces a predicate
representing reachability of a reference from an arbitrary datum
in order to represent new reference generation. Since we are working
with higher-order programs, a datum and a reference may as well be, or
store, a higher-order function. We shall show that this predicate is
fully axiomatisable using (in)equality when it only involves
first-order data types (the result is closely related with known
axiomatisations of reachability \cite{NelsonReachability}).
However we shall also show that the predicate becomes 
undecidable  when higher-order types are involved,
indicating an inherent intractability.

A good news is, however, that this predicate enables us,
when combined with a pair of mutually dual
hiding quantifiers (i.e.~quantifiers ranging over variables denoting
hidden references),
to obtain a simple
compositional proof rule for new reference generation, preserving all
the compositional proof rules for the remaining constructs from our
foregoing program logics. 

At the level of assertions, we can find a set of useful axioms for
(un)reachability and the hiding quantifiers, which are 
effectively combined with logical primitives and associated axioms for
higher-order functions and aliasing studied in our preceding works
\cite{GLOBAL,ALIAS}. These axioms for reachability and hiding
quantifiers are closely related with reasoning principles studied in
existing semantic studies on local state, \martinb{such as the principle of
local invariants \cite{PittsAM:operfl}}.
The local invariant axioms capture common patterns
in reasoning about local state, and enable us to verify the examples
in
\cite{MeyerAR:towfasflv,PittsAM:realvo,PittsAM:operfl,MT92a,MT92b,KoutovasWand06}
axiomatically, including programs discussed above. 
The program logic also satisfies strong completeness properties including
the standard relative completeness as discussed later.
 As a whole, our
program logic offers an expressive reasoning framework where (relatively)
simple programs such as pure functions can be reasoned about using simpler
primitives while programs with more  complex behaviour
such as those with non-trivial
use of local state are reasoned about using incrementally more involved
logical constructs and axioms.


\paragraph*{\bf Outline.}  
This paper is a full version of \cite{YHB07}, with complete definitions 
and detailed explanations and proofs. The present version not only gives more
detailed analysis for the properties of the models, axioms and proof rules, 
but also more examples with full derivations and 
comprehensive comparisons with related work. 

Section \ref{sec:languages} presents the 
programming language and the assertion language. Section \ref{sec:models}
gives the semantics of the logic. 
Section \ref{sec:rules} 
proposes the proof rules and proves  soundness. 
Section \ref{sec:axioms} 
explores axioms of the assertion language.
Sections \ref{sec:example} discusses the
use of the logic through non-trivial reasoning examples
centring on local invariants. 
Section \ref{sec:conclusion} summarises extensions 
including the three completeness results of the logic, 
gives the comparisons
with related works, and concludes with further topics. 
Appendix lists auxiliary
definitions and detailed proofs. 
Larger examples of reasoning about mutable data structures 
can be found in \cite{localexample}. 

\section{Assertions for Local State}
\label{sec:languages}
\subsection{A Programming Language}
\label{subsec:languages}
\label{subsec:PL}
\NI Our target programming language is call-by-value PCF with
unit, sums, products and recursive types, 
augmented with imperative constructs.  
Let $a,b,...,x, y, \ldots$ range over an
infinite set of variables, 
and $\TVX, \TVY, \ldots$ over an infinite set of type 
variables.\footnote{For simplicity, 
we omit the polymorphism from the language, see \cite{HY04PPDP}.} 
Then types, values and programs are given by: 
\[
\begin{array}{lcl}
        \alpha,\beta
                &::=&
        \UNIT 
                \ | \
        \BOOL
                \ | \
        \NAT
                \ | \
        \alpha\FS\beta 
                \ | \
        \alpha\times\beta 
                \ | \
        \alpha+\beta 
                \ | \
        \REF{\alpha}{}
                \ | \
         \TVX
                \ | \
        \mu \TVX.\alpha
                \\[1mm]
        V,W
                &::=&  
        \mathtt{c}
                \ |\ 
        x^\alpha
                \ |\ 
        \lambda x^\alpha.M
                \ |\ 
        \mu f^{\alpha\FS \beta}.\lambda y^{\alpha}.M
                \ |\ 
        \PAIR{V}{W}
                \ |\ 
        \INJtp{i}{V}{\alpha+\beta}
                \\[1mm]
        M,N 
                &::=& 
        V
                \ | \
        MN
                \ | \
        M := N
                \ | \
        \refPrg{M}       
                \ |\
        !M
       \ | \ 
        \mathtt{op}(\VEC{M})
                \ |\
        \pi_i(M)
                \ |\
        \PAIR{M}{N}
                \ |\
        \INJtp{i}{M}{\alpha+\beta}
     \\[1mm]
                & \ | \ & 
        \IFTHENELSE{M}{M_1}{M_2}
                \ |\
        \CASE{M}{x_i^{\alpha_i}}{M_i} 
\end{array} 
\]
We use  standard notation \cite{PierceBC:typsysfpl,GunterCA:semprol} 
like constants $\mathtt{c}$ 
(unit $()$; booleans $\mathtt{t}$, $\mathtt{f}$; numbers $\mathtt{n}$;
and location labels also called simply \emph{locations} $l,l',...$) and 
first-order operations $\mathtt{op}$ 
($+,\,-,\,\times,\,=,\, \neg,\,\AND,\,\ldots$).
Locations only appear at runtime when references are generated. 
$\VEC{M}$ etc. denotes a vector and $\NUL$ the empty vector. 
A program is {\em closed} if it has no free variables.
Note that a closed program might contain free locations. 
We use  abbreviations such as:
\begin{eqnarray*}
	\lambda ().M 
		&\quad\DEFEQ\quad& 
	\lambda x^{\UNIT}.M\qquad{(x\not\in\fv{M})}
		\\ 
	M;N 
		&\quad\DEFEQ \quad& 
	(\lambda ().N)M
		\\ 
	\LET{x}{M}{N}&\quad\DEFEQ\quad& (\lambda x.N)M
		\qquad
	{(x\not\in\fv{M})}
\end{eqnarray*}
We use the standard notion of types for imperative $\lambda$-calculi
\cite{PierceBC:typsysfpl,GunterCA:semprol}
and use the equi-isomorphic approach
\cite{PierceBC:typsysfpl} for recursive types. 
$\NAT$, $\BOOL$ and $\UNIT$ are called {\em base types}.
We leave the 
illustration of each language construct to standard
textbooks 
\cite{PierceBC:typsysfpl}, except for  
reference generation $\refPrg{M}$, the focus of the
present study.  
$\refPrg{M}$ behaves as follows: first $M$ of type $\alpha$ is evaluated
and becomes 
a value $V$; then a
\emph{fresh} reference of type $\REF{\alpha}$ with 
initial content $V$ is generated. 

The behaviour of the programs is formalised by the 
reduction rules. 
Let  $\sigma$ denote  a {\em store}, a finite map from locations to closed
values. We use $\sigma\uplus [l\mapsto
V]$ to denote the result of disjointly adding a pair $(l, V)$ to $\sigma$.
A \emph{configuration} is of the form $(\nu \VEC{l})(M, \sigma)$ where $M$ is
a program, $\sigma$ a store, and $\VEC{l}$ a vector of distinct locations (the order 
is irrelevant) occurring in $\sigma$, and
hidden by $\nu$. The need of $\nu$-biniding is discussed in
\S~\ref{sub:ex:local} and Remark \ref{rem:hidden}.

A {\em reduction relation}, or often {\em reduction} 
for short, is a binary relation
between configurations, written
\[
(\new \VEC{l})(M, \sigma_1)\;\RED\;
(\new \VEC{l}')(N, \sigma_2)
\]
The relation is generated by the following rules. 
First we have the
standard rules for 
call-by-value PCF:
\[
\begin{array}{c}
\begin{array}{rcl}
\inferBase
{-}
{
        (\lambda x.M)V & \;\ \rightarrow\;\ & M\MSUBS{V}{x}
}
        \linebreakA
\inferBase
{-}
{
        \pi_1(\PAIR{V_1}{V_2}) &\rightarrow & V_1
}
        \linebreakA
\inferBase
{-}
{
        \IFTHENELSE{\mathtt{t}}{M_1}{M_2}
               & \rightarrow &
        M_1
}
        \linebreakA
\inferBase
{
  - 
}
{
        (\mu f. \lambda g.N)W 
        & \rightarrow &
        N\MSUBS{W}{g}\MSUBS{\mu f. \lambda g.N}{f}
}
\commentOut{     \linebreakA
\inferBase
{
        -
}
{
  \LET{x}{V}{N} 
  & \rightarrow &
  N\MSUBS{V}{x}
}}
\end{array}
        \\[10mm]
        \CASE{\inj{1}{W}}{x_i}{M_i}
                \rightarrow 
        M_1\MSUBS{W}{x_1}
\end{array}
\]
Then we have the reduction rules for imperative
constructs, i.e. assignment, dereference and new-name generation.
\[ 
\begin{array}{rcl}
\inferBase
{
  -
}
{
        (!l,\;\ \sigma) & \;\ \rightarrow\;\ & (\sigma(l),\;\ \sigma)
} 
        \linebreakA
\inferBase
{
  -
}
{
        (l := V,\;\ \sigma)  & \;\ \rightarrow\;\ 
        & ((),\;\ \sigma[l \mapsto V])
}
        \linebreakA
        {(\refPrg{V},\ \sigma)} 
              & \;\ \rightarrow\;\ &
        \New{l}{(l,\ \sigma\uplus [l\mapsto V])} 
\end{array}
\]
In the reduction rule for  references, the resulting configuration 
uses a \emph{$\new$-binder}, which lets us 
directly capture the observational meaning of programs.
\NI Finally we close
$\RED$ under evaluation contexts and $\new$-binders.
\[ 
\begin{array}{c}
\infer
{
        (\nu \VEC{l}_1)(M, \sigma) \rightarrow (\nu \VEC{l}_2)(M', \sigma')
}
{
        (\nu \VEC{l}\VEC{l}_1)(\RC{M}, \sigma) \rightarrow (\nu \VEC{l}\VEC{l}_2)(\RC{M'}, \sigma') 
}
\end{array}
\]
where $\VEC{l}$ are disjoint from both $\VEC{l}_1$ and $\VEC{l}_2$,
$\RCCD$ is the left-to-right call-by-value evaluation context (with eager
evaluation), inductively given by: 
\[
\begin{array}{rcl}
\RCCD
        &\;::=\;&
(\RCCD M)
        \,\;\ |\,\;\ 
(V \RCCD)
        \,\;\ |\,\;\ 
\PAIR{V}{\RCCD}
        \,\;\ |\,\;\ 
\PAIR{\RCCD}{M}
        \,\;\ |\,\;\ 
\pi_i(\RCCD)
        \,\;\ |\,\;\ 
\inj{i}{\RCCD}
\\
        &|&     
\mathtt{op}(\VEC{V},\RCCD,\VEC{M})
        \,\;\ |\,\;\ 
\IFTHENELSE{\RCCD}{M}{N}
        \,\;\ |\,\;\ 
\CASE{\RCCD}{x_i}{M_i}
        \\
        &|&     
!\RCCD
\,\;\ |\,\;\ 
\RCCD:=M
\;\ | \;\
V:=\RCCD\
\;\ | \;\
\refPrg{\RCCD}       
\end{array}
\]
We write $(M, \sigma)$ for
$\New{\NUL}{(M, \sigma)}$ with $\NUL$ denoting the empty vector. 
We define:
\begin{enumerate}[$\bullet$]
\item 
$(\nu \VEC{l})(M, \sigma)\CONVERGES (\nu \VEC{l}')(V, \sigma')$ 
means 
$(\nu \VEC{l})(M, \sigma) \rightarrow^* (\nu \VEC{l}')(V, \sigma')$ 

\item 
$(\nu \VEC{l})(M, \sigma)\CONVERGES$ 
means 
$(\nu \VEC{l})(M, \sigma)\CONVERGES (\nu \VEC{l}')(V, \sigma')$
for some $(\nu \VEC{l}')(V, \sigma')$
\end{enumerate}
\label{obscon:models}
An \emph{environment} $\Gamma,\Delta,...$ is a finite map 
from variables to types and 
from locations to reference types. 
The typing rules are standard  
\cite{PierceBC:typsysfpl} 
and are left to Appendix \ref{app:typing}. 
Sequents have the form $\TYPES{\Gamma}{M}{\alpha}$, 
to be read: $M$ has type $\alpha$ under $\Gamma$.
A store $\sigma$ is typed under
$\Delta$, written $\Delta\proves \sigma$, 
when, for each $l$ in its domain, $\sigma(l)$ is a
closed value which is typed $\alpha$ under $\Delta$, 
where we assume $\Delta(l)=\refType{\alpha}$.
A configuration $(M, \sigma)$ is {\em well-typed} if
for some $\Gamma$ and $\alpha$ we have $\Gamma\proves M:\alpha$ and 
$\Gamma\proves \sigma$.
Standard type safety holds for well-typed configurations.
\emph{Henceforth we 
only consider well-typed programs and configurations.}  

We 
define the observational congruence between configurations. 
Assume $\Gamma,\VEC{l}_{1,2}:\VEC{\alpha}_{1,2}\proves
M_{1,2}:\alpha$
\martinb{and $\Gamma,\VEC{l}_{1,2}:\VEC{\alpha}_{1,2}\proves \sigma_{1, 2}$}.
 Write 
$$\Gamma\proves (\new
\VEC{l}_1)(M_1,\sigma_1) \cong (\new \VEC{l}_2)(M_2,\sigma_2)$$
if, for
each typed context $C[\ \CD\ ]$ which produces a closed program which
is typed as $\UNIT$ under $\Delta$ and in which no labels from
$\VEC{l}_{1,2}$ occur, the following holds:
\begin{equation*}
(\new \VEC{l}_1)(C[M_1],\ \sigma_1)\converges
\quad\text{iff}\quad
(\new \VEC{l}_2)(C[M_2],\ \sigma_2)\converges
\end{equation*}
which we often write $(\new \VEC{l}_1)(M_1,\sigma_1) \cong 
(\new \VEC{l}_2)(M_2,\sigma_2)$ 
leaving type information implicit.
We also write $\Gamma\proves M_1 \cong M_2$,  
or simply $M_1 \cong M_2$ 
leaving type information implicit, if, $\VEC{l}_i=\sigma_i=\emptyset$ 
($i=1,2$).

\subsection{A Logical Language}
\label{sub:assertion}
\NI The logical language we shall use is that of standard first-order
logic with equality \cite[\S~2.8]{MENDELSON}, extended with the
constructs for (1) higher-order application \cite{HY04PPDP,GLOBAL}
(for imperative higher-order functions); (2) quantification over
store content \cite{ALIAS} (for aliasing); (3) reachability and
quantifications over hidden names (for local state).  For (1) we
decompose the original construct \cite{HY04PPDP,GLOBAL} 
into more elementary constructs, 
which becomes important for precisely capturing  the semantics of
higher-order programs with local state and for obtaining 
strong completeness properties of the logic, as we shall
discuss in later sections.

The grammar follows, letting $\star\in\ASET{\AND, \OR, \ENTAILS}$, 
$\CAL{Q}\in\ASET{\exists,\forall,\HIDEe,\HIDEa}$ 
and $\CAL{Q}'\in\ASET{\exists,\forall}$.
\[
\begin{array}{lcl}
        e &\ \;::=\;&\ \;
        x
            \;\ |\;\  
        \LOGIC{c}
            \;\ |\;\  
        \LOGIC{op}(\VEC{e})
            \;\ |\;\  
        \PAIR{e}{e'}
           \;\ |\;\   
        \LOGICINJ{\alpha_1+\alpha_2}{i}{e}
            \;\ |\;\  
        !e
                \\[1.2mm]
        C &\;::=\;&\;
        e\!=\!e'
            \;\ |\;\  
        \neg C
            \;\ |\;\  
        C \star C'
            \;\ |\;\  
        \CAL{Q} x^\alpha.C
           \;\ |\;\  
        \CAL{Q}'\TVX.C 
           \;\ |\;\  
          \allCon{{e}}{C}
             \;\ |\;\  
         \someCon{e}{C}
          \\[1.2mm]
            &|&    
        \ONEEVAL{e}{e'}{x}{C}
            \;\ |\;\ 
         \allworlds C 
            \;\ |\;\ 
         \someworld C 
            \;\ |\;\ 
        \REACH{e}{e'}
             \ |\
         e\noreach e'
\end{array}
\]
The first grammar ($e, e', \ldots$) defines {\em terms};
the second {\em formulae} ($A,B,C, C',E, \ldots$). Terms include variables,
constants $\mathsf{c}$ 
(unit $()$, numbers $\LOGIC{n}$, booleans $\LOGIC{t}$, $\LOGIC{f}$ and
locations $l,l',...$), 
pairing, injection and standard first-order operations.
$\DEREF{e}$ denotes the dereference
of a reference $e$.  
Formulae include  standard logical connectives and first-order quantifiers
\cite{MENDELSON}. 

The remaining constructs in the logical language are for capturing  the
behaviour of imperative higher-order functions with local state.
First, the universal and existential quantifiers,  
$\forall x.C$ and $\exists x.C$, are standard. 
We include, following \cite{HY04PPDP,ALIAS}, 
quantification over type variables  
($\TVX, \TVY, \ldots$).
We also use the two quantifiers for aliasing 
introduced in  \cite{ALIAS}. 
$\allCon{x}{C}$ 
is \emph{universal content quantification of $x$ in $C$}, while
$\someCon{x}{C}$
is \emph{existential content quantification of $x$ in $C$}.
In both, $x$ should have a reference type.
$\allCon{x}{C}$ says $C$ holds regardless of the value stored in
  a memory cell named $x$; and 
$\someCon{x}{C}$ says $C$ holds for some value that may be
  stored in the memory cell named $x$.
In both, what is being quantified is the content of a store, {\em
  not} the name of that store. 
In 
$\allCon{x}{C}$
and  
$\someCon{x}{C}$,
$C$ is the {\em
  scope} of the quantification.
The free variable $x$ is not a binder: we have
$\FV{\someCon{x}{C}} = \FV{\allCon{x}{C}} = \{x\} \cup \FV{C}$
where 
$\FV{C}$ denotes the set of free variables in $C$.  
We define $\someCon{e}{C}$ as a shorthand for $\exists x.(x = e \AND \someCon{x}{C})$,
assuming $x \notin \FV{C}$. Likewise, $\allCon{e}{C}$ is short for
$\forall x.(x = e \IMPLIES \allCon{x}{C})$ with $x$ being fresh.
The scope of a content quantifier is as small as possible, 
e.g.~$\allCon{x}{C} \IMPLIES C'$ 
stands for $(\allCon{x}{C}) \IMPLIES C'$.

Decomposing the original evaluation formulae
\cite{HY04PPDP,GLOBAL} into $\ONEEVAL{e}{e'}{x}{C}$ and $\allworlds C$,
is used  for describing the behaviour of functions.\footnote{We later
show $\allworlds C$ is expressible by
$\ONEEVAL{e}{e'}{x}{C}$: nevertheless treating $\allworlds C$
independently is convenient for our technical development.}
$\ONEEVAL{e}{e'}{x}{C}$, which we call (one-sided) {\em evaluation
formula}, intuitively says:
\begin{quote}
{\em The application of a function $e$ to an argument $e'$ starting from the {\em
present} state will terminate with a resulting value (name it $x$) and
a final state, together satisfying $C$}.
\end{quote}
whereas $\allworlds C$, which we read
{\em always $C$}, intuitively means: 
\begin{quote}
{\em $C$ holds in any possible state reachable
from the current one}.
\end{quote}
Its dual is written $\someworld C$ (defined as
$\neg \allworlds \neg C$), which we read {\em someday $C$}. 
We call $\allworlds$ (resp.~$\someworld$) {\em necessity} 
(resp.~{\em possibility})  
operators.   
As a typical usage of these  primitives, consider:
\begin{equation} \label{encoding:A}
\allworlds (C \ENTAILS \ONEEVAL{f}{x}{y}{C'})
\end{equation}
This can be read: ``for now or any future state, once $C$ holds, then
the application of $f$ to $x$ terminates, with both a return value $y$ and a 
final state satisfying $C'$''.  Note that (\ref{encoding:A}) 
corresponds to the original evaluation formula in \cite{HY04PPDP,GLOBAL}. 
\martinb{Further,  in the presence of local
state, (\ref{encoding:A}) can describe  situations which
cannot be represented using the  original evaluation formula (see \S~\ref{sub:ex:local} for examples). The
decomposition (\ref{encoding:A})  can also 
 generalise the local invariant axiom in Proposition 
\ref{pro:localinv:firstorder} from \cite{YHB07}.  }
Thus this decomposed
form is strictly more \mfb{expressive. It} also allows a more streamlined
theory.


There are two new logical primitives for representing local state ---
in other words, for describing the effects of generating and using a
fresh reference.  First, the {\em hiding-quantifiers}, $\HIDEe x.C$
({\em for some hidden reference $x$, $C$ holds}) and $\HIDEa x.C$
({\em for each hidden reference $x$, $C$ holds}), quantify over
reference variables, i.e.~the type of $x$ above should be of the form
$\REF{\beta}{}$.  These quantifiers range over hidden references, such
as $x$ generated by $\IncPrg$ in (\ref{ex:IncPrg}) in 
\S~\ref{sec:introduction}. The need for having these quantifiers in
addition to the standard ones is illustrated in 
\S~\ref{sub:ex:local} and Remark \ref{rem:hidden}. The 
formal difference of $\HIDEe$ as a quantifier from $\exists$ will be
clarified in \S~\ref{subsec:axiom:hiding}, Proposition
\ref{pro:hiding}.

The second new primitive for local state is $\REACH{e_1}{e_2}$ (with
$e_2$ of a reference type), which we call  {\em reachability predicate}.
This predicate says: 
\begin{quote}
{\em We can reach the reference denoted by $e_2$ from a datum
denoted by $e_1$.  }  
\end{quote}
As an example, if $x$ denotes a starting point
of a linked list, $\REACH{x}{y}$ says a reference $y$ occurs in one of
the cells reachable from $x$.  We set its dual
\cite{SPECIFICATION,GabbayM:newapprasib}, written $e \noreach e'$,
to mean $\neg e' \reachable e$. This negative form says: 
\begin{quote}
{\em One can never reach a
reference $e$ starting from a datum denoted by $e'$.}  
\end{quote}
$\noreach$ is frequently used for representing freshness of new
references. 

Note that expressions of our logical language do not include arbitrary
programs. If we enlarge terms in the present logical language to
encompass arbitrary programs, then terms in the logic will have
effects when being evaluated (such as $\lambda y.x:=3$). In addition,
the axiomatisation of equality would feature involved axioms like $()
= (x:=3)$.  Note also that the inclusion of application leads to
expressions whose evaluation may be non-terminating.  Excluding such
arbitrary terms means that we can use standard first-order logic with
equality and its usual axiomatisation as its basis, avoiding
non-termination and side-effects when calculating assertions.

Terms are typed inductively starting from types for variables and
constants and signatures for operators.  The typing rules for terms 
follow the standard ones for programs \cite{PierceBC:typsysfpl}
and are given in Figure~\ref{fig:typingrules:formulae} in Appendix \ref{app:typing}. 
We write $\Gamma \vdash e : \alpha$ when $e$ has type $\alpha$
such that free variables in $e$ have types following $\Gamma$; and
$\Gamma \vdash C$ when all terms in $C$ are well-typed under
~$\Gamma$. 

Equations between terms of different types
will always evaluate to $\falsity$.\footnote{To be precise,
``terms of unmatchable types'':
this is because of the presence of type variables. For example, 
the equation ``$x^X = 1^{\NAT}$'' can hold depending on models but 
``$x^{\refType{X}} = 1^{\NAT}$'' never holds.} 
The falsity $\falsity$ is definable as $1 \not= 1$, and 
its dual $\truth \DEFEQ \neg \falsity$. 
The \emph{syntactic substitution} $C\SUBST{e}{!x}$ is also used
frequently: the definition is standard, save for some subtlety
regarding substitution into the post-condition of evaluation
formulae, details can be found in Appendix B in \cite{ALIAS}.
\emph{Henceforth we only treat well-typed terms and
formulae.}

Further notational conventions follow.

\begin{notation}[Assertions]\rm  \label{con:assertions} \ 
\begin{enumerate}[(1)]
\item
In the subsequent technical development, logical connectives are used
with their standard precedence/association, with content
quantification given the same precedence as standard quantification
(i.e.~they associate stronger than binary connectives).  For example,
\[
\neg \,A\,\AND\, B\, \ENTAILS\, \forall x.C \,\OR\, \someCon{e}{D} \,\ENTAILS\, E
\]
is a shorthand for
$(\,(\neg A)\,\AND\, B)\, \ENTAILS\, (\,(\,(\forall x.C) \,\OR\,
(\someCon{e}{D})\,) \,\ENTAILS\, E)$.
The standard binding
convention is always assumed.
\item 
$C_1 \LITEQ C_2$ stands for $(C_1\ENTAILS C_2)\AND
(C_2\ENTAILS C_1)$, stating the logical equivalence of $C_1$ and
$C_2$.  
\item $e\neq e'$ stands for $\neg e=e'$. 

\item
Logical connectives are used not only syntactically but also
semantically, i.e. when discussing meta-logical and other notions of
validity.

\item We write
\ $\EVALASSERT{C}{e_1}{e_2}{z}{C'}$ \ 
for 
\ $C \ENTAILS \ONEEVAL{e_1}{e_2}{z}{C'}$. 

\item
$\ONEEVAL{e_1}{e_2}{e'}{C}$
stands for $\ONEEVAL{e_1}{e_2}{x}{x=e'\AND C}$ where $x$ is fresh 
and $e'$ is not a variable; 
$\SIMPLEONEEVAL{e_1}{e_2}{C}$ stands for
$\ONEEVAL{e_1}{e_2}{()}{C}$; and 
$\APP{e_1}{e_2}\Downarrow$ stands for
the convergence $\ONEEVAL{e_1}{e_2}{x}{\truth}$.  
We apply the same abbreviations to 
$\EVALASSERT{C}{e_1}{e_2}{z}{C'}$.

\item 
For convenience of rule presentation we will use projections $\pi_i(e)$ as
a derived term. They are redundant in that any formula containing 
projections can be translated into one without: for example 
$\pi_1(e)  = e'$ can be expressed as $\exists y.e = \PAIR{e'}{y}$.

\item We denote $\FV{C}$\ (resp.~$\FL{C}$)\ for the set of 
the free variables (resp. free locations) in $C$.  

\item 
$\allCon{x_1..x_n}{C}$  
for $[!x_1]..[!x_n]C$.
Similarly for $\someCon{x_1..x_n}{C}$.   

\item 
We write $\NOTREACH{\VEC{e}}{e}$ for 
$\AND_i \NOTREACH{e_i}{e}$;\ \
 $\NOTREACH{e}{\VEC{e}}$ for 
$\AND_i \NOTREACH{e}{e_i}$; \ \ 
and  $\NOTREACH{\VEC{e}}{\VEC{e}'}$ for $\AND_{ij} \NOTREACH{e_i}{e'_j}$.
\end{enumerate}
\end{notation}



\subsection{Assertions for Local State}
\label{sub:ex:local}
\NI
We explain assertions with examples.
\begin{enumerate}[(1)]

\item The assertion $x = 6$ says that $x$ of type $\NAT$ is equal to 6.

\item  Assuming $x$ has type $\REF{\NAT}$, $!x = 2$ means $x$ stores 2.
Next assume that $e_1$ and $e_2$ have a reference type carrying 
a functional type, say  
$\REF{\NAT\to \NAT}$. 
Then we can specify equality of the contents of the reference as: 
$!e_1 = !e_2$. Note that neither $e_1$ nor $e_2$ contains
$\lambda$-expressions. 
Section \ref{sub:axiom:eq} shall show that  
the standard axioms for the equality hold in our logic. 

\item 
Consider a simple command 
$x:=y;y:=z;w:=1$. After its run, we can 
reach 
reference name 
$z$ 
by dereferencing $y$, and $y$ by dereferencing $x$. 
Hence $z$ is reachable from $y$, $y$  from $x$, 
hence 
$z$ from $x$. So the final state satisfies
$\REACH{x}{y} \AND \REACH{y}{z} \AND \REACH{x}{z}$ 
which implies 
by transitivity. 

\item 
Next, assuming 
$w$ is newly generated, we may wish to say 
$w$ is {\em unreachable} from $x$, to ensure  freshness 
of $w$.  For this we assert $\NOTREACH{w}{x}$,
which, as noted, stands for $\neg (x \reachable w)$. 
$\NOTREACH{x}{y}$ always implies $x\not =y$.  
Note that $\REACH{x}{x} \LITEQ \REACH{x}{!x} \LITEQ \truth$ and 
$\NOTREACH{x}{x} \LITEQ 
\falsity$. But $\REACH{!x}{x}$ may or may not hold
(since there may be a cycle between $x$'s content and $x$ in the presence of recursive types).


\item 
\label{ex:funcreachable}
We consider reachability in
procedures.
Assume $\lambda ().(x:=1)$ is named as $f_w$, similarly
$\lambda ().!x$ as $f_r$.
Since $f_w$ can write to $x$, we have $\REACH{f_w}{x}$. 
Similarly $\REACH{f_r}{x}$. 
Next suppose $\mathtt{let} \ x=
\REFPROG{z} \ \mathtt{in} \ \lambda ().x$ has name
$f_c$ and $z$'s type is 
${\refType{\NAT}}$. Then $\REACH{f_c}{z}$ (e.g. consider $!(f_c()):=1$). 
However $x$ is {\em not} reachable from 
$\lambda ().((\lambda y.())(\lambda ().x))$
since semantically, this function never touches $x$.

\item $\allworlds !x=1$ says that $x$'s content 
is unchanged from $1$ forever, which is logically equivalent 
to $\falsity$ (since $x$ might be updated in the future).  
Instead $\someworld !x=1 \LITEQ \truth$. 
On the other hand, 
$\allworlds x=1 \LITEQ \someworld x=1 \LITEQ x=1$ (since 
a value of a functional variable is not affected by the state).   

\item The following program:
\begin{equation}\label{func:inc} 
f\;\DEFEQ\;  \lambda ().(x:=!x+1;!x)
\end{equation} 
satisfies the following assertion, when named $u$:
\[ 
\allworlds\forall  i^\NAT.\,  \eval{!x=i}{u}{()}{z}{!x=z \AND !x=i+1}
\]
saying: 
\begin{quote}
{\em now or for any future state, invoking the function named $u$
increments the content of $x$ and returns that content.} 
\end{quote}
Stating it for a future state is important since 
a closure is potentially invoked many times in different states. 

\item
We often wish to say that
the write effects of an application are restricted to specific 
locations.
The following {\em located assertion} \cite{ALIAS}
is used for this purpose:
$
\ONEEVAL{e}{e'}{x}{C}@\VEC{e}
$
where each $e_i$ is of  reference type and does not contain a dereference. 
$\VEC{e}$ is called {\em effect set}, which \martinb{might} be modified by the
evaluation. 
As an example:
\begin{equation}\label{INC}
\INCSPEC{u, x} \DEFEQ 
\allworlds \forall  i. \eval{!x=i}{u}{()}{z}{z=!x \ANDl !x=i+1}@{x}
\end{equation}
is satisfied by $f$ in (\ref{func:inc}), saying that a function named
$u$, when invoked, will: (1) increment the content of $x$ and (2)
return the original content of $x$, without \martinb{modifying (in an
observational fashion)} any state except $x$.  
As in \cite{ALIAS},  located assertions can be translated
into  non-located evaluation formulae together with content
quantification in \S~\ref{sub:assertion}, see Proposition
\ref{pro:located_decompose}.


\item Assuming $f$ denotes
the result of evaluating $\IncPrg$ in the introduction, 
we can assert, using the existential hiding quantifier and naming by $u$:
\begin{equation}\label{INCNew}
  \HIDEe x.(!x=0\ANDl \INCSPEC{u, x})
\end{equation}
which says: there is a hidden reference $x$
storing $0$ such that, whenever $u$ is invoked, it \martinb{writes at}  $x$
and returns the increment of the value stored in $x$ at the time of invocation.
\item \label{combine}
We illustrate that combining
 hiding quantifiers and the non-reachability predicate is
necessary for describing the effects and use of newly generated
references. 
Consider:
\begin{equation}
\label{aaabbbccc}  \mathtt{let}\ x=\refPrg{2}\ \mathtt{in}\ y:=x 
\end{equation}
The location denoted by the bound variable $x$ is, at the time when
the new reference is generated, hidden and disjoint from any existing
datum. The location represented by $x$ is still hidden but it has now
become accessible from a variable $y$, and this location is still
unreachable from other references. \martinb{Thus hiding and disjointness are
separate concerns, and,  assuming $z$ to be a reference disjoint from $y$,
the post-state of (\ref{aaabbbccc}) can be
described as:}
\begin{equation}
  \HIDEe x.(!y=x\ \AND\ !x=2\ \AND\ z\noreach x)
\end{equation}

\item \label{refagent}
The function 
$f_1 \;\DEFEQ\; \lambda n^{\NAT}. \REFPROG{n}$, named $u$, meets the following
specification. Let $i$ and $\TVX$ be fresh.
\begin{equation}\label{SEEABOVEa}
\mathsf{fresh}  
\quad
\DEFEQ  
\quad
\allworlds \forall n^{\NAT}.\,
\forall \TVX.\forall i^\TVXscript.
\ONEEVAL{u}{n}{z}
{\HIDEe x.(!z=n\ANDl \NOTREACH{z}{i} \AND z=x)}
@{\emptyset}.
\end{equation}
The above assertion says that $u$, when applied to $n$, 
will always return a hidden fresh reference $z$
whose content is $n$ 
and which is unreachable from any datum 
existing at the time of the invocation;
and in the execution it will leave no writing effects
to the existing state.
Since $i$ ranges over
arbitrary data, 
unreachability of $x$ from 
each such $i$ 
in the post-condition 
indicates that $x$ is freshly generated
and is not stored in any existing reference. 


\item Now let us consider the following three formulae:
\begin{eqnarray}
\mathsf{fresh}_1 & \DEFEQ & 
\forall n^{\NAT}.\,
\forall \TVX.\forall i^\TVXscript.
\ONEEVAL{u}{n}{z}
{\HIDEe x.(!z=n\ANDl \NOTREACH{z}{i} \AND z=x)}
 @{\emptyset}
 \label{SEEABOVEb}\\
\mathsf{fresh}_2 & \DEFEQ & 
\forall n^{\NAT}.\,
\forall \TVX.\forall i^\TVXscript.
\allworlds 
\ONEEVAL{u}{n}{z}
{\HIDEe x.(!z=n\ANDl \NOTREACH{z}{i} \AND z=x)}
@{\emptyset}
 \label{SEEABOVEc}\\
 \mathsf{fresh}_3 & \DEFEQ & 
 \allworlds 
 \forall n^{\NAT}.\,
 \forall \TVX.\forall i^\TVXscript.
 \allworlds 
 \ONEEVAL{u}{n}{z}
 {\HIDEe x.(!z=n\ANDl \NOTREACH{z}{i} \AND z=x)}
 @{\emptyset}
 \label{SEEABOVEd}
 \end{eqnarray}


Each formula is read as follows:
\begin{enumerate}[$\bullet$]
\item $\mathsf{fresh}_1$ means that the procedure named by $u$, when invoked in
  the present state with number $n$, will create a cell with that content
   which is fresh {\em in the
    current state}.

\item $\mathsf{fresh}_2$ means that the procedure $u$, when invoked 
with number $n$
in the present or any future state, will create a cell with content $n$ 
which is fresh
  {\em in the current state}.  For example the following program
  satisfies this assertion (naming it as $u$):
  \begin{equation}\label{AAAAAAANEW}
    f_2 \;\DEFEQ\; \mathtt{let} \ x =\refPrg{0} \ \mathtt{in} \
    \lambda y^{\NAT}.(x:=y;\,x)
  \end{equation}

\mfb{The function returned by (\ref{AAAAAAANEW}) does return a fresh reference upon initial invocation: but from}
  the next time \mfb{this function} returns the same reference cell albeit with the new
  value specified. So it will be fresh with respect to the current
  state (for which we are asserting this formula) but \emph{\martinb{not necessarily}} with respect
  to each initial state of invocation.

\item $\mathsf{fresh}_3$ means that
  if we invoke the procedure $u$ in \mfb{the current}
  state or in any further future state, it will create a cell which is
  fresh in that state.
\end{enumerate}
Then we have: 
\begin{eqnarray}\label{freshentail}
\mathsf{fresh}\LITEQ \mathsf{fresh}_3
\ENTAILS \mathsf{fresh}_2
\ENTAILS \mathsf{fresh}_1
\end{eqnarray}
which we shall prove by the axioms for $\allworlds$ later.  The program
(\ref{AAAAAAANEW}) satisfies $\mathsf{fresh}_1$ and
$\mathsf{fresh}_2$, but does {\em not} satisfy $\mathsf{fresh}$ (nor
$\mathsf{fresh}_3$) since $f_2$ returns the same location.  On the
other hand, $f_1$ satisfies all of $\mathsf{fresh}$,
$\mathsf{fresh}_1$, $\mathsf{fresh}_2$ and $\mathsf{fresh}_3$.  This
example demonstrates that a combination of $\allworlds$ and a
decomposed evaluation formula gives precise specifications in the
presence of the local state.\footnote{Note that in $\mathsf{fresh}$
  and $\mathsf{fresh}_3$, it is essential that we put universal
  quantifications $\forall \TVX$ and $\forall i^\TVXscript$ {\em after}
  $\allworlds$. This has not been possible
  in the two-sided evaluation formulae used in the
  logics for pure and imperative higher-order functions without local
  state in \cite{SHORT1,HY04PPDP,GLOBAL,ALIAS}. See 
(\ref{encoding:A}).} 
\end{enumerate}

\section{Models and Semantics}
\label{sec:models}
\subsection{Models}
\label{model:sound}
\NI 
We introduce the semantics of the logic based on the operational
semantics of programs, using partially hidden stores.
%
Our purpose is to have a \emph{precise} and \emph{clear} correspondence between
programs' operational behaviour (and the induced observational
semantics) and the semantics of assertions. 
\martinb{This is the reason for defining our models operationally. This approach offers}
a simple framework to reason about the semantic effects of hidden
(and/or newly generated) stores on higher-order imperative programs
(for further discussions, see Remark \ref{rem:models:abs} later).  For
capturing local state, our models incorporate hidden locations using
$\new$-binders, suggested by the $\pi$-calculus \cite{MilnerR:calmp1}.
For example,  consider the program $\IncPrg$ from the
introduction.
\begin{equation}
        \IncPrg \DEFEQ \mathtt{let} \ x =\refPrg{0} \ \mathtt{in} \
\lambda ().(x:=!x+1;\,!x)
\end{equation}
Recall that after running $\IncPrg$, we reach a
state where a hidden name stores $0$, to be used by the resulting
procedure when invoked. 
Hence, $\IncPrg$ named $u$, is modelled as:  
\begin{equation} \label{ex:model:local:app}
(\new l)(\ASET{u:\lambda ().(l:=!l+1;\,!l)}, \ \ASET{l\mapsto 0}) 
\end{equation}
which says that the appropriate behaviour 
is at $u$, in addition to a hidden reference  $l$
storing  $0$.


\begin{definition}\rm {\rm (models)} \label{def:models:concrete}
An \emph{open model of type $\Gamma$} 
is a tuple $(\xi, \sigma)$ where:
\begin{enumerate}[$\bullet$]
  
\item $\xi$, called \emph{environment}, is a finite map 
from variables in $\dom{\Gamma}$ to closed values such that, 
for each $x\in\dom{\Gamma}$, 
$\xi(x)$ is typed as $\Gamma(x)$ under $\Gamma$, 
i.e.~$\Gamma\proves \xi(x):\Gamma(x)$. 
        
\item $\sigma$, called \emph{store}, is a finite map from 
labels in $\{l \ | \ l \in \dom{\Gamma} \ \}$ 
to closed values such that 
for each $l\in \dom{\sigma}$, 
$\Gamma(l)$ has type $\refType{\alpha}$, 
then $\sigma(l)$ has type $\alpha$ under $\Gamma$,
i.e.~$\Gamma\proves \sigma(l):\alpha$. 
\end{enumerate} 
When $\Gamma$ includes free type variables,
$\xi$  maps them to closed types, 
with the obvious corresponding typing constraints.
A {\em model} of type $\Gamma$ 
is a structure $(\new \VEC{l})(\xi, \sigma)$ with 
$(\xi, \sigma)$ being an open model of type 
$\Gamma,\Delta$ with $\ASET{\VEC{l}}= \dom{\Delta}$.
$(\new \VEC{l})$ acts as binders. 
$\MMM, \MMM', \ldots$ range over models. 
\end{definition}
\NI 
An open model maps variables and locations to closed values: a model
then specifies part of the locations as ``hidden''. 
For example, 
$(\new l)(x:l\cdot y:l', [l \mapsto 3]\cdot [l' \mapsto 3])$ 
is a model with a typing environment: 
$\Gamma = \{ x:\reftype{\NAT},y:\reftype{\NAT},l':\reftype{\NAT}\}$. 
We often omit $\Gamma$ and a mapping from type variables 
to closed types from $\MMM$. 

Since assertions in the
present logic are intended to capture observable program behaviour, 
the semantics of the logic uses models quotiented by an
observationally sound equivalence, 
which we choose to be the 
standard contextual congruence itself. 


\begin{definition}\rm \label{def:modelequivalence}
Assume $\MMM_i\DEFEQ (\new \VEC{l}_i)(\VEC{x}:\VEC{V}_i, \sigma_i)$
typable under $\Gamma$.
Then we write
$\MMM_1\WB\MMM_2$ if the following clause holds for each
typed context $C[\ \CD\ ]$ which is typable under $\Gamma$ and
in which no labels from $\VEC{l}_{1,2}$ occur:
\begin{equation}
(\new \VEC{l}_1)(C[\ENCan{\VEC{V}_1}], \sigma_1)\converges
\quad\text{iff}\quad
(\new \VEC{l}_2)(C[\ENCan{\VEC{V}_2}], \sigma_2)\converges
\end{equation}
where $\ENCan{\VEC{V}}$ is the $n$-fold pairings of a vector of values.
\end{definition}

\NI Definition \ref{def:modelequivalence} in effect takes
models up to the standard contextual congruence. We could have used a
different program equivalence (for example call-by-value $\beta\eta$
convertibility), as far as it is observationally sound. 
Note that we have
\begin{equation}
(\new \VEC{l})(\xi\CD\AT{x}{V_1}, \sigma\CD l\mapsto W_1)
\ {\WB}\ 
(\new \VEC{l})(\xi\CD\AT{x}{V_2}, \sigma\CD l\mapsto W_2)
\end{equation}
whenever $V_1 \cong V_2$ and $W_1\cong W_2$, where $\cong$  is the
contextual congruence on programs defined in \S~\ref{obscon:models}. 

To see the reason why we take the models up to observational congruence, 
let us consider the following program: 
\begin{equation}
        \IncPrgTwo
\                \DEFEQ \ 
\mathtt{let} \ x =\refPrg{0}, \ y =\refPrg{0} \ \mathtt{in}\ 
        \lambda ().(x:=!x+1;y:=!y+1;\,(!x+!y)/2)
\end{equation}
which is contextually equivalent to $\IncPrg$. 
Then we have the following model for $\IncPrgTwo$. 
\begin{eqnarray}
\label{ex:model:variant:app}
         (\new ll')
         (
                 \{
                         u:\lambda ().(x:=!x+1;y:=!y+1;\,(!x+!y)/2),
             \; x:l,\;\
             y:l'
            \},\ 
 \{             l\mapsto 0,\; l'\mapsto 0
 \})
 \end{eqnarray}
Since the two programs 
originate in the same abstract behaviour, we wish to identify the model in 
(\ref{ex:model:local:app}) and the above model, 
taking them up to the equivalence. 

\begin{remarks}\rm \label{rem:models:abs} (presentation of models)\ 
  The model as given above can be presented algebraically using the
  language of categories \cite{stark:namhof}. One method, which can
  treat hiding as above categorically, uses a class of toposes which
  treat renaming through symmetries~\cite{HondaK:elestr}. We can
  also use the ``swapping''-based treatment of binding based on
  \cite{PittsAM:newaas}.  Note however that the use of such different
  presentations (with respective merits) does {\em not} alter the
  equational and other properties of models and the satisfaction
  relation, as far as we wish to use the standard observational
  semantics (Morris-like contextual congruence) or the equivalent
  models (so-called fully abstract models) as a basis of our
  logic. Another significant point is that the game-based model in
  \cite{AHM98} is the only known model satisfying this (full
  abstraction) criteria, whose morphisms are isomorphic to a class of
  typed $\pi$-calculus processes \cite{Honda02}. The presented
  ``operational'' model is hinted at by, and is close to, the
  $\pi$-calculus presentation of semantics of the target language.
  The present approach allows us to have models which are
  automatically faithful to the standard observational semantics of
  the language, directly capturing the effects of hidden stores by semantics of the logic.
  Other models may as well be used for exploring various 
  aspects of the presented logic.
\end{remarks}

\begin{remarks} (hidden locations)\label{rem:hidden} \ \martinb{Following  standard
  textbooks \cite{GunterCA:semprol,PierceBC:typsysfpl}, we treat
  locations as values (which is natural from the viewpoint of
  reduction). A significant point is that distinctions among
  these values (locations) matter even if they are hidden.} For
  example if we have:
  \begin{equation}
    M\; \DEFEQ\; \PAIR{\refPrg{2}}{\refPrg{2}} 
  \end{equation}
  and evaluate $M$, \mfb{we get a pair of two fresh locations both
  storing $2$. For the denotation of this
  resulting value, it is essential that these two} references are
  distinct. For example the program:
  \begin{equation}
      N \;\DEFEQ\; \ \mathtt{let}\ x=\refPrg{2}\ \mathtt{in}\ \PAIR{x}{x}
  \end{equation}
  has a different observable behaviour, as justified by a context
  $C[\ ]\ \DEFEQ\ \mathtt{if}\ \pi_1[\ ]=\pi_2[\ ]\ \mathtt{then}\ 1 \
  \text{else} \ 2$. Thus distinctions matter,
  even if locations are hidden.
\end{remarks}

\subsection{Semantics of Equality}
\label{sub:sem:equality}
For the rest of this section, we give  semantics to assertions, 
mainly focussing on  key features  concerning local
state and which therefore differ from the previous logics
\cite{ALIAS}.
We start with the semantics of 
equality.


A key example are the programs $\IncShared$ in (\ref{ex:IncPrgshared}) and 
$\IncUnshared$ in (\ref{ex:IncPrgshared2}) from the introduction. 
After the second assignment of (\ref{ex:IncPrgshared}) 
and (\ref{ex:IncPrgshared2}), we consider whether 
we can assert ``$!a \ = \ !b$'' (i.e.~the content of $a$ and $b$ are
equal). For this inquiry, let us first recall the following defining clause for the
satisfaction of equality of two logical terms from
\cite{ALIAS} which follows the standard 
definition of logical equality. First we set, with
$\EXPRESSIONTYPES{\Gamma}{e}{\alpha}$,
$\MODELTYPES{\Gamma}{\MMM}$ and  an open model
$\MMM = (\xi, \sigma)$,
an interpretation of $e$ under $\MMM$ as 
follows.\footnote{Since a model in \cite{ALIAS} does not have
local state, it suffices to consider open models.}
\[
\begin{array}{c}
	\semb{x}_{\xi, \sigma} = \xi(x) 
		\quad 
	\semb{!e}_{\xi, \sigma} = \sigma(\semb{e}_{\xi, \sigma}) 
 		\quad 
	\semb{\PROGRAM{c}}_{\xi, \sigma} = \LOGIC{c} 
		\quad
	\semb{\PROGRAM{op}(\VEC{e})}_{\xi, \sigma}  = \LOGIC{op}(\semb{\VEC{e}}_{\xi, \sigma})
		\\[2mm] 
	\semb{\PAIR{e}{e'}}_{\xi, \sigma} = \PAIR{\semb{e}_{\xi, \sigma}}{\semb{e'}_{\xi, \sigma}}
		\quad 
	\semb{\INJ{i}{e}}_{\xi, \sigma} = \LOGICINJ{}{i}{\semb{e}_{\xi, \sigma}}
\end{array}
\]
which are all standard. Then we define:
\begin{eqnarray}\label{model:eq:simple}
\text{(the definition from \cite{ALIAS})}\qquad
\MMM\models e_1=e_2
\quad
\IFFDEF
\quad
\MAP{e_1}_{\MMM}\WB \MAP{e_2}_{\MMM}
\end{eqnarray}
Note  that (\ref{model:eq:simple}) 
says that $e_1=e_2$ is true under an open model $\MMM$ iff their
interpretations in $\MMM$ are congruent. Now suppose we apply
(\ref{model:eq:simple}) to the question of $!a \ = \ !b$ in
$\IncUnshared$.  Since the two instances of $\IncPrg$ stored in $a$
and $b$ have the identical denotation (or identical behaviour: because
they are exactly the same programs), the equality $!a \ = \ !b$ holds
for $\IncUnshared$ if we use (\ref{model:eq:simple}).  {\em However
this interpretation is wrong:} we observe that, in $\IncUnshared$,
running $!a$ twice and running $!a$ and $!b$ consecutively lead to
different observable behaviours, due to their distinct local states
(which can be easily represented using evaluation formulae).  Hence we
must have $!a \ \not= \ !b$, which says the standard definition
(\ref{model:eq:simple}) is not applicable in the presence of the local
state. On the other hand, running $!a$ and running $!b$ have always
identical observable effects: that is we can always replace the
content of $a$ with the content of $b$ in $\IncShared$, hence the
equality $!a \ = \ !b$ should hold for $\IncShared$.

The reason that the standard equality does not hold is because two
currently identical stateful procedures will in future demonstrate
distinct behaviour.
On the other hand, two identical functions which share
the same local state always show the same behaviour hence in
$\IncShared$ we obtain equality.  

This analysis indicates that we
need to consider programs placed in contexts to compare them precisely, 
leading to 
the following extension for the semantics for the equality,
assuming $\MMM\DEFEQ(\nu \VEC{l})(\xi, \sigma)$: 
\begin{equation} 
\MMM \models e_1 = e_2 \quad \IFFDEF\quad \MMM[u:e_1] \WB \MMM[u:e_2]
\end{equation}
where $\expands{\MMM}{u}{e}$ denotes $(\nu \VEC{l})(\xi\CD
u:\MAP{e}_{\xi, \sigma}, \sigma)$ with $u$ fresh and the variables and
labels in $e$ should be free in $\MMM$. Note that $\MMM[u:e]$ offers
the notion of a ``program-in-context'' when $e$ denotes a program. 
For example let us consider a model for the state immediately
after the assignment $b:=!a$ in $\IncShared$. Then the model
may be written as (taking $a$ and $b$ to be locations):
\begin{equation}
\MMM_{\IncShared} 
        \ = \  (\nu l)\left( 
        \emptyset,
        \quad
        \begin{array}{l}
                a \mapsto \lambda ().(l := !l + 1; !l),
                        \\
                b \mapsto \lambda ().(l := !l + 1; !l), 
                         \\ 
                l \mapsto n
        \end{array}
        \right)
\end{equation}
We obtain (writing the map for $a, b, l$ above as $\sigma$ for brevity):
\begin{equation}
\MMM_{\IncShared} [u:!a]
        \ = \  (\nu l)\left( 
                u: \lambda ().(l := !l + 1; !l),\quad\sigma
        \right)
\end{equation}
Notice that the function assigned to $u$ shares $l$ in the environment:
we are interpreting the dereference $!a$ ``in context''.
Similarly we obtain:
\begin{equation}
\MMM_{\IncShared} [u:!b]
        \ = \  (\nu l)\left( 
                u: \lambda ().(l := !l + 1; !l),\quad\sigma
        \right)
\end{equation}
By which we conclude $\MMM_{{\IncShared}} \models !a=!b$: if the results
of interpreting two terms in context are equal then we know their
effects to the model are equal. We leave it to the reader to check
the inequality between $!a$ and $!b$ for the corresponding
model representing ${\IncUnshared}$. 

The definition of equality above satisfies the standard axioms of 
equality as we shall see  in \S~\ref{sec:axioms}. It is also
accompanied by a notion of \emph{symmetry} which can be used for
checking (in)equality, introduced below.

\begin{definition}[permutation] \label{def:permutation}\rm 
Let $\MMM
\DEFEQ (\new \VEC{l})(\xi\CD\ATb{v}{V}\CD\ATb{w}{W},\;\sigma)$
where $\MMM$ is typed under $\Gamma$ and \mfb{$v, w$ have} the same 
type under $\Gamma$.  
Then, we set:
\begin{equation}
\MMM\piprm{vw}{wv}\; \DEFEQ\;(\new \VEC{l})(\xi\CD\ATb{v}{W}\CD\ATb{w}{V},\;\sigma)
\end{equation}
called a {\em permutation of $\MMM$ \mfb{at $v$ and $w$}}.
We extend the notion to an arbitrary bijection $\rho$ on $\dom{\Gamma}$,
writing $\MMM[\rho]$. A permutation $\rho$ on $\MMM$ is 
a {\em symmetry on $\MMM$} when
$\MMM[\rho]\congmodel \MMM$. 
\end{definition}

\begin{proposition}[symmetries]\label{prop:symmetry}\ 
\begin{enumerate}[\em(1)]
\item \label{prop:symmetry:per:modula}
Given $\MMM_{1,2}$ and a bijection $\rho$ on free variables in 
the domain of $\MMM_{1,2}$, 
we have $\MMM_1\WB\MMM_2$ iff $\MMM_1[\rho]\WB \MMM_2[\rho]$.

\item 
\label{prop:symmetry:syn:modula}
If $\MMM_1\WB\MMM_2$ and $\rho$ is symmetry of $\MMM_1$, 
then $\rho$ is symmetry of $\MMM_2$. 

\end{enumerate}
\end{proposition}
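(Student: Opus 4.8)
The plan is to treat part~(\ref{prop:symmetry:per:modula}) as the substantial step and to derive part~(\ref{prop:symmetry:syn:modula}) by purely equational reasoning, relying on the fact that $\WB$ (the contextual congruence on models of Definition~\ref{def:modelequivalence}) is an equivalence relation.

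For part~(\ref{prop:symmetry:per:modula}) the key observation is that, in the definition of $\WB$, the models $\MMM_i\DEFEQ(\new\VEC{l}_i)(\VEC{x}:\VEC{V}_i,\sigma_i)$ interact with a testing context only through the single $n$-fold pairing $\ENCan{\VEC{V}_i}$; the variable names $\VEC{x}$ themselves are invisible to the context. Since the swap clause of Definition~\ref{def:permutation} only permutes variables of equal type, a bijection $\rho$ on the free variables respects $\Gamma$, and applying it amounts precisely to reordering the components of this tuple: $\MMM_i[\rho]$ presents $\ENCan{\VEC{V}_i^\rho}$ to contexts, where $\VEC{V}_i^\rho$ is $\VEC{V}_i$ reindexed by $\rho$ (with $\MMM[\rho](x)=\MMM(\rho(x))$), and by type preservation the reordered tuple has the same type as the original. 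First I would build a pure reindexing context
\[
P_\rho[\,\cdot\,]\ \DEFEQ\ \LET{z}{[\,\cdot\,]}{\ENCan{\pi_{\rho(1)}(z),\ldots,\pi_{\rho(n)}(z)}},
\]
which contains no locations, is typable under the same $\Gamma$, and satisfies $P_\rho[\ENCan{\VEC{V}}]\cong\ENCan{\VEC{V}^\rho}$ for every $\VEC{V}$. Then, given any context $C$ testing $\MMM_1[\rho]$ and $\MMM_2[\rho]$, I set $C'\DEFEQ C[P_\rho[\,\cdot\,]]$, which tests $\MMM_1$ and $\MMM_2$ and still avoids the labels $\VEC{l}_{1,2}$. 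Using that reduction is sound for $\cong$ (so $C[P_\rho[\ENCan{\VEC{V}_i}]]$ and $C[\ENCan{\VEC{V}_i^\rho}]$ converge together) and the hypothesis $\MMM_1\WB\MMM_2$ instantiated at $C'$, I obtain
\[
(\new\VEC{l}_1)(C[\ENCan{\VEC{V}_1^\rho}],\sigma_1)\converges
\ \Leftrightarrow\
(\new\VEC{l}_1)(C'[\ENCan{\VEC{V}_1}],\sigma_1)\converges
\ \Leftrightarrow\
(\new\VEC{l}_2)(C'[\ENCan{\VEC{V}_2}],\sigma_2)\converges
\ \Leftrightarrow\
(\new\VEC{l}_2)(C[\ENCan{\VEC{V}_2^\rho}],\sigma_2)\converges,
\]
which is exactly $\MMM_1[\rho]\WB\MMM_2[\rho]$. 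The converse implication then follows by applying this forward direction to $\MMM_i[\rho]$ and the bijection $\rho^{-1}$, since $\MMM_i[\rho][\rho^{-1}]=\MMM_i$.

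For part~(\ref{prop:symmetry:syn:modula}), assuming $\MMM_1\WB\MMM_2$ and that $\rho$ is a symmetry of $\MMM_1$ (i.e.~$\MMM_1[\rho]\WB\MMM_1$), I would chain
\[
\MMM_2[\rho]\ \WB\ \MMM_1[\rho]\ \WB\ \MMM_1\ \WB\ \MMM_2 ,
\]
where the first step is part~(\ref{prop:symmetry:per:modula}) applied to $\MMM_1\WB\MMM_2$ together with symmetry of $\WB$ (the same reindexing argument covering any type-preserving permutation of the model's free names), the second is the symmetry hypothesis, and the third is the assumption. Transitivity of $\WB$ then yields $\MMM_2[\rho]\WB\MMM_2$, i.e.~$\rho$ is a symmetry of $\MMM_2$.

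The main obstacle is the bookkeeping in part~(\ref{prop:symmetry:per:modula}): verifying that $\MMM_i[\rho]$ genuinely presents the reordered tuple $\ENCan{\VEC{V}_i^\rho}$ to contexts (which requires care about the direction of $\rho$ versus $\rho^{-1}$ under the swap convention), that type preservation of $\rho$ makes the permuted tuple share the type of the original so that $C'$ remains typable under the same $\Gamma$ and the product types match componentwise, and that $P_\rho$ is indeed congruent to the identity up to reordering. Once these points are settled, the remainder is routine equivalence-relation reasoning.
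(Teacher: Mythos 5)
Your proposal is correct: the paper's own proof of this proposition is literally ``Obvious by definition,'' and your argument---precomposing test contexts with the pure reindexing context $P_\rho$ for part (1), then chaining $\MMM_2[\rho]\WB\MMM_1[\rho]\WB\MMM_1\WB\MMM_2$ for part (2)---is exactly the unfolding of the definitions that the authors leave implicit. The only facts you use beyond the definitions (type-preservation of $\rho$, soundness of call-by-value beta/projection reduction on values for the underlying congruence, and nested binary projections standing in for the $n$-ary ones) are standard background assumptions in the paper's setting, so nothing is missing.
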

\begin{proof}
Obvious by definition.  
\end{proof}

\NI We illustrate how we can use the result above to model the
subtlety of equality of behaviours with shared local state. Let us
consider the following models $\MMM_1$ and $\MMM_2$, which represent
the situations analogous to $\IncShared$ and $\IncUnshared$ (again
after running the second assignment).  The defining clause for
equality \mfb{gives},
using $\expands{\MMM_1}{u}{v}\WB\expands{\MMM_1}{u}{w}$:
\begin{equation}
\MMM_1 \ = \  (\nu l)\left( 
        \begin{array}{l}
                v : \lambda ().(l := !l + 1; !l),
                        \\
                w : \lambda ().(l := !l + 1; !l),
        \end{array}
        \quad
                l \mapsto 0
        \right)
                \models
        v = w
\end{equation}
On the other hand, we have: 
\begin{equation}
\MMM_2 \ = \   
        (\nu ll')\left( 
        \begin{array}{l}
                v : \lambda ().(l := !l + 1; !l),
                        \\
                w : \lambda ().(l' := !l' + 1; !l'),

        \end{array}
        \quad
        \begin{array}{l}
                l \mapsto 0, \\
                l' \mapsto 0
        \end{array}
        \right)
                \models
        v \neq w
\end{equation}
This is because $\piprm{uv}{vu}$ is a symmetry of $\expands{\MMM_2}{u}{v}$,  
but {\em not} of $\expands{\MMM_2}{u}{w}$. The latter can be examined
by comparing the following two models (writing ``$u, w:V$'' to denote
``$u:V, w:V$''):
\begin{eqnarray}
\expands{\MMM_2}{u}{w}
& \ = \  &
        (\nu ll')\left( 
        \begin{array}{l}
                v : \lambda ().(l := !l + 1; !l),
                        \\
                u, w : \lambda ().(l' := !l' + 1; !l'),
        \end{array}
        \quad
        \begin{array}{l}
                l \mapsto 0, \\
                l' \mapsto 0
        \end{array}
        \right)\\
(\expands{\MMM_2}{u}{w})\piprm{uv}{vu}
& \ = \  &
        (\nu ll')\left( 
        \begin{array}{l}
                u : \lambda ().(l := !l + 1; !l),
                        \\
                v, w : \lambda ().(l' := !l' + 1; !l'),
        \end{array}
        \quad
        \begin{array}{l}
                l \mapsto 0, \\
                l' \mapsto 0
        \end{array}
        \right)
\end{eqnarray}
which  differ semantically when e.g.~$v$ and $w$ are 
invoked consecutively.
Hence 
by Proposition \ref{prop:symmetry} (\ref{prop:symmetry:syn:modula}), 
$\expands{\MMM_2}{u}{v}\not\WB\expands{\MMM_2}{u}{w}$, justifying 
the above inequality $v\not = w$. 
The permutations 
also help to prove the axioms of equality in \S~\ref{sec:axioms}. 

\subsection{Semantics of Necessity and Possibility Operators}
\label{subsec:necessity}
We define, with $u$ fresh,  
\[ 
\expands{\MMM}{u}{N}\converges \MMM'
\quad \mbox{when}\
(N\xi, \sigma)\converges (\new \VEC{l}')(V, \sigma')
\ \mbox{with}\
\MMM=(\new\VEC{l})(\xi,\ \sigma) 
\ \mbox{and}\ 
\MMM'=(\new\VEC{l}\VEC{l}')(\xi\CD u\!:\!V,\ \sigma')
\] 
where we always assume $u$ is fresh
and the variables and labels in $N$ are free in $\MMM$. 
The above definition intuitively means that 
$\MMM$ {\em can} reduce to $\MMM'$ through arbitrary effects on $\MMM$ by an
external program: in other words, $\MMM'$ is a hypothetical future
state (or ``possible world'') of $\MMM$. 
Then we generate $\MMM\EVOLVES\MMM'$ by 
\begin{enumerate}[(1)]
\item $\MMM\EVOLVES\MMM$ 
\item if $\MMM\EVOLVES\MMM_0$ and 
$\MMM_0[u:N]\converges \MMM'$, then  
$\MMM\EVOLVES\MMM'$
\end{enumerate} 
Thus $\MMM\EVOLVES \MMM'$ reads: 
\begin{quote}
{\em 
$\MMM$ may evolve to $\MMM'$ by interaction 
with zero or more typable programs}.
\end{quote}
Note that $\EVOLVES$ is reflexive and transitive. 
If $\MMM\EVOLVES\MMM'$
and $\MMM'$ adds the new domain $\ASET{x_1..x_n}$, then 
$x_1..x_n$ is its {\em increment} and we often explicitly write 
$\MMM\EVOLVESinc{x_1..x_n}\MMM'$.

The semantics of $\allworlds C$ says that for any target of evolution,
$C$ should hold:
\begin{equation}\label{def:allworlds}
  \MMM\models \allworlds C
  \quad\IFFDEF\quad
  \forall \MMM'.(\MMM\EVOLVES \MMM' \ENTAILS \MMM'\models C).
\end{equation} 
Dually we set:
\begin{equation}\label{def:someworld}
  \MMM \models \someworld C
  \quad\IFFDEF\quad
  \exists \MMM'.(\MMM\EVOLVES \MMM' \ANDl \MMM'\models C).
\end{equation} 

\subsection{Semantics of Evaluation Formulae}
The semantics of the evaluation formula is given below: 
\[ 
\MMM \models \ONEEVAL{e}{e'}{x}{C}
  \quad\IFFDEF\quad
\exists \MMM'.(\MMM[x:ee']\converges \MMM'\ANDl \MMM'\models C)
\]
which says that in the {\em current} state, if we apply $e$ to $e'$,
then the return value (named $x$) and the resulting state together
satisfy $C$. 

\martinb{We already motivated  the
 decomposition of  the original evaluation formulae 
\cite{ALIAS} into the simplified evaluation formulae and
the necessity operator from} \S~\ref{sub:ex:local}. 
Let us write the original evaluation formulae
in \cite{GLOBAL,ALIAS} as $\ORGEF{\eval{C}{e}{e'}{x}{C'}}$. Then
we can translate this in the present language as:
$$
\ORGEF{\eval{C}{e}{e'}{x}{C'}}
\;\LOGICEQ\;
\exists f,g.
(f=e\AND g=e'\AND \allworlds \eval{C}{f}{g}{x}{C'})
$$ 
that is, we interpret $e$ and $e'$ in the present state and name
them $f$ and $g$, and assert that, now or in any future state in which
$C$ is satisfied, if we apply $f$ to $g$, then it returns $x$ which,
together with the resulting state, satisfies $C'$.
The original clause says:
\begin{quote}
{\em In any initial hypothetical state which is reachable from the
  present state and which satisfies $C$, 
the application of $e$ to $e'$ terminates and both the result $x$
and the final state satisfy  $C'$.}
\end{quote}
To see the reason why we require $\allworlds$ 
\mfb{in the  specification of functions},  
we set:
\begin{equation}\label{HSnew}
\MMM \quad\DEFEQ\quad (\new l)(u:\lambda ().!l,\ w:\lambda ().l:=!l+1,\ \ l\mapsto 5)
\end{equation}
We can check that the set of all legitimate hypothetical states from this
state (i.e.~$\MMM'$ such that $\MMM[z:N]\converges \MMM'$) can be enumerated by:
\begin{equation}\label{HSnewdash}
  \MMM'/z \quad\DEFEQ\quad (\new l)(u:=\lambda ().!l,\ w:\lambda ().l:=!l+1,\ \ l\mapsto m)
\end{equation}
for each $m\geq 5$ 
(\martinb{since these are  essentially all the models reachable from $\MMM$}, as 
outside programs can create new references). 

Thus we have, for
$\MMM$ in (\ref{HSnew}):
\begin{equation}\label{HSnewres}
\MMM \models \allworlds \ONEEVAL{w}{()}{x}{x\geq 5}
\end{equation}
which says in any {\em future} state where $w$ is invoked, it always returns something no less
than $5$, which is operationally reasonable.

We can use this formula for specifying
the following program: 
\begin{equation}\label{HSnewconc}
\begin{array}{rcl}
L
&\quad\DEFEQ&\quad
\mathtt{let}\ x=\, \REFPRG{5}\ \mathtt{in}\\
&&
\qquad
\mathtt{let}\ u=\, \lambda ().!x\ \mathtt{in}\\ 
&&
\qquad\qquad
\mathtt{let}\ w=\, \lambda ().x:=!x+1\; \mathtt{in}\\
&&
\qquad\qquad\qquad
(fw)\ ;\ \IFTHENELSE{x\geq      5}{\mathtt{t}}{\mathtt{f}}
\end{array}
\end{equation}
When the application $fw$ takes place, some unknown computation occurs 
which may change the value of $x$: but as far as 
$fw$ terminates, it always returns $\true$. 
To reach (\ref{HSnewres}), we need to 
consider {\em all possible} $\MMM'$ with the effect from the outside. 
Since such $\MMM'$ satisfies (\ref{HSnewdash}), 
we can conclude the program $L$ always 
returns $\mathtt{t}$ (if $fw$ terminates).


\subsection{Semantics of Universal and Existential Quantification}
\label{subsec:sem:universal}
The universal and existential quantifiers also need 
to incorporate local state. 
We need one definition to identify a set of terms which 
do not change the state of any models. Below $\MMM^{\Gamma}$
indicates that $\MMM$  is typable under $\Gamma$.

\begin{definition}[Functional Terms]\label{def:functional}
\rm We define the set of {\em functional terms} of type $\Gamma$, denoted
${\mathcal{F}}^{\Gamma}$, or often
simply 
${\mathcal{F}}$ leaving its typing implicit, as:
\[ {\mathcal{F}}\ \DEFEQ \ \{ N \ | \ \forall \MMM^{\Gamma}. 
(\MMM[u:N]\Downarrow \MMM' \ENTAILS \MMM \cong \MMM'/u)\}
\]
where 
$\MMM/u \DEFEQ (\new \VEC{l})(\xi,\sigma)$ if 
$\MMM=(\new \VEC{l})(\xi\cdot\AT{u}{V},\sigma)$; 
and $\MMM/u\DEFEQ\MMM$ when $u\not\in \FV{\MMM}$. 
We write $L,L',...$ for functional terms, often leaving their types implicit.
\end{definition}

\NI Above $\MMM \cong \MMM'/u$ ensures that $L$ does not affect $\MMM$
during evaluation of $L$ in $\MMM$.  Note that values are always functional
terms.  In a context of reasoning for object-oriented languages, a
similar formulation (called strong purity) is used in \cite{Naumann05}
for justifying the semantics of method invocations whose evaluation has
no effect on the state of existing objects.

Now we define: 
\begin{eqnarray}
\MMM \models \forall x.C 
\quad \IFFDEF \ 
\forall L\in \mathcal{F}.(\expands{\MMM}{x}{L}\converges \MMM' \ENTAILS 
\MMM'\models C)  
\end{eqnarray}
Dually, we have: 
\begin{eqnarray}
\MMM \models \exists x.C 
\ \IFFDEF \ 
\exists L\in \mathcal{F}.(\expands{\MMM}{x}{L}\converges \MMM' \ANDl 
\MMM'\models C)  
\end{eqnarray}
If we restrict $L$ above to a value,
then the definition coincides with 
the original one in \cite{ALIAS}. We need to extend values to
functional terms so that 
a term can read information from 
hidden locations \martinb{(cf.~the semantics of 
equality $e_1 = e_2$). }
As a simple example, consider: 
\[ 
\MMM\DEFEQ (\new l_1,l_2)(y:l_1, \ l_1\mapsto l_2, l_2\mapsto 2)
\]
Under this model, we wish to say 
$\MMM \models \exists x.x = !y$. But if we only allow $x$ to range
over values, this standard tautology does not hold for $\MMM$.
Using the functional term $!y\in \mathcal{F}$, we can expand the entry $x$ 
with $!y$, and we have: 
\[ 
\MMM[x\, :\, !y] \Downarrow 
(\new l_1l_2)(x:l_1\cdot y:l_1, \ l_1\mapsto l_2, l_2\mapsto 2)
\DEFEQ \MMM' \ANDl \MMM'\models x=y 
\] 
\martinb{Thus using a functional term $L$ instead of a value $V$ 
for a quantified variable
is necessary
for reasons similar to those that required modifying the semantics
of equality.
Universal and existential quantifiers satisfy the standard 
axioms familiar from first-order logic,  some of which are studied later.}

\subsection{Semantics of Hiding}
\label{subsec:sem:hiding}
The universal hiding-quantifier has the following semantics.
\begin{equation} \label{ACACAC}
\MMM \models \HIDEa x.C
\quad
\IFFDEF
\quad
\forall \MMM'.
((\new l)\MMM'\WB \MMM\,\ENTAILS\,\ \MMM'[x:l]\models C)
\end{equation}
where $l$ is fresh, i.e. $l\not\in\FL{\MMM}$ where $\FL{\MMM}$ denotes
free labels in $\MMM$.
The notation $(\new l)\MMM'$ 
denotes addition of the hiding of $l$ to $\MMM'$, 
as well as indicating that $l$ occurs free in $\MMM'$. 
$\MMM[x:l]$ adds $x:l$ to the environment part of $\MMM$. 

Dually, with $l$ fresh again:
\begin{equation} \label{AEAEAE}
\MMM \models \HIDEe x.C
\quad
\IFFDEF
\quad
\exists \MMM'.
((\new l)\MMM'\WB \MMM\ANDl \MMM'[x:l]\models C)
\end{equation}
\martinb{which says that $x$ denotes a hidden reference, say $l$,
and the result of taking it off from $\MMM$ satisfies $C$.
}

As an example of satisfaction, let:
\begin{equation} \label{AEAEAE2}
\MMM\DEFEQ
(\new l)(\ASET{u:\lambda ().(l:=!l+1;!l)},\; \ASET{l\mapsto 0})
\end{equation}
then we have:
\begin{equation} 
\MMM
\models
 \HIDEe x.C 
\end{equation}
with 
\begin{equation} 
C \ \DEFEQ \ 
!x=0 \ANDl 
\allworlds \forall  i.\ASET{!x=i}\APP{u}{()}=z\ASET{z=!x\ANDl\, !x=i+1}
\end{equation}
using the definition in (\ref{AEAEAE}) above. To see this holds, let
\begin{equation}
\MMM'\DEFEQ (\ASET{u:\lambda ().(l:=!l+1;!l)}, \; \ASET{l\mapsto 0})
\end{equation} 
We have
$(\new l)\MMM'\LOGICEQ \MMM$ and $\MMM'[x:l]\models C$.
Here $\MMM$ represents a situation where 
$l$ is hidden and $u$ denotes a 
function which increments and returns the content of $l$;
whereas $\MMM'$ is the result of taking off this hiding,
exposing the originally local state, cf.~\cite{CairesL:spalogcI}.

Despite  $x$'s type being  a reference, 
$\forall x.C$ differs substantially  from 
$\HIDEforall x.C$. The former says that for any reference $x$, which can be
either (1) an existing free reference;
(2) an existing hidden reference reachable through dereferences; or
(3) a fresh reference with  arbitrary content, 
the model satisfies $C$. On the other hand, the latter
means that for any reference $x$ which is
hidden in the present model, $C$ should hold: in this case
$x$ cannot be a free reference name hence (1) is not included.
Similarly for their dual existential versions.


\subsection{Semantics of Content Quantification}
\label{subsec:sem:content}
Next we define the semantics of the content quantification. 
Let us write 
$\updates{\MMM}{x}{V}$ 
for $(\nu \VEC{l})(\xi, \sigma[l\mapsto V])$
with $\MMM = (\nu \VEC{l})(\xi, \sigma)$
and $\xi(x)=l$.
In \cite{ALIAS} (without local state), 
$\MMM \models \allCon{x}{C}$ is
defined as $\forall V. \updates{\MMM}{x}{V} \models C$
which means that for all content of $x$, $C$ holds. 
In the presence of the local state, we simply extend the use of values
to the use of functional terms in the sense of 
Definition \ref{def:functional} as follows: 
\begin{eqnarray}
  \MMM \models \allCon{e}{C} \quad \IFFDEF \quad 
\forall L\in {\mathcal{F}}.
\updates{\MMM}{e}{L} \models C
\end{eqnarray}
where 
we write
$\updates{\MMM}{e}{L}$ 
for 
$(\nu \VEC{l})(\xi, \sigma[l'\mapsto V])$,
assuming
$\MMM=(\nu \VEC{l})(\xi, \sigma)$,
$\MAP{e}_{\xi, \sigma}=l'$,
$(\nu \VEC{l})(L\xi , \sigma)\converges \MMM'$ and $\MMM'\WB (\nu \VEC{l})(V, \sigma)$.
Thus we consider an update 
through the assignment of an
external functional term $L$ to a location in $\MMM$ 
under local names.  
With this definition, all  the axioms and  invariant rules 
in \cite{ALIAS} stay unchanged. 

\subsection{Semantics of Reachability}
\label{subsec:sem:reachability}
We now define the semantics of reachability.
Let $\sigma$ be a store
and $S\subset\dom{\sigma}$. Then the {\em label
closure of $S$ in $\sigma$}, written $\ncl{S}{\sigma}$, is the minimum
set $S'$ of locations such that: (1) $S\subset S'$ and (2) If $l\in S'$ then
$\FL{\sigma(l)}\subset S'$.
The label closure satisfies the following natural properties. 

{\begin{lemma} \label{lem:ncl}
For all $\sigma$, we have: 
\begin{enumerate}[\em(1)]
\item $S\subset \ncl{S}{\sigma}$;  
$S_1\subset S_2$ implies 
      $\ncl{S_1}{\sigma} \subset \ncl{S_2}{\sigma}$;  and 
$\ncl{S}{\sigma}=\ncl{\ncl{S}{\sigma}}{\sigma}$
\item 
$\ncl{S_1}{\sigma} \cup \ncl{S_2}{\sigma} = \ncl{S_1\cup S_2}{\sigma}$

\item  
$S_1 \subset \ncl{S_2}{\sigma}$ and 
$S_2 \subset \ncl{S_3}{\sigma}$, 
then $S_1 \subset \ncl{S_3}{\sigma}$

\item 
there exists $\sigma'\subset \sigma$ such that 
$\ncl{S}{\sigma} = \FL{\sigma'}= \dom{\sigma'}$.  

\end{enumerate}
\end{lemma}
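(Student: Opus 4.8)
The plan is to reduce all four items to a single explicit description of $\ncl{S}{\sigma}$ as reachability in a finite directed graph, after which items (1)--(3) become routine closure-operator bookkeeping and only (4) needs genuine care. Concretely, I would first introduce the one-step \emph{points-to} relation on $\dom{\sigma}$ by setting $l \rightarrow_\sigma l'$ iff $l' \in \FL{\sigma(l)}$, and write $\rightarrow_\sigma^*$ for its reflexive-transitive closure. Since the store is typed, $\FL{\sigma(l)} \subset \dom{\sigma}$ for every $l$, so $\dom{\sigma}$ itself satisfies the two defining conditions of the closure; hence the family of sets satisfying those conditions and containing $S$ is nonempty and closed under intersection, so the minimum exists and $\ncl{S}{\sigma}$ is well-defined. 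The key lemma I would establish is the identity
\[
\ncl{S}{\sigma} \;=\; \{\, l' \mid \exists l \in S.\ l \rightarrow_\sigma^* l' \,\},
\]
proved by checking that the right-hand side both satisfies conditions (1)--(2) of the definition (closure under $\rightarrow_\sigma$ from transitivity of $\rightarrow_\sigma^*$, and $S$-containment from reflexivity) and is contained in any set that does (by induction on the length of the $\rightarrow_\sigma^*$-path). Finiteness of $\dom{\sigma}$ guarantees that the equivalent iterative construction $S_0 = S$, $S_{n+1} = S_n \cup \bigcup_{l \in S_n}\FL{\sigma(l)}$ stabilises, which is convenient for the inductive arguments.

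Given this characterisation, items (1)--(3) follow directly. For (1), $S \subset \ncl{S}{\sigma}$ is reflexivity of $\rightarrow_\sigma^*$; monotonicity is immediate because enlarging $S$ only enlarges the set of reachable locations; and idempotence $\ncl{\ncl{S}{\sigma}}{\sigma} = \ncl{S}{\sigma}$ follows from transitivity of $\rightarrow_\sigma^*$, since a location reachable from something reachable from $S$ is already reachable from $S$. Item (2) is just the distribution of the existential quantifier over a union: $l' \in \ncl{S_1 \cup S_2}{\sigma}$ iff $l'$ is $\rightarrow_\sigma^*$-reachable from some $l \in S_1$ or from some $l \in S_2$. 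Item (3) is a short chain combining monotonicity and idempotence: from $S_2 \subset \ncl{S_3}{\sigma}$ and monotonicity we get $\ncl{S_2}{\sigma} \subset \ncl{\ncl{S_3}{\sigma}}{\sigma} = \ncl{S_3}{\sigma}$, and then $S_1 \subset \ncl{S_2}{\sigma} \subset \ncl{S_3}{\sigma}$.

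The only item requiring real attention is (4). Here I would take $\sigma'$ to be the restriction $\sigma\!\upharpoonright\!\ncl{S}{\sigma}$, so that $\dom{\sigma'} = \ncl{S}{\sigma}$ holds by construction and $\sigma' \subset \sigma$; it then remains to verify $\FL{\sigma'} = \dom{\sigma'}$. One inclusion uses closedness: since $\ncl{S}{\sigma}$ is closed under $\rightarrow_\sigma$, for every $l \in \dom{\sigma'}$ we have $\FL{\sigma(l)} \subset \ncl{S}{\sigma} = \dom{\sigma'}$, so no free location occurring in the range of $\sigma'$ escapes its domain. The reverse inclusion $\dom{\sigma'} \subset \FL{\sigma'}$ is where the definition of $\FL$ on stores must be pinned down: it relies on the convention (used throughout for tracking free and hidden locations) that the domain labels themselves count among the free locations of a store. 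I expect this to be the main --- though still minor --- obstacle, since it is the one step that cannot be read off the reachability picture alone and instead depends on fixing the precise reading of $\FL{\sigma'}$; once that convention is made explicit, the two inclusions close the argument.
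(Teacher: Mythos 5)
Your proof is correct, and for the only substantive item, (4), it coincides with the paper's: the paper takes exactly the same witness, $\sigma'=\cup_{l\in \ncl{S}{\sigma}}[l\mapsto \sigma(l)]$, i.e.\ the restriction of $\sigma$ to the closure. The difference lies in how you reach (1)--(3): the paper dispatches them as ``direct from the definition'', meaning the standard closure-operator argument on the minimal-closed-set characterisation (any closed superset of $S$ contains $\ncl{S}{\sigma}$ by minimality; $\ncl{S}{\sigma}$ is itself closed; (3) then follows from (1) and (2)), whereas you first prove an explicit graph-reachability characterisation $\ncl{S}{\sigma} = \{l' \mid \exists l\in S.\, l\rightarrow_\sigma^* l'\}$ and derive (1)--(3) from path reasoning and quantifier distribution. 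Your route costs an extra induction but buys a concrete, reusable description of the closure --- essentially the picture the paper itself appeals to informally when discussing the semantics of $\reach$ --- and it makes (2) (union) transparent, where the direct argument needs the small observation that a union of closed sets is closed. You also earn credit for flagging the one point the paper's ``obviously'' glosses over: the equality $\FL{\sigma'}=\dom{\sigma'}$ in (4) holds only under the convention that the domain labels of a store count among its free locations (as they must, for the $\nu$-binding of locations in configurations to make sense); with the range-only reading of $\FL$, a store such as $[l\mapsto 5]$ would be a counterexample. That is a fair precision, not a gap.
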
}
\NI 
\begin{proof}
(1,2) are direct from the definition. (3)  follows immediately
from (1,2). For (4), take $\sigma'=\cup_{l\in
\ncl{S}{\sigma}}[l\mapsto \sigma(l)]$. Then 
obviously $\sigma'\subset \sigma$ and  
$\ncl{S}{\sigma} = \FL{\sigma'}= \dom{\sigma'}$.  
\end{proof}

For reachability, we define: 
\[
\begin{array}{c}
\text{
\begin{tabular}{ll}
$\MMM\models e_1\reachable e_2$ 
\ \ \ \ if &
$\MAP{e_2}_{\xi, \sigma}\in \ncl{\FL{\MAP{e_1}_{\xi,
      \sigma}}}{\sigma}$
\ for each $(\new \VEC{l})(\xi, \sigma)\WB \MMM$
\end{tabular}
}
\end{array}
\]
The clause says
the set of all reachable locations from $e_1$ includes $e_2$
modulo $\WB$.

For the programs  in \S~\ref{sub:ex:local} ({\ref{ex:funcreachable}}),
we can check
$f_w\reachable x$, $f_r\reachable x$ and  $f_c\reachable z$
hold under 
$
f_w:\lambda ().(x:=1), \
f_r: \lambda ().!x, \
f_c: \mathtt{let} \ x=
\REFPROG{z} \ \mathtt{in} \ \lambda ().x$ (regardless of the store part). 

The following characterisation of $\notreach$ is often useful for
justifying  fresh name axioms.  Below $\sigma = \sigma_1\uplus
\sigma_2$ indicates that $\sigma$ is the union of $\sigma_1$ and $\sigma_2$,
assuming $\dom{\sigma_1}\cap \dom{\sigma_2} = \emptyset$.

\begin{proposition}[partition] 
\label{pro:partition}
$\MMM \models x\notreach u$ if and only if for some $\VEC{l}$, $V$, $l$ and $\sigma_{1,2}$,
we have $\MMM\WB (\new \VEC{l})(\xi\CD u:V\CD x:l,\ \sigma_1\uplus \sigma_2)$ 
such that 
$\ncl{\FL{V}}{\sigma_1\uplus \sigma_2}=\FL{\sigma_1}=\dom{\sigma_1}$ and
$l\in\dom{\sigma_2}$.
\end{proposition}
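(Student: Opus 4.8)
The plan is to unfold both sides into the operational clause for reachability and reduce the equivalence to a single application of Lemma \ref{lem:ncl}(4). First I would record precisely what each side asserts. By definition $x \notreach u$ abbreviates $\neg(u \reachable x)$, so, combining the clause for $\reachable$ with the standard reading of negation, $\MMM \models x \notreach u$ holds exactly when there is \emph{some} representative $(\new \VEC{l})(\xi, \sigma) \WB \MMM$ with $\MAP{x}_{\xi,\sigma} \notin \ncl{\FL{\MAP{u}_{\xi,\sigma}}}{\sigma}$. Since $u,x \in \dom{\Gamma}$, every such representative has $\xi(u) = V$ a value and $\xi(x) = l$ a location, i.e.~it is already of the shape $(\new \VEC{l})(\xi_0 \CD u\!:\!V \CD x\!:\!l, \sigma)$. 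Thus the proposition is in essence a normal-form statement about this one witnessing representative, and the two directions will exploit it in opposite ways.

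For the ($\Rightarrow$) direction I would take a witness $(\new \VEC{l})(\xi_0 \CD u\!:\!V \CD x\!:\!l, \tau)$ with $l \notin \ncl{\FL{V}}{\tau}$ (using $\MAP{u}_{\xi_0,\tau}=V$ and $\MAP{x}_{\xi_0,\tau}=l$). Applying Lemma \ref{lem:ncl}(4) with $S = \FL{V}$ and store $\tau$ yields $\sigma_1 \subset \tau$ such that $\ncl{\FL{V}}{\tau} = \FL{\sigma_1} = \dom{\sigma_1}$. Setting $\sigma_2 \DEFEQ \tau \setminus \sigma_1$ gives $\tau = \sigma_1 \uplus \sigma_2$ and hence $\ncl{\FL{V}}{\sigma_1 \uplus \sigma_2} = \FL{\sigma_1} = \dom{\sigma_1}$, which is the first required equality. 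It then remains to check $l \in \dom{\sigma_2}$: well-typedness of the model forces $\MAP{x}_{\xi_0,\tau}=l \in \dom{\tau}$, and since $l \notin \ncl{\FL{V}}{\tau} = \dom{\sigma_1}$ and $\dom{\sigma_1}\cap\dom{\sigma_2}=\emptyset$, we conclude $l \in \dom{\sigma_2}$.

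For the ($\Leftarrow$) direction I would simply verify that the given partitioned model is itself a witness of non-reachability. Taking the representative $(\new \VEC{l})(\xi \CD u\!:\!V \CD x\!:\!l, \sigma_1 \uplus \sigma_2)$, which is $\WB \MMM$ by hypothesis, the reachable set from $u$ is $\ncl{\FL{\MAP{u}}}{\sigma_1 \uplus \sigma_2} = \ncl{\FL{V}}{\sigma_1 \uplus \sigma_2} = \dom{\sigma_1}$, whereas $\MAP{x} = l \in \dom{\sigma_2}$. As $\dom{\sigma_1} \cap \dom{\sigma_2} = \emptyset$, we get $l \notin \ncl{\FL{V}}{\sigma_1 \uplus \sigma_2}$, so reachability of $x$ from $u$ fails at this representative, giving $\MMM \not\models u \reachable x$, i.e.~$\MMM \models x \notreach u$.

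The two genuinely delicate points, which I would state explicitly rather than gloss over, are: (i) the correct handling of the universal quantifier over $\WB$-representatives that sits inside the clause for $\reachable$ — negation turns it into the existential witness that drives both directions, and it is this interplay, not any calculation, that makes the equivalence work; and (ii) the appeal to well-typedness to guarantee $\MAP{x}_{\xi,\sigma} \in \dom{\sigma}$, without which the final step $l \in \dom{\sigma_2}$ could fail. All the remaining label-closure bookkeeping is routine once Lemma \ref{lem:ncl}(4) has been invoked, so I expect (i) to be the only real obstacle worth dwelling on.
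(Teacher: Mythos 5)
Your proof is correct and follows essentially the same route as the paper's: the only-if direction extracts a witnessing representative, invokes Lemma~\ref{lem:ncl}(4) with $S=\FL{V}$ to obtain $\sigma_1$, takes $\sigma_2$ as the complement and concludes $l\in\dom{\sigma_2}$ from disjointness, while the if direction is the direct verification the paper dismisses as ``obvious by definition of reachability.'' Your explicit handling of the quantifier over $\WB$-representatives (negation turning it into an existential witness) and of $l\in\dom{\tau}$ via well-typedness just spells out steps the paper leaves implicit.
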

\begin{proof}
For the only-if direction,  assume
$\MMM \models x\notreach u$.
By the definition of (un)reachability,
we can set  (up to $\WB$)
$\MMM\DEFEQ (\new \VEC{l}')(\xi\CD u:V\CD x:l,\ \sigma)$ 
such that
$l\not\in \ncl{\FL{V}}{\sigma}$.  
Now take $\sigma_1$ such that
$\ncl{\FL{V}}{\sigma}=\ncl{\FL{V}}{\sigma_1}=\FL{\sigma_1}=\dom{\sigma_1}$
by Lemma \ref{lem:ncl}. Note by definition
$l\not\in \dom{\sigma_1}$.
Now let
$\sigma=\sigma_1\uplus \sigma_2$.
Since $l\in \dom{\sigma}$,  we know 
$l\in \dom{\sigma_2}$, hence done.
The if-direction is obvious by definition of reachability.
\end{proof}
\NI The characterisation
says that if $x$ is unreachable from $u$ then, up to $\WB$, the store
can be partitioned into one covering all reachable names from $u$ and
another containing $x$.

Now we give the full definition of the satisfaction relation. 
For readability, we first list the auxiliary definitions 
many of which have already been stated before.

\begin{notation}\hfill
\begin{enumerate}[(a)]
\item $\expands{\MMM}{u}{e}$ denotes $(\nu \VEC{l})(\xi\CD
u:\MAP{e}_{\xi, \sigma}, \sigma)$ where we always assume $u$ is fresh
and the variables and labels in $e$ are free in $\MMM$.

\item
$\MMM/u$ denotes $(\new \VEC{l})(\xi,\sigma)$ if 
$\MMM=(\new \VEC{l})(\xi\cdot\AT{u}{V},\sigma)$; 
and if $u\not\in \FV{\MMM}$
we set $\MMM/u=\MMM$. 

\item
$\expands{\MMM}{u}{N}\converges \MMM'$ when 
$(N\xi, \sigma)\converges (\new \VEC{l}')(V, \sigma')$ and
$\MMM'=(\new\VEC{l}\VEC{l}')(\xi\CD u\!:\!V,\ \sigma')$
with $\MMM=(\new\VEC{l})(\xi,\ \sigma)$ 
where we always assume $u$ is fresh
and the variables and labels in $N$ are free in $\MMM$. 

\item
$\MMM\EVOLVES \MMM'$ is generated by: 
(1) $\MMM\EVOLVES\MMM$; and  
(2) if $\MMM\EVOLVES\MMM_0$ and 
$\MMM_0[u:N]\converges \MMM'$, then  
$\MMM\EVOLVES\MMM'$. 

\item
We write
$\updates{\MMM}{e}{V}$ 
for 
$(\nu \VEC{l})(\xi, \sigma[l\mapsto V])$
with 
$\MMM=(\nu \VEC{l})(\xi, \sigma)$ and 
$\MAP{e}_{\xi, \sigma}=l$. 

\item
We write
$\MMM\CDs\AT{\TVX}{\alpha}$ 
for 
$\MMM=(\nu \VEC{l})(\xi\CDs\AT{\TVX}{\alpha}, \sigma)$ 
with 
$\MMM=(\nu \VEC{l})(\xi, \sigma)$ where  
$\TVX$ is not in $\MMM$ and $\alpha$ is closed.

\end{enumerate}
\end{notation}

\begin{definition}[Satisfaction]
\rm 
\label{def:satisfaction}
The semantics of the assertions follows.  All omitted cases are by de
Morgan duality.  
\begin{enumerate}[(1)]
\item 
\label{model:eq}
$\MMM \models e_1 = e_2$ if $\MMM[u:e_1] \congmodel \MMM[u:e_2]$.

\item 
\label{model:and}
$\MMM \models C_1\AND C_2$ if $\MMM \models C_1$ and $\MMM \models C_2$.

\item 
\label{model:neg}
$\MMM \models \neg C$ if not $\MMM \models C$.

\item 
$\MMM\models \allworlds C$ 
if 
  $\forall \MMM'.(\MMM\EVOLVES \MMM' \ENTAILS \MMM'\models C)$.

\item 
\label{model:forall}
$\MMM \models \forall x.C$ 
if 
$\forall L\in \mathcal{F}.(\expands{\MMM}{x}{L}\converges \MMM' 
\ENTAILS \MMM'\models C)$  
\item 
\label{model:hidea}
$\MMM \models \HIDEa x.C$ if 
$\forall \MMM'.
((\new l)\MMM'\WB \MMM\,\ENTAILS\,\ \MMM'[x:l]\models C)$ 

\item 
\label{model:poly}
$\MMM \models \forall \TVX.C$ if 
for \mfb{all closed types} $\alpha$, $\MMM\CDs\AT{\TVX}{\alpha}\models C$.

\item 
\label{model:allcon}
$\MMM \models \allCon{e}{C}$ if 
for each $\forall L\in {\mathcal{F}}. 
\updates{\MMM}{e}{L}\models C$.

\item 
\label{model:reach}
$\MMM\models e_1\reachable e_2$
if for each $(\new \VEC{l})(\xi, \sigma) \WB \MMM$,
$\MAP{e_2}_{\xi, \sigma}\in \ncl{\FL{\MAP{e_1}_{\xi,
      \sigma}}}{\sigma}$. 

\item \label{oneeval}
$\MMM \models \ONEEVAL{e}{e'}{z}{C}$ if
$\exists \MMM'.(\MMM[x:ee']\converges \MMM'\ANDl \MMM'\models C)$. 

\item  \label{locatedassertion}
$\MMM \models \ONEEVAL{e}{e'}{z}{C}@\VEC{w}$ if
\[
\begin{array}{ll}
\exists \MMM'.(& \MMM[z:ee']\converges\MMM' \ANDl \MMM' \models
C'\ANDl \\
& 
\forall \MMM''.(\MMM[z:\LET{\VEC{x}}{!\VEC{w}}{\LET{y}{ee'}{\VEC{w}:=\VEC{x}}}]
\converges \MMM'' \;\ENTAILS\; \MMM''\WB
\MMM[z:()]))
\end{array}
\]
\end{enumerate}
In the defining clauses above, 
we assume
$\FV{e, e_{1,2}, e'}\subset\FV{\MMM}$,
$\FL{e, e_{1,2}, e'}\subset\FL{\MMM}$,
$\FV{L}\subset\FV{\MMM}$ and
$\FL{L}\subset\FL{\MMM}$,
as well as well-typedness of models and formulae.
\end{definition}

\NI In
Definition 
\ref{def:satisfaction},
(\ref{model:and}) and 
(\ref{model:neg}) are standard. 
(\ref{model:poly}) is from \cite{HY04PPDP}.  Others have already been
explained.  In 
(\ref{locatedassertion}),
the program
$\LET{\VEC{x}}{!\VEC{w}}{\LET{y}{ee'}{\VEC{w}:=\VEC{x}}}$
first keeps the content of $\VEC{w}$ in $\VEC{x}$
and executes the application $ee'$; then finally
restores the original content in $\VEC{w}$. 
By $\MMM''\WB \MMM[z:()]$
the resulting model $\MMM''$ has no state change w.r.t.
the original model $\MMM$, this means $ee'$ only
updates at $\VEC{w}$ up to $\WB$.


This concludes the introduction of the satisfaction relation for
the present logic. 
The properties of models are explored further in the 
rest of this section and in \S~\ref{sec:axioms}.

\subsection{Thin and Stateless Formulae}
\label{subsec:formulae}

In this subsection, we introduce two kinds of formulae which play a
key role in the reasoning principles of the present logic, in
particular the proof rules discussed in the next section.

The first definition introduces formulae in which the thinning of
unused variables from models can be done as in first-order logic. 

\begin{definition}[Thin Formula]\rm
\label{def:thin}
Let $\Gamma\proves C$ and $y\in\dom{\Gamma}$
such that $y\notin\FV{C}$.
Then we say that $C$ is {\em thin with respect to $y$ }
if for each $\MMM$ typable under $\Gamma$,
$\MMM\models C$ implies $\MMM/y\models C$. 
We say $C$ is {\em thin} if under each typing and for each $y\not\in\FV{C}$,
$C$ is thin w.r.t. $y$. 
\end{definition}

\NI In a thin formula $C$, reference names which do not appear in $C$
do not affect the meaning of $C$.  There are formulae which are not
thin (we see some examples below) but they are of a very special kind.
In our experience they never appear in practical reasoning
including our reasoning examples in \S~\ref{sec:example}.

As examples of formulae which are not thin, when an evaluation formula
occurs negatively, formulae may cease to be thin. Consider the
following satisfaction:
\[ 
(\new ll')(u:\lambda ().!l', \ x:l, l\mapsto l', l'\mapsto 1)
\ \models 
\ \someworld \ONEEVAL{u}{()}{z}{z=2}
\]
which means that $u$ is a function which might return 2 {\em someday}
since a value stored in $l'$ can be changed via $x$ (for example, by
the command $!x:=2$). When we delete $x$ from the above model, the
behaviour of $u$ will change as follows.
\[ 
(\new l')(u:\lambda ().!l',l':1)
\ \models \ \allworlds \ONEEVAL{u}{()}{z}{z=1}
\]
since now $u$ {\em always} returns $1$ when it is invoked. 
The above judgement entails:
\[ 
(\new l')(u:\lambda ().!l',l':1)
\ \not\models \ \someworld \ONEEVAL{u}{()}{z}{z=2}
\] 
Hence $\someworld \ONEEVAL{u}{()}{z}{z=2}$ is not thin.  Similarly
$\someworld\allworlds \APP{u}{()}=z\{z=0\}$ is not a thin formula.

As noted, formulae which are not thin hardly appear in reasoning; 
all formulae appearing in \S~\ref{sec:example}
are thin; the proof rules always generate thin formulae from thin
formulae. We shall however work with general formulae since many
results hold for none-thin formulae too.

The following syntactic characterisation of thin formulae is useful.

\begin{proposition}[Syntactically Thin Formula]\label{pro:thin} 
\begin{enumerate}[\em(1)]
\item 
\label{pro:thin_wrt_y}
If $\Gamma \proves C$, 
$\Gamma \proves y:\alpha$ and  
$\alpha\in \ASET{\UNIT,\BOOL,\NAT}$,
then $C$ is thin with respect to $y$. 

\item 
\label{pro:thin:base}
$e=e',e\not=e',\REACH{e}{e'}$ and $e\notreach e'$ are thin. 

\item 
\label{pro:thin:ih}
If $C,C'$ are thin w.r.t. $y$, 
then 
$C \AND C'$, $C \OR C'$, 
$\forall x^\alpha.C$ for all $\alpha$, 
$\exists x^\alpha.C$ with $\alpha\in \{ \UNIT,\BOOL,\NAT\}$, 
$\exists\TVX.C$, 
$\forall\TVX.C$, 
$\HIDEe x.C$, 
$\HIDEa x.C$, 
$\allworlds{C}$,  
$\allCon{x}{C}$ and 
$\ONEEVAL{e}{e'}{x}{C'}$ 
are thin w.r.t. $y$. 
\end{enumerate}
\end{proposition}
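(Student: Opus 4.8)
The plan is to prove (2) and (3) simultaneously by induction on the structure of $C$, and to treat (1) by a separate, more uniform semantic argument. Throughout I would lean on two auxiliary observations about how the forgetting operation $\MMM/y$, which deletes the environment binding of $y$ while leaving the store $\sigma$ intact, interacts with the model constructions in Definition~\ref{def:satisfaction}. The first is a \emph{commutation} sub-lemma: when $y\notin\FV{e,e',L}$ and $y\neq x$, the operations $\expands{\MMM}{u}{e}$, $\updates{\MMM}{x}{L}$, the type-extension $\MMM\CDs\AT{\TVX}{\alpha}$, and the reduction $\expands{\MMM}{x}{ee'}\converges\MMM'$ all commute with $(-)/y$, since none of them inspects the binding of $y$ and the underlying store (which still contains $\xi(y)$ when $y$ is reference-typed) is untouched. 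The second is a \emph{congruence-stability} sub-lemma: $\MMM_1\WB\MMM_2$ implies $\MMM_1/y\WB\MMM_2/y$, because any context separating the shorter tuples lifts, by precomposition with a projection, to one separating the longer tuples.

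For part (2), the atomic cases, I would unfold the clauses directly. For $e=e'$ with $y\notin\FV{e,e'}$, satisfaction is $\expands{\MMM}{u}{e}\WB\expands{\MMM}{u}{e'}$; the two extensions commute with $(-)/y$ and congruence is stable under $(-)/y$, so thinness follows. For $\REACH{e}{e'}$ and $e\noreach e'$, the reachable set $\ncl{\FL{\MAP{e}_{\xi,\sigma}}}{\sigma}$ is computed purely in the store, which removing the environment entry of $y$ does not change, and the clause is already quantified modulo $\WB$; thinness then follows from the congruence-stability sub-lemma applied to the models ranged over. Inequality is then immediate from the equality case under $\neg$.

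For part (3) the inductive step follows one pattern: to derive $\MMM/y\models C$ from $\MMM\models C$, peel off the outermost connective, produce the corresponding witness or instance over $\MMM$ using the commutation sub-lemma to move between $\MMM$ and $\MMM/y$, apply the hypothesis that the immediate subformula is thin w.r.t.\ $y$, and recombine. Conjunction and disjunction are immediate. For $\forall x^\alpha.C$ (any $\alpha$), a functional term $L$ over $\MMM/y$ is also functional over $\MMM$, so $\expands{\MMM}{x}{L}\models C$, and $\expands{\MMM}{x}{L}/y=\expands{\MMM/y}{x}{L}\models C$ by the hypothesis. The restriction to base types in $\exists x^\alpha.C$ appears precisely here: a witness location for $x$ might be reachable in $\MMM$ only through the (reference-typed) entry $y$ that is being removed, leaving no functional term to witness $x$ in $\MMM/y$; when $\alpha$ is a base type the witness is a location-free constant and transfers trivially. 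The type-quantifier cases use commutation of $(-)/y$ with $\MMM\CDs\AT{\TVX}{\alpha}$; the content-quantifier case $\allCon{x}{C}$ uses commutation with $\updates{\MMM}{x}{L}$ (legitimate since $y\neq x$); and the evaluation-formula case uses that $\expands{\MMM}{x}{ee'}\converges\MMM'$ is mirrored by $\expands{\MMM/y}{x}{ee'}\converges\MMM'/y$.

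Part (1) I would prove uniformly for an arbitrary $C$ by observing that a base-typed value $\xi(y)$ is a constant carrying no locations, hence an \emph{inert} component: re-adding or removing it is observationally transparent in every context and commutes with all model operations \emph{without} the type restrictions needed in (3), because every required witness term, future evolution, or content update for $\MMM$ restricts to one for $\MMM/y$ and conversely (no location is lost, so nothing becomes unreachable). A clean induction on $C$ then yields the full equivalence $\MMM\models C$ iff $\MMM/y\models C$. The main obstacle I anticipate is the re-attachment step in the necessity and hiding cases of both (1) and (3): these quantify over all future or de-hidden models, so one must, for each witness $\NNN$ over $\MMM/y$, re-attach the entry of $y$ to obtain a model related to $\MMM$ under $\WB$ --- the converse of the congruence-stability sub-lemma. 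This is delicate when $y$ is reference-typed and points into the hidden store, since a context cannot name hidden locations directly; the argument must therefore route re-attachment through the hiding semantics~(\ref{ACACAC})--(\ref{AEAEAE}) rather than through naive context insertion.
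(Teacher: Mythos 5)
Your proposal takes essentially the same route as the paper's proof: an induction over the clauses of (3) driven by exactly your two sub-lemmas, namely that $\WB$ is stable under $(-)/y$ (the paper's step ``$(\new l)\MMM'/y\WB\MMM/y$'' in the $\HIDEe$ case) and that the model operations commute with $(-)/y$ (the paper's step ``$\MMM/y[z:ee']\converges\MMM'/y$'' in the evaluation-formula case), with (1) and (2) dismissed as immediate, which your inert-constant argument for base-typed $y$ confirms. Two remarks on the points where you diverge. First, you over-estimate the necessity case: an evolution $\MMM/y\EVOLVES\NNN$ is generated by programs whose free variables are those of $\MMM/y$, hence never mention $y$, so the very same programs evolve $\MMM$ to some $\MMM''$ with $\MMM''/y=\NNN$; this is the forward commutation lemma again, not the converse of congruence-stability, and the case closes exactly like the paper's $\ONEEVAL{e}{e'}{z}{C}$ case. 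Second, for $\HIDEa x.C$ your worry is genuine: there one must lift a de-hiding $\NNN$ with $(\new l)\NNN\WB\MMM/y$ to some $\MMM'$ with $(\new l)\MMM'\WB\MMM$ and $\MMM'/y\WB\NNN$, an amalgamation statement that follows from neither sub-lemma. Be aware, though, that the paper itself says nothing here beyond ``similarly for other cases'', so your proposal is no less complete than the published proof on this point --- but your closing sentence gestures at, rather than supplies, the missing construction.
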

\begin{proof}
(1,2) are immediate. For (3),  
suppose
$C$ and $C'$ are thin w.r.t. $y$, $y\not\in\FV{C, C'}$ and 
$\MMM\models C \AND C'$.  Then
$\MMM\models C$ hence
$\MMM/y\models C$, similarly for $C'$, 
hence 
$\MMM/y\models C \AND C'$. Similarly for other cases.
Next let $C$ be thin w.r.t. $y$ and  
$\MMM\models \HIDEe x.C$.  Then
there exists $\MMM'$ such that 
$(\new l)\MMM'\WB \MMM$ and $\MMM'[x:l]\models C$.
Then $(\new l)\MMM'/y\WB \MMM/y$. By assumption, 
$\MMM'[x:l]/y\models C$, and hence $\MMM/y\models \HIDEe x.C$, as desired. 
Next let $C$ be thin w.r.t. $y$. 
Suppose $\MMM\models\ONEEVAL{e}{e'}{z}{C}$,
i.e. $\MMM[z:ee']\converges \MMM'$ and $\MMM'\models C$.
Then we have
$\MMM/y[z:ee']\converges \MMM'/y$.
Since $C$ is thin w.r.t. $y$, we have $\MMM'/y\models C$,
as required. 
\end{proof}
\martinb{The next set of formulae are {\em stateless formulae} whose validity
does not depend on the state part of the model, cf.~stateless formulae in 
\cite{GLOBAL,ALIAS}.
}

\begin{definition}[Stateless Formula]
\label{def:stateless}\rm 
$C$ is {\em stateless} iff $C \ENTAILS \allworlds C$ is valid. 
We let $A, B, A',B',\ldots$ range over stateless formulae.
\end{definition} 

\begin{proposition}[Stateless Formulae]\label{pro:stateless}
\begin{enumerate}[\em(1)]
\item For all $C$, $\allworlds C$ is stateless. 
\item If $C$ is stateless then 
$C \LITEQ \allworlds C \LITEQ \allworlds \allworlds C$.
\end{enumerate}
\end{proposition}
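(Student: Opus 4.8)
The plan is to reduce both parts entirely to the two structural facts about $\EVOLVES$ already recorded in the excerpt, namely that $\EVOLVES$ is \emph{reflexive} and \emph{transitive}, together with the defining clause $\MMM\models\allworlds C \IFFdummy \forall\MMM'.(\MMM\EVOLVES\MMM'\ENTAILS\MMM'\models C)$. The only conceptual care needed is to keep straight which direction of each entailment uses reflexivity (which holds unconditionally) and which uses transitivity or statelessness; beyond that the argument is a direct unfolding of definitions, so I do not expect any genuine obstacle.

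First I would prove part (1), which is the heart of the matter and is pure transitivity. Fix a model $\MMM$ and assume $\MMM\models\allworlds C$. To establish $\MMM\models\allworlds\allworlds C$ I unfold the necessity operator twice: take any $\MMM'$ with $\MMM\EVOLVES\MMM'$, and then any $\MMM''$ with $\MMM'\EVOLVES\MMM''$. Transitivity of $\EVOLVES$ gives $\MMM\EVOLVES\MMM''$, so the hypothesis $\MMM\models\allworlds C$ yields $\MMM''\models C$. As $\MMM''$ was arbitrary, $\MMM'\models\allworlds C$; as $\MMM'$ was arbitrary, $\MMM\models\allworlds\allworlds C$. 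This shows $\allworlds C \ENTAILS \allworlds\allworlds C$ is valid, which is exactly the statement that $\allworlds C$ is stateless (Definition \ref{def:stateless}).

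For part (2) I would first isolate an auxiliary observation that holds for every formula $D$: reflexivity of $\EVOLVES$ gives $\allworlds D \ENTAILS D$. Indeed, if $\MMM\models\allworlds D$, then instantiating the universal quantifier over successor models at $\MMM'=\MMM$ (using $\MMM\EVOLVES\MMM$) yields $\MMM\models D$. Now assume $C$ is stateless, i.e.\ $C\ENTAILS\allworlds C$ is valid. Combining this with the case $D=C$ of the observation, $\allworlds C\ENTAILS C$, gives $C\LITEQ\allworlds C$. Next, by part (1) the formula $\allworlds C$ is itself stateless, so $\allworlds C\ENTAILS\allworlds\allworlds C$; combining this with the case $D=\allworlds C$ of the observation, $\allworlds\allworlds C\ENTAILS\allworlds C$, gives $\allworlds C\LITEQ\allworlds\allworlds C$. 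Chaining the two logical equivalences yields $C\LITEQ\allworlds C\LITEQ\allworlds\allworlds C$, as required.

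The main point to watch, rather than any real difficulty, is that statelessness is used exactly once and only for the single entailment it actually supplies ($C\ENTAILS\allworlds C$); every reverse entailment ($\allworlds C\ENTAILS C$ and $\allworlds\allworlds C\ENTAILS\allworlds C$) comes for free from reflexivity, and the upgrade from $\allworlds C$ to $\allworlds\allworlds C$ comes from part (1), i.e.\ transitivity. Keeping these three ingredients separate makes the whole proof a short sequence of definitional unfoldings.
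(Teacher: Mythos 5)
Your proof is correct and is exactly the argument the paper intends: the paper's own proof says only ``both are immediate from the definition,'' and your unfolding via reflexivity and transitivity of $\EVOLVES$ is the natural elaboration of that remark (these are also the same two facts underlying the related axioms $\allworlds C \ENTAILS C$ and $\allworlds\allworlds C \LITEQ \allworlds C$ in Proposition~\ref{pro:necessity}). Your bookkeeping of which entailment uses reflexivity, which uses transitivity, and where statelessness enters is accurate and complete.
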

\begin{proof}
Both are immediate from the definition,
see also \S~\ref{subsec:axiom:allworlds} for further related results. 
\end{proof}

\NI The above proposition says that if $C$ is stateless then $C$
holds in any future state starting from the present state.  
The following generalisation of this notion says that the validity of
a formula does not depend on the stateful part of models \emph{except at
specific locations}. This notion is used by the axioms for local
invariants later. 
  
\begin{definition}[Stateless Formula Except $\VEC{x}$]
\label{def:stateless_except}\rm 
We say that $C$ is {\em stateless except $\VEC{x}$} if, whenever 
$\MMM\models C$ and $\MMM\EVOLVES \MMM'$ such that $\MMM$ and $\MMM'$ 
coincide in their content at $\VEC{x}$ of reference types, i.e.
\begin{enumerate}[(1)]
\item
$\MMM\WB(\new \VEC{l}_0)(\xi,\  \sigma)$;
\item
$\MMM'\WB (\new \VEC{l}_0\VEC{l}_1)(\xi\CD\xi',\ \sigma')$; and
\item
$\sigma(\xi(x_i))=\sigma'(\xi(x_i))$ 
for each $x_i\in\ASET{\VEC{x}}$,
\end{enumerate}
then $\MMM'\models C$.
\end{definition} 

\NI 
Definition \ref{def:stateless_except}
uses the internal representation of models. 
Alternatively we may define an {\em $\VEC{x}$-preserving term}
which has the shape:
\begin{equation}
\LET{y_1}{!x_1}{...\LET{y_n}{!x_n}{\LET{z}{N'}{(x_1:=y_1;...;x_n:=y_n;z)}}}
\end{equation}
then say $C$ is stateless except $\VEC{x}$ 
if whenever $\MMM\models C$ and $\MMM[u:N]\converges \MMM'$
where $N$ is a $\VEC{x}$-preserving term we have
$\MMM'\models C$.

Note if $\VEC{x}$ is empty in Definition
\ref{def:stateless_except} then the third clause is vacuous: hence in
this case the definition means that for each $\MMM$ such that
$\MMM\models C$ we have $\MMM\EVOLVES \MMM'$ implies $\MMM'\models C$,
that is $C$ is stateless.

It is convenient to be able to check the statelessness of formulae
(relative to references) syntactically. For an inductive
characterisation, we introduce the following notion. As always 
we assume the standard bound name convention.

\begin{definition}[Tame Formulae]\rm 
The set of \emph{tame formulae}
is generated by the following rules: 
\begin{enumerate}[$\bullet$]

\item $e_1 = e_2$ and $e_1 \not= e_2$ are tame.

\item $e_1 \reach e_2$ and $e_1 \noreach  e_2$ are tame.

\item For any $C$, $\allworlds C$ is tame.

\item if $C$ is tame then
$\forall y^\alpha.C$, 
$\exists y^\alpha.C$,
$\exists\TVX.C$, 
$\forall\TVX.C$, 
$\allCon{y}{C}$ and  
$\someCon{y}{C}$
are all tame.

\item if $C,C'$ are tame
then $C \AND C'$ and $C \OR C'$ are tame. 
\end{enumerate}
We say that $!x$ is an {\em active dereference} in $C$
if $C$ is tame and $!x$ (with $x$ being free or bound) occurs neither
in the scope of $\allworlds$, $[!x]$ nor
$\ENCan{!x}$. 
\end{definition}

\NI The following result (though not used in the present work) is notable
for carrying over  reasoning techniques from the logic for
aliasing \cite{ALIAS}.

\begin{proposition}[Decomposition]
\label{pro:cq:elimination}
Suppose $C$ is tame. Then there is tame $C'$ such that
$C\LITEQ C'$ and $C'$ does not contain content quantifications
except under the scope of $\allworlds$.
\end{proposition}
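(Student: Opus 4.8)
The plan is to argue by induction on the structure of the tame formula $C$, reducing everything to a single key lemma about commuting one content quantifier past a formula that is already in the desired shape. The grammar of tame formulae is helpfully restrictive: outside the scope of $\allworlds$ it admits only the base predicates ($=$, $\neq$, $\reach$, $\noreach$), the connectives $\AND$ and $\OR$, the ordinary and type quantifiers, and the two content quantifiers $\allCon{y}{\cdot}$, $\someCon{y}{\cdot}$. In particular neither hiding quantifiers nor evaluation formulae (nor negation except at the base) occur except underneath $\allworlds$, so the induction need only treat these few cases. First I would dispose of the trivial ones: base predicates contain no content quantifiers, so take $C' = C$; for $\allworlds C_0$ every content quantifier inside is already under a $\allworlds$, so again $C' = \allworlds C_0$; and the cases $\forall y.C_0$, $\exists y.C_0$, $\forall\TVX.C_0$, $\exists\TVX.C_0$, $C_0 \AND C_0'$, $C_0 \OR C_0'$ follow at once from the induction hypothesis applied to the immediate subformulae, since none of these constructors is itself a content quantifier.

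The real content lies in the two cases $\allCon{y}{C_0}$ and $\someCon{y}{C_0}$. By the induction hypothesis we may assume $C_0 \LITEQ D$ with $D$ tame and all its content quantifiers under $\allworlds$. It then suffices to establish the key lemma: for such a $D$, both $\allCon{y}{D}$ and $\someCon{y}{D}$ are logically equivalent to tame formulae whose content quantifiers all lie under $\allworlds$. I would prove this by a secondary induction on $D$, pushing the outer content quantifier inward using the content-quantification axioms inherited from the aliasing logic \cite{ALIAS}: $\allCon{y}{\cdot}$ distributes over $\AND$ (and, dually, $\someCon{y}{\cdot}$ over $\OR$), and each commutes with the ordinary and type quantifiers after the usual renaming so that the bound variable differs from $y$ and is fresh for the quantified content. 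The two decisive cases are the base predicates and the $\allworlds$-subformulae. At a base predicate the content quantifier is eliminated in favour of ordinary first-order quantification over the substituted content, via the syntactic substitution $C\SUBST{e}{!y}$ (so that, schematically, $\allCon{y}{P}$ collapses to a $\forall$-formula and $\someCon{y}{P}$ to an $\exists$-formula with no content quantifier remaining); this is exactly the atomic elimination step of \cite{ALIAS}, extended to the reachability predicates. At an $\allworlds$-subformula one uses an absorption principle $\allCon{y}{\allworlds D_0} \LITEQ \allworlds D_0$ (and its existential dual), which discharges the quantifier entirely, leaving the content quantifiers already present in $D_0$ safely under $\allworlds$.

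I would justify the absorption principle directly from the semantics of Section~\ref{sec:models}, noting first that $\allworlds D_0$ is stateless by Proposition~\ref{pro:stateless}(1). Unfolding definitions, $\MMM \models \allCon{y}{\allworlds D_0}$ holds iff $\updates{\MMM}{y}{L} \models \allworlds D_0$ for every functional $L \in \mathcal{F}$, i.e.\ iff $\MMM' \models D_0$ for every such $L$ and every $\MMM'$ with $\updates{\MMM}{y}{L} \EVOLVES \MMM'$. When $y$ denotes an accessible (non-hidden) location, overwriting its content is itself one of the evolution steps generated by $\EVOLVES$, so the states reachable after first overwriting $y$ and then evolving coincide, up to $\WB$, with those reachable from $\MMM$ by $\EVOLVES$ alone; this yields equivalence with $\MMM \models \allworlds D_0$, and the $\someworld$-dual is symmetric.

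The main obstacle is precisely this interaction between content quantification and $\allworlds$, together with the atomic elimination over reachability. The delicate subcase is when $y$ may denote a \emph{hidden} location, which an external program cannot overwrite, so that $\updates{\MMM}{y}{L}$ need not be an $\EVOLVES$-successor of $\MMM$; here one must exploit that content quantification ranges over functional terms rather than mere values (so contents accessible only through hidden cells are covered) and, crucially, that in a formula with content quantifiers confined under $\allworlds$ every \emph{active} dereference of $y$ already sits at the base level, so that overwriting $y$ affects the non-absorbed part of the formula only through those base occurrences. Checking that $C\SUBST{e}{!y}$ and the attendant freshness side-conditions interact correctly with these active dereferences and with the absorption at $\allworlds$ is the one genuinely careful computation. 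Once these ingredients are in place, the secondary induction goes through, the key lemma follows, and the outer induction delivers the required tame $C' \LITEQ C$ with content quantifiers confined to the scope of $\allworlds$.
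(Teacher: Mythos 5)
Your outer induction, the treatment of the trivial cases, and the absorption step $\allCon{y}{\allworlds D_0} \LITEQ \allworlds D_0$ (which is exactly Proposition \ref{pro:necessity} 2(h), so you need not re-derive it semantically) are all sound. The gap is in the secondary induction: you push the content quantifier through connectives using only distribution of $\allCon{y}{\cdot}$ over $\AND$ and of $\someCon{y}{\cdot}$ over $\OR$, but tame formulae are also closed under the \emph{other} connective in each case, and there your induction has no rule to apply. Distribution genuinely fails there: the paper's own axiom (Proposition \ref{pro:axiom:content}(7)) is deliberately one-directional, $\allCon{y}{D_1} \OR \allCon{y}{D_2} \ENTAILS \allCon{y}{(D_1 \OR D_2)}$, and the converse is false --- take $D_1 \LITEQ\, !y=0$ and $D_2 \LITEQ\, !y\neq 0$: then $\allCon{y}{(D_1\OR D_2)}\LITEQ\truth$ while $\allCon{y}{D_1}\OR\allCon{y}{D_2}\LITEQ\falsity$. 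Dually, $\someCon{y}{(D_1\AND D_2)}$ does not split into $\someCon{y}{D_1}\AND\someCon{y}{D_2}$. So the key lemma, as you state it, cannot be established by the secondary induction you describe.

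The repair --- and this is what the proof the paper actually invokes, namely that of \cite[\S 6.1, Theorem 1]{ALIAS}, does --- is not to distribute the content quantifier itself but to trade it immediately for an ordinary first-order quantifier plus a point update: $\allCon{y}{D} \LITEQ \forall m.\, D\LSUBS{m}{!y}$ and $\someCon{y}{D} \LITEQ \exists m.\, D\LSUBS{m}{!y}$ with $m$ fresh. The operator $\LSUBS{m}{!y}$ asserts satisfaction in the \emph{single} state obtained by overwriting $y$ with $m$, so unlike $\allCon{y}{\cdot}$ it commutes with every connective --- $\OR$ and $\AND$ alike --- and with the quantifiers; it is then discharged at atomic formulae by aliasing case analysis ($z=y$ versus $z\neq y$ for each dereference $!z$), and at $\allworlds$-subformulae by the absorption principle you already have. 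Note also that your atomic step is glossed: ALIAS has no reachability predicate, and $\allCon{y}{(x\reach z)}$ is not eliminated by substituting a fresh variable for syntactic occurrences of $!y$, since the content of $y$ can affect reachability without $!y$ occurring in the formula at all; the paper only supplies the special cases of Proposition \ref{pro:notreach} 3-(5), so this case needs its own argument rather than a pointer to the ALIAS atomic elimination.
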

\begin{proof}
  The proof follows precisely that of \cite[\S 6.1,
  Theorem 1]{ALIAS}.
\end{proof}

\NI We can now introduce syntactic stateless formulae.

\begin{definition}[Syntactic Stateless Formulae]
\label{def:syntactic:stateless}
  \rm We say \emph{$C$ is syntactically stateless except $\VEC{x}$}
  if $C$ is tame and only names from $\VEC{x}$ are among the active
  dereferences in $C$.
\end{definition}
\begin{proposition}\label{pro:stateless:B}\hfill
  \begin{enumerate}[\em(1)]
  \item If $C$ is syntactically stateless except $\VEC{x}$ then 
    $C$ is stateless except $\VEC{x}$.
  \item 
If
$\allCon{\VEC{x}}{C}$ is syntactically stateless
    then $C$ is stateless except $\VEC{x}$.
  \end{enumerate}
\end{proposition}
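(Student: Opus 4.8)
The plan is to prove (1) by induction on the structure of the tame formula $C$, and then obtain (2) as an immediate corollary. For (2), note that if $\allCon{\VEC{x}}{C}$ is syntactically stateless then, by Definition \ref{def:syntactic:stateless}, its set of active dereferences is empty. Since wrapping $C$ in the content quantifiers $\allCon{\VEC{x}}{\cdot}$ places every occurrence of $!x_i$ (for $x_i\in\ASET{\VEC{x}}$) under the binder $[!x_i]$ and hence makes it inactive, while leaving all other active dereferences intact, emptiness of the active dereferences of $\allCon{\VEC{x}}{C}$ forces every active dereference of $C$ to be among $\VEC{x}$. As $C$ is tame (being an immediate subformula of the tame formula $\allCon{\VEC{x}}{C}$), this says exactly that $C$ is syntactically stateless except $\VEC{x}$, and part (1) then gives that $C$ is stateless except $\VEC{x}$. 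Thus all the work is in (1).

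For (1), fix $\MMM,\MMM'$ with $\MMM\models C$ and $\MMM\EVOLVES\MMM'$ satisfying the coincidence condition of Definition \ref{def:stateless_except} at $\VEC{x}$, and show $\MMM'\models C$. The engine of the argument is an \emph{interpretation lemma}: whenever every dereference occurring in a term $e$ is of the form $!x_i$ with $x_i\in\ASET{\VEC{x}}$, the interpretations $\MAP{e}_{\MMM}$ and $\MAP{e}_{\MMM'}$ coincide up to $\WB$. This is proved by a routine induction on $e$, using that along $\EVOLVES$ the environment and value parts are preserved (only extended) and that the contents at $\VEC{x}$ agree by hypothesis. From this the two equality base cases follow: for $e_1=e_2$ and $e_1\neq e_2$ the semantics is $\MMM[u\!:\!e_1]\WB\MMM[u\!:\!e_2]$, and the interpretation lemma together with congruence of $\WB$ under the extension $\MMM\mapsto\MMM'$ transports the (in)equality. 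The case $\allworlds C_0$ is immediate, since $\allworlds C_0$ is stateless by Proposition \ref{pro:stateless}(1) and hence stateless except any $\VEC{x}$.

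The inductive cases for $\AND$, $\OR$, the first-order quantifiers $\forall y.C_0,\exists y.C_0$, the type quantifiers $\forall\TVX.C_0,\exists\TVX.C_0$, and the content quantifiers $\allCon{y}{C_0},\someCon{y}{C_0}$ all reduce to the induction hypothesis on the immediate subformula. The recurring technical point is that one must \emph{replay} the evolution $\MMM\EVOLVES\MMM'$ after the model has been extended by a fresh entry $u\!:\!L$ with $L\in\mathcal{F}$ (for an ordinary quantifier, via $\expands{\cdot}{u}{L}$) or altered by an update $\updates{\cdot}{y}{L}$ (for a content quantifier): the two extended models again evolve into one another and still coincide on the relevant reference set, so the induction hypothesis applies and the conclusion is re-assembled. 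For content quantification the active dereferences $!y$ of $C_0$ become inactive in $\allCon{y}{C_0}$, so the induction hypothesis is invoked for $C_0$ treated as stateless except $\VEC{x},y$ and the result re-packaged for $\VEC{x}$; alternatively one may first apply Proposition \ref{pro:cq:elimination} to push all content quantifiers under $\allworlds$, eliminating this case altogether.

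The main obstacle is the reachability base case $e_1\reach e_2$ (and its dual $e_1\noreach e_2$). Unlike equality, its truth is not determined by $\MAP{e_1},\MAP{e_2}$ alone but by the label closure $\ncl{\FL{\MAP{e_1}_{\xi,\sigma}}}{\sigma}$, which traverses the entire store, and an evolution can in principle lengthen or sever chains running through cells outside $\VEC{x}$. The delicate step is therefore to argue that, under the confinement of active dereferences to $\VEC{x}$, the fragment of the store that the closure may inspect is itself pinned down by the contents at $\VEC{x}$, so that Lemma \ref{lem:ncl} together with the robustness of reachability under $\WB$ yields $\MAP{e_2}_{\MMM'}\in\ncl{\FL{\MAP{e_1}_{\MMM'}}}{\sigma'}$ from the corresponding membership in $\MMM$. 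Making precise exactly which cells the predicate is permitted to examine — and checking that this is compatible with the literal reading of Definition \ref{def:syntactic:stateless} — is the step I expect to require the most care, and it is where I would concentrate the bookkeeping of the proof.
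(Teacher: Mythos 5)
Your proposal has the same skeleton as the paper's proof---induction over the generation of tame formulae for (1), with (2) as an immediate corollary---and your part (2), the propositional cases, and the $\allworlds C_0$ case agree with it. But two steps do not go through. The first is the ``replay'' argument for the quantifier cases. To show $\MMM'\models\forall y.C_0$ in the evolved model you must handle \emph{every} $L\in\mathcal{F}$ interpretable in $\MMM'$, and such an $L$ may mention variables and locations that were created during the evolution $\MMM\EVOLVES\MMM'$; it then has no counterpart over $\MMM$, so there is no model of the form $\expands{\MMM}{y}{L_0}$ satisfying $C_0$ from which the result of evaluating $\expands{\MMM'}{y}{L}$ can be reached by a content-preserving evolution, and the induction hypothesis cannot be invoked. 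The same defect appears in your content-quantifier case (the updated content over $\MMM'$ need not be expressible over $\MMM$). The paper does not replay evolutions here at all: it singles out quantification over reference variables as the only non-trivial case and settles it by a dichotomy on the formula itself---either the validity of the body relies on $y$, in which case $\forall y^{\alpha}.C$ (resp.\ $\exists y^{\alpha}.C$, $\allCon{y}{C}$) is falsity (resp.\ truth, falsity), hence trivially stateless except $\VEC{x}$; or it does not, in which case the quantified formula is equivalent to one whose truth is independent of $y$, and the induction hypothesis for the body finishes the case. Some argument of this global, formula-level kind is needed; pointwise replaying cannot produce it. Note also that the paper's dichotomy covers bodies whose active dereferences include the bound $y$, a situation your reduction ``to the induction hypothesis on the immediate subformula'' silently excludes for ordinary quantifiers (you strengthen the exception set only for content quantification).

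Second, you leave the reachability base case unproved: you describe it as the step needing ``the most care'' and stop there. That is an admitted gap, and it is not bookkeeping: $e_1\reach e_2$ is evaluated through the label closure $\ncl{\FL{\cdot}}{\sigma}$, i.e.\ through the contents of every cell along a chain from $e_1$, and those cells need not be among $\VEC{x}$, so this atomic case is precisely where the claim has content. Your instinct that it is delicate is sound---observe that $x\reach y$ contains no dereference at all, hence is syntactically stateless except even the empty vector of names, yet an external assignment to $x$ can falsify it---but identifying the difficulty is not resolving it. The paper's own proof simply declares the base cases immediate; a blind proof must actually discharge them (or restrict the atomic formulae to which the claim is applied), and yours does neither.
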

\begin{proof}
   (1) is by induction of the generation of tame formulae.  Base cases and
  $\allworlds C$ are immediate.  Among the inductive cases the only
  non-trivial case is quantifications of references.  Suppose $C$ is
  tame and contains active dereferences at $\VEC{x}y$.
\begin{enumerate}[$\bullet$]
\item If the validity of $C$ relies on $y$ (i.e. 
  for some $\MMM_{1,2}$ which differ only at $y$ we have  
  $\MMM_1\models C$ and 
  $\MMM_2\not\models C$)
then $\forall
  y^\alpha.C$ is falsity: if not $\forall y^\alpha.C$ and $C$ are
  equivalent.  In either case we know $C$ is stateless except
  $\VEC{x}$. 
\item If validity of $C$ relies on $y$ then $\exists
  y^\alpha.C$ is truth: if not $\exists y^\alpha.C$ and $C$ are
  equivalent. The rest is the same.  
\item If validity of $\allCon{y}{C}$ relies on the content of $y$ then
  $\allCon{y}{C}$ is falsity: the rest is the same. 
Similarly for $\someCon{y}{C}$.

\end{enumerate}
The cases of $C \AND C'$ and $C \OR C'$ are immediate by induction.
(2) is an immediate corollary of (1).
\end{proof}



\section{Proof Rules and Soundness}\label{sec:rules}
\subsection{Hoare Triples}\label{subsec:triple}
\NI 
This subsection summarises judgements and  proof rules for local state.
The main judgement consists of a program and a pair of formulae following 
Hoare \cite{HOARE}, augmented with a fresh name called \emph{anchor}
\cite{HY04PPDP,SHORT1,GLOBAL}, 
\[
        \ASSERT{C}{M}{u}{C'}
\] 
which says: 
\begin{quote}
{\em If we evaluate $M$ in the initial state satisfying $C$,
then it terminates with a value, name it $u$, and a final state,
which together satisfy $C'$.}
\end{quote}
Note that our judgements are about total correctness.
Sequents have identical shape as those in \cite{GLOBAL,ALIAS}: the
computational situations described is however quite different, in that
both $C$ and $C'$ may now describe behaviour and data structures with
local state.  

The same sequent is used for both validity and provability. If we wish
to be specific, we prefix it with either $\proves$ (for provability) or
$\models$ (for validity). 
We assume that judgements are 
well-typed in the sense that, in
$\ASSERT{C}{M}{u}{C'}$ with $\Gamma \proves M:\alpha$,  
  $\Gamma, \Delta\proves C$ and $\ATb{u}{\alpha}, \Gamma,
  \Delta \proves C'$ for some $\Delta$ such that $\dom{\Delta}\cap(\dom{\Gamma}\cup\ASET{u})=\emptyset$.

In $\ASSERT{C}{M}{u}{C'}$,  
the name $u$ is the \emph{anchor} of the judgement, which should
\emph{not} be in $\dom{\Gamma} \cup \FV{C}$; and
$C$ is the \emph{pre-condition} and $C'$ is the {\em post-condition}.
The \emph{primary 
names} are $\dom{\Gamma}\cup\ASET{u}$, while the
\emph{auxiliary names} (ranged over by $i,j,k,...$) 
are those free names in $C$ and $C'$ which are
not primary. 
An anchor is used for naming the value from $M$ and for specifying
its behaviour. 
We use the abbreviation 
$\ASET{C}\,M\ASET{C'}$ to denote 
$\ASET{C}\,M:_u\ASET{u = () \AND C'}$.  


\begin{myfigure}
{\small
\vspace{-2.4mm}
\[
\begin{array}{c}
        [\mbox{\em Var}]
        \,
        \infer
        {-}
        {
          \ASSERT{C\MSUBS{x}{u}}{x}{u}{C}
        }
        \quad
        [\mbox{\em Const}] 
        \infer
        {-}
        {
          \ASSERT{C\MSUBS{\mathsf{c}}{u}}{\mathtt{c}}{u}{C}
        }
\quad 
        [\mbox{\em Add}]
                \,
                \infer
        {
            \ASET{C}\,M_1:_{m_1}\ASET{C_0}
            \ 
            \ASET{C_0}\,M_2:_{m_2}\ASET{\,C'\MSUBS{m_1+m_2}{u}\,}
        }
        {
          \ASET{C}\, M_1+M_2\, :_u\,\ASET{C'}
        }
\\[4mm]
        [\mbox{\em In}_1]
                \,
        \infer
        {
                \ASSERT{C}{M}{v}{C'\MSUBS{\LOGICINJ{}{1}{v}}{u}}
        }
        {
                \ASSERT{C}{\INJ{1}{M}}{u}{C'}
        }
                \quad 
        [\mbox{\emph{Case}}]
                \,
        \infer
        {
          \ASSERT{C^{\minus \VEC{x}}}{M}{m}{C_0^{\minus \VEC{x}}}
          \quad
          \ASSERT{C_0\MSUBS{\LOGICINJ{}{i}{x_i}}{m}}{M_i}{u}{{C'\,}^{\minus \VEC{x}}} 
        }
        {
          \ASSERT{C}{\CASE{M}{x_i}{M_i}}{u}{C'}
        }    
        \\[4mm]
        [\mbox{\em Proj}_1]
                \,
        \infer
        {
                \ASSERT{C}{M}{m}{C'\MSUBS{\pi_1(m)}{u}}
        }
        {
                \ASSERT{C}{\pi_1(M)}{u}{C'}
        }    
        \quad 
        [\mbox{\em Pair}]
                \,
        \infer
        {
          \begin{array}{l}
            \ASSERT{C}{M_1}{{m}}{C_0}
            \quad
            \ASSERT{C_0}{M_2}{{n}}{C'\MSUBS{\ENCan{m,n}}{u}}
          \end{array}
        }
        {
                \ASSERT{C}{\ENCan{M_1,M_2}}{u}{C'}
        }
       \\[4mm]
        [\mbox{\emph{Abs}}]
                \,
        \infer
        {
          \ASSERT{A^{\minus x\VEC{i}}\AND C}{M}{m}{C'}\  
        }
        {\ASSERT{A}{\lambda x.M}{u}{\allworlds \forall
            x\VEC{i}.(\ASET{C}\APP{u}{x}=m\ASET{C'})}}
\quad  
        [\mbox{\emph{App}}]
                \,
        \infer
        {
        \begin{array}{l}
          \ASSERT{C}{M}{m}{C_0}\  
          \ASSERT{C_0}{N}{n}{\APP{m}{n}=u\ASET{C'}\,}              
        \end{array}
        }
        {\ASSERT{C}{MN}{u}{C'}}
                \\[4mm]
   [\mbox{\emph{If}}]
             \,
            \infer
       {
                 \ASSERT{C}{M}{b}{C_0}
                 \quad
                 \ASSERT{C_0\MSUBS{\true}{b}}{M_1}{u}{C'}
                 \quad
                 \ASSERT{C_0\MSUBS{\false}{b}}{M_2}{u}{C'}
        }
        {
                \ASSERT
                     {C}
                     {\IFTHENELSE{M}{M_1}{M_2}}
                     {u}
                     {C'}
        }
        \\[4mm]
        \hspace{-3mm}
        [\mbox{\emph{Deref}}]
                \,
        \infer
        {
          \ASSERT{C}{M}{m}{C'\MSUBS{!m}{u}}
        }
        {
          \ASSERT{C}{!M}{u}{C'}
        }
        \quad 
        [\mbox{\emph{Assign}}]
                \,
        \infer
        {
          \ASSERT{C}{M}{m}{C_0}
          \quad
          \ASSERT{C_0}{N}{n}{C'\LSUBS{n}{\,!m}}
        }
        {
          \ASET{C}\ M := N\ \ASET{C'}
        }
        \\[4mm]
        [\mbox{\em Rec-Ren}]
                \,
        \infer
        {
          \ASSERT{A^{\minus f}}{\lambda x.M}{u}{B}
	}
        {
          \ASSERT{A}{\mu f.\lambda x.M}{u}{B\MSUBS{u}{f}}
        }
        \quad 
        \LEFTONEPREMISERULENAMED
        {Ref}
        {
                \ASSERT{C}{M}{m}{C'}
        }
        {
                \ASSERT
                {C}
                {\REFPROG{M}}
                {u}
                {
                {\HIDEe x.(C'\SUBST{!u}{m}\AND u\noreach i^{\TVXscript}\AND u=x)}
                }
        }
\\[4mm]
\rulenameB{Aux${}_\forall$V}
\infer
{
\ASET{C^{\minus i}}\ V:_u\, \ASET{C'}
}
{
\ASET{C}\ V:_u\, \ASET{\forall i.C'}
}
\quad 
\rulenameB{Aux${}_\forall$}
\infer
{
\ASET{C^{\minus i}}\ M:_u\, \ASET{C'}
\quad \alpha \ \text{is base type}
}
{
\ASET{C}\ M:_u\, \ASET{\forall i^\alpha.C'}
}
        \\[4mm]
         [\mbox{\emph{Conseq}}]
        \infer
         {
           C \ENTAILS C_0
           \;
           \ASSERT{C_0}{\!M}{u\!}{C_0'}
           \;
           C_0' \ENTAILS C'
         }
         {
           \ASSERT{C}{M}{u}{C'}
         }
        \quad\quad 
\mbox{[{\em Subs}]}\ 
        \infer 
        {
                \ASSERT{C}{M}{u}{C'}
                \quad u\not\in \PFN{e}
        }
        {
                \ASSERT{C\SUBST{e}{i}}{M}{u}{C'\SUBST{e}{i}}
        } 
       \\[6mm]
       [\mbox{\emph{Cons-Eval}}] \ \
\infer
{
\begin{array}{l}
\ASSERT{C_0}{M}{m}{C'_0}
\quad 
\ x\ \text{fresh};
\;\ \text{$\VEC{i}$ auxiliary}\\
\always\forall \VEC{\TVX}.\forall \VEC{i}.\EVAL{C_0}{x}{()}{m}{C'_0}
\ \ENTAILS\
\always\forall \VEC{\TVX}.\forall \VEC{i}.\EVAL{C}{x}{()}{m}{C'}
\end{array}
}
{\ASSERT{C}{M}{m}{C'}}
\end{array}
\]

\NI We require $C'$ is thin w.r.t. $m$ in [{\em Case}] and [{\em Deref}], and 
$C'$ is thin w.r.t. $m,n$ in [{\em App}, {\em Assign}].   
\caption{Proof Rules}\label{figure:rules}
}
\end{myfigure}

\subsection{Proof Rules}
\label{subsec:rules}
The full compositional proof rules and new structure rules are given in Figure
\ref{figure:rules}. 
In each proof rule, we assume all occurring judgements to be
well-typed and no primary names in the premise(s) to occur as
auxiliary names in the conclusion. 
We write $C^{\minus \VEC{x}}$ to indicate $\FV{C}\cap\ASET{\VEC{x}}=\emptyset$.
Despite our 
semantic enrichment, all compositional proof rules in the base logic
\cite{ALIAS} (\mfb{and} [{\em Rec}-{\em Ren}] from \cite{completeness}) 
 syntactically stay as they are, 
except for: 
\begin{enumerate}[$\bullet$]

\item \mfb{adding a rule for } the reference generation,

\item revising [{\em Abs}] and [{\em App}] \mfb{so they use one-sided
evaluation formulae},

\item adding the thinness condition in the post-condition of the
conclusion in [{\em Case}], [{\em App}], [{\em Assign}] and [{\em
Deref}]

\end{enumerate}
The thinness condition is required when the anchor names used in the
premise contribute to $C'$ in the conclusion.  The reason for this
becomes clearer when we \mfb{prove  soundness}. This condition does not
jeopardise the completeness of our logic. All reasoning examples 
we have explored meet this condition including 
those in \S~\ref{sec:example}.

Note that in [{\em Add}], since $C'$ is always thin with respect to
$m_i$ by Proposition \ref{pro:thin} (\ref{pro:thin_wrt_y}), we do not
have to state this condition explicitly. Similarly for [{\em If}]
since $C'$ is always thin with respect to $b$.

[{\em Assign}] uses {\em logical substitution} which is built with content
quantification to represent  substitution of content of a possibly
aliased reference \cite{ALIAS}.
\begin{equation*}
C\LSUBS{e_2}{!e_1}
\ \; 
\DEFEQ
\ \;
\forall m.(m=e_2\ENTAILS \allCon{e_1}{(!e_1=m \ENTAILS C)}).
\end{equation*}
with $m$ fresh (\mfb{we have a dual} characterisation by $\ENCan{!e_1}$). 
Intuitively $C\LSUBS{e_2}{!e_1}$ describes the situation where a model
satisfying $C$ is updated at a memory cell referred to by $e_1$ 
(of a reference type) with a value $e_2$ (of its content type),  
with $e_{1,2}$ interpreted in the current model. 
 
In rule [{\em Ref}], $u\noreach i$ indicates that the newly generated
cell $u$ is unreachable from any $i$ of arbitrary type $\TVX$ in the
initial state: then the result of evaluating $M$ is stored in that
cell.\footnote{One may write the conclusion of this rule 
  as $ \ASSERT {C} {\REFPROG{M}} {u}
  {(C'\SUBST{!u}{m}\AND u\noreach i^{\TVXscript})}$ which may be useful for readability.  In this
  paper however we intentionally do not introduce this or other
  abbreviations for the sake of clarity.} Here $i$ is a(ny) fresh
variable denoting an arbitrary datum which already exited in the
pre-state. Just as the standard auxiliary variable in Hoare-like
logics, this $i$ is semantically bound at the sequent level.  In a
large proof, we may want each instance of [{\em Ref}] to use a fresh and
distinct variable, even though in practice we usually
apply the substitution rule discussed below to instantiate this
``bound'' variable into an appropriate expression so name clash may
not occur.\footnote{The treatment of a fresh 
variable as an input binder in [{\em Ref}] is useful for 
mechanisation of reasoning, just like auxiliary variables in Hoare triples.}

For the structural rules (i.e.~those which only manipulate
assertions), those given in \cite[\S 7.3]{ALIAS} for the base logic
stay valid except that the universal abstraction rule
$\rulenameB{Aux${}_\forall$}$ in \cite[\S 7.3]{ALIAS} needs to be
weakened as $\rulenameB{Aux${}_\forall$}$ and 
$\rulenameB{Aux${}_\forall$V}$
in Figure \ref{figure:rules}. 
\martinb{Note that the original structural rule $\rulenameB{Aux$_{\forall}$}$, which does not}
have this condition, is not valid in the presence of new reference
generation. For example we can take:
\begin{equation}\label{newrefex}
\ASET{\truth}\ \REFPRG{3}:_u\, \ASET{u\noreach i\AND !u=3}
\end{equation}
which is surely valid. But without the side condition, we can infer
the following from (\ref{newrefex}).
$$
\ASET{\truth}\ \REFPRG{3}:_u\, \ASET{\forall i.(u\noreach i\AND !u=3)}
$$ 
which does not make sense (just substitute $u$ for $i$). This is
because 
$i$ cannot range over newly generated names: such an interplay with new name generation is
not possible if the target program is a value, or if $i$ is of  base
type.

We also have two useful structural rules added in the present logic.
The first rule is [{\em Subs}] in Figure \ref{figure:rules},
which can be used  to instantiate the fresh variable $i$
in [{\em Ref}] with an arbitrary datum.  The rule uses the following
set of reference names.

\begin{definition}[Plain Name] \label{def:plain}
We write $\PFN{e}$ for the set of
\emph{free plain names} of $e$, defined as:
$\PFN{x} = \{x\}$, $\PFN{\mathsf{c}} =
\PFN{!e} = \emptyset$, 
$\PFN{\ENCan{e, e'}} = \PFN{e}\cup\PFN{e'}$,
and 
$\PFN{\injection_i(e)} = \PFN{e}$.
\end{definition}

\NI In brief, the set of free plain names of $e$ contains reference names in
$e$ that do not occur dereferenced, as first described in Definition
\ref{def:plain}.  As we shall see later, the side condition for [{\em
Subs}] using $\PFN{e}$ is necessary for  soundness.

As an example usage of [{\em Subs}], consider:
\begin{equation}\label{BBBBB}
  \ASET{!z=2} \refPrg{2}:_m \ASET{!m=2 \AND i\noreach m} 
\end{equation}
where we take off $\HIDEe$ by an axiom later. We can then use
[{\em Subs}] to show:
\begin{equation}\label{CCCCC}
  \ASET{!z=2} \refPrg{2}:_m \ASET{!m=2 \AND z\noreach m} 
\end{equation}
Note $m \in \PFN{m}$: hence we {\em cannot} use $m$ instead of $z$ in (\ref{CCCCC}), 
which is obviously unsound. As another use of 
[{\em Subs}], consider a judgement:
\begin{equation}\label{EEEEEE}
  \ASET{\truth} \PAIR{\refPrg{2}}{\refPrg{2}}:_m 
  \ASET{!\pi_1(m)=2 \AND !\pi_2(m)=2 \AND \pi_1(m)\neq\pi_2(m)}
\end{equation}
In order to derive (\ref{EEEEEE}), we simply combine (\ref{BBBBB}) with
the following judgement:
\begin{equation}
  \ASET{!m=2 \AND i\noreach m} \refPrg{2}:_n \ASET{!m=2 \AND !n=2\AND j\noreach n}
\end{equation}
where we use a different fresh variable $j$. We can now replace $j$
with $m$ using [{\em Subs}], and via [{\em Cons}] we obtain:
\begin{equation}
  \ASET{!m=2 \AND i\noreach m} \refPrg{2}:_n \ASET{!m=2 \AND !n=2\AND m\neq n}
\end{equation}
from which we can infer (\ref{EEEEEE}) by pairing, combined with
(\ref{BBBBB}). 

Another significant additional rule is [{\em Cons-Eval}], also given
in Figure \ref{figure:rules}. This is a strengthened version of the
standard consequence rule, and is used when incorporating the local
invariant axiom of the evaluation formula for derivations of the
examples in \S~\ref{sec:example}.  Technically, this is a consequence
of (a) having a proof system by which we can compositionally build
proofs; and (b) representing fresh generation of references by
disjointness from fresh variables.  We shall see in examples that it
is useful in reasoning.


The full list of  structural rules can be found in Appendix
\ref{app:rules}.

\subsection{Located Judgements} 
\label{subsec:derived}
\martinb{Proof rules which contain an explicit 
 effect set (similar to
 located evaluation formulae)  were  introduced in \cite{ALIAS}  and  
are of substantial help in reasoning about programs. }
Located Hoare triples take the form: 
\[ 
\ASET{C} M :_u\!\ASET{C'}@\VEC{e}
\] 
where each $e_i$ is of a reference type and does not contain
\mfb{(sub)expressions of the form $!g$}.\footnote{\label{footnote:located} This
restriction is for a simplification of the
interpretation, and can be taken off if $\VEC{e}$ is interpreted in
the pre-condition.} 
$\VEC{e}$ is called {\em effect set}. 
We prefix it with either $\proves$ (for provability) or
$\models$ (for validity) 
if we wish to be specific. 
 
The full rules are listed in
Figure \ref{fig:rules:compositional:located} (proof rules) and Figure
\ref{fig:rules:structural:located} (structure rules) in Appendix \ref{app:rules}.
All rules come from \cite{ALIAS}
except for the new name generation rule and the universal
quantification rule, both corresponding to the new rules in the basic
proof system. 
The structures rules are also revised along the lines of 
Figure \ref{figure:rules}. 


\subsection{Invariance Rules for Reachability}
\label{sub:invariant}
Invariance rules are useful for modular reasoning. A simple form 
is when there is no state change:
\[ 
[\mbox{\emph{Inv-Val}}]
        \,
        \infer
        {
          \ASSERT{C}{V}{m}{C'}
        }
        {
          \ASSERT{C\AND C_0}{V}{m}{C'\AND C_0}
        }
\] 
Alternatively if a formula is stateless it continues to hold 
irrespective of state change.
\[ 
[\mbox{\emph{Inv-Stateless}}]
        \,
        \infer
        {
          \ASSERT{C}{M}{m}{C'}
        }
        {
          \ASSERT{C\AND \allworlds C_0}{M}{m}{C'\AND \allworlds C_0}
        }
\] 
When it is formulated with (un)reachability predicates, however, one needs some
care.  Since reachability is a stateful property, it is generally {\em
  not} invariant under state change. For example, suppose $x$ is
unreachable from $y$; after running $y:=x$, $x$ becomes reachable from
$y$. Hence the following rule is unsound.
\[ 
\mbox{[{\em Unsound-Inv}]}
 \infer
{
\ASSERT{C}{M}{m}{C'}
}
{
\ASSERT{C\AND e\notreach e'}{M}{m}{C'\AND e\notreach e'}
}
\qquad
\mbox{(unsound)}
\]
From the following general invariance rule [{\em Inv}],  
we can derive an invariance rule for $\noreach$. 
\[ 
\begin{array}{c}
\mbox{[{\em Inv}]}
         \infer
         {
           \ASSERT{C}{M}{m}{C'}@\VEC{w}
	  \qquad
          C_0\mbox{ is tame}
         }
         {
         \ASSERT
         {C\AND \allCon{\VEC{w}}{C_0}}
         {M}{m}
         {C'\AND \allCon{\VEC{w}}{C_0}}@\VEC{w}
         }
\end{array}
\]
In [{\em Inv}], the effect set $\VEC{w}$ gives the minimum information
by which the assertion we wish to add, $C_0$, can be stated as an
invariant since $[!\VEC{w}]C_0$ says that $C_0$ holds regardless of
the content of $\VEC{w}$. Thus $C_0$ can stay invariant after
execution of $M$.  Unlike the existing invariance rules as found in
 standard Hoare logic or in Separation Logic \cite{REYNOLDS}, we
need no side condition ``$M$ does not modify stores mentioned in
$C_0$'': $C$ and $C_0$ may even overlap in their mentioned references,
and $C$ does not have to mention all references $M$ may read or write.
 
The  following instance of 
[{\em Inv}] is useful. 
\[
\begin{array}{c}
[\mbox{\emph{Inv-$\noreach$}}]
        \,
        \infer
        {
          \ASSERT{C}{M}{m}{C'}@x \quad \text{no dereference occurs in $\VEC{e}$}
        }
        {
        \ASSERT{C\AND x \noreach \VEC{e}}{M}{m}{C'\AND x \noreach \VEC{e}}@x
        }
\end{array}
\]
In $[\mbox{\emph{Inv-$\noreach$}}]$, 
we note $\allCon{x}{x \noreach \VEC{e}}\LITEQ x \noreach \VEC{e}$
is always valid if $\VEC{e}$ contains
no dereference $!e$, cf.~Proposition \ref{pro:notreach} 3-(5) 
later.
Hence $x \noreach \VEC{e}$ is stateless except at $x$. 
The side condition is indispensable:
consider $\ASET{\truth} x:=x\ASET{\truth}@x$ (which is typable with
recursive types), which does not imply
$\ASET{x\noreach !x} x:=x\ASET{x\noreach !x}@x$. 

One of the important aspects of these invariance rules is that the
effect set of a located judgement or assertion can contain a hidden name --
a name which has been created and which is (partially) accessible. 
For example, we can infer (using $\rulenameB{LetRef}$ in 
\S~\ref{subsec:newvar}):\footnote{Since $!y$ is stated in the
pre-condition, we can also write 
$\ASSERT{\truth}{\LETREF{x}{2}{!y=x}}{u}{\new x.(!x=2 \AND !!y=x \AND x\noreach i)}@!y$, cf.~footnote \ref{footnote:located}.
} 
\[ 
\ASSERT{!y=h}{\LETREF{x}{2}{!y=x}}{u}{\new x.(!h=x \AND !x=2 \AND !y=h
\AND x\noreach i)}@h
\]


\subsection{Soundness}
\label{subsec:soundness}
\NI 
Let $\MMM$ be a model $(\new \VEC{l})(\xi,\sigma)$ of type
$\Gamma$, and $\Gamma\proves M:\alpha$ with $u$
fresh.  
Then \emph{validity}  $\models
\ASET{C}M:_u\ASET{C'}$ is given by 
(with $\MMM$ including all variables in $M$, $C$ and $C'$ except $u$): 
\[
\models \ASET{C}M:_u\ASET{C'}
\;\ \IFFDEF\;\ 
\forall \MMM.
(
\MMM\models C\ \ENTAILS \ (\expands{\MMM}{u}{M}\converges\MMM'
\ANDl \MMM' \models C'
))
\]
where the notation $\expands{\MMM}{u}{N}\converges\MMM'$ 
appeared in Definition \ref{def:satisfaction}(c). 
This is equivalent to, with $V\DEFEQ \lambda ().M$:
\begin{equation}\label{above}
\forall \MMM.
(\MMM[m:V]\models\allworlds \EVAL{C}{m}{()}{u}{C'})
\end{equation}
Similarly the semantics of the located judgement:
\begin{equation}\label{here}
\models\ASET{C}\ M:_u\,\ASET{C'}@\VEC{x}
\end{equation}
is given through the corresponding 
located assertion, 
using the following term (let $z$ be fresh):
\begin{equation}\label{therethereagain}
V
\;\DEFEQ\;
\LETREF{z}{0}{\lambda ().\IFTHENELSE{!z=0}{\LET{m}{M}{(z:=!z+1;m)}}{\Omega}}
\end{equation}
where $\Omega$ is a diverging closed term (in fact any closed program
works). The use of $z$ is to prevent leakage of information
from $m$ after the evaluation: after evaluation $m$ can never reveal
any information thus it is the same thing as evaluating $M$ once. 

With this $V$ we set the definition of
(\ref{here}) as follows:
\begin{equation}\label{below}
\forall \MMM.
(\MMM[m:V]\models
\allworlds \EVAL{C}{m}{()}{u}{C'}@\VEC{x})
\end{equation}
Among the proof rules the only non-trivial addition from the preceding
systems (in fact the only difference) is the rule for reference
generation.  For its soundness we use the free plain names 
as defined in Definition \ref{def:plain} (recall
$\PFN{e}$ is the set of reference names in $e$ 
that do not occur dereferenced).  For free plain names we note:

\begin{lemma} \label{lem:fresh}
Let $u \notin \PFN{e}$. Then for all $M$, with $u$ fresh, we have:
$\expands{\MMM}{u}{\REFPROG{M}} \converges \MMM'$ 
implies $\MMM' \models u \noreach e$. 
\end{lemma}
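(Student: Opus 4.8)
The plan is to unfold the reduction hypothesis to pin down a concrete representative of $\MMM'$, and then show that the freshly generated location witnessing $u$ cannot be reached from $\MAP{e}$ in that representative. Writing $\MMM = (\new \VEC{l})(\xi, \sigma)$, the evaluation $\expands{\MMM}{u}{\REFPROG{M}}\converges \MMM'$ factors, via the evaluation context $\refPrg{\RCCD}$ and the reduction rule for $\refPrg{V}$, as follows: first $(M\xi, \sigma) \converges (\new \VEC{l}_0)(V_0, \sigma_0)$ for some value $V_0$ and store $\sigma_0$, then a fresh location $l$ is generated, giving $\MMM' = (\new \VEC{l}\,\VEC{l}_0\, l)(\xi\CD u\!:\!l,\ \sigma_0\uplus[l\mapsto V_0])$. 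Two facts about $l$ recorded from this analysis drive the whole argument: by freshness, $l$ occurs neither in $\xi$ nor in $\sigma_0$ (neither in its domain nor in any stored value); and since $V_0$ is computed \emph{before} $l$ is created, $l \notin \FL{V_0}$ while $\FL{V_0}\subseteq\dom{\sigma_0}$ by type safety. We may likewise assume $l \notin \FL{e}$ by the bound-name convention.

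Next I would reduce the goal to a single representative. By the semantics of $\noreach$ and negation (Definition~\ref{def:satisfaction}), $\MMM' \models u \noreach e$ amounts to $\MMM' \not\models e \reachable u$, i.e.\ to exhibiting \emph{some} $(\new \VEC{m})(\zeta, \tau)\WB\MMM'$ with $\MAP{u}_{\zeta, \tau}\notin \ncl{\FL{\MAP{e}_{\zeta, \tau}}}{\tau}$. Since $\WB$ is reflexive, it suffices to take the concrete representative produced by the reduction, namely $\xi' = \xi\CD u\!:\!l$ and $\sigma' = \sigma_0\uplus[l\mapsto V_0]$, and show $l = \MAP{u}_{\xi',\sigma'} \notin \ncl{\FL{\MAP{e}_{\xi',\sigma'}}}{\sigma'}$.

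The core step is a structural induction on $e$, using the hypothesis $u \notin \PFN{e}$, establishing $l \notin \FL{\MAP{e}_{\xi',\sigma'}}$. The variable case uses $u \notin \PFN{x}=\{x\}$ to get $x\neq u$, so $\MAP{x}_{\xi',\sigma'}=\xi(x)$ is $l$-free; constants (including locations) are $l$-free because $l \notin \FL{e}$; operations return base values carrying no labels; and pairing and injection follow from the inductive hypotheses since $\PFN{\cdot}$ distributes over them. Once $l \notin \FL{\MAP{e}_{\xi',\sigma'}}$ is known, all remaining labels of $\MAP{e}_{\xi',\sigma'}$ lie in $\dom{\sigma_0}$ (they belong to $\dom{\sigma'}=\dom{\sigma_0}\cup\{l\}$ by well-typedness, and $l$ is excluded); closing under $\sigma'$ never escapes $\dom{\sigma_0}$, since old cells point only to old cells and $l$ is the unique new cell. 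Hence $\ncl{\FL{\MAP{e}_{\xi',\sigma'}}}{\sigma'}\subseteq\dom{\sigma_0}$, which does not contain $l$, finishing the proof.

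The main obstacle is the dereference case $e = !e'$, and it is exactly the place where the definition $\PFN{!e}=\emptyset$ is reconciled with unreachability. Here the inductive hypothesis is unavailable, because $\PFN{!e'}=\emptyset$ places no constraint on $e'$ — in particular $u$ may occur undereferenced inside $e'$. Instead the freshness of $l$ is used directly: $\MAP{!e'}_{\xi',\sigma'} = \sigma'(L')$ for the location $L' = \MAP{e'}_{\xi',\sigma'}$, and there are just two cases. If $L'=l$ then $\sigma'(l)=V_0$, which is $l$-free; if $L'\neq l$ then $L'\in\dom{\sigma_0}$ and $\sigma'(L')=\sigma_0(L')$ is $l$-free by freshness of $l$ in $\sigma_0$. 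Either way no dereference can ever produce $l$, which is the semantic content of the claim that a newly created cell is unreachable from pre-existing data: following the old store never reaches $l$, and even dereferencing $l$ itself returns a value computed prior to $l$'s creation.
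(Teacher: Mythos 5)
Your proof is correct and takes essentially the same route as the paper's: unfold the reduction to expose the concrete representative with the fresh location $l$, observe that $l$ occurs neither in $\xi$ nor in the stored values nor in $V_0$, and conclude that the label closure of $\FL{\MAP{e}}$ stays within the old part of the store and so cannot contain $l$. The paper compresses your careful structural induction on $e$ (including the delicate $!e'$ case, where $\PFN{!e'}=\emptyset$ gives no inductive handle) into the phrase ``one can check''; your elaboration supplies exactly those details.
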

\begin{proof}
Suppose 
$\MMM=(\nu \VEC{l})(\xi, \sigma)$ and  
$\expands{\MMM}{u}{\REFPROG{M}} \converges \MMM'$. Then $\MMM'
=(\nu \VEC{l}l)(\xi \cdot u : l, 
\sigma \cdot [l\mapsto V])$ with 
$u\not\in \FV{\xi}$, 
$l\not\in \FL{\sigma,\xi}$ and 
$
(\nu \VEC{l}_0)(M\xi, \sigma_0)
\Downarrow 
(\nu \VEC{l}_0)(V, \sigma)$.  
Then one can check 
$\MAP{i}_{\xi\cdot u:l,\sigma \cdot [l\mapsto V]}=
\MAP{i}_{\xi,\sigma}$ and $\MAP{i}_{\xi,\sigma}\not \in 
\ncl{l}{\sigma \cdot [l\mapsto V]}=\ncl{l}{[l\mapsto V]}$. 
\end{proof}

\NI We can now establish:

\begin{theorem}[Soundness]\label{thm:sound}
$\proves \ASET{C}M:_u\ASET{C'}$ implies $\models
\ASET{C}M:_u\ASET{C'}$.
\end{theorem}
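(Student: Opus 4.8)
The plan is to proceed by induction on the derivation of $\proves \ASET{C}M:_u\ASET{C'}$, showing that each proof rule of Figure \ref{figure:rules} (together with the structural rules in Appendix \ref{app:rules} and their located variants) preserves validity in the sense of the clause defining $\models \ASET{C}M:_u\ASET{C'}$ stated just above the theorem. For each compositional rule the argument unfolds that clause: given an arbitrary model $\MMM$ with $\MMM \models C$, one uses the operational semantics of the relevant program constructor together with the induction hypotheses on the premises to produce the evaluation $\expands{\MMM}{u}{M} \converges \MMM'$ and to verify $\MMM' \models C'$. Because the sequents and the shared logical connectives carry exactly the semantics used in the base logic \cite{ALIAS} (and [{\em Rec}-{\em Ren}] from \cite{completeness}), and because the new clauses for $\allworlds$, the hiding quantifiers and reachability are conservative over the old ones on state-only formulae, the cases for rules that are syntactically unchanged — e.g.\ [{\em Var}], [{\em Const}], [{\em Add}], [{\em Pair}], [{\em Proj}], [{\em In}], [{\em Case}], [{\em If}], [{\em Deref}], [{\em Assign}] and [{\em Conseq}] — carry over essentially verbatim from the soundness proofs of the preceding systems.

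The genuinely new work concentrates on a handful of rules, chief among them [{\em Ref}]. Here one takes $\MMM \models C$ and evaluates $\expands{\MMM}{u}{\REFPROG{M}}$ via the reduction rule for $\refPrg{\cdot}$, which allocates a fresh location $l$ and yields a model $\MMM'$ of the shape $(\new \VEC{l}\,l)(\xi\cdot u\!:\!l,\ \sigma\cdot[l\mapsto V])$. The induction hypothesis on the premise $\ASSERT{C}{M}{m}{C'}$ gives $C'\SUBST{!u}{m}$ (the content of the new cell satisfies what the body was asserted to satisfy); the hiding of $l$ provides the existential witness for $\HIDEe x.(\ldots \AND u=x)$; and Lemma \ref{lem:fresh} supplies $u \noreach i^{\TVXscript}$, since the auxiliary $i$ interprets to a datum of the pre-state and hence lies outside the label closure of the freshly allocated $l$. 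Combining these yields $\MMM' \models \HIDEe x.(C'\SUBST{!u}{m} \AND u\noreach i \AND u=x)$, which is the conclusion. The revised [{\em Abs}] and [{\em App}] are re-checked against the one-sided evaluation formula and necessity operator: for [{\em Abs}] one shows the abstracted value satisfies $\allworlds \forall x\VEC{i}.(\ASET{C}\APP{u}{x}=m\ASET{C'})$ by appeal to the semantics of $\allworlds$ (quantifying over all evolutions $\MMM \EVOLVES \MMM'$) and of the evaluation formula, and for [{\em App}] one composes the two evaluations sequentially.

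The thinness side conditions in [{\em Case}], [{\em App}], [{\em Assign}] and [{\em Deref}] are exactly what discharges the composition. When evaluating, say, $MN$ in [{\em App}], the intermediate anchors $m$ and $n$ are added to the model, so the induction hypotheses deliver $\MMM' \models C'$ for a model $\MMM'$ still carrying the entries for $m,n$, whereas the conclusion must hold of $\MMM'/mn$. Since reachability is stateful, deleting a name may in general alter a formula's truth value — as the non-thin example $\someworld \ONEEVAL{u}{()}{z}{z=2}$ before Proposition \ref{pro:thin} illustrates — so requiring $C'$ to be thin w.r.t.\ $m,n$ guarantees $\MMM'/mn \models C'$, using the syntactic characterisation of Proposition \ref{pro:thin} where convenient. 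The remaining structural additions are handled directly: for [{\em Subs}] the side condition $u \notin \PFN{e}$ is precisely what blocks the unsound substitution of a freshly generated plain name for an auxiliary variable (cf.\ the example around (\ref{CCCCC})), its soundness then following by interpreting $e$ uniformly; [{\em Cons-Eval}] follows from the semantics of $\allworlds$ and of the evaluation formula together with the ordinary consequence rule; and the weakened [{\em Aux}$_\forall$] / [{\em Aux}$_\forall$V] rules are sound precisely because the restriction to values or base-typed $i$ excludes the interaction with new-name generation exhibited by (\ref{newrefex}).

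The main obstacle is the [{\em Ref}] case, and specifically the correct bookkeeping of hiding and freshness: one must establish simultaneously that the new cell is hidden (the $\HIDEe$ witness), that its content is governed by the premise (the $C'\SUBST{!u}{m}$ part), and that it is unreachable from every pre-existing datum of arbitrary, including higher-order, type. This last point is where Lemma \ref{lem:fresh} and the label closure $\ncl{\cdot}{\sigma}$ do the real work, and where representing freshness by unreachability from a universally quantified auxiliary $i^{\TVX}$ pays off; getting the interplay of this auxiliary with the thinness conditions and with [{\em Subs}] right is the delicate part of the argument.
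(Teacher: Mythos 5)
Your proposal is correct and follows essentially the same route as the paper's own proof: induction on the derivation, with the unchanged rules carried over from the base logic \cite{ALIAS} modulo the thinness conditions that license deleting intermediate anchors, and with [{\em Ref}] as the one genuinely new case handled via Lemma \ref{lem:fresh} (label closure giving $u \noreach i$) and the hidden fresh location as the witness for the $\HIDEe$-quantifier. Your treatment of the structural rules ([{\em Subs}] via $\PFN{\cdot}$, [{\em Cons-Eval}], the weakened [{\em Aux}$_\forall$]) likewise matches the paper's Appendix \ref{app:soundness}.
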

\begin{proof} 
Except [{\em Ref}], all rules
precisely follow \cite[\S 8.2]{ALIAS} (except for the use of thinness
which allows the same reasoning as in \cite[\S 8.2]{ALIAS} to go through).
For [{\em Ref}], we have, with $l$ fresh:  
\begin{reasoning}
\MMM\models C 
&\;\THENs\;&\;
\expands{\MMM}{m}{M}\converges \MMM' \ANDl \MMM'\models C'
&\quad \text{Hypothesis}\\
&\;\THENs\;&\;
\expands{\expands{\MMM}{m}{M}}{u}{\refPrg{m}}\converges
(\new l)\MMM''\ANDl 
\MMM'' \models C'\ANDl !u=m
\\
& & \mbox{with} \ \MMM''
\DEFEQ \MMM'[u:l][l\mapsto V] \\
&\;\THENs\;&\; 
\expands{\MMM}{u}{\refPrg{M}}\converges
(\new l)\MMM''/m \ANDl \MMM''/m \models C'\SUBST{!m}{u}\ANDl
\NOTREACH{u}{i}
\quad 
&\quad \text{Lemma \ref{lem:fresh}}\\
&\;\THENs\;&\;
\MMM''/m [x:l] \models 
C'\SUBST{!m}{u}\ANDl \NOTREACH{u}{i}\ANDl x=u\\
&\;\THENs\;&\;
(\new l)\MMM''/m \models 
\new x.(C'\SUBST{!m}{u}\ANDl \NOTREACH{u}{i} \ANDl x = u) 
\end{reasoning}
See Appendix \ref{app:soundness} for the full proofs. 
\end{proof}

\begin{theorem}[Soundness]\label{thm:soundlocated}
$\proves \ASET{C}M:_u\ASET{C'}@\VEC{e}$ implies $\models
\ASET{C}M:_u\ASET{C'}@\VEC{e}$.
\end{theorem}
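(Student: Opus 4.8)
The plan is to argue by induction on the derivation of $\proves \ASET{C}M:_u\ASET{C'}@\VEC{e}$, verifying that each located proof rule (listed in Figure~\ref{fig:rules:compositional:located} and Figure~\ref{fig:rules:structural:located} of Appendix~\ref{app:rules}) preserves the located notion of validity in~(\ref{below}). Unfolding~(\ref{below}) through the guarded term $V$ of~(\ref{therethereagain}), the obligation at each inductive step becomes an instance of the located assertion semantics of Definition~\ref{def:satisfaction}(\ref{locatedassertion}): in addition to convergence and the post-condition, one must discharge the \emph{effect clause}, namely that running the content-preserving wrapper $\LET{\VEC{x}}{!\VEC{e}}{\LET{y}{ee'}{\VEC{e}:=\VEC{x}}}$ yields a model $\WB$-indistinguishable from $\MMM[z:()]$, so that $M$ writes observably only within $\VEC{e}$. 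As the excerpt records, every located rule except the reference-generation rule and the universal-quantification rule is inherited from~\cite{ALIAS}; for those inherited rules the soundness argument of~\cite[\S 8.2]{ALIAS} carries over, the sole adjustment being the thinness bookkeeping already exploited in the proof of Theorem~\ref{thm:sound}, since enriching models with hidden stores does not disturb the effect analysis present in the base logic.

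It then remains to treat the two genuinely new cases. For located reference generation I would reuse the non-located reasoning chain from the proof of Theorem~\ref{thm:sound} to obtain convergence, the post-condition, and $\NOTREACH{u}{i}$, and then discharge the additional effect clause: the effect set is carried over unchanged from the premise, and the freshly created cell contributes nothing, because by Lemma~\ref{lem:fresh} the new location is unreachable from every datum already present, so the wrapper's save/restore of $\VEC{e}$ recovers the original store up to $\WB$ while the new cell is absorbed under its $\new$-binder. For the located universal-quantification rule I would use the weakened side condition — the located analogue of the two universal-abstraction rules of Figure~\ref{figure:rules} — which is exactly what prevents the quantified auxiliary variable from being instantiated with a freshly generated name; soundness then follows by ranging over functional terms as in \S~\ref{subsec:sem:universal}, with the effect clause untouched since quantification does not execute $M$.

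The hard part will be the effect clause for reference generation under scope extrusion: one must show that saving and restoring $\VEC{e}$ cancels all observable writes even when the evaluation of $M$ both reads hidden references and exports newly generated ones. I expect to localize this to the store partition furnished by Proposition~\ref{pro:partition} together with Lemma~\ref{lem:ncl}(4): split $\sigma$ into the fragment reachable from the result and the fragment carrying $\VEC{e}$, and check that the fresh cell lies outside the reachable closure of every $i$, whence $\MMM''\WB\MMM[z:()]$. A cleaner alternative worth recording is to avoid the direct verification by appealing to Proposition~\ref{pro:located_decompose}: since a located judgement is expressible by a non-located evaluation formula together with content quantification, located soundness reduces to Theorem~\ref{thm:sound} applied to the decomposed judgement, provided each located rule is checked to be derivable from its non-located counterpart under that translation.
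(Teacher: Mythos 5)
Your overall strategy---induction on the derivation, inheriting the \cite{ALIAS} cases modulo thinness, and treating located $[\mbox{\emph{Ref}}]$ via Lemma~\ref{lem:fresh} together with the partition property of Proposition~\ref{pro:partition}---is the route the paper takes, and your discharge of the effect clause for $[\mbox{\emph{Ref}}]$ (the fresh cell is unreachable from all pre-existing data, hence after the save/restore wrapper it is garbage absorbed under its $\new$-binder up to $\WB$) matches the argument in Appendix~\ref{app:soundness}.

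There is, however, a genuine gap: you classify every located rule other than $[\mbox{\emph{Ref}}]$ and $[\mbox{\emph{Aux}}_{\forall}]$ as inherited from \cite{ALIAS}, but the paper's proof explicitly singles out the \emph{invariance rules}---$[\mbox{\emph{Inv}}]$ and its instance $[\mbox{\emph{Inv-}}\noreach]$ of \S~\ref{sub:invariant}---as requiring fresh treatment alongside $[\mbox{\emph{Ref}}]$, and proves them in Appendix~\ref{subsec:invsound:proof}. These cannot be imported from the base logic: the side condition of $[\mbox{\emph{Inv}}]$ demands that $C_0$ be \emph{tame}, a notion defined only in this paper and built from reachability predicates and the necessity operator $\allworlds$, neither of which exists in \cite{ALIAS}; likewise $[\mbox{\emph{Inv-}}\noreach]$ rests on the reachability axiom $[!x]\,y\reach x\ \LITEQ\ y\reach x$ of Proposition~\ref{pro:notreach}. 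Their soundness is precisely where the effect clause you analyse does real work: the paper proves, via Lemma~\ref{ALEALE} and Lemma~\ref{tame:xfree}, that if $\MMM\models\allCon{\VEC{w}}{C_0}$ with $C_0$ tame, and $M$'s observable writes are confined to $\VEC{w}$ (which is exactly what the premise's located validity supplies through the save/restore wrapper), then $\allCon{\VEC{w}}{C_0}$ holds again in the post-state. Without this step your induction does not close, since $[\mbox{\emph{Inv}}]$ is a rule of the located system whose conclusion strengthens both pre- and post-condition of its premise. Your fallback via Proposition~\ref{pro:located_decompose} does not repair this either: that proposition decomposes located \emph{evaluation formulae}, and checking that $[\mbox{\emph{Inv}}]$ remains derivable under the induced translation amounts to re-proving the content of Lemma~\ref{tame:xfree} rather than avoiding it.
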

\begin{proof}
As above (and for remaining rules  as in \cite[\S 8.2]{ALIAS}). 
See Appendix \ref{app:soundness} for [{\em Ref}] and the invariant rules. 
\end{proof}

\section{Axioms and Local Invariants}
\label{sec:axioms}
This section studies the basic axioms for the logical constructs, 
including those for local state. 

\subsection{Axioms for Equality}
\label{sub:axiom:eq}
Equality, logical connectives and quantifiers satisfy the standard
axioms (quantifications need a modest use of thinness, see Proposition \ref{pro:hiding}
later). For logical connectives, this is direct from the
definition. For equality and quantification, however, this is not
immediate, due to the non-standard definition of their semantics.

First we check the equality indeed satisfies the standard axioms for
equality. We start from the following lemmas.
$C\MMSUBS{u}{v}$ denotes a simultaneous substitution. 

\begin{lemma}\label{lemma:29410} 
Let $\MMM$ have type $\Gamma$.
\begin{enumerate}[\em(1)]

\item\label{lem:sat:renaming} (injective renaming)\
  Let $u, v\in\dom{\Gamma}$. Then
        $\MMM\models C$
        iff 
        $\MMM\MMSUBS{u}{v}\models C\MMSUBS{u}{v}$.

\item\label{lem:sat:perm} (permutation)\
   Let $u, v\in\dom{\Gamma}$. Then we have
        $\MMM\models C$
        iff 
        $\piprm{uv}{vu}\MMM\models C\MMSUBS{u}{v}$.

\item\label{lemma:29410:5} (exchange)\
\label{lem:sat:exchange}
        Let $u, v\not\in\FV{e, e'}$. Then we have
        $\expands{\expands{\MMM}{u}{e}}{v}{e'}\models C$
        iff 
        $\expands{\expands{\MMM}{v}{e'}}{u}{e}\models C$.

\item\label{lemma:29410:6}\ (partition and monotonicity)\ 
Let
$\MMM=(\new \VEC{l})(\xi, \sigma)$ be of type $\Gamma$ and
$\MMM'=(\new \VEC{l}\VEC{l}')(\xi\CD\xi', \sigma\CD\sigma')$ be 
  such that $(\FL{\sigma'}\cup\FL{\xi'})\cap\ASET{\VEC{l}}=\emptyset$.
Further let $\Gamma\proves C$.
Then $\MMM\models C$ iff $\MMM'\models C$.
In particular with
$u\not\in\fv{C}$ 
we have
$\MMM \models C$ 
iff $\expands{\MMM}{u}{V} \models C$.

\item\label{lemma:29410:1}\ (symmetry)\
        $\MMM \models e_1 = e_2$ iff for fresh and distinct $u, v$:
$\expands{\expands{\MMM}{u}{e_1}}{v}{e_2} \congmodel
\expands{\expands{\MMM}{u}{e_2}}{v}{e_1}$.


\item\label{lemma:29410:3}
\label{lemma:29410:4}
\ (substitution)\ 
 $\expands{\expands{\MMM}{u}{x}}{v}{e} \congmodel
\expands{\expands{\MMM}{u}{x}}{v}{e\SUBST{u}{x}}$; and 
        $\expands{\expands{\MMM}{u}{e}}{v}{e'}
                \congmodel
        \expands{\expands{\MMM}{u}{e}}{v}{e'\SUBST{e}{u}}$.
\end{enumerate}
\end{lemma}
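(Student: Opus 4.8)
The plan is to split the seven clauses into two groups. Clauses (1)--(4) all have the shape ``$\MMM\models C$ iff $\MMM^{\dagger}\models C^{\dagger}$'' for an \emph{arbitrary} formula $C$, so I would prove each by induction on the structure of $C$, walking through the clauses of Definition \ref{def:satisfaction}. Clauses (5)--(7) are instead statements about the model congruence $\congmodel$ (and about equality, which by Definition \ref{def:satisfaction}(\ref{model:eq}) is itself defined through $\congmodel$), so I would establish them directly from compositionality of the interpretation $\MAP{\cdot}_{\xi,\sigma}$ together with transitivity and the congruence-closure properties of $\WB$.

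For (1) (injective renaming) the induction is routine in the propositional and quantifier cases; the real work is in the clauses invoking the operational semantics --- the evaluation formula, $\allworlds$, and the hiding quantifiers --- where one must observe that the swap $\MMSUBS{u}{v}$ is a bijective relabelling of variables that commutes with model extension $\expands{\MMM}{z}{N}$, with reduction $\converges$, with evolution $\EVOLVES$, and with the hiding operation $(\new l)\MMM'$, provided the fresh witnesses (the anchor $z$, the fresh location $l$) are chosen away from $u,v$. This is precisely the equivariance of the reduction relation under variable renaming, and the bound-name convention lets me pick witnesses freshly. Clause (2) then needs no separate induction: a direct computation on finite maps shows that $\piprm{uv}{vu}\MMM$ and $\MMM\MMSUBS{u}{v}$ denote the same model (relabelling the $u$- and $v$-entries produces the same finite map as swapping their values), so (2) is literally (1). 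Clause (3) is even more immediate: since $u,v\notin\FV{e,e'}$, the values $\MAP{e}$ and $\MAP{e'}$ are computed without reference to $u,v$, so $\expands{\expands{\MMM}{u}{e}}{v}{e'}$ and $\expands{\expands{\MMM}{v}{e'}}{u}{e}$ are identical finite maps and satisfy the same formulae.

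For (4) (partition and monotonicity) I would again induct on $C$, the content being that a $\Gamma$-typable $C$ cannot observe the disjoint extension $(\xi',\sigma')$. The delicate cases are reachability and the operational clauses. For reachability I would combine Lemma \ref{lem:ncl} with well-typedness: a $\Gamma$-datum $e_1$ has $\FL{\MAP{e_1}}\subseteq\dom{\sigma}$, the old store $\sigma$ cannot point into the freshly added labels, and the side condition $(\FL{\sigma'}\cup\FL{\xi'})\cap\ASET{\VEC{l}}=\emptyset$ prevents the new material from reaching the old hidden part, so $\ncl{\FL{\MAP{e_1}}}{\sigma\cdot\sigma'}=\ncl{\FL{\MAP{e_1}}}{\sigma}$ and membership of $\MAP{e_2}$ is unchanged. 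For the evaluation and necessity clauses I would appeal to a locality (frame) property of the operational semantics: since $e,e'$ range over the $\MMM$-part, the computation $\expands{\MMM'}{z}{ee'}$ touches only the reachable part of $\MMM$ and merely carries the disjoint garbage along, converging to a model that again differs from the $\MMM$-computation only by a disjoint extension, which closes the induction. The monotonicity corollary ($u\notin\FV{C}$) is the special case $\xi'=[u\mapsto V]$, $\sigma'=\emptyset$.

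Clauses (5)--(7) close the argument. For (5) (symmetry) I would prove both directions from (4) and transitivity of $\WB$: writing $\MMM\models e_1=e_2$ as $\expands{\MMM}{u}{e_1}\congmodel\expands{\MMM}{u}{e_2}$, adding $v:e_2$ to both sides (the same value, since $e_2$ omits $u$) gives $\expands{\expands{\MMM}{u}{e_1}}{v}{e_2}\congmodel\expands{\expands{\MMM}{u}{e_2}}{v}{e_2}$, while $\expands{\MMM}{u}{e_2}\models e_1=e_2$ (from (4), as $u$ is fresh) gives $\expands{\expands{\MMM}{u}{e_2}}{v}{e_1}\congmodel\expands{\expands{\MMM}{u}{e_2}}{v}{e_2}$, and transitivity yields the claim; the converse follows by projecting out the fresh $v$-entry, which $\WB$ respects. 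Alternatively one may phrase this through Proposition \ref{prop:symmetry}, noting $\expands{\expands{\MMM}{u}{e_2}}{v}{e_1}=(\expands{\expands{\MMM}{u}{e_1}}{v}{e_2})\piprm{uv}{vu}$, so the right-hand side asserts that $\piprm{uv}{vu}$ is a symmetry. Clauses (6) and (7) reduce to compositionality of $\MAP{\cdot}_{\xi,\sigma}$: in $\expands{\MMM}{u}{x}$ the entries $u$ and $x$ carry the same value, so $\MAP{e}=\MAP{e\SUBST{u}{x}}$ (checking the dereference case $\MAP{!x}=\MAP{!u}$ explicitly), and in $\expands{\MMM}{u}{e}$ we have $\MAP{u}=\MAP{e}$, so $\MAP{e'}=\MAP{e'\SUBST{e}{u}}$; in each case the two models have identical $v$-entries, hence are equal and a fortiori $\congmodel$. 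The main obstacle throughout is the handful of operational clauses in (1) and (4): making precise the equivariance of $\converges$ and $\EVOLVES$ under renaming and their frame behaviour under disjoint extension, while managing the fresh witnesses introduced by the evaluation formula and the hiding quantifiers.
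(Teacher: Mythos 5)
Your proposal is correct and follows essentially the same route as the paper: induction on $C$ (handling a formula and its negation together) for the satisfaction-preservation clauses (1)--(4), and direct manipulation of concrete models and the congruence $\congmodel$ for (5)--(7). Your only departures are local shortcuts --- observing that $\piprm{uv}{vu}\MMM$ and $\MMM\MMSUBS{u}{v}$ are the same finite map so that (2) is literally (1), and that the two models in (3) coincide outright --- where the paper instead repeats the induction for (2) and derives (3) from (1) and (2); both are sound.
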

\begin{proof}
All are elementary, mostly by (simultaneous) induction on $C$. 
\end{proof}

\NI In (\ref{lemma:29410:6}) above, note that the extended part in
$\MMM'$ on the top of $\MMM$ may refer to free labels of $\MMM$ but (since
$\MMM$ is a model) no labels in $\MMM$ can ever refer to (free or bound) labels
in $\MMM'$.

We are now ready to establish the standard axioms for equality.

\begin{lemma}\label{lemma:0248} {\rm (axioms for equality)}\ \ 
For any model $\MMM$ and $x$, $y$, $z$ and $C$:
\begin{enumerate}[\em(1)]

\item $\MMM \models x = x$, \ \ 
$\MMM \models x = y \IMPLIES y = x$\ \ 
and 
$\MMM \models (x = y \AND y = z) \IMPLIES x = z$.

\item $\MMM \models (C(x, y) \AND x = y) \IMPLIES C(x, x)$.

\end{enumerate}
where $C(x, y)$ 
indicates $C$ together with some of the occurrences of
$x$ and $y$, while
$C(x, x)$  is the result of substituting $x$ for $y$, i.e.~$C(x, y)\MSUBS{x}{y}$ 
see \cite[\S 2.4]{MENDELSON}.
\end{lemma}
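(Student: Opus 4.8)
The plan is to reduce everything to properties of the contextual congruence $\WB$ on models together with the elementary facts collected in Lemma~\ref{lemma:29410}, all read through the semantic clause for equality (Definition~\ref{def:satisfaction}, clause~(\ref{model:eq})): $\MMM\models e_1=e_2$ iff $\MMM[u:e_1]\WB\MMM[u:e_2]$.

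For part~(1) I would first observe that $\WB$ is an equivalence relation, since it is defined in Definition~\ref{def:modelequivalence} through a biconditional on convergence and is therefore reflexive, symmetric and transitive. Reflexivity of $x=x$ is then immediate from reflexivity of $\WB$. For $x=y\IMPLIES y=x$, if $\MMM\models x=y$ then $\MMM[u:x]\WB\MMM[u:y]$, whence $\MMM[u:y]\WB\MMM[u:x]$ by symmetry of $\WB$, i.e.\ $\MMM\models y=x$; transitivity $(x=y\AND y=z)\IMPLIES x=z$ follows the same way by applying transitivity of $\WB$ to $\MMM[u:x]\WB\MMM[u:y]$ and $\MMM[u:y]\WB\MMM[u:z]$. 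This settles~(1) without touching the non-standard clauses.

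For part~(2) I would fix $\MMM$ with $\MMM\models x=y$ and argue by induction on the structure of $C$ that $\MMM\models C(x,y)$ implies $\MMM\models C(x,x)$, where $C(x,x)=C(x,y)\MSUBS{x}{y}$ replaces the marked occurrences of $y$ by $x$. The base cases are term equalities $e_1=e_2$ and the (un)reachability predicates $\reach,\noreach$. Here $\MMM\models x=y$ makes the denotations of $x$ and $y$ congruent in context, so by the term-substitution facts of Lemma~\ref{lemma:29410}(\ref{lemma:29410:3}) (with the symmetry characterisation~(\ref{lemma:29410:1})) replacing $y$ by $x$ inside any term yields a congruent denotation; since both $=$ and reachability are interpreted up to $\WB$ (the reachability clause of Definition~\ref{def:satisfaction} quantifies over all $(\new\VEC{l})(\xi,\sigma)\WB\MMM$), satisfaction is preserved. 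The propositional connectives are routine, and the ordinary quantifiers $\forall,\exists,\forall\TVX,\exists\TVX$ and the content quantifiers $\allCon{}{},\someCon{}{}$ are handled by applying the induction hypothesis after the appropriate update or extension of $\MMM$, invoking thinness (Proposition~\ref{pro:thin}) and monotonicity (Lemma~\ref{lemma:29410}(\ref{lemma:29410:6})) exactly where the logic already demands them.

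The main obstacle is the family of constructs whose semantics evolves or updates the model: $\allworlds C$, the evaluation formula $\ONEEVAL{e}{e'}{z}{C}$, and the hiding quantifiers $\HIDEe,\HIDEa$. For these the induction hypothesis must be applied not at $\MMM$ but at a derived model $\MMM'$ (a future state $\MMM\EVOLVES\MMM'$, the result of an application $\MMM[z:ee']\converges\MMM'$, or a model with an added or removed hidden location), so the argument only goes through if the hypothesis $x=y$ is inherited by $\MMM'$. The key fact I would establish is therefore that $\MMM\models x=y$ is preserved under $\EVOLVES$ and under content/hiding manipulations: because $x,y$ are variables their denotations $\xi(x),\xi(y)$ are untouched by any change to the store, and contextual congruence of two shared-store models is closed under reduction (every evolution is realisable by a context, so $\MMM[u:x]\WB\MMM[u:y]$ forces $\MMM'[u:x]\WB\MMM'[u:y]$). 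For the terms $e,e'$ occurring under an evaluation formula, substituting $x$ for $y$ produces a congruent application by the same term-level congruence used in the base case, so the witnessing reductions match up to $\WB$; combined with the inherited $x=y$ this lets the induction hypothesis fire on the sub-formula $C$. Discharging these modal cases — and checking that marked-occurrence substitution interacts correctly with the binders — is the only genuinely delicate part, the remaining cases following the pattern of Lemma~\ref{lemma:29410}.
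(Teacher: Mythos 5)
Your proof follows essentially the same route as the paper's: part (1) is derived from reflexivity, symmetry and transitivity of $\WB$ on models, and part (2) proceeds by induction on $C$, discharging the base cases ($e_1=e_2$ and reachability) via the symmetry, substitution and exchange facts of Lemma~\ref{lemma:29410}. The only difference is one of completeness rather than method: the paper works out just the $e_1=e_2$ case and dismisses the remainder with ``other claims are by induction on $C$'', whereas you also make explicit the fact needed for the modal cases ($\allworlds$, evaluation formulae, hiding) --- that $\MMM\models x=y$ is inherited by any $\MMM'$ with $\MMM\EVOLVES\MMM'$, since every evolution is realisable by a context --- which is correct and fills in detail the paper leaves implicit.
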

\begin{proof}
For the first clause, reflexivity is because
$\expands{\MMM}{u}{x}\congmodel\expands{\MMM}{u}{x}$,
while symmetry and transitivity are from those of $\congmodel$. 
For the second clause, we proceed by induction on $C$.
We show the case where $C$ is $e_1 = e_2$. The case 
$C$ is $e_1 \reachable e_2$ is straightforward 
by definition. Other claims are by induction on $C$. 

It suffices to prove 
$\MMM \models x=y$ and $\MMM\models C$ imply
$\MMM \models C\MSUBS{x}{y}$.
\begin{align}
        \MMM \models x = y
                &\Rightarrow
        \expands{\expands{\MMM}{u}{x}}{v}{y}
                \congmodel
        \expands{\expands{\MMM}{u}{y}}{v}{x}
                \label{lemma:0248:1}
                \\
                &\Rightarrow
        \expands{\expands{\expands{\MMM}{u}{x}}{v}{y}}{w}{e_i}
                \congmodel
        \expands{\expands{\expands{\MMM}{u}{y}}{v}{x}}{w}{e_i}
                \label{lemma:0248:2}
\end{align}
Here (\ref{lemma:0248:1}) is by Lemma \ref{lemma:29410}.\ref{lemma:29410:1} and (\ref{lemma:0248:2}) follows from
the congruency of $\WB$.

\begin{align*}
        \expands{\expands{\expands{\MMM}{u}{x}}{v}{y}}{w}{e_i}
                &\congmodel
        \expands{\expands{\expands{\MMM}{u}{x}}{v}{y}}{w}{e_i\SUBST{v}{y}}
                \tag{\text{Lem.}~\ref{lemma:29410}(\ref{lemma:29410:3})}
                \\
                &\congmodel
        \expands{\expands{\expands{\MMM}{u}{y}}{v}{x}}{w}{e_i\SUBST{v}{y}}
                \tag{\ref{lemma:0248:1}}
                \\
                &\congmodel
        \expands{\expands{\expands{\MMM}{u}{y}}{v}{x}}{w}{e_i\SUBST{v}{x}\SUBST{v}{y}}
                \tag{\text{Lem.}~\ref{lemma:29410}(\ref{lemma:29410:3})}
                \\
                &\congmodel
        \expands{\expands{\expands{\MMM}{u}{y}}{v}{x}}{w}{e_i\SUBST{x}{x}\SUBST{x}{y}}
                \tag{Lem.~\ref{lemma:29410}(\ref{lemma:29410:4})}
                \\
                &\congmodel
        \expands{\expands{\expands{\MMM}{w}{e_i\SUBST{x}{x}\SUBST{x}{y}}}{u}{y}}{v}{x}
                \tag{Lem.~\ref{lemma:29410}(\ref{lemma:29410:5})}
\end{align*}

\begin{align*}
        \MMM \models e_1 = e_2
                &\Rightarrow
        \expands{\expands{\MMM}{u}{x}}{v}{y} \models e_1 = e_2
                \tag{Lem.~\ref{lemma:29410}(\ref{lemma:29410:6})}
                \\
                &\Rightarrow
        \expands{\expands{\expands{\MMM}{u}{x}}{v}{y}}{w}{e_1}
                \congmodel
        \expands{\expands{\expands{\MMM}{u}{x}}{v}{y}}{w}{e_2}
\end{align*}
Thus we get
\begin{align*}
        \expands{\expands{\expands{\MMM}{w}{e_1\SUBST{x}{x}\SUBST{x}{y}}}{u}{y}}{v}{x}
                &\congmodel
        \expands{\expands{\expands{\MMM}{u}{x}}{v}{y}}{w}{e_1}
                \\
                &\congmodel
        \expands{\expands{\expands{\MMM}{u}{x}}{v}{y}}{w}{e_2}
                \\
                &\congmodel
        \expands{\expands{\expands{\MMM}{w}{e_2\SUBST{x}{x}\SUBST{x}{y}}}{u}{y}}{v}{x}
\end{align*}
This allows to conclude to: 
\[
                \expands{\MMM}{w}{e_1\MSUBS{x}{x}\MSUBS{x}{y}}
                        \congmodel
                \expands{\MMM}{w}{e_2\MSUBS{x}{x}\MSUBS{x}{y}}
\]
which is equivalent to $\MMM \models C(x, x)$, 
as required.
\end{proof}

\subsection{Axioms for Necessity Operators}
\label{subsec:axiom:allworlds}
We list basic axioms for Necessity and Possibility Operators. 
Below recall that $\someworld C \LOGICEQ \neg (\allworlds \neg C)$.

\begin{proposition}[Necessity Operator]
\label{pro:necessity}\hfill
\begin{enumerate}[\em(1)]
\item $\allworlds (C_1\entails C_2) \ENTAILS \allworlds C_1\entails
\allworlds C_2$; 
$\allworlds C \ENTAILS C$; 
$\allworlds \allworlds C \LITEQ \allworlds C$; 
$C \entails \someworld C$. Hence $\allworlds C \entails
\someworld C$.
\item {\em (permutation and decomposition)}
\begin{enumerate}[\em(a)]
\item $\allworlds e_1 = e_2 \LITEQ e_1 = e_2$ 
and $\allworlds e_1 \not= e_2 \LITEQ e_1 \not= e_2$
if $e_i$ does not contain dereference. 
\item $\allworlds (C_1 \AND C_2) \LITEQ \allworlds C_1 \AND \allworlds C_2$. 

\item $\allworlds  C_1 \OR \allworlds C_2 \ENTAILS \allworlds (C_1 \OR C_2)$. 

\item $\allworlds \forall x.C \ENTAILS \forall x.\allworlds C$   
and $\allworlds \forall x. \allworlds C \LITEQ 
\allworlds \forall x.C$. 

\item $\exists x.\allworlds C \ENTAILS \allworlds \exists x.C$
and $\allworlds \exists x^\alpha.C \LITEQ \exists x^\alpha.\allworlds C$
with $\alpha\in \ASET{\UNIT,\BOOL,\NAT}$. 

\item 
$\allworlds \HIDEa x.C \LITEQ \HIDEa x.\allworlds C$ 
and 
$\HIDEe x.\allworlds C \ENTAILS \allworlds \HIDEe x.C$.

\item $\allworlds \exists\TVX.C \LITEQ \exists\TVX.\allworlds C$; and 
$\allworlds \forall\TVX.C \LITEQ \forall\TVX.\allworlds C$. 

\item 
$\allworlds \allCon{x}{C} \LITEQ \allCon{x}{\allworlds C} \LITEQ \allworlds C$
and  
$\someCon{x}{\allworlds C}\LITEQ \allworlds C
\ENTAILS \allworlds \someCon{x}{C}$.
\end{enumerate}
\end{enumerate}
\end{proposition}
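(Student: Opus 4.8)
The plan is to read the satisfaction clauses for $\allworlds$ and $\someworld$ as a Kripke semantics whose accessibility relation is $\EVOLVES$, which by the remark following its definition is reflexive and transitive. Under this reading clause (1) is exactly the S4 package, and I would derive each conjunct directly by unfolding. For normality $\allworlds(C_1\entails C_2)\ENTAILS\allworlds C_1\entails\allworlds C_2$, assume $\MMM\models\allworlds(C_1\entails C_2)$ and $\MMM\models\allworlds C_1$; for every $\MMM'$ with $\MMM\EVOLVES\MMM'$ both hypotheses give $\MMM'\models C_1\entails C_2$ and $\MMM'\models C_1$, hence $\MMM'\models C_2$. The laws $\allworlds C\ENTAILS C$ and $C\entails\someworld C$ follow from reflexivity ($\MMM\EVOLVES\MMM$), and $\allworlds C\entails\someworld C$ combines the two. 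For $\allworlds\allworlds C\LITEQ\allworlds C$, the direction $\allworlds\allworlds C\ENTAILS\allworlds C$ is an instance of $\allworlds C\ENTAILS C$, while $\allworlds C\ENTAILS\allworlds\allworlds C$ uses transitivity: if $\MMM\EVOLVES\MMM'\EVOLVES\MMM''$ then $\MMM\EVOLVES\MMM''$, so $C$ holds at every such $\MMM''$.

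For the equality law (2a), the key observation I would isolate as a sub-lemma is that the interpretation $\MAP{e}_{\xi,\sigma}$ of a dereference-free term depends only on $\xi$ and is preserved along $\EVOLVES$: evolution only adds fresh environment entries and hidden locations and never alters the value already assigned to an existing variable, so by the partition-and-monotonicity clause of Lemma \ref{lemma:29410} the truth of $e_1=e_2$ is both preserved and reflected by $\MMM\EVOLVES\MMM'$. Hence $e_1=e_2$ and $\allworlds e_1=e_2$ have the same models, and dually for $e_1\neq e_2$. Clauses (2b) and (2c) are the standard modal distributivities: $\allworlds$ commutes with $\AND$ and semi-distributes over $\OR$, both of which drop out of clause (1) together with monotonicity of $\allworlds$ in its argument; only the easy half of (2c) holds, since the converse would require a single future world to decide the disjunction uniformly.

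For the quantifier commutations (2d)--(2h) the uniform method is to unfold both sides and compare the two orders \emph{modify the model, then let it evolve} against \emph{let it evolve, then modify}. Concretely, I would establish for each quantifier a commutation lemma relating the families of models reachable from $\MMM[x:L]$ with those obtained by first evolving $\MMM$ and then extending by $x:L$ (and analogously for hiding $(\new l)(-)$, content update $\updates{\MMM}{x}{L}$, and type instantiation $\MMM\CD\AT{\TVX}{\alpha}$). When the modification is state-independent --- base-type witnesses in (2d)/(2e), type instantiation in (2g), the hiding operation in (2f), and the stateless argument $\allworlds C$ in (2h) --- the two orders generate the same reachable models, yielding full equivalences. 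For a general reference or functional-term witness only the Barcan direction survives, $\allworlds\forall x.C\ENTAILS\forall x.\allworlds C$ and $\exists x.\allworlds C\ENTAILS\allworlds\exists x.C$, because a functional term legitimately chosen in a future world may read hidden locations created by the evolution and so need not be available as a witness in the present model; this is exactly the phenomenon that forces the base-type side condition on the equivalences. The content-quantifier facts (2h) I would then read off by combining Proposition \ref{pro:stateless} (statelessness of $\allworlds C$) with this monotonicity, since the three formulae $\allCon{x}{\allworlds C}$, $\allworlds C$ and $\allworlds\allCon{x}{C}$ all collapse once the argument is insensitive to the content of $x$ along evolution.

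The main obstacle will be the commutation lemma underlying (2d)--(2h), and in particular pinning down precisely why only one entailment direction holds in the non-base cases. This needs a careful comparison of the models reachable from $\MMM[x:L]$ with those reachable from $\MMM$ and subsequently extended by $x$: the asymmetry is that evolution can manufacture fresh hidden state that a later-chosen functional witness can observe, whereas a witness chosen in the present cannot anticipate it. Getting this bookkeeping right --- tracking increments $\MMM\EVOLVESinc{x_1..x_n}\MMM'$ and the freshness of locations through Lemma \ref{lemma:29410}(\ref{lemma:29410:6}) --- is where the real work lies; the S4 core (1) and the dereference-free equality (2a) are comparatively routine.
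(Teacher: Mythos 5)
Your overall architecture --- the S4 laws for (1) from reflexivity and transitivity of $\EVOLVES$, then unfold-and-commute arguments for (2) --- is the same as the paper's, and your treatment of (1), (2b), (2c), (2g) and of the one-directional (Barcan) implications in (2d)/(2e) is sound. There are, however, two genuine gaps. The first is in (2a): you justify preservation of $e_1=e_2$ along $\EVOLVES$ by asserting that evolution ``only adds fresh environment entries and hidden locations and never alters the value already assigned to an existing variable,'' and you then invoke the partition-and-monotonicity clause, Lemma \ref{lemma:29410}(\ref{lemma:29410:6}). That claim is false: evolution runs an arbitrary program against the model, which can mutate the store at existing locations (e.g.\ $y:=\,!y+1$), so the evolved model is \emph{not} a disjoint extension of $\MMM$ and that lemma simply does not apply. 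What the dereference-free hypothesis actually buys is that $\MAP{e_i}_{\xi,\sigma}$ is computed from $\xi$ alone, and $\xi$ is preserved; the preservation of $\MMM[u:e_1]\WB\MMM[u:e_2]$ then follows because $\WB$ is a contextual congruence, hence closed under running one and the same program $N$ against both sides --- which is the paper's (one-line) argument. A disjoint-extension argument cannot substitute for it: the truth of a dereference-free equality is genuinely store-dependent in general (two closures reading a hidden cell, say), and only \emph{evolution-realizable} store changes are harmless.

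The second gap is that your uniform heuristic --- ``state-independent modifications commute with evolution, hence equivalences; otherwise only the Barcan direction'' --- misdiagnoses exactly the delicate cases. Revealing a hidden reference ($\HIDEa$ in (2f)) is no more state-independent than choosing a functional term: after evolution there are strictly \emph{more} hidden locations available to reveal, namely those freshly created by the evolving program, so ``evolve then reveal'' is not literally matched by ``reveal then evolve,'' and your method as described would predict only one direction here too. The equivalence $\allworlds\HIDEa x.C\LITEQ\HIDEa x.\allworlds C$ does hold, but for a reason your proposal never supplies: up to $\WB$, a future-created hidden cell can be traded for a garbage cell revealed in the \emph{present}, whose content the evolving program then installs by writing through the revealed name $x$. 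It is the mutability of a reference's content after binding --- not state-independence --- that separates the $\HIDEa$ equivalence from the one-directional $\forall$ case, where the term bound to $x$ is frozen at binding time. Relatedly, your claim that the base-type equivalence in (2e) ``drops out'' of the commutation is too quick: the direction $\allworlds\exists x^\alpha.C\ENTAILS\exists x^\alpha.\allworlds C$ needs a \emph{single} witness valid in all future worlds, and state-independence of constants gives no such uniformity (different futures may demand different constants; consider $C\equiv\,(!y=x)$). That direction requires a separate and delicate argument --- notably, it is the one clause the paper's own appendix proof silently omits.
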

\begin{proof}
See Appendix \ref{subsec:axioms:necessity}. 
\end{proof}
By the second axiom in (d), 
we can derive $\mathsf{fresh}\LITEQ \mathsf{fresh}_3$ in the last example 
of \S~\ref{sub:ex:local}.

The following proposition clarifies the interplay between
$\allworlds C$ and evaluation formulae, and is useful in 
many examples. Recall below that
$\APP{e}{e'}\Uparrow$ (defined in Notation \ref{con:assertions}) means the application leads to the
divergence.

\begin{proposition}[Perpetuity]
\label{pro:perpetuity}
With $z$ fresh, 
$
\allworlds C 
\; \LITEQ\;
\forall \TVX, \TVY.f^{\TVX\FS\TVY}.x^{\TVX}.
(f\bullet x\Downarrow
\ENTAILS \ONEEVAL{f}{x}{z}{\allworlds C})$.
Again with $z$ fresh, 
$
\allworlds C 
\; \LITEQ\;
\forall \TVX, \TVY.f^{\TVX\FS\TVY}.x^{\TVX}.
(f\bullet x\Downarrow
\ENTAILS \ONEEVAL{f}{x}{z}{C})
$.
\end{proposition}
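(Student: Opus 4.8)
The plan is to treat the second equivalence (the one whose post-condition is $C$) as the fundamental statement and to derive the first from it by a substitution trick. Write $D \DEFEQ \forall \TVX, \TVY.f^{\TVX\FS\TVY}.x^{\TVX}.(f\bullet x\Downarrow \ENTAILS \ONEEVAL{f}{x}{z}{C})$, where $f,x,z$ are fresh for $C$ and $\TVX,\TVY$ do not occur in $C$. I would prove $\allworlds C \LITEQ D$ semantically, unfolding the clauses of Definition \ref{def:satisfaction}. The governing observation is that the clause for $\allworlds C$ quantifies over all $\EVOLVES$-futures of a model, whereas $D$ quantifies over the states obtained by a single converging application of an arbitrary (functional) $f$ to an arbitrary (functional) $x$; the whole proof rests on showing that these two families of states induce the same truth value for a $\Gamma$-typed $C$.

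The $(\allworlds C \ENTAILS D)$ direction is the routine one. Assume $\MMM \models \allworlds C$ and fix closed types $\alpha,\beta$ and functional terms $L_f, L_x \in \mathcal{F}$; let $\MMM_0$ be the corresponding extension of $\MMM$. Since $f,x \notin \FV{C}$ and $\TVX,\TVY$ are fresh, extension by fresh functional terms and type assignments preserves $\allworlds C$ by the monotonicity clause of Lemma \ref{lemma:29410}(\ref{lemma:29410:6}), so $\MMM_0 \models \allworlds C$. If now $\MMM_0 \models f\bullet x\Downarrow$, then $\MMM_0[z:fx]\converges \MMM'$ for some $\MMM'$, and this is precisely one generating step of $\EVOLVES$ (clause (2) with $N = fx$); hence $\MMM_0 \EVOLVES \MMM'$ and $\MMM' \models C$, giving $\MMM_0 \models \ONEEVAL{f}{x}{z}{C}$. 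As $\alpha,\beta,L_f,L_x$ were arbitrary this yields $\MMM \models D$.

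The converse $(D \ENTAILS \allworlds C)$ is the heart of the argument, and the step I expect to be the main obstacle. Assume $\MMM \models D$ and take any $\MMM'$ with $\MMM \EVOLVES \MMM'$. By the inductive definition of $\EVOLVES$ there is a finite chain $\MMM = \MMM_0, \MMM_1, \ldots, \MMM_k = \MMM'$ with $\MMM_{i-1}[u_i:N_i]\converges \MMM_i$. The idea is to collapse this whole chain into a single program $N \DEFEQ \LET{u_1}{N_1}{\LET{u_2}{N_2}{\cdots \LET{u_k}{N_k}{()}}}$ and to instantiate $D$ at $\alpha=\beta=\UNIT$, $f := \lambda().N$ and $x := ()$, both of which are functional terms (values are always in $\mathcal{F}$ by Definition \ref{def:functional}). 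Convergence of every step of the chain makes $f\bullet x\Downarrow$ hold, so $D$ supplies a state $\MMM''$ with $\MMM''\models C$ whose store is exactly the net store reached at $\MMM'$. The only discrepancy between $\MMM''$ and $\MMM'$ lies in entries the formula cannot mention: the anchor $z$, the terms $f,x$, the type entries $\TVX,\TVY$, and the fact that the chain exposes the intermediate anchors $u_1,\ldots,u_k$ while $N$ keeps them as internal let-bindings. Since $C$ is typed under the base $\Gamma$, it refers to none of these; by the partition/monotonicity clause of Lemma \ref{lemma:29410}(\ref{lemma:29410:6}), the thinness of the relevant formulae (Proposition \ref{pro:thin}), and invariance of satisfaction under $\WB$ (Definition \ref{def:satisfaction}), $\MMM''\models C$ transfers to $\MMM'\models C$. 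As $\MMM'$ was an arbitrary future this establishes $\MMM \models \allworlds C$. The delicate points here are exactly the faithful simulation of an arbitrary evolution sequence by one application --- in particular that the composite $N$ reproduces the same net effect on the $\Gamma$-part of the store --- and the matching of freshly allocated hidden locations up to $\WB$.

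Finally, the first equivalence follows with no extra work: instantiating the just-proved equivalence with $\allworlds C$ in place of $C$ gives $\allworlds\allworlds C \LITEQ \forall \TVX, \TVY.f^{\TVX\FS\TVY}.x^{\TVX}.(f\bullet x\Downarrow \ENTAILS \ONEEVAL{f}{x}{z}{\allworlds C})$, and the left-hand side collapses to $\allworlds C$ by idempotence of the necessity operator, Proposition \ref{pro:necessity}(1). This yields the stated equivalence with post-condition $\allworlds C$.
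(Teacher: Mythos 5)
Your overall strategy coincides with the paper's: the easy direction treats a converging application as a single $\EVOLVES$-step, and the hard direction simulates an arbitrary evolution chain by one application of $\lambda().N$ to $()$ --- exactly the paper's remark that ``we can combine an arbitrary number of applications into a single one.'' Your organisation is in one respect tidier: the paper proves the first equivalence directly and then waves at the second, whereas you prove the second and recover the first by instantiating $C:=\allworlds C$ and invoking $\allworlds\allworlds C \LITEQ \allworlds C$ (Proposition~\ref{pro:necessity}(1)); that derivation is sound. But note that the paper's ordering has a technical payoff you lose: for the first equivalence only the \emph{initial} chain step needs simulating, because the post-condition $\allworlds C$ then persists along the rest of the chain, and single-step simulation requires only renaming $z:V_1$ to $u_1:V_1$ plus discarding the hidden-label-free entry $f:\lambda().N_1$, which Lemma~\ref{lemma:29410}(\ref{lemma:29410:6}) licenses. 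Your route funnels \emph{both} equivalences through the full chain collapse, which is precisely where the trouble lies.

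The genuine gap is in your transfer step, and the justification you give for it is wrong as stated. Your collapsed program $N=\LET{u_1}{N_1}{\cdots\LET{u_k}{N_k}{()}}$ discards the values $V_1,\ldots,V_k$, so the application post-state $\MMM''$ carries only the entries $f,x,z$ with hidden-label-free values, while the evolution target $\MMM'$ carries $u_i:V_i$, whose values typically contain freshly allocated \emph{hidden} locations. You claim $\MMM''\models C$ transfers to $\MMM'\models C$ by thinness and Lemma~\ref{lemma:29410}(\ref{lemma:29410:6}). Neither applies: the proposition is asserted for \emph{arbitrary} $C$, and \S~\ref{subsec:formulae} stresses that arbitrary formulae are not thin --- the paper's own example shows that an entry a formula never mentions can flip its truth value exactly when that entry exposes a hidden cell; and Lemma~\ref{lemma:29410}(\ref{lemma:29410:6}) only permits adding or removing entries whose values avoid the hidden labels of the remainder, which is just what the $u_i:V_i$ violate. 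Worse, with your choice of $N$ the transfer is not merely unproven but false: take $\Gamma=\{p:\refType{\UNIT\FS\NAT}\}$, one step $N_1\DEFEQ\LET{r}{\refPrg{5}}{(p:=\lambda ().!r\,;\,\lambda n.(r:=n))}$, and $C\DEFEQ\exists h.(h=!p \AND \allworlds\neg\ONEEVAL{h}{()}{v}{v=9})$. The evolution target retains the writer $u_1$, so some future caller can drive $h()$ to $9$ and $C$ fails there; your collapsed post-state retains no write capability to the hidden cell, so $C$ holds there. (This does not contradict the proposition: the tupled application below also violates $C$, so $D$ fails for this $\MMM$ and $C$.) The repair is to make the collapsed program \emph{return} its intermediate values, $N=\LET{u_1}{N_1}{\cdots\LET{u_k}{N_k}{\PAIR{u_1}{\PAIR{\cdots}{u_k}}}}$, instantiating $\TVY$ at the corresponding product type; then $z$ carries exactly the capabilities the $u_i$ carry, and the residual mismatch is only the hidden-label-free entries $f,x$ (removable by Lemma~\ref{lemma:29410}(\ref{lemma:29410:6})) together with a harmless repackaging of separate entries into one tuple. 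A smaller instance of the same overreach occurs in your forward direction: a functional term's value may capture hidden labels, so Lemma~\ref{lemma:29410}(\ref{lemma:29410:6}) need not apply to the extension $\MMM[f:L_f][x:L_x]$; the clean argument is simply that this extension is itself an evolution of $\MMM$, so $\allworlds C$ persists by idempotence. To be fair, the paper's own proof is silent on all this bookkeeping; but your write-up makes the transfer claim explicit and backs it with the wrong lemmas, so the gap is yours to close.
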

\begin{proof}
  Throughout we use $\allworlds C \LITEQ \allworlds \allworlds C$.
  For the first equivalence suppose $\MMM\models \allworlds C$ and
  $\MMM[f:L][x:L'][z:fx]\converges \MMM'$.  Then step by step we reach
  $\MMM'\models \allworlds C$ by the definition of $\allworlds C$. For
  the other direction, suppose $\MMM\models \allworlds C$ and for all
  $N,N'$, we have $\MMM[u:N][w:N']\converges \MMM'$.  By assumption
  $\MMM[f:\lambda ().N][z:f()][w:N']\converges \MMM'$ such that
  $\MMM'\models C$ with $\MMM \EVOLVES \MMM'$.  Since $\MMM[u:N][w:N']
  \models C$, we have $\MMM[u:N] \models \allworlds C$, as required.
  For the second equivalence, the ``only-if'' direction is immediate,
  while the ``if'' direction is proved as in the previous ``if''
  direction, observing that we can combine an arbitrary number of
  applications into a single one.
\end{proof}

\NI The first logical equivalence of Proposition \ref{pro:perpetuity}
allows us to say that if
$\allworlds C$ holds and if a procedure is executed and if the
evaluation terminates then $\allworlds C$ (hence in particular $C$)
holds again. In essence, this is why a specification using $\allworlds C$
(or the equivalent) is useful: it allows us specify a behaviour which
holds regardless of execution of other procedures and resulting
state change. The second logical equivalence shows that, in addition,
we can in fact define $\allworlds C$ via evaluation formulae (which
in fact directly corresponds to the semantics of $\allworlds C$
in \S~\ref{sec:models}).

Next, the following proposition says that located assertions are
derived constructs, definable by combining non-located assertions and
content quantification.

\begin{proposition}[Decomposition of Located Evaluation Formula]
\label{pro:located_decompose}
\ \ \ $\ONEEVAL{x}{y}{z}{C}@\VEC{w}\ \LITEQ$\\ 
$\forall
u^{\UNIT\Rightarrow\UNIT}.(\APP{u}{()}\Downarrow\; \ENTAILS\;
\ONEEVAL{x}{y}{z}{C\ \AND\ \someCon{\VEC{w}}{\APP{u}{()}\Downarrow}})$
\end{proposition}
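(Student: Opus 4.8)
The plan is to establish the logical equivalence by proving, for every well-typed model $\MMM$, that $\MMM\models\ONEEVAL{x}{y}{z}{C}@\VEC{w}$ holds iff $\MMM\models\forall u^{\UNIT\Rightarrow\UNIT}.(\APP{u}{()}\Downarrow\ENTAILS\ONEEVAL{x}{y}{z}{C\AND\someCon{\VEC{w}}{\APP{u}{()}\Downarrow}})$. First I would unfold both sides through Definition \ref{def:satisfaction}. The left side unfolds by clause (\ref{locatedassertion}) into (a) the plain evaluation clause, $\MMM[z:xy]\converges\MMM'$ with $\MMM'\models C$, together with (b) the save--restore condition that every convergent run of $\LET{\VEC{x}}{!\VEC{w}}{\LET{y}{xy}{\VEC{w}:=\VEC{x}}}$ from $\MMM$ yields a model $\WB\MMM[z:()]$. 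The right side unfolds by the semantics of $\forall u$ (ranging over $L\in\mathcal{F}$, Definition \ref{def:functional}), the evaluation clause (\ref{oneeval}) and content quantification (\ref{model:allcon}), recalling that $\APP{u}{()}\Downarrow$ abbreviates $\ONEEVAL{u}{()}{x}{\truth}$. As a preliminary lemma I would record, directly from the reduction rules, that the save--restore program run in $\MMM=(\new\VEC{l})(\xi,\sigma)$ with $\MMM[z:xy]\converges\MMM'$ produces, up to $\WB$, the reset model $\updates{\MMM'}{\VEC{w}}{\VEC{V}}$ with $\VEC{V}=\sigma(\VEC{w})$; hence (b) is equivalent to $\updates{\MMM'}{\VEC{w}}{\VEC{V}}\WB\MMM[z:()]$.

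For the forward direction (left $\Rightarrow$ right) I would take any $L\in\mathcal{F}$ with $\expands{\MMM}{u}{L}\converges\MMM''$ and $\MMM''\models\APP{u}{()}\Downarrow$. Since $u$ is fresh and $L$ leaves the state unchanged, running $xy$ in $\MMM''$ mirrors the run in (a), giving $\MMM''[z:xy]\converges\MMM''_1$ with $\MMM''_1/u\WB\MMM'$, and $\MMM''_1\models C$ by thinning (Lemma \ref{lemma:29410}(\ref{lemma:29410:6}), as $u\notin\FV{C}$). To witness $\someCon{\VEC{w}}{\APP{u}{()}\Downarrow}$ at $\MMM''_1$ I choose the content $\VEC{V}=\sigma(\VEC{w})$ (closed values, hence in $\mathcal{F}$), so that $\updates{\MMM''_1}{\VEC{w}}{\VEC{V}}$ is the reset model with $u:L$ adjoined. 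By the preliminary lemma together with (b) and congruence of $\WB$ this is $\WB\MMM''[z:()]$, in which $\APP{u}{()}$ converges because it converges in $\MMM''$; thus the right side holds.

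For the backward direction (right $\Rightarrow$ left), condition (a) follows by instantiating $u$ with the trivial probe $\lambda().()\in\mathcal{F}$, which always converges, and projecting out the fresh $u$ by thinning. The real work is (b), which I would argue contrapositively: if reset fails, i.e. $\updates{\MMM'}{\VEC{w}}{\VEC{V}}\not\WB\MMM[z:()]$, then by Definition \ref{def:modelequivalence} there is a distinguishing context, which I encode as a functional probe $u_0=\lambda().N\in\mathcal{F}$ whose body $N$ inspects only locations outside $\VEC{w}$ (and the exported data), so that its termination is insensitive to the content placed at $\VEC{w}$. Then $\APP{u_0}{()}$ converges in $\MMM$ but, because the irrecoverable change lies outside $\VEC{w}$, diverges in $\updates{\MMM'}{\VEC{w}}{\VEC{L}}$ for every $\VEC{L}\in\mathcal{F}$; this falsifies $\someCon{\VEC{w}}{\APP{u_0}{()}\Downarrow}$ in the post-state while the precondition $\APP{u_0}{()}\Downarrow$ holds, contradicting the right side.

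The main obstacle is this last construction: extracting from a mere observational difference a $\VEC{w}$-\emph{independent} distinguisher that is definable as a single procedure body, uniformly for higher-order (function-valued) stores and for $\new$-bound hidden locations. This is precisely where the quantifier shape $\forall u.(\cdots\ENTAILS\cdots)$ with the inner content quantifier over $\VEC{w}$ is delicate, since that quantifier only requires that \emph{some} content re-terminate each probe; to contradict it I must guarantee the probe cannot be rescued by \emph{any} choice at $\VEC{w}$. I would carry this out by reusing the definability and context-manipulation machinery of \cite[\S~6.1]{ALIAS}, where the corresponding decomposition is established in the absence of local state, adapting it to the operational models with hidden locations used here.
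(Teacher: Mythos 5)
Your overall route is the same as the paper's: the published proof (which takes $\VEC{w}$ to be a singleton $w$) argues the forward direction by observing that the application only writes at $w$, so a convergent probe can be re-converged ``by restoring the content of $w$'', and the backward direction contrapositively, asserting that if the save--restore condition fails ``there must be some $u$ which uses this difference\ldots to change its diverging behaviour''. Your forward direction is a faithful, more detailed elaboration of the first half (with the same implicit assumption as the paper: you take $\sigma(\VEC{w})$ itself as the witness for $\someCon{\VEC{w}}{\APP{u}{()}\Downarrow}$, but the witness must be a functional term over the \emph{free} names of the post-state, which need not be able to denote $\sigma(\VEC{w})$ when that value involves locations that have meanwhile become hidden garbage).

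The genuine gap is in the backward direction, and it is exactly the step you flag as ``the main obstacle'': your construction needs a probe $u_0$ that (i) converges in the pre-state $\MMM$ and (ii) diverges on the reset model however $\VEC{w}$ is refilled. A distinguishing context for the reset model versus $\MMM[z:()]$ need not have this orientation: it may converge on the reset model and diverge on $\MMM$, in which case it is inadmissible as a probe (the hypothesis $\APP{u_0}{()}\Downarrow$ fails), and no admissible probe can see the difference. Concretely, take $\MMM$ with $v\mapsto \Omega$ (where $\Omega \DEFEQ \mu f.\lambda ().(f())$), $w\mapsto 0$, $x \DEFEQ \lambda y'.(v:=\lambda ().())$, $y \DEFEQ ()$, $C \DEFEQ \TRUTH$: the save--restore condition fails, since the run replaces the content $\Omega$ of $v$ by $\lambda ().()$ and this is not undone by restoring $w$, so the left-hand side is false; yet because $\Omega$ contextually approximates $\lambda ().()$, every probe that converges in $\MMM$ still converges after the run with $w$ reset to $0$, so the right-hand side is true. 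The change is ``upward'' in the contextual preorder and lies outside $\VEC{w}$, and probes quantified as on the right-hand side --- which must converge in the \emph{pre}-state --- are blind to it. This also shows that deferring to the machinery of \cite[\S~6.1]{ALIAS} cannot close the hole: the problem has nothing to do with local state or hidden locations (the example uses none), but with the one-sidedness of the probe quantification; it is precisely the point that the paper's own one-sentence argument for this direction silently skips as well.
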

\begin{proof}
In the following discussions we consider $\VEC{w}$ to be a singleton $w$ for simplicity. 
First assume the left-hand side holds for a model say $\MMM$.
Then the application only  changes the content of $w$, hence
if $\APP{u}{()}\Downarrow$ then by restoring the content of $w$
we again have $\APP{u}{()}\Downarrow$. Secondly assume 
the right-hand side holds but the left-hand side does not.
Then there must be some $u$ which uses this difference at $w$
to change its diverging behaviour, hence a contradiction.
\end{proof}

\NI This decomposition uses content quantification to define located
evaluation formulae where the effect set is restricted to specified
finite locations. We can generalise located assertions to those 
which can specify the range of effects by formulae, which is
sometimes useful. Such formulae can also be decomposed in the same way
using an extended form of content quantification.

\subsection{Axioms for Hiding}
\label{subsec:axiom:hiding}
Next we list basic axioms for hiding quantifiers. 
The most convenient axiom is 
about the elimination of  hiding quantifiers, introduced 
by reference generation. To formulate this, 
we need some preparation. 

\begin{definition}[Monotone/Anti-Monotone Formulae]
\label{def:monotone}
$C$ is {\em monotone} if $\MMM\models C$ and
$l\not\in \FL{C}$ imply $(\new l)\MMM\models C$. 
$C$ is {\em anti-monotone} if $\neg C$ is monotone. 

\end{definition}
The proof of the following proposition is similar 
to Proposition \ref{pro:thin}.

\begin{proposition}[Syntactic Monotone/Antimonotone
    Formulae]\label{pro:mono}
\hfill 
\begin{enumerate}[\em(1)]
\item 
\label{pro:mono:base}
$T$, $F$, 
$e=e'$, $e\not=e'$, $\REACH{e}{e'}$ and 
${e}\noreach{e'}$ are monotone. 

\item 
\label{pro:mono:ih}
If $C,C'$ are monotone, 
then $C \AND C'$, $C \OR C'$, 
$\forall x^\alpha.C$ for all $\alpha$, 
$\exists x^\alpha.C$ with $\alpha\in \{ \UNIT,\BOOL,\NAT\}$, 
$\exists\TVX.C$, 
$\forall\TVX.C$, 
$\HIDEe x.C$, 
$\HIDEa x.C$, 
$\allworlds{C}$,  
$\allCon{x}{C}$, and  
$\ONEEVAL{e}{e'}{x}{C'}$ 
are monotone. 

\item
The conditions  exactly dual to 1 and 2 give antimonotone formulae.
\end{enumerate}
\end{proposition}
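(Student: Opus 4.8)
The plan is to prove (1) and (2) simultaneously by induction on the construction of (syntactic) monotone formulae, following verbatim the structure of the proof of Proposition~\ref{pro:thin}, and then to derive (3) by de Morgan duality. Beyond the satisfaction clauses of Definition~\ref{def:satisfaction}, the only semantic facts I would need are two ``commutation'' properties of adding a hidden binder $(\new l)$ when $l\notin\FL{C}$. First, $\WB$ is preserved under hiding: $\MMM_1\WB\MMM_2$ implies $(\new l)\MMM_1\WB(\new l)\MMM_2$. This holds because every context admissible for the hidden models (one avoiding $l$ and the already-hidden labels) is admissible for the originals, and hiding a label that the context avoids does not affect convergence, so the convergence agreement transfers. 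Second, because $l$ is inaccessible to any external program, evaluation and evolution commute with hiding: for $N$ whose free labels avoid $l$, one has $((\new l)\MMM)[u:N]\converges(\new l)\MMM'$ exactly when $\MMM[u:N]\converges\MMM'$, and every world $\MMM_0$ with $(\new l)\MMM\EVOLVES\MMM_0$ is of the form $(\new l)\MMM'$ with $\MMM\EVOLVES\MMM'$.

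For the base cases of (1), $T$ and $F$ are trivial. For $e=e'$ and $e\neq e'$ monotonicity follows from preservation of $\WB$ under hiding together with the fact that $\MAP{e_i}$ is unchanged when an unmentioned $l$ is hidden. For $\REACH{e}{e'}$ and $e\noreach e'$ it follows because hiding $l$ leaves the store $\sigma$ --- hence every label closure $\ncl{\cdot}{\sigma}$ --- untouched, so $\MAP{e_2}\in\ncl{\FL{\MAP{e_1}}}{\sigma}$ continues to hold across all $\WB$-representatives. The propositional cases $C\AND C'$ and $C\OR C'$ are immediate from the induction hypothesis, using $\FL{C\AND C'}=\FL{C}\cup\FL{C'}$, and the type-quantifier cases $\forall\TVX.C$, $\exists\TVX.C$ follow because hiding commutes with extending $\xi$ by $\TVX:\alpha$.

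The remaining inductive cases share a common template driven by the second commutation property. For $\ONEEVAL{e}{e'}{x}{C'}$, pick the witnessing $\MMM'$ with $\MMM[x:ee']\converges\MMM'\models C'$; since $l$ occurs neither in $e,e'$ nor in $C'$, evaluation commutes with hiding to give $((\new l)\MMM)[x:ee']\converges(\new l)\MMM'$, and the induction hypothesis yields $(\new l)\MMM'\models C'$. For $\forall x.C$ one takes an arbitrary $L\in\mathcal{F}$ legal for $(\new l)\MMM$ --- necessarily legal for $\MMM$, whose free labels form a superset --- evaluates, and closes by the induction hypothesis; $\exists x.C$ is dual but must be restricted to base types $\alpha\in\{\UNIT,\BOOL,\NAT\}$, precisely because at a reference type the only existential witness could denote or reach the very location $l$ being hidden, so no legal witness need survive hiding. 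The hiding-quantifier cases relocate the binder: if $(\new l')\MMM'\WB\MMM$ and $\MMM'[x:l']\models C$ witness $\MMM\models\HIDEe x.C$, then $(\new l)\MMM'$ witnesses $\HIDEe x.C$ for $(\new l)\MMM$, using commutativity of the two $\nu$-binders and the induction hypothesis on $C$; $\HIDEa x.C$ is symmetric, and $\allCon{x}{C}$ goes through exactly as in Proposition~\ref{pro:thin}.

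I expect the main obstacle to be the necessity case $\allworlds C$, which carries the genuine semantic content. Establishing it reduces to the ``hidden-stays-hidden'' lemma stated above: since each evolution step of $(\new l)\MMM$ is driven by an external program whose free labels exclude $l$, the location $l$ can never be named, exported, or un-hidden, so the top-level binder $(\new l)$ may be pushed through the whole evolution, giving $\MMM_0=(\new l)\MMM'$ with $\MMM\EVOLVES\MMM'$. Granting this, $(\new l)\MMM\EVOLVES\MMM_0$ yields $\MMM'\models C$ from $\MMM\models\allworlds C$, and then $(\new l)\MMM'=\MMM_0\models C$ by the induction hypothesis, so $(\new l)\MMM\models\allworlds C$. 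Finally, (3) is obtained by noting that the dual grammar is closed under negation-into-monotone-form: pushing $\neg$ inward via de Morgan and the dualities between $\forall$ and $\exists$, $\allworlds$ and $\someworld$, $\HIDEa x$ and $\HIDEe x$, $\allCon{x}{C}$ and $\someCon{x}{C}$, $e=e'$ and $e\neq e'$, and $\REACH{e}{e'}$ and $e\noreach e'$ turns any formula $C$ built by the dual conditions into a monotone $\neg C$; then $C$ is antimonotone by Definition~\ref{def:monotone} together with parts (1) and (2).
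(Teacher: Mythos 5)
Most of your induction is sound, and since the paper's own proof is only the remark that the argument is ``similar to Proposition~\ref{pro:thin}'', your two commutation lemmas (hiding preserves $\WB$; evaluation and evolution commute with an unmentioned $(\new l)$) are exactly the right infrastructure for the cases $e=e'$, $e\noreach e'$, $\allworlds C$, $\ONEEVAL{e}{e'}{x}{C'}$, $\forall x.C$, $\HIDEe x.C$, and the propositional and type-quantifier cases. The genuine gap is in the remaining base cases. For $e\neq e'$, monotonicity unfolds to: $(\new l)\MMM[u:e]\WB(\new l)\MMM[u:e']$ implies $\MMM[u:e]\WB\MMM[u:e']$, i.e.\ \emph{reflection} of $\WB$ under hiding, whereas your lemma gives \emph{preservation}, the converse implication. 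Preservation yields the $e=e'$ case but says nothing about $e\neq e'$, and reflection is precisely where the semantic difficulty sits, because hiding strictly shrinks the class of observing contexts: with $\xi(v)=\lambda().!l$, $\xi(w)=\lambda().0$ and $\sigma(l)=0$, the denotations of $v$ and $w$ are distinguished while $l$ is free (a context may perform $l:=1$ and then call), but once $l$ is hidden no context can ever write $l$, so both functions return $0$ forever and the distinction evaporates. Your argument for this case therefore does not close, and any correct proof must engage with exactly this phenomenon rather than cite preservation.

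The case $\REACH{e}{e'}$ has the same problem in a different guise. Its satisfaction clause quantifies over \emph{all} $\WB$-representatives $(\new \VEC{l})(\xi,\sigma)\WB\MMM$, and after hiding $l$ the equivalence class changes: the representatives of $(\new l)\MMM$ are not just the models $(\new l)\MMM_0$ with $\MMM_0\WB\MMM$, because the coarser observational equivalence admits new representatives (for instance, a stored function whose writes are guarded by the content of the now-unwritable hidden cell is equivalent, after hiding, to one that never mentions the target location at all). So ``the store is untouched, hence the closure condition continues to hold across all representatives'' only covers the inherited representatives, not the new ones. Note the asymmetry with $e\noreach e'$, which your proposal treats as the mirror image: by Proposition~\ref{pro:partition}, $\noreach$ is \emph{existential} over representatives, so the witnessing partitioned representative for $\MMM$ can simply be carried across the hiding using preservation of $\WB$ --- that case is fine. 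Finally, $\HIDEa x.C$ is not ``symmetric'' to $\HIDEe x.C$: to establish $(\new l)\MMM\models\HIDEa x.C$ you must handle \emph{every} unhiding of $(\new l)\MMM$, including the one that reveals $l$ itself, i.e.\ you need $\MMM[x:l]\models C$, and the hypothesis $\MMM\models\HIDEa x.C$ --- which only speaks about locations hidden in $\MMM$, of which $l$ is not one --- gives you no grip on that instance.
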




\begin{proposition}[Axioms for $\forall$, $\exists$ 
and $\HIDEe$] \label{pro:hiding}
Below we assume there is no capture of variables in types and formulae.
\begin{enumerate}[\em(1)]
\item {\rm (introduction)}
$C \ENTAILS \HIDEe x.C$ if $x\not\in\FV{C}$ 

\item {\rm (elimination)}
$\HIDEe x.C \LITEQ C$
if $x\not\in\FV{C}$ and $C$ is monotone. 

\item 
For any $C$ we have
$C \ENTAILS \exists x.C$.
Given $C$ such that $x\not\in \FV{C}$
and $C$ is thin with respect to $x$, we have
$\exists x.C \ENTAILS C$. 

\item 
For any $C$ we have $\forall x.C \ENTAILS C$.
For $C$ such that $x\not\in \FV{C}$ and $C$ is thin with respect to $x$,
we have $C \ENTAILS \forall x.C$. 


\item $\HIDEe x.(C_1 \AND C_2) \ENTAILS \HIDEe x. C_1 \AND \HIDEe x. C_2 $. 
\item $\HIDEe x.(C_1 \OR C_2) \LITEQ \HIDEe x.C_1 \OR \HIDEe x.C_2$. 

\item $\HIDEe y.\forall x.C \ENTAILS \forall x.\HIDEe y.C$   

\item $\exists x.\HIDEe y. C \ENTAILS \HIDEe y. \exists x.C$  
and $\exists x^\alpha.\HIDEe y. C \LITEQ \HIDEe y.\exists x^\alpha.C$
with $\alpha\in \ASET{\UNIT,\BOOL,\NAT}$. 

\item 
$\HIDEe y.\HIDEa x.C \ENTAILS \HIDEa x.\HIDEe y.C$; 
and 
$\HIDEe y.\HIDEe x.C \LITEQ \HIDEe x.\HIDEe y.C$.    

\item $\HIDEe y.\exists\TVX.C \LITEQ \exists\TVX.\HIDEe y.C$; and 
$\HIDEe y.\forall\TVX.C \ENTAILS \forall\TVX.\HIDEe y.C$. 
\label{strange}

\item 
$\HIDEe y.\allCon{x}{C} \ENTAILS \allCon{x}{\HIDEe y.C}$
and  
$\HIDEe y.\someCon{x}{C} \ENTAILS \someCon{x}{\HIDEe y. C}$
\end{enumerate}
\end{proposition}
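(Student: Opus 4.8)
The plan is to prove each of the eleven items by unfolding the relevant satisfaction clauses of Definition~\ref{def:satisfaction} — in particular those for $\HIDEe$, $\HIDEa$, $\forall$, $\exists$, content quantification and type quantification — and then reorganising the model extensions that these clauses introduce. The recurring toolkit for this reorganisation consists of the injective-renaming, permutation and exchange properties of Lemma~\ref{lemma:29410}(\ref{lem:sat:renaming}--\ref{lem:sat:exchange}), the partition/monotonicity property of Lemma~\ref{lemma:29410}(\ref{lemma:29410:6}), the invariance of $\models$ under $\WB$ (Definition~\ref{def:modelequivalence}), the irrelevance of the order of $\new$-binders in a model (Definition~\ref{def:models:concrete}), and the syntactic characterisations of thin and monotone formulae (Propositions~\ref{pro:thin} and \ref{pro:mono}). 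Throughout, $l$ denotes the fresh location that a hiding clause supplies or demands, and $L$ a functional term (Definition~\ref{def:functional}) that a first-order quantifier supplies or demands.

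Items (1)--(4) are the pure introduction/elimination laws. For (1), given $\MMM\models C$ I take as witness $\MMM'$ equal to $\MMM$ together with a fresh cell $l$ carrying an arbitrary value of the appropriate type; then $(\new l)\MMM'\WB\MMM$ since a hidden unused cell is observationally invisible, and $\MMM'[x:l]\models C$ follows from $x\notin\FV C$ by Lemma~\ref{lemma:29410}(\ref{lemma:29410:6}). For (2) the $\Leftarrow$ direction is (1); for $\Rightarrow$ I unfold $\HIDEe x.C$ to get $\MMM'$ with $(\new l)\MMM'\WB\MMM$ and $\MMM'[x:l]\models C$, drop the unused entry $x{:}l$ by Lemma~\ref{lemma:29410}(\ref{lemma:29410:6}) to obtain $\MMM'\models C$, re-attach the binder using monotonicity of $C$ to get $(\new l)\MMM'\models C$, and finish by $\WB$-invariance. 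For (3) and (4), the familiar first-order axioms, the witness for $\exists$-introduction and the instantiation for $\forall$-elimination are both the \emph{value currently assigned to $x$}, which is a value and hence functional; the converse directions use $x\notin\FV C$ together with thinness of $C$ with respect to $x$ (so that the extra entry $x{:}L$ can be removed, noting that $\MMM'\!/x\WB\MMM$ whenever $L$ is functional and $\MMM[x:L]\converges\MMM'$) via Proposition~\ref{pro:thin} and Lemma~\ref{lemma:29410}(\ref{lemma:29410:6}).

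Items (5)--(11) are commutation laws. Distribution over $\AND$ (5) and $\OR$ (6) is immediate: the single pair $(\MMM',l)$ witnessing $\HIDEe x.(C_1\AND C_2)$ witnesses both conjuncts, and existential choice distributes both ways over disjunction. The quantifier-exchange laws (7)--(10) all share one shape: a hidden $l$ (for $\HIDEe$) and either a functional term $L$ (for $\forall/\exists$), a second hidden $l'$ (for an inner hiding), or a type assignment must be introduced in one order and re-read in the other, which Lemma~\ref{lemma:29410}(\ref{lem:sat:exchange}) permits. For (9) the symmetric equivalence $\HIDEe y.\HIDEe x.C\LITEQ\HIDEe x.\HIDEe y.C$ additionally uses that $(\new l)(\new l')$ and $(\new l')(\new l)$ denote the same model. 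For (10) the assignment $\MMM\CDs\AT{\TVX}{\alpha}$ touches neither labels nor the store, so it commutes freely with hiding, giving the full equivalence for $\exists\TVX$ and the one-way law for $\forall\TVX$. For (11) I push $\allCon{x}{C}$ (resp.\ $\someCon{x}{C}$) across the hiding by observing that an update of the content at the cell denoted by $x$ acts on a cell \emph{disjoint} from the fresh hidden cell $l$ for $y$, since the free variable $x$ denotes a location of the outer model while $l$ is fresh.

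The main obstacle is pinning down exactly why the remaining entailments are one-directional and which side condition rescues each converse. For (2) the elimination genuinely requires $C$ monotone: without it, re-attaching the $\new$-binder around a model satisfying $C$ may fail, and Proposition~\ref{pro:mono} is what certifies monotonicity for the formulae that occur. In (7), in the first part of (9), in the $\forall\TVX$ part of (10), and in (11), only $\ENTAILS$ holds, for the usual $\exists\forall$-versus-$\forall\exists$ reason: a uniform hidden witness on the left fixes the order of choices, whereas the right-hand side would let the hidden cell depend on the universally bound datum. In (8) the equivalence is recovered precisely when $x$ has base type, where Proposition~\ref{pro:thin}(1) ensures the base-type witness can neither reach nor be affected by the hidden cell, so the two orders coincide. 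The most delicate verification is (11): one must check that the functional term realising content quantification (per the clauses for $\allCon{x}{C}$ and $\someCon{x}{C}$) can be chosen so as not to read or alter the freshly hidden cell, and it is here that freshness of $l$ and the partition lemma must be applied with the greatest care.
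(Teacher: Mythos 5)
Your proposal follows the paper's own proof essentially step for step: the paper likewise unfolds the satisfaction clauses and reorganises model extensions via Lemma~\ref{lemma:29410} (notably exchange and partition/monotonicity), proves (2) exactly as you do (strip the unused $x{:}l$, re-attach the binder by monotonicity, conclude by $\WB$-invariance), obtains (7), (8) and (11) by the same quantifier swap with the side condition $l\notin\FL{L}$ (the base-type case of (8) being handled by taking the witness to be a location-free constant), and gets (9) from commutation of $\new$-binders. The only nitpick is in (3)/(4): the quantifier witness should be the \emph{term} $x$ itself rather than the closed value $\xi(x)$ currently assigned to it, since the latter may mention hidden labels of $\MMM$ and so violate the side condition $\FL{L}\subseteq\FL{\MMM}$ on witnesses.
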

\begin{proof}
See Appendix \ref{app:pro:hiding}. 
\end{proof}

\NI For (1) and (2), it is notable that we do {\em not} generally have
$C\ENTAILS \HIDEe x.C$ even if $C$ is thin.  Neither $\HIDEe x.C
\ENTAILS C$ with $x\not\in\FV{C}$ holds generally. 

For the counterexample of $C\ENTAILS \HIDEe x.C$ without the side
condition, let $\MMM\;\DEFEQ\; (\ASET{x:l,\ x':l},\; \ASET{l\mapsto
  5})$.  Then $ \MMM\models x=x' $ but we do {\em not} have
$\MMM\models \HIDEe y.y=x'$ since $l$ is certainly not hidden ($x$ is
renamed to fresh $y$ to avoid confusion).

For the counterexample of $\HIDEe x.C \ENTAILS C$ with
$x\not\in\FV{C}$, let $\MMM\;\DEFEQ\; (\new l)(\ASET{u:\lambda
  ().!l},\; \ASET{l\mapsto 5})$.  Then we have:
\[
\begin{array}{l}
\MMM\models \allworlds \APP{u}{()}=z\ASET{z=5}
\end{array}
\] 
Also we have: 
\[
\begin{array}{l}
\MMM\models (\new x)\someworld \APP{u}{()}=z\ASET{z=0}
\end{array}
\] 
with $\MMM[x:l]\models \someworld \APP{u}{()}=z\ASET{z=0}$.
If we apply $\HIDEe x.C \LITEQ C$ to the above formula, we have 
$\MMM\models \someworld \APP{u}{()}=z\ASET{z=0}$, 
which contradicts $\MMM\models 
\allworlds\ASET{\truth}\APP{u}{()}=z\ASET{z=5}$.   

Note this shows that integrating \mfb{these quantifiers with the standard
universal and existential quantifiers lets the latter loose their
standard axioms, motivating the introduction of the $\nu$-operator}: from
Proposition \ref{pro:hiding} (1,2,3), either $\exists x.C \ENTAILS
\HIDEe x.C$ or $\HIDEe x.C \ENTAILS \exists x.C$ (with $x$ typed by a
reference type) does not hold in general (if $x\not\in \FV{C}$ and $C$ is
thin, then $\exists x.C \ENTAILS C \ENTAILS \HIDEe x.C$; and 
if  $x\not\in \FV{C}$ and $C$ is monotone, then $\HIDEe x.C \ENTAILS C \ENTAILS
\exists x.C$). 
 
The content quantifiers  also have  useful axioms.  
Appendix \ref{subsec:axioms:cq} lists a selection.


\subsection{Axioms for Reachability}
\label{subsec:axiom:reachablity}
\NI We start from  axioms for reachability. Note that
our types include recursive types. 
\begin{proposition}[axioms for reachability] \label{pro:notreach}
The following assertions are valid.
\begin{enumerate}[\em(1)]
\item
{\em (1)} $x\reachable x$; \ \ 
{\em (2)} $x\reachable y\AND y\reachable z\;\ENTAILS\; x\reachable z$;
\item 
{\em (1)} $y \notreach x^{\alpha}$
\ with $\alpha\in\ASET{\UNIT,\NAT,\BOOL}$;
{\em (2)} $x\noreach y \THEN x \not = y$;
{\em (3)} $x\notreach w\AND w\reachable u\;\ENTAILS\;x\notreach u$.
\item
{\em (1)} $\PAIR{x_1}{x_2}\reachable y\, \;\LITEQ\;\, x_1\reachable y\OR
x_2\reachable y$;
{\em (2)} $\INJ{i}{x}\reachable y\; \LITEQ\; x\reachable y$;
{\em (3)} $x\reachable y^{\refType{\alpha}} \;\ENTAILS\; x\reachable !y$; 
{\em (4)} 
$x^{\refType{\alpha}}\!\reachable y \AND x\neq y \; \ENTAILS\;
!x\reachable y$; 
{\em (5)} $[!x]y\reach x\ \LITEQ\ y\reach x$ and $[!x]x\noreach y\ \LITEQ\ x\noreach y$.
\end{enumerate}
\end{proposition}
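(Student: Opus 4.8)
The plan is to reduce every clause to the defining semantics
$\MMM \models e_1 \reachable e_2$ iff $\MAP{e_2}_{\xi,\sigma} \in \ncl{\FL{\MAP{e_1}_{\xi,\sigma}}}{\sigma}$ for each $(\new \VEC{l})(\xi,\sigma) \WB \MMM$, recalling that $x \noreach y$ abbreviates $\neg(y \reachable x)$, and then to discharge each closure-membership fact using Lemma \ref{lem:ncl}. Throughout I would fix an arbitrary representative $(\new \VEC{l})(\xi,\sigma) \WB \MMM$ and compute with the denotations $\MAP{e_i}_{\xi,\sigma}$; since a reference term denotes a single location $l$, one has $\FL{\MAP{x}_{\xi,\sigma}} = \{l\}$, a fact used repeatedly.

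For group (1), reflexivity $x \reachable x$ is immediate: with $\MAP{x} = l$ one has $l \in \{l\} = \FL{\MAP{x}_{\xi,\sigma}} \subseteq \ncl{\FL{\MAP{x}}}{\sigma}$ by Lemma \ref{lem:ncl}(1). Transitivity is Lemma \ref{lem:ncl}(3): from $\MAP{y} \in \ncl{\FL{\MAP{x}}}{\sigma}$, $\MAP{z} \in \ncl{\FL{\MAP{y}}}{\sigma}$ and $\FL{\MAP{y}} = \{\MAP{y}\}$, take $S_1 = \{\MAP{z}\}$, $S_2 = \FL{\MAP{y}}$, $S_3 = \FL{\MAP{x}}$. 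For group (2): clause (1) holds because for base type $\alpha$ the value $\MAP{x}$ contains no locations, so $\ncl{\emptyset}{\sigma} = \emptyset$ and the membership fails at every representative, giving $\neg(x \reachable y)$; clause (2) is the contrapositive of reflexivity (if $x = y$ then $y \reachable x$ holds, contradicting $x \noreach y$); clause (3) is the contrapositive of transitivity, since assuming $u \reachable x$ together with the hypothesis $w \reachable u$ yields $w \reachable x$ by group (1)(2), contradicting $x \noreach w$.

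For group (3), the pair and injection cases rest on the behaviour of free labels under term formation, $\FL{\PAIR{\MAP{x_1}}{\MAP{x_2}}} = \FL{\MAP{x_1}} \cup \FL{\MAP{x_2}}$ and $\FL{\LOGICINJ{}{i}{\MAP{x}}} = \FL{\MAP{x}}$, combined with Lemma \ref{lem:ncl}(2), which gives $\ncl{\FL{\MAP{x_1}} \cup \FL{\MAP{x_2}}}{\sigma} = \ncl{\FL{\MAP{x_1}}}{\sigma} \cup \ncl{\FL{\MAP{x_2}}}{\sigma}$, so membership in the pair's closure is membership in one of the two. For (3) and (4) I would first record the identity $\ncl{\{l\}}{\sigma} = \{l\} \cup \ncl{\FL{\sigma(l)}}{\sigma}$ (a direct check that the right-hand side is the least closed superset of $\{l\}$). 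For (3), reachability of $y$ from $x$ places $\MAP{y} = l_y$ in the closed set $\ncl{\FL{\MAP{x}}}{\sigma}$, so closedness at $l_y$ gives $\FL{\sigma(l_y)} = \FL{\MAP{!y}} \subseteq \ncl{\FL{\MAP{x}}}{\sigma}$; since $!y$ of reference type denotes $\sigma(l_y) \in \FL{\sigma(l_y)}$, we get $x \reachable !y$. For (4), with $\MAP{x} = l$ we have $\MAP{y} \in \ncl{\{l\}}{\sigma} = \{l\} \cup \ncl{\FL{\sigma(l)}}{\sigma}$, and the side condition $x \neq y$, i.e.\ $\MAP{y} \neq l$, forces $\MAP{y} \in \ncl{\FL{\sigma(l)}}{\sigma} = \ncl{\FL{\MAP{!x}}}{\sigma}$, which is $!x \reachable y$. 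Case (5) uses that the content stored at $x$ is irrelevant to whether $x$ itself is reached: for any update $\sigma' = \updates{\sigma}{l}{V}$ at $l = \MAP{x}$ one has $l \in \ncl{\FL{\MAP{y}}}{\sigma}$ iff $l \in \ncl{\FL{\MAP{y}}}{\sigma'}$, because any closure path landing on $l$ never traverses the outgoing edge of $l$; quantifying over functional terms in the content quantifier then yields both $\allCon{x}{(\REACH{y}{x})} \LITEQ \REACH{y}{x}$ and its dual.

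The main obstacle is the interaction between the universal quantifier over $\WB$-representatives built into the semantics of $\reachable$ and the propositional structure of the equivalences, most visibly the distribution over disjunction in group (3)(1). All the single-closure computations above, and the "$\Leftarrow$" directions, proceed representative-by-representative; but the "$\Rightarrow$" direction of the pair equivalence needs a \emph{single} representative witnessing the failure of both $x_1 \reachable y$ and $x_2 \reachable y$ simultaneously, whereas the definition supplies a priori only two possibly distinct failing representatives. I expect to close this by establishing that location-reachability is itself $\WB$-stable---if $\MAP{y} \in \ncl{\FL{\MAP{x}}}{\sigma}$ holds for one representative it holds for all---which is consistent with $\reachable$ being an observational notion and can be argued through the partition characterisation of Proposition \ref{pro:partition}. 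Once that stability is in hand, the "for each representative" quantifier collapses to a statement about any chosen representative, and every equivalence above, including the disjunction in the pair case and the content quantifications in (5), follows directly from Lemma \ref{lem:ncl}.
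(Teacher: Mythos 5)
You follow the paper's own route for the bulk of the proposition. Clauses (1), (2) and 3-(1)--(4) are, exactly as in the paper, discharged per-representative directly from the definition of $\reachable$ together with Lemma \ref{lem:ncl} (the paper's proof says these are ``direct from the definition'', noting e.g.\ $l\in\FL{\INJ{i}{V}}$ iff $l\in\FL{V}$ for 3-(2)), and your treatment of 3-(5) is the paper's argument: changing the content of $x$ cannot affect reachability \emph{to} $x$. One small repair there: your justification ``any closure path landing on $l$ never traverses the outgoing edge of $l$'' is literally false for arbitrary paths; the paper instead takes a \emph{shortest} sequence of connected references from $y$ to $x$, which by minimality has no intermediate occurrence of $x$ (equivalently, truncate a path at its first visit to $l$). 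Note also that clauses (1), (2) and 3-(2)--(4) never need any collapse of the representative quantifier: the positive hypotheses are universally quantified over representatives, so they may be instantiated at whichever representative one is inspecting, and the negative clauses follow by contraposition at the level of satisfaction, since $\MMM\models\neg C$ iff not $\MMM\models C$.

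The genuine gap is your final paragraph. The stability lemma you propose --- if $\MAP{y}_{\xi,\sigma}\in \ncl{\FL{\MAP{x}_{\xi,\sigma}}}{\sigma}$ holds for one representative of $\MMM$ then it holds for all --- is false, and it cannot be rescued via Proposition \ref{pro:partition}. Concretely, the models $(\ASET{x:\lambda ().((\lambda z.())(\lambda ().!l)),\; y:l},\ \ASET{l\mapsto 0})$ and $(\ASET{x:\lambda ().(),\; y:l},\ \ASET{l\mapsto 0})$ are $\WB$-equivalent (the stored values are contextually congruent, as the inner application discards its argument), yet $l$ lies in the label closure of $\FL{\MAP{x}}$ in the first representative and not in the second. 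Indeed, the paper's own example (5) in \S~\ref{sub:ex:local} --- $x$ is \emph{not} reachable from $\lambda ().((\lambda y.())(\lambda ().x))$ --- exists precisely because closure membership varies across representatives; were your stability claim true, the universal quantification over representatives in Definition \ref{def:satisfaction}(\ref{model:reach}) would be vacuous and that example would fail. So your collapse to ``any chosen representative'' is unsound, and with it your proof of the left-to-right half of 3-(1). You did correctly isolate the one place where per-representative computation is not literally enough ($\forall$ over representatives does not distribute over $\OR$), but the repair must go through facts that are \emph{existential} over representatives --- for instance an argument that a single representative can be found attaining minimal label closures for both components simultaneously (a support-attainment argument in the spirit of Proposition \ref{pro:partition}) --- not through a universal stability principle; the paper itself subsumes this direction under ``direct from the definition'' without further comment.
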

\begin{proof}
1, 2 and 3.(1--4) are direct from the definition (e.g.~for 3-(2) we
observe $l\in\FL{\INJ{i}{V}}$ iff $l\in\FL{V}$).  For 3-(5), suppose
$\MMM\models y\reach x$, and take
$\MMM'$ which only differs from $\MMM$ in the stored value at (the
reference denoted by) $x$. Since $\MMM\models y\reach x$ holds,
there is a shortest sequence of connected references from $y$ to $x$ 
which, by definition, does not include $x$ as its 
intermediate node. Hence this sequence also exists in $\MMM'$, 
i.e.~$\MMM'\models y\reach x$,
proving 
$[!x]y\reach x\ \LITEQ\ y\reach x$.
Similarly, we can prove $[!x]x\noreach y\ \LITEQ x\noreach y$. 
\end{proof}
\NI 
3-(5) says that altering the content of $x$ does not affect
reachability {\em to} $x$.
Note $[!x]y\noreach x\LITEQ y\noreach x$
is not valid at all.
3-(5) was already used for deriving $[\mbox{\emph{Inv-$\noreach$}}]$ in
\S \ref{subsec:rules} (notice that we cannot substitute $!x$ for $y$ in
$[!x]x\noreach y$ to avoid name capture \cite{ALIAS}).

Let us say $\alpha$ is {\em finite} if it does not contains an arrow
type or a type variable.  We say $\REACH{e}{e'}$ is {\em finite} if
$e$ has a finite type. 

\begin{theorem}[elimination] 
\label{pro:elimination}
\label{theorem:elimination}
Suppose all reachability predicates in $C$ are finite. 
Then there exists $C'$ such that $C\LITEQ C'$ and 
no reachability predicate occurs in $C'$. 
\end{theorem}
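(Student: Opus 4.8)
The statement asserts a purely syntactic elimination result, so the natural strategy is a two-level induction together with the already-established validity of the reachability axioms in Proposition~\ref{pro:notreach}. First I would reduce the problem to a single atom: proceeding by induction on the structure of $C$, the connectives $\neg, \AND, \OR, \ENTAILS$ and all the quantifiers ($\forall, \exists, \HIDEe, \HIDEa, \allCon{x}{}, \someCon{x}{}, \allworlds, \ONEEVAL{}{}{x}{}$) commute with the rewriting, so it suffices to show that each \emph{finite} atom $\REACH{e}{e'}$ is logically equivalent to a reachability-free formula; the negative form $\NOTREACH{e}{e'}$ is handled for free since it abbreviates $\neg (e'\reach e)$. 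Fix such an atom, where by finiteness the source term $e$ has a type $\alpha$ built only from $\UNIT,\BOOL,\NAT,\times,+$ and $\refType{\cdot}$ (no arrow, no type variable, hence no recursion). I would then run an inner induction on the size of $\alpha$, i.e.\ the number of occurrences of $\times$, $+$ and $\refType{\cdot}$, using the axioms of Proposition~\ref{pro:notreach} to push the predicate toward strictly smaller source types.

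\textbf{The cases.} For $\alpha\in\ASET{\UNIT,\BOOL,\NAT}$, the source is a base datum and $e'$ has a reference type, so Proposition~\ref{pro:notreach}.2-(1) gives $\REACH{e}{e'}\LITEQ\falsity$. For $\alpha=\beta_1\times\beta_2$, surjective pairing lets me write $\REACH{e}{e'}\LITEQ \REACH{\pi_1(e)}{e'}\OR\REACH{\pi_2(e)}{e'}$ by Proposition~\ref{pro:notreach}.3-(1), where each $\pi_i(e)$ has the strictly smaller type $\beta_i$, so the inner induction applies. For $\alpha=\beta_1+\beta_2$, I would case-split on the injection tag, obtaining
\[
\REACH{e}{e'}\ \LITEQ\ \bigl(\exists x_1^{\beta_1}.\, e=\INJ{1}{x_1}\AND\REACH{x_1}{e'}\bigr)\ \OR\ \bigl(\exists x_2^{\beta_2}.\, e=\INJ{2}{x_2}\AND\REACH{x_2}{e'}\bigr)
\]
by Proposition~\ref{pro:notreach}.3-(2), again lowering the source type to $\beta_i$. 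The key case is $\alpha=\refType{\gamma}$, where I would prove the unfolding $\REACH{e}{e'}\LITEQ e=e'\OR\REACH{!e}{e'}$: the forward direction follows by splitting on $e=e'$ and invoking Proposition~\ref{pro:notreach}.3-(4) when $e\neq e'$; the backward direction uses $\REACH{e}{e}$ (reflexivity, 1-(1)) together with $\REACH{e}{e}\ENTAILS\REACH{e}{!e}$ (3-(3) with $x:=y:=e$) and transitivity (1-(2)). Since $!e$ has the content type $\gamma$, which has one fewer $\refType{\cdot}$ constructor, the inner induction applies to $\REACH{!e}{e'}$, and in each case the residual formula contains only equalities, dereferences, quantifiers and connectives, hence no reachability predicate.

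\textbf{Main obstacle.} The delicate point is termination of the inner induction, and this is exactly where finiteness is indispensable. The reference rule replaces $\REACH{e}{e'}$ by a formula mentioning $\REACH{!e}{e'}$, and for a recursive type (e.g.\ a linked list $\mu\TVX.(\NAT+\TVX)$) this unfolding would never stop, matching the fact noted in the introduction that reachability becomes intractable beyond first-order data. Ruling out arrows, type variables and recursion guarantees that every value of type $\alpha$ has bounded structural depth and, crucially, that the sequence of dereferences and projections is acyclic, so the measure ``size of the source type'' is well-founded and strictly decreases in every inductive step. Soundness of each rewriting step is inherited directly from Proposition~\ref{pro:notreach}, whose clauses are already proved valid, so the composite chain of equivalences yields the required reachability-free $C'$ with $C\LITEQ C'$; the outer induction on $C$ then assembles these atom-wise eliminations, closing the proof.
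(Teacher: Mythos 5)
Your elimination procedure is, at its core, the same as the paper's: the four rewriting rules you give (base, product, sum, reference) are exactly the clauses by which the paper's appendix defines its auxiliary predicate $\reachB{x^{\alpha}}{y}$, and in both cases termination rests on the same observation that finiteness of the type makes the type-directed unfolding well-founded. Where you differ is in how correctness of the unfolding is established. The paper interposes an $i$-step reachability predicate $\iReachable{x}{y}{i}$ and proves two lemmas --- that finite reachability coincides with $\exists i.\iReachable{x}{y}{i}$ (induction on $i$ one way, a semantic appeal to finiteness the other), and that $\reachB{}{}$ also coincides with $\exists i.\iReachable{x}{y}{i}$ --- composing them to conclude $\REACH{x}{y}\LITEQ\reachB{x}{y}$. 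You instead argue that each unfolding step is a valid equivalence and induct on type size; this is more elementary and shorter, at the price of having to check each equivalence against the semantics rather than once-and-for-all through the bounded characterisation. Your outer induction (congruence of $\LITEQ$ under all connectives and quantifiers) is implicit in the paper but unproblematic.

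One step of yours needs repair. In the reference case you justify the backward direction $\REACH{!e}{e'}\ENTAILS\REACH{e}{e'}$ by instantiating Proposition~\ref{pro:notreach} 3-(3) as $\REACH{e}{e}\ENTAILS\REACH{e}{!e}$ and then applying transitivity. But $\REACH{e}{!e}$ is well-formed only when the content type of $e$ is itself a reference type, since the target of $\reach$ must have reference type; so this derivation is ill-typed precisely when $e:\refType{\gamma}$ with $\gamma$ a base, product or sum type, and the axioms of Proposition~\ref{pro:notreach} offer no other route from content-reachability back to the reference. The implication you need is nevertheless valid: by the satisfaction clause for $\reach$ and Lemma~\ref{lem:ncl}, $\ncl{\ASET{l}}{\sigma}\supseteq\ncl{\FL{\sigma(l)}}{\sigma}$ for $l$ the denotation of $e$, so anything reachable from $!e$ is reachable from $e$. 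Since $\LITEQ$ means semantic equivalence (and the paper's own appendix proof is likewise semantic rather than purely axiomatic), substituting this direct argument for the miscited axiom closes the gap without disturbing the rest of your structure.
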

\begin{proof}
By Proposition \ref{pro:notreach}. See Appendix \ref{app:elimination}. 
\end{proof}

\NI The elimination of reachability predicates crucially uses type
information in logical terms: as a simple example consider
$\REACH{x}{y}$ where $x$ has type $\refType{\refType{\NAT}}$ and $y$
has type $\refType{\NAT}$.  Then we have $\REACH{x}{y} \LITEQ !!x=y$.
The precise inductive elimination rules are given in Appendix
\ref{app:elimination}.


For analysing reachability with function types, it is useful to 
define the following ``one-step'' reachability predicate.
Below $e_2$ is of a reference type.
\begin{equation}
\begin{array}{c}
\text{
\begin{tabular}{ll}
$\MMM \models e_1\directreach e_2$
\ \ \ \ if &
$\MAP{e_2}_{\xi, \sigma}\in \FL{\MAP{e_1}_{\xi, \sigma}}$
\ for each $(\new \VEC{l})(\xi, \sigma)\WB \MMM$ 
\end{tabular}
}
\end{array}
\end{equation}
The predicate $f\directreach l'$ 
means $l'$ occurs in any $\cong$-variant of the
program $f$. 

The following is straightforward from the definition. 
\begin{proposition}[Support]
$(\new \VEC{l})(\xi, \sigma) \models
x\directreach l'$ iff $l'\in\bigcap \ASET{\FL{V}\ | \ V\cong
\xi(x)}$. 
\end{proposition}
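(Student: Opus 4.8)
The plan is to unfold both sides of the claimed equivalence against the definition of $\directreach$ and the interpretation of terms, and then to prove the two implications separately. Writing $\MMM = (\new \VEC{l})(\xi, \sigma)$ and recalling that $\MAP{x}_{\xi_0,\sigma_0} = \xi_0(x)$ while $\MAP{l'}_{\xi_0,\sigma_0} = l'$ (locations are constants), the definition of the one-step predicate reads
$$
\MMM \models x \directreach l'
\quad\IFF\quad
\forall (\new \VEC{l}_0)(\xi_0,\sigma_0) \WB \MMM.\; l' \in \FL{\xi_0(x)},
$$
whereas the right-hand side of the proposition is precisely $\forall V\cong \xi(x).\; l'\in \FL{V}$. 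So the task reduces to comparing the family of $x$-components $\ASET{\xi_0(x) \mid (\new\VEC{l}_0)(\xi_0,\sigma_0)\WB \MMM}$ with the $\cong$-class $\ASET{V \mid V \cong \xi(x)}$, as far as membership of the free location $l'$ in their free-location sets is concerned.

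First I would treat the left-to-right implication. Given any value $V\cong \xi(x)$, let $\MMM_V$ be the model obtained from $\MMM$ by replacing the value $\xi(x)$ assigned to $x$ with $V$. By the compositionality property of $\WB$ with respect to $\cong$ recorded in \S\ref{model:sound} (replacing a component by a $\cong$-equivalent one yields a $\WB$-equivalent model), we have $\MMM_V \WB \MMM$. Hence the hypothesis, applied to $\MMM_V$, gives $l'\in \FL{V}$, since $x$ is interpreted as $V$ in $\MMM_V$. As $V$ was an arbitrary $\cong$-variant of $\xi(x)$, this is exactly the right-hand side: the $\cong$-variants of $\xi(x)$ sit inside the $\WB$-variants of $\MMM$, so the intersection over the larger family is contained in that over the smaller.

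The right-to-left implication is where the work lies, and I expect it to be the main obstacle, since $\MMM_0 \WB \MMM$ does \emph{not} by itself force the single components to satisfy $\xi_0(x)\cong\xi(x)$: shared local state can entangle several components so that the whole tuple is observationally equivalent without any single component being so. The plan is therefore to argue at the level of free locations rather than full componentwise congruence. Given an arbitrary $\MMM_0 = (\new\VEC{l}_0)(\xi_0,\sigma_0)\WB \MMM$ with $l'\notin \FL{\xi_0(x)}$, I would first restrict the contexts in Definition \ref{def:modelequivalence} to those probing only the $x$-component (projecting the $n$-fold pairing onto its $x$-entry and discarding the rest); this specialisation of $\MMM_0 \WB \MMM$ yields the configuration-level congruence $(\new\VEC{l}_0)(\xi_0(x),\sigma_0)\cong(\new\VEC{l})(\xi(x),\sigma)$. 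I would then close the hidden part $\VEC{l}_0$ of $\MMM_0$ back into the datum, producing a value $V$ with $V\cong\xi(x)$ and $\FL{V}\subseteq \FL{\xi_0(x)}$, whence $l'\notin\FL{V}$, contradicting the right-hand side.

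The delicate point, and the crux of the obstacle, is this closing-up step: turning the store-relative datum $\xi_0(x)$ (which may mention the hidden $\VEC{l}_0$) into a genuine value in the $\cong$-class of $\xi(x)$ without reintroducing $l'$. Here I would appeal to the standard \emph{support} property of the model — that each $\cong$-class (equivalently, by the projection argument above, each $\WB$-class) has a well-defined least set of genuinely-needed free locations, realised by some representative — which is exactly the nominal structure underlying the presentations of models discussed in Remark \ref{rem:models:abs}. Granting this, the least free-location sets of the $\WB$-class of $\MMM$ at $x$ and of the $\cong$-class of $\xi(x)$ agree on free locations of $\MMM$, and the proposition follows; the bookkeeping that $l'$, being a free location occurring in the formula, is never among the bound $\VEC{l},\VEC{l}_0$ is routine.
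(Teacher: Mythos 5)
Your left-to-right direction is sound and is essentially the only part that is really "straightforward from the definition": given $V\cong\xi(x)$, replacing the $x$-entry of $\MMM$ by $V$ yields a $\WB$-equivalent model (this is exactly the congruence property recorded in the display just after Definition~\ref{def:modelequivalence}), so the hypothesis applied to that variant gives $l'\in\FL{V}$. Note that the paper itself offers no argument at all for this proposition — it is declared immediate — so there is nothing to match your proof against; but your attempt at the converse direction exposes that it is not immediate, and unfortunately your patch for it does not work.

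The gap is precisely at the "closing-up" step, for three reasons. First, the operation you need does not exist in this language: $V\cong\xi(x)$ is program congruence, which compares values with no accompanying store, and a value cannot encapsulate hidden mutable state. For instance, for a $\WB$-variant with $\xi_0(x)=\lambda ().(l:=\,!l+1;!l)$ and hidden $l\mapsto 0$, there is no value $V$ at all whose behaviour matches that configuration (two consecutive calls of any value running over an empty store return equal results, whereas the configuration returns $1$ then $2$), so the representative you want need not exist. Second, even where some folding-in is possible, your inclusion $\FL{V}\subseteq\FL{\xi_0(x)}$ is false, because the \emph{hidden store} may mention $l'$ even when $\xi_0(x)$ does not: take $\MMM=(x\!:\!\lambda ().!l',\ l'\mapsto 0)$ and $\MMM_0=(\new l'')(x\!:\!\lambda ().!(!l''),\ l''\mapsto l',\ l'\mapsto 0)$; one checks $\MMM_0\WB\MMM$ (the cell $l''$ never escapes and is never writable from outside), and $l'\notin\FL{\xi_0(x)}=\ASET{l''}$, yet any attempt to fold $l''$ into the datum re-introduces its content $l'$. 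So the free locations of a closed-up representative are bounded only by $\FL{\xi_0(x)}$ together with the free locations stored in the hidden cells, which may contain $l'$, and the contradiction you aim for does not materialise. Third, the fallback appeal to a "standard support property" — that $\cong$-classes and $\WB$-classes have realised least supports which moreover agree — is not available: Remark~\ref{rem:models:abs} only observes that alternative nominal/categorical presentations of the models exist; it establishes no such property for this syntactic model, and the asserted agreement of the two least supports is exactly the content of the proposition being proved, so invoking it is circular. To close the converse direction you would need a genuine argument (not present in the paper either) that behavioural dependence on $l'$ forces syntactic occurrence of $l'$ in the $x$-component of \emph{every} $\WB$-variant, and the example above shows why any such argument must confront indirection through hidden cells.
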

The latter says that $l'$ is in the support 
\cite{stark:namhof,GabbayM:newapprasib,PittsAM:nomlfo}
of $x$. 

We set $x\directreach^n y$ for $n\geq 0$ by:
\begin{eqnarray*}
  x\directreach^0 y & \LITEQ & x=y\\ 
  x\directreach^1 y & \LITEQ & x\directreach y\\ 
  x\directreach^{n+1} y & \LITEQ & \exists z.(x\directreach z \ANDl
  !z\directreach^n y)\quad (n\geq 1)
\end{eqnarray*}
By definition, 
we immediately observe:

\begin{proposition}\label{pro:decomposition}
$x\reach y\ \IFF\ \exists n.(x\directreach^n y)\LITEQ
(x=y\, \OR\, x\directreach y\, \OR\, \exists z.(x\directreach  z\AND z \neq y \AND
z\reach y))$.
\end{proposition}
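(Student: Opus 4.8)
The plan is to reduce both equivalences to the path-characterisation of the label closure $\ncl{\cdot}{\cdot}$ and then read the recursive definition of $\directreach^n$ off that characterisation. First I would record the auxiliary fact that, for a store $\sigma$, value $V$ and location $l'$, one has $l'\in\ncl{\FL{V}}{\sigma}$ iff there is a finite pointer-chain $l_0,\ldots,l_p$ ($p\geq 0$) with $l_0\in\FL{V}$, with $l_{i}\in\FL{\sigma(l_{i-1})}$ for $1\leq i\leq p$, and with $l_p=l'$. This is immediate from the minimality clause defining $\ncl{\cdot}{\cdot}$: the set of endpoints of such chains contains $\FL{V}$ and is closed under ``$l\in S'\THEN\FL{\sigma(l)}\subseteq S'$'', while any set with these two properties must contain every chain endpoint; formally one argues along the construction $\ncl{\FL{V}}{\sigma}=\bigcup_k S_k$ with $S_0=\FL{V}$ and $S_{k+1}=S_k\cup\bigcup_{l\in S_k}\FL{\sigma(l)}$, which is exactly Lemma \ref{lem:ncl}(1).

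For the first equivalence $x\reach y\IFF\exists n.(x\directreach^n y)$, I would unfold both sides against a representative $(\new\VEC{l})(\xi,\sigma)\WB\MMM$. By definition $\MMM\models x\reach y$ is $\MAP{y}_{\xi,\sigma}\in\ncl{\FL{\MAP{x}_{\xi,\sigma}}}{\sigma}$, i.e.\ by the auxiliary fact there is a pointer-chain from $\FL{\MAP{x}}$ to $\MAP{y}$. Unfolding the recursion for $\directreach^n$ matches this chain hop-by-hop: the base case $n=1$ is $\MAP{y}\in\FL{\MAP{x}}$, and the step $x\directreach^{n+1}y\LITEQ\exists z.(x\directreach z\AND !z\directreach^n y)$ peels off the first label $l_0\in\FL{\MAP{x}}$, witnessed by $z$, and continues from $\sigma(\MAP{z})$. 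Matching chain length $p$ with index $n=p+1$ gives the equivalence, while the disjunct $x\directreach^0 y\LITEQ x=y$ is absorbed, since equality with a reference-typed $y$ forces $x$ to be a reference with $\MAP{x}=\MAP{y}$, whence already $x\directreach y$.

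For the second equivalence I would use the first to rewrite its left-hand side as $x\reach y$ and then prove $x\reach y\LITEQ(x=y\OR x\directreach y\OR\exists z.(x\directreach z\AND z\neq y\AND z\reach y))$. The $\Leftarrow$ direction is monotonicity of the closure (Lemma \ref{lem:ncl}(1)): each disjunct places $\MAP{y}$ in $\ncl{\FL{\MAP{x}}}{\sigma}$, using for the third that $\MAP{z}\in\FL{\MAP{x}}$ gives $\FL{\MAP{z}}=\{\MAP{z}\}\subseteq\FL{\MAP{x}}$, hence $\ncl{\FL{\MAP{z}}}{\sigma}\subseteq\ncl{\FL{\MAP{x}}}{\sigma}$, together with $x\reach x$ from Proposition \ref{pro:notreach}(1) for the case $x=y$. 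For $\Rightarrow$ I would split on $n$: the cases $n=0,1$ give $x=y$ or $x\directreach y$ directly, and for $n\geq 2$ I would consume the witness $z$ already supplied by $x\directreach^{n}y\LITEQ\exists z.(x\directreach z\AND !z\directreach^{n-1}y)$, observe that $!z\directreach^{n-1}y$ yields $!z\reach y$ by the first equivalence and hence $z\reach y$ by one closure step (Proposition \ref{pro:notreach}), and finally route the subcase $z=y$ back into the disjunct $x\directreach y$, leaving $\exists z.(x\directreach z\AND z\neq y\AND z\reach y)$ otherwise.

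The main obstacle is the forward direction of the first equivalence, namely witnessing the existentials of $\directreach^n$: since $\exists z$ ranges over functional terms (Definition \ref{def:functional}), I must exhibit, for the first label $l_0\in\FL{\MAP{x}}$ of a reachability chain, a functional term denoting $l_0$, and likewise for each subsequent intermediate reference. For references obtained by dereferencing and for product and sum components this is routine (compose $!(\cdot)$ with projections and case analysis), but when $x$ has a function or recursive type the label $l_0$ may be visible only through the closure, via the support characterisation stated just above. Aligning this operational witness with the $\WB$-robust reading of $\directreach$ is the delicate point, and is where I would concentrate the argument; note, by contrast, that the $\Rightarrow$ direction of the second equivalence needs no new witnesses, as it merely reuses the one already present in the unfolding of $\directreach^n$.
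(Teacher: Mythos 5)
Your overall route is the one the paper intends: the paper in fact offers no argument at all (the proposition is prefixed by ``By definition, we immediately observe''), and what it means by that is exactly your chain characterisation of the label closure $\ncl{S}{\sigma}$ together with a hop-by-hop match against the recursion defining $\directreach^n$. Your backward directions (from $\exists n.(x\directreach^n y)$ to $x\reach y$, via Lemma \ref{lem:ncl}) and your reduction of the second equivalence to the first are correct, and already more explicit than anything in the paper. The problem is the step you defer at the end. That deferral is not a residual detail one can ``concentrate on'' later: under the paper's own semantics it carries the entire weight of the forward direction, and the proof cannot be completed along the lines you set up.

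The reason is a quantifier inversion that your opening move --- ``unfold both sides against a representative $(\new\VEC{l})(\xi,\sigma)\WB\MMM$'' --- quietly erases. By Definition \ref{def:satisfaction}(9) and the definition of $\directreach$, both predicates must hold in \emph{every} $\WB$-representative, and $\exists z$ ranges over functional terms interpretable in the model. So $x\reach y$ says: for each representative there exists a chain, where the chain may vary with the representative; whereas $\exists n.(x\directreach^n y)$ says: there is a single chain, each intermediate reference of which is denotable by a functional term and lies in the relevant $\FL{\cdot}$ set \emph{uniformly across all representatives}. Reading a chain off one representative gives the former, not the latter, and in general nothing gives the latter. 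Concretely, take $\MMM=(\new l)(x:\lambda ().!l,\; y:l',\; \{l\mapsto l',\ l'\mapsto 0\})$ with $l:\refType{\refType{\NAT}}$ hidden and $l'$ free. No context can ever acquire or write $l$, so $\MMM\WB(\new l)(x:\lambda ().l',\; y:l',\; \{l\mapsto l',\ l'\mapsto 0\})$. In the first representative $\FL{\MAP{x}}=\{l\}$ and in the second $\FL{\MAP{x}}=\{l'\}$, so every representative contains a chain from $\MAP{x}$ to $l'$ and hence $\MMM\models x\reach y$; but the two $\FL$-sets are disjoint, so no single location lies in $\FL{\MAP{x}}$ in all representatives, $x\directreach z$ fails for every candidate $z$ (including $z=y$), and $\exists n.(x\directreach^n y)$ is false. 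Thus the ``delicate point'' you flag is exactly where the statement stops being immediate: it goes through only under the naive reading in which one fixes a single concrete store and lets the existentials range over its locations --- evidently the reading the paper tacitly adopts --- and a completed proof must either say so explicitly or supply a uniformisation argument which, as the example shows, is not available in general.
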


\NI Proposition \ref{pro:decomposition}, combined with Theorem
\ref{theorem:elimination}, suggests that if we can clarify one-step
reachability at function types then we will be able to clarify the reachability
relation as a whole. Unfortunately this relation is inherently
intractable.

\begin{proposition}[undecidability of $\directreach$ and $\reach$]
\label{pro:notax}
{\rm (1)} $\MMM \models f^{\alpha\FS\beta}\directreach x$ 
is undecidable.\ \ \ 
{\rm (2)} $\MMM \models f^{\alpha\FS\beta}\reach x$ is undecidable.
\end{proposition}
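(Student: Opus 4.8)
The plan is to reduce the (undecidable) halting problem for closed programs of the language to each of the two membership questions, exploiting the Support proposition stated immediately above. Given an arbitrary closed program $M$ of type $\UNIT$, fix a location $l$ of type $\refType{\NAT}$ and form the model
\[
\MMM_M \ \DEFEQ\ (\ASET{f:\lambda ().(M;!l),\ x:l},\ \ASET{l\mapsto 0}),
\]
whose construction from $M$ is plainly effective; here $f$ has type $\UNIT\FS\NAT$ and $x$ denotes $l$. The whole argument rests on a single dichotomy: the gadget $f$ genuinely reads $l$ exactly when $M$ converges, so that $l$ lies in the support of $f$ precisely in that case.

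First I would settle the two directions of the dichotomy. If $M\diverges$, then $(M;!l)=(\lambda ().!l)M$ diverges, so $\lambda ().(M;!l)\cong \lambda ().\DIVPROG$, and the latter value contains no locations. Hence $\MMM_M \WB (\ASET{f:\lambda ().\DIVPROG,\ x:l},\ \ASET{l\mapsto 0})$, a $\WB$-variant in whose value part $l$ does not occur; by the Support proposition $l$ is not in the support of $f$, so $\MMM_M\not\models f\directreach x$. Conversely, if $M\converges$, then $\lambda ().(M;!l)\cong \lambda ().!l$, which depends observably on the content of $l$: a context that first writes to $x$ and then invokes $f$ returns the written value. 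Consequently every value $V\cong \lambda ().(M;!l)$ must mention $l$ — a program whose text does not contain $l$ cannot observe the content of $l$, a standard locality property of the operational semantics of \S\ref{subsec:PL} — so $l\in\FL{V}$ for all such $V$, i.e.\ $l$ is in the support of $f$ and $\MMM_M\models f\directreach x$. Thus $\MMM_M\models f\directreach x$ iff $M\converges$, establishing (1).

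For (2) I would reuse the very same family $\MMM_M$. With the store $\ASET{l\mapsto 0}$ whose stored value carries no labels, we have $\ncl{\FL{V}}{\sigma}=\FL{V}$ for the relevant $V$, so reaching $l$ from $f$ through the store coincides with $l$ occurring directly in $f$. Concretely, if $M\converges$ then $l\in \FL{\MAP{f}}\subset \ncl{\FL{\MAP{f}}}{\sigma}$ in every $\WB$-variant (again because $l$ is in the support of $f$), whence $\MMM_M\models f\reachable x$; and if $M\diverges$ the variant with $f\cong \lambda ().\DIVPROG$ gives $\ncl{\FL{\MAP{f}}}{\sigma}=\ncl{\emptyset}{\sigma}=\emptyset$, so that $l\notin\ncl{\FL{\MAP{f}}}{\sigma}$, and by the universal clause ``for each $\WB \MMM$'' in the definition of $\reachable$ we conclude $\MMM_M\not\models f\reachable x$. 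Hence $\MMM_M\models f\reachable x$ iff $M\converges$, giving (2). (Alternatively, (2) follows from (1) together with Proposition \ref{pro:decomposition}, since for this model reachability to $x$ collapses to one-step reachability.)

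The main obstacle is the forward implication of the dichotomy in the convergent case: that genuine dependence on the content of $l$ forces $l$ into every $\cong$-variant of $f$. This is exactly the combination of the Support proposition with the locality principle that a program not syntactically mentioning a location can neither read nor alter it; making it rigorous requires the standard fact that $\cong$-equivalent closed values have equal support, which is where the operational semantics and the definition of $\WB$ do the real work. Everything else — the effectiveness of $M\mapsto \MMM_M$ and the divergent case — is routine.
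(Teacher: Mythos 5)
Your proposal is correct and is essentially the paper's own argument: the paper also reduces the halting problem via a two-entry model $f:V,\,x:l$ and the Support proposition, using the gadget $V \DEFEQ \lambda ().\IFTHENELSE{M=()}{l}{\refPrg{0}}$ — which, when $M$ converges, returns the location $l$ itself (so membership of $l$ in the support follows from name identity) — whereas your gadget $\lambda ().(M;!l)$ returns the \emph{content} of $l$, so your forward direction instead rests on the locality fact that a value not mentioning $l$ cannot depend on $l$'s stored value; part (2) is handled identically in both proofs by choosing a store whose label closure is trivial. One small slip to fix: since $\DIVPROG$ is itself a value, $\lambda ().\DIVPROG$ \emph{converges} when applied, so the claimed equivalence $\lambda ().(M;!l)\cong\lambda ().\DIVPROG$ in the divergent case is false as written; you want $\lambda ().\DIVPROG()$ (a function that diverges when called and contains no locations), after which the argument goes through unchanged.
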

\begin{proof}
For (1), let $V \DEFEQ \lambda ().\IFTHENELSE{M=()}{l}{\REF{0}}$ 
with a closed \PCFv-term $M$ of type $\UNIT$.
Then $f:V,\,x:l\models f\directreach x$ iff $M\converges$,
reducing the satisfiability to the halting problem of
\PCFv-terms.  For (2), take the same $V$ so that the type of $l$ and
$x$ is $\REF{\NAT}$ in which case $\directreach$ and $\reach$ coincide.
\end{proof}
%
The same result holds for call-by-value
$\beta\eta$-equality. 
Proposition \ref{pro:notax} indicates inherent intractability of $\directreach$
and $\reach$. 

However Proposition \ref{pro:notax} does not imply that we cannot obtain useful
axioms for (un)reacha\-bi\-li\-ty at function types. Next, we discuss a
collection of  axioms with function types.
First, the following axiom says that if $x$ is unreachable from $f$, $y$ and
$\VEC{w}$, then the application of $f$ to $y$ with the effect set
$\VEC{w}$ never exports $x$.

\begin{proposition}[unreachable functions] 
\label{pro:notreachfunc} For an arbitrary $C$, the following 
is valid with $i$ and $\TVX$ fresh: 
%
$$
\allworlds \ASET{C \AND x\noreach fy\VEC{w}}\APPs{f}{y}\!=\!z\ASET{C'}@\VEC{w}
\; \ENTAILS\; 
\allworlds \forall \TVX, i^{\TVXscript}.
 \ASET{C \AND x\noreach fiy\VEC{w}}\APPs{f}{y}\!=\!z
 \ASET{C'\AND x\noreach fiyz\VEC{w}}@\VEC{w}$$
\end{proposition}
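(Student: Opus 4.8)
The plan is to reduce the claim to the validity of the single implication obtained by stripping the outermost $\allworlds$ from both sides, and then to isolate the genuinely new content --- that the result and the post-state cannot reach $x$ --- as a frame property of reduction.

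First I would write $B_1$ for the body of the left-hand side and $B_2$ for the body of the right-hand side, so that the statement is $\allworlds B_1 \ENTAILS \allworlds B_2$. It then suffices to prove that $B_1 \ENTAILS B_2$ is valid: validity of $B_1\ENTAILS B_2$ gives validity of $\allworlds(B_1\ENTAILS B_2)$ (the necessitation of a valid formula is valid, directly from the semantics of $\allworlds$ in \S\ref{subsec:necessity}), and Proposition \ref{pro:necessity}(1) then yields $\allworlds B_1 \ENTAILS \allworlds B_2$. So fix a model $\MMM$ with $\MMM\models B_1$; I must show $\MMM\models B_2$, i.e. $\MMM\models\forall\TVX, i.(\ldots)$. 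By the semantics of the universal quantifier I take an arbitrary closed type $\alpha$ and an arbitrary functional term $L$ of type $\alpha$ (Definition \ref{def:functional}), forming the extension $\MMM_1$ of $\MMM$ by $\TVX:\alpha$ and $i:L$. Since $i,\TVX$ are fresh and $L$ is functional, Lemma \ref{lemma:29410}(\ref{lemma:29410:6}) gives $\MMM_1\models D \Leftrightarrow \MMM\models D$ for every formula $D$ in which $i,\TVX$ do not occur.

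Next, assume $\MMM_1\models C \AND x\noreach fiy\VEC{w}$. Dropping the conjunct $x\noreach i$ gives $\MMM_1\models C\AND x\noreach fy\VEC{w}$, and since this formula omits $i,\TVX$, the transfer above yields $\MMM\models C\AND x\noreach fy\VEC{w}$. Applying the hypothesis $\MMM\models B_1$ gives $\MMM\models \ONEEVAL{f}{y}{z}{C'}@\VEC{w}$, i.e. $\MMM[z:fy]\converges\MMM_2^0$ with $\MMM_2^0\models C'$ and with the write effect confined to $\VEC{w}$ up to $\WB$ (Definition \ref{def:satisfaction}(\ref{locatedassertion})). Because $fy$ does not mention $i$, the same reduction takes place in $\MMM_1$, producing the corresponding extension $\MMM_2$; by the transfer lemma $\MMM_2\models C'$ and the located condition still holds at $\MMM_2$. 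It therefore remains to establish the one new conjunct, $\MMM_2\models x\noreach fiyz\VEC{w}$.

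The crux is a frame argument, which I expect to be the main obstacle. By Proposition \ref{pro:partition} (bundling $f,i,y,\VEC{w}$ into one tuple $u$ and using additivity of $\ncl{\cdot}{\sigma}$, Lemma \ref{lem:ncl}(2)), the pre-state can be presented up to $\WB$ as $(\new\VEC{l})(\xi\cdot u:V\cdot x:l,\ \sigma_1\uplus\sigma_2)$ with $\ncl{\FL{V}}{\sigma_1\uplus\sigma_2}=\FL{\sigma_1}=\dom{\sigma_1}$ and $l\in\dom{\sigma_2}$; thus $\sigma_1$ is exactly the reachable closure of the inputs and $\sigma_2$, holding $x$'s cell, is disjoint and unreachable. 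The key operational locality lemma states that reducing $fy$ in a store $\sigma_1\uplus\sigma_2$, where the free locations of $fy$ lie in $\dom{\sigma_1}$ and $\sigma_1$ is closed under reachability, reads and writes only cells of $\dom{\sigma_1}$ together with freshly allocated cells, never touching $\sigma_2$, and returns a value whose free locations again lie in that reachable-plus-fresh region. Granting this, $\MMM_2\WB(\new\VEC{l}\VEC{l}')(\xi\cdot u:V\cdot z:V_z\cdot x:l,\ \sigma_1'\uplus\sigma_2)$, where $\sigma_1'$ extends the $\VEC{w}$-updated $\sigma_1$ by the fresh cells $\VEC{l}'$ and is closed with $\FL{V},\FL{V_z},\FL{\sigma_1'}\subseteq\dom{\sigma_1'}$, disjoint from $\sigma_2$. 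Since $l\in\dom{\sigma_2}$ lies outside $\ncl{\FL{\langle f,i,y,z,\VEC{w}\rangle}}{\sigma_1'\uplus\sigma_2}=\dom{\sigma_1'}$, the if-direction of Proposition \ref{pro:partition} gives $\MMM_2\models x\noreach fiyz\VEC{w}$, completing the argument. The locality lemma itself is proved by induction on the reduction relation of \S\ref{subsec:PL}, checking that each evaluation-context and imperative rule preserves the $\sigma_1/\sigma_2$ partition and the stated inclusion of free locations; this routine but lengthy verification I would relegate to the appendix.
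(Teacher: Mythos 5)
Your proof is correct and follows essentially the same route as the paper's: both hinge on Proposition \ref{pro:partition} to split the store into the reachable closure of the inputs and a disjoint part holding $x$'s cell, invoke the operational locality of reduction (evaluating $fy$ touches only the reachable part plus freshly allocated cells, never the disjoint part), and then reapply the partition characterisation to conclude $x\noreach fiyz\VEC{w}$. The paper treats the locality property as an ``elementary fact'' about reduction and is terser about the $\forall\TVX, i$ quantification and the outer $\allworlds$, which you handle explicitly (and your bundling of $i$ into the partitioned tuple is a careful touch the paper glosses over), but these are presentational differences rather than a different argument.
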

\begin{proof} 
See Appendix \ref{app:notreachfunc}. 
\end{proof}



\subsection{Local Invariants}
\label{subsec:localinvariant}
\NI We now introduce an axiom for local invariants. 
Let us first consider a
function which writes to a local reference of 
 base type.  Even programs of this kind pose fundamental difficulties in
reasoning, as shown in \cite{MeyerAR:towfasflv}.  Take the following program:
\begin{equation}\label{ex:invprg}
\mathtt{compHide}\;\DEFEQ\;  \mathtt{let} \ 
x=\refPrg{7} \ \mathtt{in} \ \lambda y.(y>!x)
\end{equation}
The program behaves as a pure function $\lambda y.(y>7)$. Clearly,
 the obvious local invariant $!x=7$ is preserved. We demand this assertion to 
survive under arbitrary invocations of $\mathtt{compHide}$: thus (naming the
function $u$) we arrive at the following invariant:
\begin{equation}\label{ex:inv:a}
C_0\;\LOGICEQ\;\ !x=7\;\AND\;\allworlds\forall y.\ASET{!x=7}\APP{u}{y}=z\ASET{!x=7}@\emptyset
\end{equation}
Assertion (\ref{ex:inv:a}) says: (1) the invariant $!x=7$ holds now;
and that (2) once the invariant holds, it continues to hold for ever (note
$x$ can never be exported due to the type of $y$ and $z$, 
so that only $u$ will touch $x$). 
Using this assertion,
$\mathtt{compHide}$ satisfies the following
with $i$ fresh:
\begin{eqnarray}\label{ex:inv:c}
& \ASET{\truth}\mathtt{compHide}:_u\ASET{\new x.(x \noreach i^{\TVXscript}
 \ANDl C_0\ANDl C_1)} & \\
\label{ex:inv:ccc}
& C_1\LOGICEQ \allworlds \forall
y.\ASET{!x=7}\APP{u}{y}=z\ASET{z=(y>7)}@\emptyset. &
\end{eqnarray} 
Thus, noting $C_0$ is only about the content of $x$ (in fact
it is syntactically stateless except $x$ in the sense of
Definition \ref{def:syntactic:stateless}, 
we can conclude $C_0$ continues to hold automatically over
any future computation by any programs.
Hence we cancel
$C_0$ together with $x$:
\begin{equation}\label{ex:inv:e}
\ASET{\truth}\mathtt{compHide}:_u
\ASET{\allworlds\forall y.\APP{u}{y}=z\ASET{z=(y>7)}}
\end{equation}
which describes a purely functional behaviour. 

Now we leave the example and move to the general case,  
stipulating the underlying reasoning principle as an axiom. 
Let $y, z$ be fresh. We define:
\begin{equation}\label{ex:inv:FO} 
\LocalInv{u}{C_0}{\VEC{x}}
\ \;\LOGICEQ\;\ 
C_0 \ANDl (\allworlds\forall yi.\{C_0\}\APP{u}{y}\Downarrow\ENTAILS 
\allworlds\forall yi.\eval{C_0}{u}{y}{z}{C_0 \ANDl \VEC{x}\notreach z})
\end{equation}
where $C_0\ENTAILS \VEC{x}\notreach iy$.
$\LocalInv{u}{C_0}{x}$ says that currently $C_0$ holds; and
that if $C_0$ holds, applying $u$ to $y$ results in, if it
ever converges, $C_0$ again and the returned $z$ is disjoint from
$\VEC{x}$.
The axiom also uses:
\begin{equation} 
x\reachable^\bullet \VEC{y}
\;\LOGICEQ\;
\forall z.
(x\reachable z\ENTAILS z\in\ASET{\VEC{y}})
\end{equation} 
Thus $x\reachable^\bullet \VEC{y}$ says that 
all references reachable from $x$ are inside
$\ASET{\VEC{y}}$. We write $\VEC{x}\reachable^\bullet \VEC{y}$
for the conjunction $\AND_i x_i\reachable^\bullet \VEC{y}$.
The axiom follows.


\begin{proposition}[axiom for information hiding] 
\label{pro:localinv:firstorder}
\label{pro:localinv:higherorder}
\label{pro:localinv}
Assume 
$C_0\!\LITEQ C'_0\AND\VEC{x}\notreach iy\AND \VEC{g}\reachable^\bullet \VEC{x}$,\
$C'_0$ is stateless except $\VEC{x}$,\    
$C$ is antimonotone,
$C'$ is monotone,
$i,m$ are fresh and 
$\{\VEC{x},\VEC{g}\} \cap (\FV{C, C'}\cup\ASET{\VEC{w}})=\emptyset$.  
Then the following is valid: 
\[
\begin{array}{ll}
\text{\AIHax}\quad \quad 
&
\forall \TVX.\forall i^\TVX.
\APP{m}{()}\!=\!u\ASET{
(\new\VEC{x}.\exists \VEC{g}.E_1)\AND E} 
\; \ENTAILS \; 
\forall \TVX.\forall i^\TVX.
\APP{m}{()}\!=\!u\ASET{E_2\AND E}
\end{array}
\]
with
\begin{enumerate}[$\bullet$]
\item 
$E_1 \LOGICEQ \LocalInv{u}{C_0}{\VEC{x}}
\AND 
\allworlds \forall yi.\eval{C_0\ANDs C}{u}{y}{z}{C'}@\VEC{w}\VEC{x}$\ 
\item   
$E_2 \LOGICEQ 
\allworlds 
\forall y.\eval{C}{u}{y}{z}{C'}@\VEC{w}$ and 
\item 
$E$ is an arbitrary formula. 
\end{enumerate}
\end{proposition}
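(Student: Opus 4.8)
The plan is to establish the validity by unfolding the semantics and reducing the whole claim to one entailment between the post-conditions of the evaluation formula $\ONEEVAL{m}{()}{u}{-}$. The prefix $\forall \TVX.\forall i^{\TVX}$ and the trailing conjunct $E$ are syntactically identical on the two sides, and $\ONEEVAL{m}{()}{u}{-}$ is monotone in its post-condition, so it suffices to fix an instantiation of $\TVX,i$, fix a post-state model $\MMM$, and show that $\MMM \models \new\VEC{x}.\exists\VEC{g}.E_1$ implies $\MMM \models E_2$. Since $\{\VEC{x},\VEC{g}\}\cap(\FV{C,C'}\cup\ASET{\VEC{w}})=\emptyset$ we have $\VEC{x},\VEC{g}\notin\FV{E_2}$, so re-hiding the local state at the end causes no interference with $E_2$.

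First I would expose the hidden state. By the semantics of the existential hiding quantifier (equation (\ref{AEAEAE})) and of $\exists$ ranging over functional terms, $\MMM \models \new\VEC{x}.\exists\VEC{g}.E_1$ produces fresh locations $\VEC{l}$ and witnesses $\VEC{L}\in\mathcal{F}$ so that the model $\NNN$, obtained from $\MMM$ by exposing those locations under the names $\VEC{x}$ and adjoining $\VEC{g}:\VEC{L}$, satisfies $\NNN \models E_1$; deleting $\VEC{g}$ and $\VEC{x}$ and re-hiding $\VEC{l}$ recovers $\MMM$ up to $\WB$. In particular $\NNN \models \LocalInv{u}{C_0}{\VEC{x}}$ and $\NNN \models \allworlds\forall yi.\eval{C_0\ANDs C}{u}{y}{z}{C'}@\VEC{w}\VEC{x}$. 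To prove $\MMM \models E_2$ I then take an arbitrary future state $\MMM \EVOLVES \MMM^*$ and a functional argument for $y$ with $\MMM^* \models C$, and must show that applying $u$ to $y$ in $\MMM^*$ returns $z$ satisfying $C'$ with observable effect set $\VEC{w}$.

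The core of the argument is a perpetuity-of-the-hidden-invariant lemma, proved by induction on the generation of $\EVOLVES$: for every $\MMM^*$ with $\MMM\EVOLVES\MMM^*$ there is a corresponding exposed model $\NNN^*$ with $\NNN\EVOLVES\NNN^*$, whose re-hiding is $\WB\MMM^*$, and which satisfies $\NNN^*\models C_0$ together with $\VEC{x}\notreach \VEC{r}$ for all free names $\VEC{r}$ of $\MMM^*$ other than $u$. The base case is $\NNN$ itself, using $C_0\ENTAILS\VEC{x}\notreach i$ and $\VEC{g}\reachable^\bullet\VEC{x}$ with the partition characterisation (Proposition \ref{pro:partition}) to isolate the $\VEC{x}$-store. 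In the inductive step an external program $N$ is run to convergence; its freshly generated cells cannot reach the old $\VEC{x}$, and its only channel to $\VEC{x}$ is through invocations of $u$. Each such invocation (i) preserves $C_0$ and returns a value disjoint from $\VEC{x}$, by the $\LocalInv$ clause applied in the current world, whose precondition $C_0$ is available by the induction hypothesis; and (ii) does not leak $\VEC{x}$ into the observable store, by Proposition \ref{pro:notreachfunc}, which propagates $\VEC{x}\notreach u\,i\,y\,\VEC{w}$ to $\VEC{x}\notreach u\,i\,y\,z\,\VEC{w}$ across the application. Statelessness of $C_0$ except $\VEC{x}$ (Definition \ref{def:stateless_except}) ensures that whatever $N$ changes away from $\VEC{x}$ leaves $C_0$ intact, closing the induction.

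Finally I would project the invariant back to conclude. Given $\MMM^*\models C$ with $C$ antimonotone, removing the hiding transports $C$ to the exposed model, and since $\VEC{g}\notin\FV{C}$, Lemma \ref{lemma:29410}(\ref{lemma:29410:6}) yields $\NNN^*\models C$; with $\NNN^*\models C_0$ this gives $\NNN^*\models C_0\ANDs C$. Applying the second conjunct of $E_1$ in the world $\NNN^*$ (legitimate since $\NNN\EVOLVES\NNN^*$) gives $\NNN^*\models\eval{C_0\ANDs C}{u}{y}{z}{C'}@\VEC{w}\VEC{x}$. Re-hiding $\VEC{x}$ contracts the effect set from $\VEC{w}\VEC{x}$ to $\VEC{w}$ (updates to hidden locations being unobservable), monotonicity of $C'$ carries $C'$ back across the hiding binder, and $\VEC{x}\notreach z$ guarantees the result does not expose the hidden cells; hence $\MMM^*\models\eval{C}{u}{y}{z}{C'}@\VEC{w}$. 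As $\MMM^*$ and $y$ were arbitrary, $\MMM\models\allworlds\forall y.\eval{C}{u}{y}{z}{C'}@\VEC{w}=E_2$, as required. The main obstacle is exactly the perpetuity lemma of the third paragraph: proving that no environment interaction can corrupt the hidden invariant $C_0$. This is where all the side conditions are consumed — statelessness except $\VEC{x}$, the reachability constraints $\VEC{x}\notreach iy$ and $\VEC{g}\reachable^\bullet\VEC{x}$, and the return-value and effect-set disjointness from $\LocalInv$ and Proposition \ref{pro:notreachfunc} — and the delicate verification is that $u$-invocations during evolution never render $\VEC{x}$ reachable from the observable store, so that later external steps remain unable to disturb $C_0$.
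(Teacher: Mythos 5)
Your proposal is correct and follows essentially the same route as the paper's own proof: expose the hidden $\VEC{x}$ (and $\VEC{g}$), establish that $C_0$ persists under arbitrary environment evolution by the case analysis ($u$-invocations preserve it via $\LocalInv{u}{C_0}{\VEC{x}}$, all other operations cannot touch $\VEC{x}$ by Proposition \ref{pro:notreachfunc} and statelessness of $C'_0$ except $\VEC{x}$), then transport $C$ inward by anti-monotonicity, apply the second conjunct of $E_1$, and re-hide using monotonicity of $C'$. The only difference is organizational — you phrase the invariance as an explicit induction on the generation of $\EVOLVES$ (your ``perpetuity lemma''), whereas the paper fixes a single representative evolution step $\MMM[f:N]\converges\MMM'$ and performs the same case analysis on a finite approximation of $N$ written as a sequence of lets — which does not constitute a genuinely different argument.
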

\begin{proof}
See Appendix \ref{app:proof:AHI}.
\end{proof}

\NI 
{\AIHax} is used 
with the refined consequence rule 
$[\mbox{\emph{Cons-Eval}}]$
(cf.~Figure \ref{figure:rules}) 
to simplify from $E_1$ to $E_2$, eliminating hidings.
Its validity is proved using Proposition \ref{pro:partition}. 
The axiom\footnote{We believe that the monotonicity of $C'$ and anti-monotonicity of $C$
are unnecessary in Proposition \ref{pro:localinv}, though the present proof uses them.} says: 
\begin{quote}
{\em if a function $u$ with  fresh reference $x_i$ is generated, and if
it has a local invariant $C_0$ on the content of $x_i$,  
then we can cancel $C_0$ together with $x_i$.} 
\end{quote}
Note that: 
\begin{enumerate}[$\bullet$]

\item 
  The statelessness of $C_0$ except $\VEC{x}$ 
ensures that satisfaction of $C_0$ is not affected by
state change except at $\VEC{x}$; and  


\item \mfb{The quantification $\exists \VEC{g}.E_1$ of $\VEC{g}$ in (\textsf{AIH})
allows the invariant to contain free variables,  }
extending applicability of
the axiom, for example in the presence of circular references 
as we shall use in \S \ref{sec:example}
for $\mathtt{safeEven}$. $\VEC{g}\reachable^\bullet\VEC{x}$ ensures
that $\VEC{g}$ are contained in the $\VEC{x}$-hidden part of the model.
\end{enumerate}
Coming back to $\mathtt{compHide}$, we take,
for {\AIHax}: 
\begin{enumerate}[(1)]
\item $C'_0$ to be $!x=7$ which is syntactically stateless except $x$;
\item $C_0$ to be $C'_0\AND x\noreach i$; 
\item $\VEC{s}$ and $\VEC{w}$ empty, 
\item both $C$ and $E$ to be $\truth$
(which is anti-monotonic by 
Proposition \ref{pro:mono}, 
and 
\item $C'$ to be $z=(y>7)$ (which is monotonic by 
the same proposition),
\end{enumerate}
thus arriving at the desired assertion.

{\AIHax} eliminates $\HIDEe$ from the post-condition based on local
invariants.  The following axiom also eliminates $\HIDEe x$, this time
solely based on freshness and disjointness of $x$.

\begin{proposition}[$\HIDEe$-elimination]\ 
\label{pro:nuelim}
Let $x\not\in \FV{C}$ and $m, i, \TVX$ be fresh. Then the following is valid:
\begin{equation}\label{nuelimformula}
\forall \TVX, i^\TVXscript.\:
\APP{m}{()}\!=\!u
\ASET{\HIDEe \VEC{x}.([!\VEC{x}]C\AND \VEC{x}\noreach ui^\TVXscript)}
\;\; \ENTAILS\;\; 
\APP{m}{()}\!=\!u
\ASET{C}
\end{equation}
\end{proposition}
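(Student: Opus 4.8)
The plan is to prove the axiom semantically: fix an arbitrary model $\MMM$ satisfying the antecedent $\forall \TVX, i^\TVXscript.\,\ONEEVAL{m}{()}{u}{\HIDEe \VEC{x}.([!\VEC{x}]C\AND \VEC{x}\noreach ui^\TVXscript)}$, and derive $\MMM\models\ONEEVAL{m}{()}{u}{C}$. First I would note that an evaluation formula entails convergence, so (instantiating $\TVX,i$ arbitrarily) $m\bullet()$ terminates; by determinacy of reduction there is a single $\MMM''$, unique up to $\WB$, with $\MMM[u:m()]\converges\MMM''$. Since evaluating $m\bullet()$ never touches the auxiliary entries $\TVX, i$, for every closed type $\alpha$ and every $L\in\mathcal{F}$ we get $\MMM[\TVX:\alpha][i:L][u:m()]\converges\MMM''[\TVX:\alpha][i:L]$, so the antecedent yields $\MMM''[\TVX:\alpha][i:L]\models\HIDEe \VEC{x}.([!\VEC{x}]C\AND \VEC{x}\noreach uL)$ for all such $\alpha,L$.

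The second step aggregates all observable roots into a single datum, so that one instance of the hidden witness can be exhibited as garbage. Writing $w_1,\dots,w_k$ for the free variables of $\MMM$, I would take $\TVX$ to be the closed product of their types and interpret $i$ by the value $\ENCan{w_1,\dots,w_k}$, itself a functional term. For this instance, the semantics of $\HIDEe$ (Definition \ref{def:satisfaction}) supplies fresh labels $\VEC{l}$ and a model $\MMM_0$ with $(\new\VEC{l})\MMM_0\WB\MMM''[\TVX:\alpha][i:\ENCan{\VEC{w}}]$ and $\MMM_0[\VEC{x}:\VEC{l}]\models [!\VEC{x}]C\AND \VEC{x}\noreach u\ENCan{\VEC{w}}$. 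From $[!\VEC{x}]C$ and the content-quantification axiom $[!\VEC{x}]C\ENTAILS C$ we obtain $\MMM_0[\VEC{x}:\VEC{l}]\models C$, and since $\VEC{x}\notin\FV{C}$, Lemma \ref{lemma:29410}(\ref{lemma:29410:6}) lets us drop the naming to get $\MMM_0\models C$. Meanwhile $\VEC{x}\noreach u\ENCan{\VEC{w}}$ with Proposition \ref{pro:notreach}(3)(1) gives $\VEC{l}\noreach u$ and $\VEC{l}\noreach w_j$ for each $j$; that is, $\VEC{l}$ is unreachable from every observable root of $\MMM_0$.

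The crux — and the step I expect to be the main obstacle — is the garbage lemma: a store fragment unreachable from all roots may be bound by $\new$ without changing observable behaviour, so $\MMM_0\WB(\new\VEC{l})\MMM_0$. I would establish this through the partition characterisation of unreachability (Proposition \ref{pro:partition}): the store of $\MMM_0$ splits as $\sigma_1\uplus\sigma_2$ where $\sigma_1$ covers exactly the part reachable from $u$ and the $w_j$ and $\VEC{l}\subseteq\dom{\sigma_2}$ is disjoint from it, whence no context admissible in the definition of $\WB$ can reach $\sigma_2$, and hiding $\VEC{l}$ alters no convergence behaviour. Combining, $\MMM_0\models C$ and $\MMM_0\WB(\new\VEC{l})\MMM_0\WB\MMM''[\TVX:\alpha][i:\ENCan{\VEC{w}}]$; since satisfaction is invariant under $\WB$, $\MMM''[\TVX:\alpha][i:\ENCan{\VEC{w}}]\models C$, and as $i,\TVX\notin\FV{C}$ a final application of Lemma \ref{lemma:29410}(\ref{lemma:29410:6}) removes these entries to give $\MMM''\models C$, hence $\MMM\models\ONEEVAL{m}{()}{u}{C}$. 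The reason the full quantification $\forall\TVX, i^\TVXscript$ must sit \emph{outside} the evaluation formula is exactly to license aggregating all roots into $\ENCan{\VEC{w}}$, which upgrades ``unreachable from $u$'' to ``genuine garbage'' and so validates the elimination of $\HIDEe$.
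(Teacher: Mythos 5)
Your overall strategy is the one the paper itself uses: run $m\,()$, exploit the universally quantified $i^{\TVXscript}$ (instantiated at a tuple aggregating the roots) to conclude that the hidden cells are global garbage, extract $C$ from $[!\VEC{x}]C$, and then discard the garbage cells. Your explicit tuple instantiation even fills in what the paper's proof only asserts (``$l$ is not reachable from anywhere else in the model''). However, the step you yourself flag as the crux is justified in a way that does not go through, and this is a genuine gap rather than a presentational one.

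You establish (in sketch) the \emph{observational} claim $\MMM_0\WB(\new\VEC{l})\MMM_0$ via Proposition \ref{pro:partition}, and then transfer satisfaction of $C$ by ``satisfaction is invariant under $\WB$''. Two problems. First, $\MMM_0$ and $(\new\VEC{l})\MMM_0$ are models of \emph{different types} ($\VEC{l}$ is free in the former and absent from the type of the latter), and Definition \ref{def:modelequivalence} defines $\WB$ only for models typable under a common $\Gamma$; any invariance-of-satisfaction principle in this framework is likewise a same-type principle, so the chain $\MMM_0\models C$, $\MMM_0\WB(\new\VEC{l})\MMM_0$, hence $(\new\VEC{l})\MMM_0\models C$ is not a legal inference. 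Second, and more substantively, indistinguishability by \emph{contexts} does not by itself give indistinguishability by \emph{formulae}: the satisfaction clauses for $\forall x$, $\exists x$, $[!x]$, and $\EVOLVES$ (inside $\allworlds$, $\someworld$) at $\MMM_0$ quantify over terms $L$ with $\FL{L}\subset\FL{\MMM_0}$, which may name the garbage labels $\VEC{l}$ directly, whereas at $(\new\VEC{l})\MMM_0$ they may not; so the two sides of your equivalence evaluate quantifiers over different ranges. Bridging exactly this mismatch is what the paper's Lemma \ref{lem:subs}(2) does --- a deletion lemma proved by induction on $C$ --- and it is that lemma, not an observational-equivalence argument, that the paper's proof invokes at the corresponding point. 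Your proof becomes correct once the $\WB$-plus-invariance step is replaced by (a suitably generalised form of) Lemma \ref{lem:subs}(2-e,f). A secondary, smaller issue: your roots tuple collects only the free \emph{variables} of $\MMM$, but reachability roots and contexts also include the free \emph{locations} of the model; either add them to the tuple (locations are legitimate terms here), or supply the operational argument that a pre-existing location unreachable from the variables cannot come to point at cells created during the evaluation of $m\,()$.
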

\begin{proof}
See Appendix \ref{proof:pro:nuelim}.
\end{proof}


\NI This proposition says that if a hidden (and newly created) location $x$ in the post-state is
disjoint from any asserted data including
the used function itself and those in the pre-state, then we can
safely neglect it (in this sense it is a garbage collection rule
when we are not concerned with newly created variables).
The following axiom stipulates how an invariant can be {\em
transferred} by a function (caller) which uses another function (callee) when
the latter only affects a set of references unreachable
from the former.

\begin{proposition}[invariant by application]\label{pro:transfer}
\label{pro:localinv:transfer}
Assume $C_0$ is stateless except at $\VEC{x}$,
$C_0\ENTAILS \VEC{x}\noreach y$ and
$y\not\in\FV{C_0}$. Then the following is valid. 
\[
(\allworlds \forall y.\ASET{C_0}\APP{f}{y}\!=\!z\ASET{C_0}@\VEC{x}
\ANDl
\allworlds \ASET{C}\APP{g}{f}\!=\!z\ASET{C'})
\ \ENTAILS\
\allworlds \ASET{C\AND C_0\AND \VEC{x}\noreach g}
\APP{g}{f}=z
\ASET{
  C_0
  \AND 
  C'
}
\]
\end{proposition}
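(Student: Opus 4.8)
The plan is to reason semantically, directly over the operational model. Fix a model $\MMM$ validating the antecedent. By the clause for $\allworlds$ it suffices to take an arbitrary $\MMM_0$ with $\MMM\EVOLVES\MMM_0$ and $\MMM_0\models C\AND C_0\AND\VEC{x}\noreach g$, and to exhibit $\MMM_1$ with $\expands{\MMM_0}{z}{\APP{g}{f}}\converges\MMM_1$ and $\MMM_1\models C_0\AND C'$. The conjunct $\allworlds\ASET{C}\APP{g}{f}=z\ASET{C'}$ of the antecedent persists along $\EVOLVES$ (Proposition \ref{pro:necessity}), and $\MMM_0\models C$, so it yields in one stroke both the convergence $\expands{\MMM_0}{z}{\APP{g}{f}}\converges\MMM_1$ and $\MMM_1\models C'$. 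Thus the whole problem reduces to the single claim $\MMM_1\models C_0$: the invariant $C_0$ survives the call $\APP{g}{f}$.

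To prove $\MMM_1\models C_0$ I would analyse the reduction $(\APP{g}{f}\,\xi_0,\ \sigma_0)\to^*(\new\VEC{l})(V,\ \sigma_1)$ underlying the convergence, where $\MMM_0=(\new\VEC{l}_0)(\xi_0,\sigma_0)$. The governing idea is a confinement (frame) argument: the references $\VEC{x}$ form a black box that $g$ can touch only by invoking $f$. Concretely I would isolate, inside the reduction sequence, the maximal sub-segments that constitute invocations of (copies of) $f$, and establish by induction on the length of the reduction the twin invariants: \emph{(I)} outside every $f$-invocation segment the store content at $\VEC{x}$ is unchanged; and \emph{(II)} immediately before and after each complete $f$-invocation segment the current configuration satisfies $C_0$.

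Invariant \emph{(II)} is where the hypotheses on $f$ enter. Whenever $f$ is applied to an actual argument $a$ in a state where $C_0$ holds, the side condition $C_0\ENTAILS\VEC{x}\noreach y$, instantiated at $y:=a$ (legitimate since $y\notin\FV{C_0}$), guarantees $\VEC{x}\noreach a$, so the precondition of $\allworlds\forall y.\ASET{C_0}\APP{f}{y}=z\ASET{C_0}@\VEC{x}$ is met; its located postcondition then restores $C_0$ and, through the effect set $@\VEC{x}$, confines the modification of the store to $\VEC{x}$ up to $\WB$ (Definition \ref{def:satisfaction}(\ref{locatedassertion})). Invariant \emph{(I)} then propagates $C_0$ across the intervening non-$f$ steps: since $C_0$ is stateless except $\VEC{x}$ (Definition \ref{def:stateless_except}) and those steps leave the content at $\VEC{x}$ untouched, $C_0$ is undisturbed. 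Chaining \emph{(I)} and \emph{(II)} from the initial state $\MMM_0\models C_0$ along the whole reduction, and closing with a last appeal to statelessness except $\VEC{x}$ (the $\VEC{x}$-content at $\MMM_1$ being the one produced by the final $f$-call), delivers $\MMM_1\models C_0$.

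The main obstacle is the confinement claim underlying invariant \emph{(I)}: that no assignment $l:=W$ with $l$ denoting an element of $\VEC{x}$ can ever occur outside an $f$-invocation. This has to be extracted from the reachability hypotheses by a separate induction maintaining the stronger invariant that, at every intermediate configuration, the only subterms and the only store cells outside $f$'s own local region that can reach $\VEC{x}$ lie within an active $f$-call. The base case uses $\VEC{x}\noreach g$ together with transitivity of $\noreach$ (Proposition \ref{pro:notreach}(2)) to conclude that neither $g$ nor anything $g$ reaches in $\sigma_0$ reaches $\VEC{x}$, so $f$ is the unique $\VEC{x}$-reaching datum present initially; the inductive step must check that this capability is neither duplicated into the non-$f$ context nor exported through $f$'s returned value — the latter being exactly the point at which disjointness of the returned $z$ from $\VEC{x}$ (as recorded in $\LocalInv{u}{C_0}{\VEC{x}}$) is needed. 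I expect this confinement invariant, rather than the bookkeeping of \emph{(II)}, to be the delicate part of the argument, the dispatch of $C'$ and termination being routine.
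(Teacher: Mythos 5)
Your proposal is correct and follows essentially the same route as the paper's own proof: reduce the claim to preservation of $C_0$ (the convergence and $C'$ coming at once from the second conjunct via persistence of $\allworlds$), then induct over the $f$-invocations occurring inside the reduction of $\APP{g}{f}$, using $\VEC{x}\noreach g$ to confine every access to $\VEC{x}$ within an $f$-call (where the located specification restores $C_0$) and statelessness of $C_0$ except $\VEC{x}$ to carry $C_0$ across all non-$f$ computation steps. Even the delicate point you flag --- that the induction needs $f$'s returned value to be $\VEC{x}$-unreachable, something the displayed hypothesis does not literally supply --- is present in the paper in exactly the same form: its proof silently strengthens the first conjunct to $\LocalInv{f}{C_0}{\VEC{x}}$, whose postcondition contains $\VEC{x}\noreach z$.
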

\begin{proof}
See Appendix \ref{proof:pro:transfer}. 
\end{proof}

\NI The axiom says that 
the result of applying a function $g$ disjoint from each local reference
$x_i$ in $\VEC{x}$, 
to the argument function $f$ which satisfies a local invariant 
exclusively at $\VEC{x}$, again
preserves that local invariant.  

Proposition \ref{pro:transfer} may be considered as a higher-order
version of Proposition \ref{pro:notreachfunc} and in fact is closely 
related in that both depend on localised effects of a function
at references.



\section{Reasoning Examples}
\label{sec:example}
\label{sec:examples:1}
\label{sec:example:1}

\NI This section demonstrates the usage of the proposed logic through
concrete reasoning examples. 

\subsection{New Reference Declaration}
\label{subsec:newvar}
We first show a useful derived rule given by the combination
of ``let'' and new reference generation. 
\[
        [\mbox{\em LetRef}]
                \,
        \infer
        {
                \ASSERT{C}{M}{m}{C_0}
                \quad 
                \ASSERT{C_0\MSUBS{!x}{m} \AND 
		  \NOTREACH{x}{\VEC{e}}}{N}{u}{C'}
                \quad 
                x \notin \PFN{\VEC{e}}
        }
        {
                \ASSERT{C}{\LET{x}{\REFPROG{M}}{N}}{u}{\nu x.C'}
        }
\]
where $C'$ is thin w.r.t. $m$.  
Above $\PFN{e}$ denotes the set of
\emph{free plain names} of $e$ which are reference names in $e$ 
that do not occur dereferenced, given in Definition \ref{def:plain}.
The meaning of  $\NOTREACH{x}{\VEC{e}}$ was given  in Notation
\ref{con:assertions} 
in \S~\ref{sub:ex:local}.
The rule reads:
\begin{quote}
{\em Assume {\rm (1)} executing $M$ with precondition $C$ leads to $C_0$,
with the resulting value named $m$; 
and {\rm (2)} running $N$ from 
$C_0$ with $m$ as the content of $x$ together with the assumption $x$ is unreachable from 
each $e_i$, leads to $C'$ with the resulting value named $u$.
Then running $\LET{x}{\REFPROG{M}}{N}$  from $C$
leads to $C'$ whose $x$ is fresh and hidden.} 
\end{quote}
The side condition $x\not\in\PFN{e_i}$ is essential for
consistency (e.g. without it, we could assume $x\noreach x$, 
i.e. $\falsity$); and $\new x.C'$ cannot be strengthened to $x\noreach
i \ANDl C'$ since
$N$ may store $x$ in an existing reference.
The use of general $\tilde{e}$ is also essential since the
we can start from total
disjointness (separation) and reach possibly partial disjointness
in the conclusion. For this purpose we need to have explicit
$x\noreach \tilde{e}$ initially, which may possibly be weakened in the
post-condition $C$ through the actions in $N$.

The rule directly gives a proof rule for
new reference declaration \cite{PittsAM:realvo,REYNOLDS,MeyerAR:towfasflv}, 
$\LETNEW{x}{M}{N}$, which has the same operational behaviour as 
$\LET{x}{\REFPROG{M}}{N}$. 

We can derive  $[\text{\em LetRef}]$ as follows. Below $i$ is fresh. 
{
\begin{DERIVATION}
     \LINE
     {1.\ }
     { 
     \ASSERT{C}{M}{m}{C_0} 
     }
     {
     \quad (premise)
     }
     \LINE
     {2.\ }
     { 
     \ASSERT{C_0\SUBST{!x}{m} \AND \NOTREACH{x}{\VEC{e}}}{N}{u}{C'}
     \quad\text{with}\quad x \notin \PFN{\VEC{e}}
     }
     {
     \quad (premise)
     }
     \LINE
     {3.\ }
     { 
                 \ASSERT
                 {C}
                 {\REFPROG{M}}
                 {x}
                 {
                 \nu {y}.(C_0\SUBST{!x}{m}\AND \NOTREACH{x}{i} \AND x = y)
                 }
     }
     {
     \quad (1,Ref)
     }
     \LINE
     {4.\ }
     { 
              \ASSERT
              {C}
              {\REFPROG{M}}
              {x}
              {
                 \nu {y}.(C_0\SUBST{!x}{m}\AND \NOTREACH{x}{\VEC{e}} \AND x = y)
              }
     }
     {
     (Subs $n$-times)
     }
     \LINE
     {5.\ }
     { 
                 \ASSERT
                 {C_0\SUBST{!x}{m}\AND \NOTREACH{x}{\VEC{e}} \AND x = y}
                 {N}
                 {u}
                 {
                 C'\AND x = y
                 }
     }
     {
     (2, Invariance)
     }
     \LINE
     {6.\ }
     { 
                \ASSERT
                 {C}
                 {\LET{x}{\REFPROG{M}}{N}}
                 {u}
                 {
                 \nu y.(C'\AND x = y)
                 }
     }
     {
     (4,5,LetOpen)
     }
     \LASTLINE
     {7.\ }
     { 
                \ASSERT
                 {C}
                 {\LET{x}{\REFPROG{M}}{N}}
                 {u}
                 {
                 \nu x.C'
                 }
     }
     {
     (Conseq)
     }
\end{DERIVATION}}
[{\em LetOpen}] is the rule for let to open the
scope: 
\[ 
\mbox{[{\em LetOpen}]}\ 
        \infer 
        {
                \ASSERT{C}{M}{x}{\nu \VEC{y}.C_0}@\VEC{e}_1\quad 
                \ASSERT{C_0}{N}{u}{C'}@\VEC{e}_2
        }
        {
                \ASSERT{C}{\LET{x}{M}{N}}{u}{\nu \VEC{y}.C'}@\VEC{e}_1\VEC{e}_2
        } 
\]
where $C'$ is thin w.r.t. $x$. 
[{\em LetOpen}] and [{\em Subs}] (both rules being for located judgements)
are found in Figure \ref{figure:rules:derived} in Appendix \ref{app:rules}, and
their soundness is proved in Appendix \ref{sub:letopen_subs}.

\subsection{Shared Stored Function}
\label{ex:incshared}
We present a
simple example of hiding-quantifiers and unreachability using
$\IncShared$ in (\ref{ex:IncPrgshared}) from
\S~\ref{sec:introduction}. 

\begin{equation*}\label{ex:IncPrgsharedAgain}
\IncShared
\;\DEFEQ\;
a\!:=\!\IncPrg;
b\!:=\!!a;
c_1\!:=\!(!a)();
c_2\!:=\!(!b)();
(!c_1+!c_2)
\end{equation*}
with $\IncPrg \DEFEQ \mathtt{let} \ x =\refPrg{0} \ \mathtt{in} \
\lambda ().(x:=!x+1;\,!x)$. 
Naming it $u$, the assertion
$\HIDEe x.\INCSPECb{u,x,n}$ (defined below) captures the behaviour of $\IncPrg$:
\begin{eqnarray*}
\INCSPEC{x,u} 
&\LOGICEQ &
\allworlds \forall j.\EVAL{!x = j}{u}{()}{j+1}{!x = j+1}@ x.\\ 
\INCSPECb{u,x,n} 
& \LOGICEQ &
!x = n \AND \INCSPEC{x,u}.
\end{eqnarray*}
The following derivation for $\IncShared$ sheds light on how shared
higher-order local state can be transparently reasoned in the present
logic.  For brevity we
work with the implicit global assumption that $a,b,c_1,c_2$ are
pairwise distinct and safely omit an anchor from judgements when
the return value is of unit type.
\begin{DERIVATION}
        \LINE
        {1.\ }
        {
        \ASSERT{\TRUTH}{\IncPrg}{u}{\nu x.\INCSPECb{u,x,0}}
        }
	{}
        \LINE
        {2.\ }
        {
        \SHORTASSERT{\TRUTH}{a := \IncPrg}{\nu x.\INCSPECb{!a,x,0}}{}
        }
        {(1, Assign)}
        \LINE
        {3.\ }
        {
                \SHORTASSERT
                {
                \INCSPECb{!a, x, 0}
                }
                {b := !a}
                {\INCSPECb{!a, x, 0} \AND \INCSPECb{!b, x, 0}}
                {}
        }
        {(Assign)}
        \LINE
        {4.\ }
        {
                \SHORTASSERT
                {
                        \INCSPECb{!a, x, 0}
                }
                {c_1 := (!a)()}
                {\INCSPECb{!a, x, 1} \AND !c_1 = 1}
                {}
        }
        {(Assign)}
        \LINE
        {5.\ }
        {
                \SHORTASSERT
                {
                \INCSPECb{!b, x, 1}
                }
                {c_2 := (!b)()}
                {\INCSPECb{!b, x, 2} \AND !c_2 = 2}
                {}
        }
	{(App etc.)}
        \LINE
        {6.\ }
        {
        \ASSERT{!c_1 = 1 \AND !c_2 = 2}{(!c_1) + (!c_2)}{u}{u= 3}
	}
        {(Deref etc.)} 
        \LINE
        {7.\ }
        {
      \ASSERT{\TRUTH}{\IncShared}{u}
       {\nu x .u = 3}
        }
       {(2--6, LetOpen)}
        \LASTLINE
        {8.\ }
        {
	  \ASSERT{\TRUTH}{\IncShared}{u}
		 {u = 3}
        }
        {(Conseq)}
\end{DERIVATION}
Line 1 is by [{\em LetRef}]. 
Line 8 uses 
Proposition \ref{pro:hiding}(2), 
$\HIDEe x.C\ENTAILS C$.

To shed light on how the difference in sharing
is captured in inferences, 
we 
list the inference for a program which assigns
{\em distinct} copies of $\IncPrg$ to $a$ and $b$,  
\begin{equation*}\label{ex:IncPrgUnshared}
\IncUnshared
\;\DEFEQ\;
 a\!:=\!\IncPrg;
 b\!:=\!\IncPrg;
 c_1\!:=\!(!a)();
 c_2\!:=\!(!b)();
 (!c_1+!c_2)
\end{equation*}
This program assigns to $a$ and $b$ two
separate instances of $\IncPrg$.  This lack of sharing 
between $a$ and $b$ 
in $\IncUnshared$ 
is captured by the following derivation:
 \begin{DERIVATION}
         \LINE
         {1.}
         {
        \ASSERT{\TRUTH}{\IncPrg}{m}{\nu x.\INCSPEC{u,x,0}}
        }{}
        \LINE
         {2.}
         {
        \SHORTASSERT{\TRUTH}{a := \IncPrg}{\nu x.\INCSPEC{!a,x,0}}{}
        }
         {}
        \LINE
         {3.}
         {
         \SHORTASSERT{\INCSPEC{!a,x,0}}{b := \IncPrg}
         {\nu y.\INCSPECb{0,0}}{}
         }
         {}
        \LINE
         {4.}
         {
                 \SHORTASSERT
                 {
                   \INCSPECb{0, 0}
                 }
                 {z_1 := (!a)()}
                 {\INCSPECb{1,0}  \AND !z_1 = 1}
                 {}
         }
         {}
        \LINE
         {5.}
         {
                 \SHORTASSERT
                 {
                         \INCSPECb{1,0}
                 }
                 {z_2 := (!b)()}
                 {\INCSPECb{1,1}  \AND !z_2 = 1}
                 {}
         }
         {}
         \LINE
         {6.}
         {
   \ASSERT{!z_1 = 1 \AND !z_2 = 1}{(!z_1) + (!z_2)}{u}{u = 2}{}
         }{}
        \LINE
         {7.}
         {
         \ASSERT{\TRUTH}{\IncUnshared}{u}{\nu xy. u = 2}
         }{}
         \LASTLINE
         {8.}
         {
         \ASSERT{\TRUTH}{\IncUnshared}{u}{u = 2}
         }{}
 \end{DERIVATION}
Above 
$\INCSPECb{n, m}\LOGICEQ \INCSPEC{!a, x,  n}\AND
\INCSPEC{!b, y, m}\AND x\neq y$.  Note $x\not = y$ is
guaranteed by [{\em LetRef}].  This is in contrast to the 
derivation for $\IncShared$, 
where, in Line 3, $x$ is automatically shared after
``$b:=!a$'' which leads to scope extrusion.

\subsection{Memoised Factorial} 
\label{sub:factorial}
\NI 
Next we treat the memoised factorial 
(\ref{memfact}) (from \cite{PittsAM:operfl}) in the introduction. 
\[
\begin{array}{l}
\mathtt{memFact}
\;\DEFEQ \;
\mathtt{let}
\ a =\refPrg{0}, \ 
\ b =\refPrg{1} 
\ \mathtt{in}\\ 
\qquad\; \; \;
\lambda x.\IFTHENELSE{x=!a}{!b}{(a:=x\,;\,b:=\mathtt{fact}(x)\,;\,!b)}
\end{array}
\]
Above $\mathtt{fact}$ is the standard factorial function. 

Our target assertion specifies the behaviour of a pure factorial.
\[ 
Fact(u)\LOGICEQ
\allworlds \forall x.\APP{u}{x}=y\ASET{y=\fAct{x}}@\emptyset.
\] 
The following inference starts from the $\mathtt{let}$-body of 
$\mathtt{memFact}$, 
which we name $V$. We set:\ 
\[
\begin{array}{lll}
E_{1a} & \LOGICEQ & 
\allworlds \forall xi.\ASET{C_0}\APP{u}{x}\!=\!y\ASET{C_0\AND ab\noreach y}@ab\\
E_{1b} & \LOGICEQ &  
\allworlds \forall xi.\ASET{C_0\AND C}\APP{u}{x}\!=\!y\ASET{C'}@ab
\end{array}
\]
and we set 
$C_0$
to be  
$ab\noreach ix \ANDl !b\!=\!\fAct{(!a)}$, 
$C$ to be $\truth$, and $C'$ to be $y=\fAct{x}$. 
Note that $!b\!=\!\fAct{(!a)}$ is stateless except $ab$ 
by Proposition~\ref{pro:notreach}(5);
and that, by the type of $x$ and $y$ being $\NAT$ 
and Proposition \ref{pro:notreach} 2-(1),
we have $ab\noreach x\LITEQ ab\noreach y\LITEQ \truth$.

We can now reason:
\begin{DERIVATION}
        \LINE
        {1.}
        {
        \ASSERT{\truth}{0}{a}{a=0}@\emptyset
        }
        {
        (Const)  
        }
        \LINE
        {2.}
        {
        \ASSERT{a=0}{1}{b}{b=\fAct{a}}@\emptyset  
        }
        {
        (Const)  
        }
        \LINE
        {3.}
        {
          \ASSERT{\TRUTH}{V}{u}
		 {
		   \allworlds \forall xi.
		   \ASET{C_0}
		   \APP{u}{x}\!=\!y
 		   \ASET{C_0\ANDl C'}
		 }@\emptyset
	}
	{(Abs)}
        \LINE
        {4.}
        {
	  \ASSERT
	      {\truth}
	      {V}
	      {u}
	      { 
		E_{1a}
		\AND
		E_{1b}		   
	      }@\emptyset
	}
	{(3, Conseq)}
        \LINE
        {5.}
        {
	  \ASSERT
	      {ab\noreach i \AND !b\!=\!\fAct{(!a)}}
	      {V}
	      {u}
	      {
ab\noreach i
\AND !b\!=\!\fAct{(!a)} 
\AND 
		E_{1a}
		\AND
		E_{1b}		   
	      }@\emptyset
	}
	{(4, Inv-$\noreach$, Inv-Val in \S~\ref{sub:invariant})}
        \LINE
        {6.}
        {
	  \ASSERT{\TRUTH}{\mathtt{memFact}}{u}
		 {
		   \new ab.
		   (
                   C_0 \AND 
		   E_{1a}
		   \AND
		   E_{1b}
		   )
		 }@\emptyset
	}
	{(1,2,4, LetRef in \S~\ref{subsec:newvar})}
        \LINE
        {7.}
        {
	  \APP{m\!}{\!()}\!\!=\!\!u
	  \ASET{\new ab.(C_0 \AND E_{1a}\!\AND E_{1b}\!)}		   
	  \ENTAILS
	  \APP{m\!}{\!()}\!\!=\!\!u
	  \ASET{Fact(u)}
	}
	{($\star$)}
        \LASTLINE
        {8.}
        {
	  \ASSERT{\TRUTH}{\mathtt{memFact}}{u}
		 {Fact(u)}@\emptyset
	}
	{(6,7,ConsEval)}
\end{DERIVATION}
Line 4 uses 
the axiom 
$\ASET{C}\APP{f}{x}\!\!=\!\!y\ASET{C_1\ANDs C_2}@\VEC{w}
\ENTAILS
\AND_{i=1,2}\ASET{C}\APP{f}{x}=y\ASET{C_i}@\VEC{w}$ (in \cite{ALIAS}).
Line 7 uses {\sf (AIH)}. 

\subsection{Information Hiding (2): Stored Circular Procedures}
We next consider 
stored higher order functions which mimic stored procedures. 

We start with a simple one,  
$\mathtt{circFact}$ from \cite{GLOBAL},  
which uses a self-recursive higher-order local store.
\[
\begin{array}{rcl}
\mathtt{circFact}
& \DEFEQ &
x\,:=\,\lambda z.\IFTHENELSE{z=0}{1}{z \times (!x)(z-1)}\\
\mathtt{safeFact}
& \DEFEQ  &
\mathtt{let} \ x = \refPrg{\lambda y.y} 
\ \mathtt{in} \ (\mathtt{circFact};!x)
\end{array}
\]
In \cite{GLOBAL}, we have derived the following judgement.
\begin{eqnarray}
\label{der:circfact}
\ASET{\truth}
\mathtt{circFact}:_u\,
\ASET
{
CircFact(u, x)
}@x
\end{eqnarray}
where 
$$
CircFact(u, x) 
\;\LOGICEQ\;
\allworlds 
\forall n.\ASET{!x=u}\APP{!x}{n}=z\ASET{z=\factorial{n}\AND !x=u}@\emptyset
\ANDl
\, !x=u\ \ 
$$  
which says:
\begin{quote}
{\em After executing
the program, $x$ stores a procedure which would calculate
a factorial if $x$ stores that behaviour, and that
$x$ does store the behaviour.}  
\end{quote}
We now show
$\mathtt{safeFact}$  
named $u$ satisfies $Fact(u)$.
Below we use:
\[
\begin{array}{rcl}
CF_a &  \LOGICEQ & \allworlds \forall n.\ASET{!x=u}\APP{!x}{n}=z\ASET{!x=u}@\emptyset\\
CF_b & \LOGICEQ & \allworlds \forall n.\ASET{!x=u}\APP{!x}{n}=z\ASET{z=\fAct{n}}@\emptyset
\end{array}
\]
(note that $x\noreach z \LITEQ \truth$ and $x\noreach n \LITEQ \truth$
by Proposition \ref{pro:notreach} (2)-1). 
\begin{DERIVATION}
        \LINE
        {1.}
        {
        \ASET{\truth}
        \lambda y.y:_m 
        \ASET{\truth}@\emptyset
        }{}
        \LINE
        {2.}
        {
        \ASET{\truth}
        \mathtt{circFact}\,;\,!x:_u\,
        \ASET
        {
        CircFact(u, x)
        }@x
        }
        {}
        \LINE
        {3.}
        {
        \ASET{\truth}
        \mathtt{circFact}\,;\,!x:_u\,
        \ASET
        {
        !x=u \ANDl CF_a\ANDl CF_b
        }@x
        }
        {(2, Conseq)}
        \LINE
        {4.}
        {
          \ASET{\NOTREACH{x}{i}}
          \mathtt{circFact}\,;\,!x:_u\,
          \ASET
              {
		\NOTREACH{x}{i}\AND !x=u \AND CF_a\AND CF_b
              }@x
	}
        {(3, Inv-$\noreach$)}
        \LINE
        {5.}
        {
        \ASET{\truth}
        \mathtt{safeFact}:_u
        \ASET{\nu {x}.(C_0 \AND CF_a\AND CF_b)}@\emptyset
        }
        {(4, LetRef)}
        \LINE
        {6.}
        {
	  \APP{m\!}{\!()}\!\!=\!\!u
	  \ASET{\new x.(C_0 \AND CF_{a}\!\AND CF_{b}\!)}		   
	  \ENTAILS
	   \APP{m\!}{\!()}\!\!=\!\!u
	  \ASET{Fact(u)}
	}
	{($\star$)}
        \LASTLINE
        {7.}
        {
        \ASET{\truth}
        \mathtt{safeFact}:_u
        \ASET{Fact(u)}@\emptyset
        }
        {(5, 6, ConsEval)}
\end{DERIVATION}
Line 1 is immediate. Line 2 is (\ref{der:circfact}). 
Line 6, $(\star)$ is by 
{\AIHax}, Proposition \ref{pro:localinv:firstorder}, 
setting 
$C_0 \LOGICEQ x\noreach i \ANDl !x=u$, 
$C\LOGICEQ E\LOGICEQ \truth$ and $C'\LOGICEQ  y=\fAct{x}$. 

\subsection{Mutually Recursive Stored Functions}
Now we investigate the program from (\ref{safeOdd}) in the introduction. 
The reasoning easily extends to programs which use multiple locally
stored, and mutually recursive, procedures.

We first verify 
the following $\mathtt{mutualParity}$ (the $\mathtt{let}$-body). 
\begin{equation}
\begin{array}{rcl}
\mathtt{mutualParity}  \DEFEQ
&
x  :=  \lambda n.\IFTHENELSE{n\!=\!0}{\mathtt{f}}{\mathtt{not}((!y)(n\!-\!1))};\\
&
y  :=  \lambda n.\IFTHENELSE{n\!=\!0}{\mathtt{t}}{\mathtt{not}((!x)(n\!-\!1))}
\end{array}
\end{equation}
Then we have:  
\begin{equation}\label{mutualCheck}
\ASET{\truth}
\mathtt{mutualParity}:_u\
\ASET
{
  \exists gh.IsOddEven(gh,!x!y,xy,n)
}
\end{equation}
where, with
$Even(n) \LITEQ \exists x.(n\!=\!2\times x)$
and  
$Odd(n) \LITEQ Even(n\!+\!1)$:
\[
\begin{array}{lll}
IsOddEven(gh,wu,xy,n)  \LOGICEQ   
(IsOdd(w, gh,n,xy)\;\AND\; IsEven(u,gh,n,xy)
\;\AND\; !x=g \;\AND\; !y=h)\\[1mm]
IsOdd(u,gh,n,xy)  \LOGICEQ  
\allworlds \EVALEFFECT{!x = g\ANDl !y = h}{u}{n}{z}{z=Odd(n)
\;\AND\; !x=g \;\AND\; !y=h }{xy}\\[1mm]
IsEven(u,gh,n,xy)  \LOGICEQ  
\allworlds \EVALEFFECT{!x = g\ANDl !y = h}{u}{n}{z}{z=Even(n)
\;\AND\; !x=g \;\AND\; !y=h
}{xy}
\end{array}
\]
The detailed derivations are given in Appendix \ref{app:mutualfact}. 
Above $IsOdd(u,gh,n,xy)$ says that
\begin{quote}
{\em $!x$  and $!y$ remain unchanged, 
and that $u$ checks if its argument is odd.}
\end{quote}
Similarly for $IsEven(u,gh,n,xy)$. 
Then above $IsOddEven(gh,wu,xy,n)$ says that 
\begin{quote}
{\em $x$ stores a procedure which 
checks if its argument is odd if $y$ stores a procedure which 
does the dual, and $x$ does store the behaviour; and 
dually for $y$}.
\end{quote}
Note that $IsOdd$ and $IsEven$, the effect set is $xy$ since 
$x$ and $y$ are free and assigned to the abstractions in $\mathtt{mutualParity}$. 

Our aim is to derive the judgement 
for $\mathtt{safeEven}$ given below: 
\begin{align}
\mathtt{safeEven} \ & \ \DEFEQ  & 
\mathtt{let} \ x= \refPrg{\lambda n.\mathtt{t}}, 
\ y = \refPrg{\lambda n.\mathtt{t}} 
\ \mathtt{in} \ (\mathtt{mutualParity};!y)
\end{align}

We start from (\ref{mutualCheck})
(the case for
$\mathtt{safeOdd}$ is symmetric).
\begin{eqnarray*}\label{safeMutualEven}
\ASET{\truth}
\mathtt{safeEven} :_u\,
\ASET
{
\forall n.
\allworlds \ONEEVAL{u}{n}{z}{z=Even(n)}@{\emptyset}
}
\end{eqnarray*}

We first identify the local invariant:
\[
C_0\; \LOGICEQ\;\ !x=g \ANDl\ !y=h \ANDl IsEven(h,gh,n,xy) \ANDl
xy\noreach ijn \ANDl
gh\reachable^\circ xy
\] 
Note we have a free variable $h$.
Since $C_0$ only talks about $g$, $h$ and the content of $x$ and $y$,
we know 
$!x=g \ANDl\ !y=h \ANDl IsEven(h,gh,n,xy)$ 
is stateless except $x,y$; and 
$xy\noreach n\LITEQ xy\noreach z \LITEQ \truth$ by Proposition \ref{pro:notreach} (2)-1. 

Let us define: 
\[
\begin{array}{rcl}
ValEven(u) & \LOGICEQ & 
\allworlds \forall n.\EVALEFFECT{\truth}{u}{n}{z}{z=Even(n)}{\emptyset}\\
Even_a 
&\LOGICEQ&
\allworlds \forall n.\ASET{C_0}\APP{u}{n}\!=\!{z}\ASET{C_0}@xy\\
Even_b 
&\LOGICEQ&\allworlds 
\forall n.\ASET{C_0}\APP{u}{n}\!=\!{z}\ASET{z\!=\!Even(n)}@xy
\end{array}
\]
The derivation is given as follows. 
\begin{DERIVATION}
        \LINEs
        {1.}
        {
        \ASET{\truth}
        \lambda n.\mathtt{t}:_m 
        \ASET{\truth}@\emptyset
        }{}
        \LINEs
        {2.}
        {
        \ASET{\truth}
        \mathtt{mutualParity}\,;\,!y:_u\,
       \ASET
        {
        \exists gh.IsOddEven(gh,gu,xy,n)
        }@xy
        }
        {}
        \LINEs
        {3.}
        {
        \ASET{\truth}
        \mathtt{mutualParity}\,;\,!y:_u
          \{\exists gh.(!x=g \AND !y=h \AND IsOdd(g,gh,n,xy) \AND Even_a \ANDl Even_b)\}@xy
        }
        {}
        \LINEs
        {4.}
        {
        \ASET{\NOTREACH{xy}{ij}}
        \mathtt{mutualParity}\,;\,!y:_u
    \ASET
        {
        \exists gh.(C_0 \ANDl 
Even_a \ANDl Even_b)
        }@xy
        }
        {}
        \LINEs
        {5.}
        {
        \ASET{\truth}
        \mathtt{safeEven}:_u
      \ASET{\new xy.\exists gh.
        (C_0 \ANDl  
Even_a \ANDl Even_b)
        }
        @\emptyset
        }
        {}
        \LINEs{6.}
        {
	  \ASET{\truth}
	  \APP{m\!}{\!()}\!\!=\!\!u
	  \ASET{\new xy\exists gh.(C_0\ANDl 
           Even_{a}\!\AND Even_{b}\!)}
	  \ENTAILS
	  \ASET{\truth}
	  \APP{m\!}{\!()}\!\!=\!\!u
	  \ASET{ValEven(u)}
	  \qquad
	  \text{(by {\AIHax})}
	}
	{}
        \LASTLINEs
        {7.}
        {
        \ASET{\truth}
        \mathtt{safeEven}:_u
        \ASET{ValEven(u)}
        @\emptyset
        }
        {}
\end{DERIVATION}
As we can see, the derivation follows the same pattern as that of 
$\mathtt{memoFact}$ and $\mathtt{safeFact}$. 

\subsection{Higher-Order Invariant}
\label{sub:profile}
We move to a program 
(from \cite[p.104]{stark:namhof}) 
whose invariant behaviour depends on another function. 
\martinb{The program instruments a program with 
simple profiling, counting the number of invocations. }
\[
\begin{array}{rcl}
\mathtt{profile}
&\quad\DEFEQ &\quad
\mathtt{let} \ x=\refPrg{0} \ \mathtt{in} \ \lambda y.(x:=!x+1; f y)\\
\end{array}
\]
Since $x$ is never exposed, this program should behave precisely as
$f$.  Thus our aim is to derive:
\begin{equation}
\label{profilejudgement}
\ASET{\allworlds \forall y.\ASET{C}\APP{f}{y}=z\ASET{C'}@\VEC{w}}\ 
\mathtt{profile}:_u\,\ASET{\allworlds\forall y.\ASET{C}\APP{u}{y}=z\ASET{C'}@\VEC{w}}
\end{equation} 
with $x\not\in\fv{C, C'}$
(by the bound name condition) 
and arbitrary anti-monotonic $C$ and monotonic $C'$. 

This judgement says: 
\begin{quote}
{\em if $f$ satisfies the specification
$E\LOGICEQ \allworlds \forall y.\ASET{C}\APP{f}{y}=z\ASET{C'}@\VEC{w}$, 
then $\mathtt{profile}$ satisfies the same specification $E$}.
\end{quote}
To derive
(\ref{profilejudgement}),
we first set 
$C_0$, 
the invariant, to be 
$x\noreach fiy\VEC{w}$. 

As with the previous derivations, we use two subderivations. 

First we derive:
\[
\begin{array}{llllll}
& E   & \LOGICEQ & 
\allworlds \forall y.\ASET{C}\APP{f}{y}=z\ASET{C'}@\VEC{w}\\
\ENTAILS & E_0 & \LOGICEQ & 
\allworlds \forall yi.\ASET{C\ANDl x \noreach fiy\VEC{w}}\APP{f}{y}\!=\!z
   \ASET{C'}@\VEC{w}x
          & \mbox{Axiom {\sf (e8)} in \cite{GLOBAL}}\\
\ENTAILS & E_1  & \LOGICEQ & \allworlds 
\forall yi.\ASET{C\ANDl x \noreach fiy\VEC{w}}\APP{f}{y}=z
       \ASET{C'\ANDl \NOTREACH{x}{zfiy\VEC{w}}}@\VEC{w}x
\quad \quad & \mbox{Axiom {\sf (e8)} in \cite{GLOBAL}}
\end{array}
\]
where Axiom {\sf (e8)} in \cite{GLOBAL} is given as: 
\[ 
(C \ENTAILS C_0 \ANDl \EVAL{C_0}{x}{y}{z}{C_0'} 
\ANDl C_0' \ENTAILS C)
\quad \ENTAILS  \EVAL{C}{x}{y}{z}{C'} 
\]
we use the 
first axiom in Proposition \ref{pro:necessity} (1). 
We also let 
\[
E_2   \LOGICEQ  \allworlds \forall yi.\ASET{\allCon{x}{C}\ANDl C_0}
\APP{f}{y}=z
       \ASET{C'\ANDl C_0}@\VEC{w}x
\]
The inference follows.
\begin{DERIVATION}
        \LINE
        {1.}
        {
        \ASET{\truth}
	x:=!x+1
        \ASET{\truth}@x
        }
	{(Assign)}
        \LINE
        {2.}
        {
        \ASET{[!x]C\AND E\AND x\noreach fiy\VEC{w}}
	\ x:=!x+1\ 
        \ASET{C\AND E\AND x\noreach fiy\VEC{w}}@x
	\qquad	\qquad
	}
	{(Inv-$\noreach$, Conseq)}
        \LINE
        {3.}
        {
        \ASET{C\AND E\AND C_0}
	\ fy:_z\,
          \ASET{C'\AND C_0}@\VEC{w}x
	}
	{(App, Conseq)}
        \LINE
        {4.}
        {
          \ASET{[!x]C\AND E\AND C_0}
	  \, x:=x+1;fy\, :_z\,
          \ASET{C'\AND C_0}@x\VEC{w}
	}
        {(2, 3, Seq)}
        \LINE
        {5.}
        {
        \ASET{E}\
	\lambda y. (x:=x+1;fy):_u
	\,
        \ASET
	    {
	      E_2
	    }
        @ \emptyset 
        }
        {(4, Abs, Inv)}
        \LINE 
        {6.}
        {
        \ASET{E}\
	\lambda y. (x:=x+1;fy):_u
	\,
        \ASET
	    {
	    \LocalInv{u}{C_0}{x}  
	    }
        @ \emptyset 
        }
        {(Similar to 1-5 from $E_2$)}
        \LINE
        {7.}
        {	
        \ASET{E}
	\mathtt{profile}
        \ASET{\new x.(\LocalInv{u}{C_0}{x}\ANDl E_2)}	    
        @ \emptyset 
	}
	{(5, 6, LetRef in \S~\ref{subsec:newvar})}
        \LINE
        {8.}
        {	
	 \APP{m}{()}=u\ASET{\new x.(\LocalInv{u}{C_0}{\VEC{x}}\ANDl E_2)}
	  \ENTAILS
	 \APP{m}{()}=u\ASET{E}
	}
	{($\star$)}
        \LASTLINE
        {9.}
        {
        \ASET{E}
        \mathtt{profile}:_u
        \ASET{E}@\emptyset
        }
        {(7, 8, ConsEval)}
\end{DERIVATION}
Above in Line 2, we note  
$E$ is tame (because of $\allworlds$) 
and equivalent to $\allCon{x}{E}$, hence   
[{\em Inv}] becomes applicable.
Line 8 is inferred by Proposition \ref{pro:localinv:firstorder}. 

\subsection{Nested Local Invariant from [34,27] }
\label{sub:GF}
The next example 
uses a function with local state as an argument to another function.
Let $\DIVPROG\DEFEQ \mu f.\lambda ().(f())$. Below $even(n)$ tests for evenness of $n$. 
\[
\begin{array}{lll}
\mathtt{MeyerSieber}
& \ \DEFEQ \ & 
\mathtt{let}\ x=\, \REFPROG{0}\ 
\mathtt{in}\ 
 \mathtt{let}\ f=\, \lambda ().x:=!x+2\\
& & \quad\quad \quad\quad \mathtt{in}\ 
(g\, f\ ;\ \IFTHENELSE{even(!x)}{()}{\DIVPROG()})
\end{array}
\]
Note
$\DIVPROG()$ immediately diverges. 
Since $x$
is local, and because $g$ will have no way to access $x$ except by
calling $f$, the local invariant that $x$ stores an even number is
maintained. Hence $\mathtt{MeyerSieber}$  satisfies the judgement:
\begin{equation}\label{MSassert}
\ASET{E \AND C}\;\mathtt{MeyerSieber}\;\ASET{C'}
\end{equation}
where, with $x, m\not\in\FV{C, C'}$:
\[ 
E\LOGICEQ\forall f.
(
\allworlds \APP{f}{()}\ASET{\truth}@\emptyset
\ \ENTAILS \ 
\allworlds \ASET{C}\APP{g}{f} \ASET{C'}
)
\]
(anchors of type $\UNIT$ are omitted following Notation \ref{con:assertions}(6)). 
The judgement (\ref{MSassert}) says that: 
\begin{quote}
{\em if feeding $g$ with 
a total and effect-free $f$ always 
satisfies $\ASET{C}\APP{g}{f}\ASET{C'}$, then 
$\mathtt{MeyerSieber}$ starting from $C$ also
terminates with the final state $C'$}. 
\end{quote}
Note such $f$ behaves as $\SKIP$.

For the derivation of (\ref{MSassert}), from the axiom for reachability
in Proposition \ref{pro:transfer}, 
we can derive
$E\ENTAILS E'$
where 
\[ E'\LOGICEQ
\forall f.
(
\allworlds \APP{f}{()}\ASET{\truth}@x
\ \ENTAILS \ 
\allworlds \ASET{[!x]C\AND x\noreach g}\APP{g}{f} \ASET{[!x]C'}
)
\]
Further
$\lambda ().x:=!x+2$ named $f$ satisfies
both:
\[
A_1\LOGICEQ \allworlds \ASET{\truth}\APP{f}{()}\ASET{\truth}@x
\quad\mbox{and}
\quad 
A_2\LOGICEQ \allworlds \ASET{Even(!x)}\APP{f}{()}\ASET{Even(!x)}@x
\]
Then from $A_1$ and $E'$, we obtain
$A'_1 \LOGICEQ \allworlds \ASET{[!x]C\AND x\noreach g}\APP{g}{f} \ASET{[!x]C'}$.

Using Proposition~\ref{pro:transfer}, $A'_1$ and $A_2$ we obtain:  
\[
\ASET{Even(!x) \AND [!x]C \AND E \AND  x\noreach gi}
M \ASET{[!x]C' \AND  x\noreach i}
\]
with 
$M\DEFEQ \mathtt{let}\ f=\, \lambda ().x:=!x+2
\ \mathtt{in}\ 
(gf\ ;\ \IFTHENELSE{even(!x)}{()}{\DIVPROG()})$. 

We then apply a variant of [{\em LetRef}] 
(replacing $C_0\MSUBS{!x}{m}$ in the premise of
[{\em LetRef}] in \S \ref{subsec:rules}
with $[!x]C_0 \AND\, !x=m$) to obtain 
\[ 
\ASET{E \AND C}
\ \mathtt{MeyerSieber}\ 
\ASET{\HIDEe x.([!x]C' \AND  x\noreach i)}
\]
Finally by Proposition~\ref{pro:nuelim} (noting the returned value has a
base type, cf.~Proposition \ref{pro:notreach} 2-(1)),
we reach
$
\ASET{E \AND C}
\ \mathtt{MeyerSieber}\ 
\ASET{C'}
$.
The full derivation is given in 
Appendix \ref{app:MS}.

\subsection{Information Hiding (5): Object} 
\label{sub:object}
As  final example, we treat information hiding for a
program with state, a small object encoded in imperative higher-order
functions, taken from \cite{KoutovasWand06}
(cf.~\cite{ComparingObjectEncodings,PierceTurner:Object,PierceBC:typsysfpl}).
The following program generates a simple object each time
it is invoked. 
\begin{eqnarray*}
\mathtt{cellGen}
&
\DEFEQ
&
\lambda z.
\left(\!\!
\!\!\! 
\begin{array}{l}
\ 
\mathtt{let}\ x_{0,1}=\, \REFPROG{z}\ \mathtt{in}\
\mathtt{let}\ y=\, \REFPROG{0}\ \mathtt{in}\!\!\\
\left(
\begin{array}{l}
\lambda ().\IFTHENELSE{even(!y)}{!x_0}{!x_1},\\
\lambda w.(y:=!y+1\ ;\ x_{0, 1}:=w)
\end{array}
\right)\!\!
\end{array}
\right)\!\!\!
\end{eqnarray*}
The object has a getter and a setter method. Instead of having one local
variable, it uses two with the same content, of which one is read at
each odd-turn of the ``read'' requests, another at each even-turn.
When writing, it writes the same value to both. Since having two
variables in this way does not differ from having only one
observationally, we expect the following judgement to hold
$\mathtt{cellGen}$:
\begin{equation}\label{cellgen}
\ASET{\truth}\ \mathtt{cellGen}:_u\,\ASET{CellGen(u)}
\end{equation}
where we set:
\begin{eqnarray*}
CellGen(u) 
& \LOGICEQ&
\allworlds \forall zi.\APP{u}{z}=o\ASET{\new x.(Cell(o, x)\AND !x=z \ANDl 
o\noreach i \ANDl x=o)}@\emptyset\\
Cell(o, x)
& \LOGICEQ&
\allworlds \forall v.\ASET{!x=v}\APP{\pi_1(o)}{()}=z\ASET{z=v \ANDl !x=v}@\emptyset\;\AND
\allworlds \forall w.\APP{\pi_2(o)}{w}\ASET{!x=w}@x
\end{eqnarray*}
$Cell(o, x)$ says that 
$\pi_1(o)$, the getter of $o$, 
returns the content of a
local variable $x$; and 
$\pi_2(o)$, the setter of $o$, writes 
the received value to $x$. 
Then $CellGen(u)$ says that, when $u$ 
is invoked with a value, say $z$, 
an object is returned with its initial fresh local state initialised
to $z$.
Note both specifications only mention a single local variable.
A straightforward derivation of
(\ref{cellgen}) 
uses $!x_0=!x_1$ as the
invariant to erase $x_1$: then we $\alpha$-converts $x_0$ to $x$ to
obtain  the required assertion $Cell(o, x)$.
See Appendix \ref{app:object} for full inferences.

\section{Extension, Related Work and Future Topics}
\label{sec:conclusion}
\NI For lack of space, detailed comparisons with existing program
logics and reasoning methods, in particular with Clarke's
impossibility result, Spatial Logic
\cite{CairesL:spalogcI} (which contain a hiding quantifier used in a
concurrency setting),
as well as other logics such as 
LCF, Dynamic logic,
higher-order logic and specification logic 
are left to 
our past papers
\cite{SHORT1,HY04PPDP,GLOBAL,ALIAS}.  Below we focus on directly
related work that treats locality and freshness in higher-order
languages.

\subsection{Three Completeness Results}

\label{subsec:complete}
We discuss  completeness properties of the proposed logic.  A
strong completeness property called {\em descriptive completeness} is
studied in \cite{completeness}. Descriptive completeness means that
characteristic assertions are provable for each program (i.e.~an
assertion characterising a program's behaviour uniquely up to
observational congruence).  We have shown \cite{completeness} that
this property implies two other completeness properties in our base
logic, {\em relative completeness} (which says that provability and
validity of judgements coincide, i.e.~completeness relative to an
oracle which can decide the validity of formulae in the assertion
language) and {\em observational completeness} (which says that
validity precisely characterises the standard contextual equivalence).

For lack of space, we only state the latter, which
we regard as a basic semantic property of the logic.




Write $\cong$ for the standard contextual congruence for programs
\cite{PierceBC:typsysfpl}; further write $M_1\congLogic M_2$ to mean
($\models \ASET{C}M_1:_u\ASET{C'}$ iff 
$\models \ASET{C}M_2:_u\ASET{C'}$). We have:

\begin{theorem}[observational completeness]\ \label{thm:obscon} For
  each $\Gamma;\Delta\proves M_{i}:\alpha$ ($i=1,2$), we have
  $M_1\cong_{\CAL L}M_2$ iff $M_1\cong M_2$.
\end{theorem}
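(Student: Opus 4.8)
The plan is to establish the two implications separately, relying on soundness of the model construction for the forward direction and on the descriptive completeness of \cite{completeness} for the converse.

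For $M_1\cong M_2\Rightarrow M_1\congLogic M_2$, I would show that satisfaction of any triple is preserved. Fix a well-typed judgement $\ASET{C}\,\CD:_u\ASET{C'}$. By definition $\models\ASET{C}M_i:_u\ASET{C'}$ says that for every model $\MMM$ with $\MMM\models C$ we have $\expands{\MMM}{u}{M_i}\converges\MMM_i'$ with $\MMM_i'\models C'$. The key observation is that, since $\cong$ is a congruence, running $M_1$ and $M_2$ under the closing substitution $\xi$ and store $\sigma$ of $\MMM=(\new\VEC{l})(\xi,\sigma)$ gives $\expands{\MMM}{u}{M_1}\converges\MMM_1'$ iff $\expands{\MMM}{u}{M_2}\converges\MMM_2'$, and moreover $\MMM_1'\WB\MMM_2'$. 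Because the semantics of formulae is defined on $\WB$-classes of models (Definition \ref{def:modelequivalence} and Definition \ref{def:satisfaction}), satisfaction is invariant under $\WB$, so $\MMM_1'\models C'$ iff $\MMM_2'\models C'$. Hence the two validities coincide, i.e.\ $M_1\congLogic M_2$. This direction is essentially the soundness of the model construction (cf.\ Theorem \ref{thm:sound}).

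For the converse $M_1\congLogic M_2\Rightarrow M_1\cong M_2$, I would invoke descriptive completeness from \cite{completeness}: for each typed program $M$ there is a characteristic assertion pair $(C_M,C_M')$ such that, for every $N$ of the same type, $\models\ASET{C_M}N:_u\ASET{C_M'}$ holds iff $N\cong M$. Taking $N=M_1$ gives $\models\ASET{C_{M_1}}M_1:_u\ASET{C_{M_1}'}$. The hypothesis $M_1\congLogic M_2$ then transports this to $\models\ASET{C_{M_1}}M_2:_u\ASET{C_{M_1}'}$, and applying the characteristic property once more, now with $N=M_2$, yields $M_2\cong M_1$, as required.

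The main obstacle is concentrated entirely in this converse direction, and more precisely in the existence of characteristic assertions. Intuitively one must internalise an arbitrary separating context (witnessing $M_1\not\cong M_2$) as a pair of formulae, which for higher-order functions with local state requires the full machinery of total characteristic formulae developed in \cite{completeness}, handling in particular hidden references via the $\new$-quantifiers and reachability. Once that result is available, observational completeness is the short corollary sketched above; the only additional care needed is to check that the characteristic property is stated for exactly the same class of typed configurations $\Gamma;\Delta\proves M_i:\alpha$ appearing in the statement, and that the $\WB$-invariance of satisfaction used in the forward direction holds uniformly across all assertion forms, which follows by a routine induction on $C$.
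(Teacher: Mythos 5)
Your proposal is correct and follows essentially the same route as the paper: the paper defers the proof to its companion work \cite{completeness,descriptivecompleteness}, where observational completeness is derived exactly as you describe — the easy direction from the fact that the model semantics is taken up to $\WB$ (so contextually congruent programs satisfy the same judgements), and the hard direction by applying descriptive completeness (characteristic assertion pairs) to $M_1$ and transporting along $\cong_{\CAL L}$. Your identification of the characteristic-formula construction as the locus of all the real work matches the paper's own framing.
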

\NI The proofs of descriptive, observational and relative completeness
follow \cite{completeness} and are detailed in
\cite{descriptivecompleteness}.


\subsection{Local Variables in Hoare Logic}
To our knowledge, Hoare and Wirth \cite{PASCAL} are the first to
present a rule for local variable declaration. In
our notation, their rule is written as follows.
\[
        [\mbox{\em Hoare-Wirth}]
                \,
        \infer
        {
                \ASET{C\AND x\neq\VEC{y}}\, P\, \ASET{C'}
		\;\
		x\!\not\in\!\fv{C'}\!\cup\!\ASET{\VEC{y}}
        }
        {
                \ASET{C\MSUBS{e}{!x}}\ \LETNEW{x}{e}{P}\ \ASET{C'}
        }
\]
Because this rule assumes references are never exported  beyond
their original scope, there is no need to have $x$ in $C'$. Since
aliasing is not permitted in \cite{PASCAL} either, we can also
dispense with $x\neq\VEC{y}$ in the premise.  $[\mbox{\em LetRef}]$ in
\S~\ref{ex:incshared} differs from [\mbox{\em Hoare-Wirth}] in that the former
can treat aliased references, higher-order procedures and new
references generation extruded beyond their original scope.
[\mbox{\em Hoare-Wirth}] is derivable from $[\mbox{\em LetRef}]$,
$[\mbox{\em Assign}]$ and $\HIDEe$-elimination in Prop. \ref{pro:nuelim}.

Among the studies of verification methods for ML-like languages
\cite{MilnerR:defostaML,CAML}, {\em Extended ML}
\cite{Sannella-Tarlecki85b} is a formal development framework for
Standard ML. {A specification is given by combining module
signatures and algebraic axioms.}
Correctness of an implementation w.r.t.~a
specification is verified by incremental syntactic
transformations. {\em Larch/ML} \cite{Wing-Rollins-Zaremski92} is a
design proposal of a Larch-based interface language for ML.
Integration of typing and interface specification is the main focus of
the proposal in \cite{Wing-Rollins-Zaremski92}. These two works do not
(aim to) offer a program logic with compositional proof rules; nor do
either of these works treat specifications for functions with
dynamically generated references.

\subsection{Related Work and Future Topics}

\paragraph*{\bf Reasoning Principles for Functions with Local State.}
There is a long tradition of studying equivalences over higher-order
programs with local state.  Meyer and Sieber \cite{MeyerAR:towfasflv}
present examples and reasoning principles based on denotational
semantics. Mason, Talcott and others \cite{MT92a,MT92b,HMST} 
investigate equational
axioms for an untyped version of the language treated in the present
paper, including local invariance.  Pitts and Stark
\cite{stark:namhof,PittsAM:realvo,PittsAM:operfl} present powerful
operational reasoning principles for the same ML-fragment considered
here, including reasoning principle for local invariance at
higher-order types \cite{PittsAM:operfl}. 
Our axioms for information hiding in \S~\ref{sec:axioms}, which
capture a basic pattern of programming with local state, are
closely related with these reasoning principles.  Our logic
differs in that its aim is to offer a method for describing and
validating properties of programs beyond program equivalence.  
Equational and logical approaches are complimentary:
Theorem \ref{thm:obscon} 
offers a basis for integration.  For example, we may consider
deriving a property of the optimised version $M'$ of $M$: if we can
easily verify $\ASET{C}M:_u\ASET{C'}$ and if we know $M\cong M'$, we
can conclude $\ASET{C}M':_u\ASET{C'}$, which is useful if $M$ is
better structured than $M'$.  

\paragraph*{\bf Separation Logic.}
The approach by Reynolds et al.~\cite{REYNOLDS} 
represents fresh data generation by relative spatial disjointness from
the original datum, using a sub-structural separating conjunction. This method
captures a significant part of  program properties. 
The proposed logic represents freshness as temporal disjointness through
generic (un)reachability from arbitrary data in the initial
state.  
The presented approach enables uniform treatment of known
data types in verification, including product, sum, reference, 
closure, etc., through the use of anchors,
which matches the observational semantics precisely:
we have examined this point through
several examples, including
objects from \cite{KoutovasWand06}, 
circular lists from \cite{LahiriQadeer06}, 
and tree-, dag- and graph-copy from \cite{SPACE04}.  
These results will be reported in future
expositions.
Reynolds \cite{REYNOLDS}
criticises the use of reachability for describing data structures,
taking  in-place reversal of a linear list as an example. 
Following  \S~\ref{sec:examples:1},
tractable reasoning is
possible for such examples using reachability combined with
$\mbox{[{\em Inv}]}$ and located assertions, see \cite{localexample}. 


Birkedal et al.~\cite{Birkedal2005a} present a ``separation logic
typing'' for a variant of Idealised Algol where types are constructed
from formulae of disjunction-free separation logic.  The typing system
uses  subtyping calculated via categorical semantics, 
the focus of  their study.  The work \cite{Birkedal2005b} extends 
 separation logic with higher-order predicates 
(higher-order frame rule), and demonstrates
how the extension helps modular reasoning about priority queues. Both
works consider neither exportable fresh reference generation nor
higher-order/stored procedures in full generality, so
it would be difficult to represent assertions and validate the
examples  in \S~\ref{sec:examples:1}. 
Examining the  
use of higher-order predicate abstraction in the present logic
 is an interesting future topic.

\paragraph*{\bf Other Hoare Logics.}
Names have been used in Hoare logic since early work by Kowaltowski
\cite{KowaltowskT:axiatseagj}, and are found in the work by von Oheimb
\cite{OheimbD:hoalfjiih}, Leavens and Baker \cite{LEAVENS:BAKER} and
Abadi and Leino \cite{ABADILEINO}, for treating parameter passing and
return values. These works do not treat higher-order procedures and
data types, which are uniformly captured in the present logic along
with parameters and return values through the use of
names. This generality comes from the fact that a large class
of program behaviour is faithfully represented as name passing
processes which interact at names: our assertion language offers a
concise way to describe such interactive behaviour in a logical
framework.

Nanevski et al.~\cite{ALEKS,ALEKS07} study Hoare Type Theory (HTT) which
combines dependent types and Hoare triples with anchors based on
a monadic understanding of computation.  HTT aims to provide an
effective general framework 
which unifies
standard static checking techniques with logical verification. Their
system emphasises the clean separation between static
validation and assertions. In their later work \cite{ALEKS07},
the integration of programs and specifications
in HTT is further pursued by introducing local state. 
Because of their basis in type theory, 
one interesting aspect is that their ``Hoare Triple''
of the form 
``$\ASET{P} x:A \ASET{Q}$''
is in fact a {\em type} 
and that $A$ can contain an arbitrary complex specification. 
Note that the use of type theory does prohibit potentially useful
assertions about circular data structures and references
(this is called a ``smallness'' condition).
The use of monad 
in their logic poses a question whether if we equip the underlying
programming language with monad what reasoning principles we may
obtain as a refinement of the present program logic.

Reus and Streicher
\cite{ReusS05} present a Hoare logic for a simple language
with higher-order stored procedures, extended in
\cite{ReusSchwinghammer}, 
with primitives for the
dynamic allocation and de-allocation of references.
Soundness is proved
with denotational methods, but completeness
is not proved. Their assertions contain quoted
programs, which is necessary to handle recursion via stored functions.  
Their language does not allow procedure parameters and 
general reference creation. 

No work mentioned in this section studies local invariance in
the context of program logics.


\paragraph*{\bf Dynamic and Evaluation Logics.}
\label{par:evaluation} 
Dynamic Logic \cite{DL}, introduced by Pratt \cite{PRATT} and studied
by Harel and others \cite{HAREL}, uses programs and predicates on them
as part of formulae, facilitating detailed specification of various
programs properties such as (non-)termination, or more intensional
features. As far as we know, higher-order procedures and local state
have not been treated in Dynamic Logic, even though we believe part of
the proposed method to treat higher-order functions would work
consistently in their framework.

Evaluation Logic, introduced by Pitts \cite{pitts90evaluation} and
studied by Moggi \cite{Moggi94,Moggi95}, is a typed logic
for higher-order programs based on the metalanguage for computational
monads which permits statements about the evaluation of programs to
values using evaluation modalities. Recently Mossa\-kowski et. al
\cite{MSG08} studied a generic framework for reasoning about purity
\cite{Naumann05} and effects based on a monad-based dynamic logic
which is similar to Evaluation Logic. Evaluation logic is closely
related to the present logic in that it is based on the decomposition
of semantic points into values and computation and that it captures
applications as part of the logic even though the approach of 
Evaluation Logic is based on denotations. Evaluation Logic has
uniformity in that it does not use separate judgements such as Hoare
triples. Evaluation Logic also includes expressions involving
applications as part of terms. Thus its assertion language already
includes  judgements for  programs.

The logic studied in the present paper distinguishes formulae for
evaluation in the logical language (evaluation formulae) from
judgements for  programs (pre/post conditions together with an anchor).
This is motivated by our wish to have the assertion language separate (independent) from
programs, which we believe to fit such engineering purposes as
design-by-contract (where one wishes to have abstract description of
behaviour {\em before} we construct programs). This aspect of the
present logic is closely related with its compositionality: we wish to
build assertions for a program from those for its subprograms, and if
one of its subprograms, say $M$, allows the same assertion as another
program, say $M'$, then we can {\em replace} $M$ by $M'$ and still
obtain the same assertion for the whole program. 
Separating the assertion language from programs is also vital
for verification of multi-language programs.
 We believe that it
is a meaningful topic to explore a uniform treatment of both
assertions for evaluations and judgements for programs, while keeping
the key features of the present logic.

\paragraph*{\bf Meta-Logical Study on Freshness.}
Freshness of names has recently been studied from the viewpoint of
formalising binding relations in programming languages and
computational calculi. Pitts and Gabbay
\cite{GabbayM:newapprasib,PittsAM:nomlfo} extend first-order logic 
with constructs to reason about freshness of names based on
the theory of permutations.  The key syntactic additions are the
(inter-definable) ``fresh'' quantifier $\FRESHQUANT$ and the freshness
predicate $\#$, mediated by a swapping (finite permutation)
predicate. Miller and Tiu \cite{MillerD:prothefgj} are motivated by the
significance of generic (or eigen-)~variables and quantifiers at the
level of both formulae and sequents, and split universal
quantification in two, introduce a self-dual freshness quantifier
$\nabla$ and develop the corresponding sequent calculus of Generic
Judgements. While these logics are not program logics, their logical
machinery may well be usable in the present context. As noted in
Proposition \ref{pro:decomposition}, reasoning about $\reach$ or
$\noreach$ is tantamount to reasoning about $\directreach$, which
denotes the support (the semantic notion of freely occurring
locations) of a datum/program.  A characterisation of support by
the swapping operation may lead to deeper understanding of
reachability axiomatisations.



\paragraph{\bf Acknowledgement}
We deeply thank the anonymous reviewers for valuable comments 
and suggestions. 
We thank Andrew Pitts for his comments on an early version of 
this paper. The example of the mutual recursion in \S~\ref{sec:example:1}
was given by Bernhard Reus. We thank him for his e-mail discussions 
on this example. 
This work is partially supported by EPSRC
GR/T04236, GR/S55545, GR/S55538, GR/T04724, GR/T03208, GR/T03258,  
and IST-2005-015905 MOBIUS.

\bibliographystyle{plain}
\bibliography{YHB-final}

\appendix
\section{Typing Rules}
\label{app:typing}
The typing rules are standard \cite{PierceBC:typsysfpl}, and listed
in Figure \ref{figure:typing:rules} for reference (we list only two first-order
operations).  
We take the equi-isomorphic approach \cite{PierceBC:typsysfpl}
for recursive types. 
In the first rule of Figure
\ref{figure:typing:rules}, $\mathtt{c}^C$ indicates a constant
$\mathtt{c}$ has a base type $C$.

We also list the typing rules for terms and formulae in Figure
\ref{fig:typingrules:formulae}.

\begin{myfigure}\small
\vspace{-3.6mm}
\[
\begin{array}{c}
        [\mbox{\em Var}]\,
        \infer
        {-}
        {\Gamma, x:\alpha\proves x:\alpha}
        \quad
        [\mbox{\em Label}]\,
        \infer
        {-}
        {\Gamma\cdot l:\alpha\proves l:\alpha}
        \quad
        [\mbox{\em Constant}]
                \,
                \infer
       {-}
       {\Gamma\proves \mathtt{c}:C}
       \\[5mm]
       [\mbox{\em Add}] 
       \,
       \infer
       {\Gamma\proves M_{1,2}:\NAT}
       {\Gamma\proves M_1\!+\!M_2:\NAT}
       \quad
       [\mbox{\em Eq}] 
       \,
       \infer
       {\Gamma\proves M_{1,2}:\NAT}
       {\Gamma\proves M_1\!=\!M_2:\BOOL}
       \\[5mm]
        \quad
        [\mbox{\em If}\,]
                \,
                \infer
        {
                \Gamma\proves M:\BOOL
                          \quad
                \Gamma\proves N_i:\alpha_i\ (i=1,2)
        }
        {
          \Gamma\proves 
          \IFTHENELSE{M}{N_1}{N_2}
                :\alpha
        }
        \\[5mm]
        [\mbox{\em Abs}]
                \,
                \infer
        {\Gamma,\ATb{x}{\alpha}\,\proves M:\beta}
        {\Gamma\proves 
        \lambda x^\alpha.M:\alpha\FS\beta}
        \quad
        [\mbox{\em App}]
                \,
                \infer
        {
          \Gamma \proves M:\alpha \FS \beta
          \quad
          \Gamma\proves N:\alpha
        }
        {
          \Gamma \proves MN:\beta
        }
        \\[5mm]
        [\mbox{\em Rec}]
                \;
                \infer
        {
          \Gamma, \ATb{x}{\alpha\FS\beta}\proves \lambda y^\alpha.M : \alpha\FS\beta
        }
        {
          \Gamma  \proves \mu x^{\alpha\FS\beta}.\lambda y^\alpha.M: \alpha\FS\beta
        }
\quad \quad 
        [\mbox{\em Iso}]
                \,
                \infer
        {\Gamma\,\proves M:\alpha\quad \alpha \WB \beta}
        {\Gamma\proves M:\beta}
        \\[5mm]
        [\mbox{\emph{Deref}}]
                \,
                \infer
        {
                \Gamma \proves M : \refType{\alpha}
        }
        {
                \Gamma \proves !M : \alpha
              }
        \quad
        [\mbox{\emph{Assign}}]
        \,
        \infer
        {
          \Gamma \proves M : \refType{\alpha}
          \quad
          \Gamma \proves N : \alpha
        }
        {
          \Gamma \proves M := N : \UNIT
        }
        \\[5mm]
        [\mbox{\em Ref}]
        \ONEPREMISERULE
        {
                \TYPES{\Gamma}{V}{\alpha} 
        }
        {
                \TYPES{\Gamma}{\refPrg{V}}{\refType{\alpha}}
        }
        \quad
        [\mbox{\em New}]
        \TWOPREMISERULE
        {
                \TYPES{\Gamma}{M}{\alpha}
        }
        {
                \TYPES{\Gamma, x : \REF{\alpha}}{N}{\beta}
        }
        {
                \TYPES{\Gamma}{\NEW{x}{M}{N}}{\beta}
        }
        \\[5mm]        
        [\mbox{\em Inj}]
                \,
                \infer
        {\Gamma \proves M:\alpha_i}
        {\Gamma \proves \inj{i}{M}:\alpha_1\!+\! \alpha_2}
        \quad
        [\mbox{\em Case}]
                \,
                \infer
        {\Gamma \proves M : \alpha_1\! +\! \alpha_2 \quad
        \Gamma,\ATb{x_i}{\alpha_i}\,\proves N_i : \beta
        }
        {
              \Gamma \proves 
                \CASE{M}{x_i^{\alpha_i}}{N_i}:\beta
        }
                \\[5mm]
        [\mbox{\em Pair}]
                \,
                \infer
        {\Gamma\proves M_i:\alpha_i\; (i=1,2)}
        {
               \Gamma\proves 
                {\ENCan{M_1, M_2}:\alpha_1\!\times\!\alpha_2}
        }
                \quad 
        [\mbox{\em Proj}]
                \,
                \infer
        {\Gamma\proves M : \alpha_1 \times \alpha_2}
        {
                \Gamma\proves 
                {\pi_i(M):\alpha_i\; (i=1,2)}
        }
                \\[5mm]
\end{array}
\]
\caption{Typing Rules}\label{figure:typing:rules}
\end{myfigure}

\begin{myfigure}
{\small
\[
\begin{array}{c}
\infer
{
  -
}
{
  \Gamma \proves x:\Gamma(x)
}
\quad
\infer
{
  -
}
{
  \Gamma \proves \mathsf{n}:\NAT
}
\quad
\infer
{
  -
}
{
  \Gamma\proves \mathsf{t,f}:\BOOL
}
\quad
\infer
{
  -
}
{
  \Gamma \proves l:\Gamma(l)
}
\quad
\infer
{
  \Gamma \proves e:\BOOL
}
{
  \Gamma \proves \neg e:\BOOL
}
\\[6.4mm]
\infer
{
  \Gamma \proves e_i:\alpha_i
}
{
  \Gamma \proves \ENCan{e_1, \ e_2}:\alpha_1\times\alpha_2
}
\quad
\infer
{
  \Gamma \proves e:\alpha_i
}
{
  \Gamma \proves \LOGICINJ{\alpha_1+\alpha_2}{i}{e}:\alpha_1+\alpha_2
}
\quad
\infer
{
  \Gamma \proves e:\refType{\alpha}
}
{
  \Gamma \proves !e:\alpha
}
\\[6.4mm]
\infer
{
  \Gamma  \proves e_{i}:\alpha_i
}
{
  \Gamma  \proves e_1 = e_2
}
\quad
\infer
{
  \Gamma  \proves {C}
}
{
  \Gamma  \proves \neg{C}
}
\quad
\infer
{
  \Gamma  \proves {C}_{1,2}
}
{
  \Gamma  \proves {C}_1 \star {C}_2
}
\star\in\ASET{\AND, \OR, \supset}
\quad
\infer
{
  \Gamma \cdot \ATb{x}{\alpha}  \proves {C}
}
{
  \Gamma \proves \QQQ x^\alpha.{C}
}\ \QQQ \in \{\forall, \exists\}
\\[6.4mm]
\infer
{
  \Gamma \cdot \ATb{x}{\REF{\alpha}{}}  \proves {C}
}
{
  \Gamma \proves \QQQ x.{C}
}\ \QQQ \in \{\nu, \OL{\nu}\}
\quad 
\ONEPREMISERULE
{
  \Gamma \proves C 
}
{
  \Gamma \proves \QQQ\TVX. C
}
\ \QQQ \in \{\forall, \exists\}
\quad 
\TWOPREMISERULE
{
  \Gamma \proves e : \REF{\alpha}{}
}
{
  \Gamma \proves C
}
{
  \Gamma \proves \allCon{e}{C}
}
\quad 
\TWOPREMISERULE
{
  \Gamma \proves e : \REF{\alpha}{}
}
{
  \Gamma \proves C
}
{
  \Gamma \proves \someCon{e}{C}
}
\\[6.4mm]
\infer
{
  \Gamma 
  \proves e_1 : \alpha \FS \beta
  \quad 
  \Gamma  \proves e_2 : \alpha
  \quad  
  \Gamma \cdot z : {\beta}  \proves C
}
{
  \Gamma \proves \ONEEVAL{e_1}{e_2}{z}{C}
}
\quad 
\ONEPREMISERULE
{
  \Gamma \proves C 
}
{
  \Gamma \proves \allworlds C
}
\quad 
\ONEPREMISERULE
{
  \Gamma \proves C 
}
{
  \Gamma \proves \someworld C
}
\\[6.4mm]
\TWOPREMISERULE
{
  \Gamma \proves e : \alpha
}
{
  \Gamma \proves e' : \REF{\beta}{}
}
{
  \Gamma \proves \REACH{e}{e'}
}
\quad 
\TWOPREMISERULE
{
  \Gamma \proves e : \REF{\alpha}{}
}
{
  \Gamma \proves e' : \beta
}
{
  \Gamma \proves         e\noreach e'
}
\end{array}
\]
}
\caption{Typing rules for terms and formulae}\label{fig:typingrules:formulae}
\end{myfigure}

\vfill\eject
\section{Proof Rules}
\label{app:rules}
\begin{myfigure}
{\small
\vspace{-3.2mm}
\[
\begin{array}{c}
        [\mbox{\em Var}]
        \,
        \infer
        {-}
        {
          \ASSERT{C\MSUBS{x}{u}}{x}{u}{C}@\emptyset
        }
        \quad
        [\mbox{\em Const}] 
        \,
        \infer
        {-}
        {
          \ASSERT{C\MSUBS{\mathsf{c}}{u}}{\mathtt{c}}{u}{C}@\emptyset
        }
                \\[4.8mm]
        [\mbox{\em Add}]
                \,
                \infer
        {
            \ASET{C}\,M_1:_{m_1}\ASET{C_0}@\VEC{e}_1
            \quad
            \ASET{C_0}\,M_2:_{m_2}\ASET{\,C'\MSUBS{m_1+m_2}{u}\,}@\VEC{e}_2
        }
        {
          \ASET{C}\, M_1+M_2\, :_u\,\ASET{C'}@\VEC{e}_1\VEC{e}_2
        }
                \\[4.8mm]
        [\mbox{\em In}_1]
                \,
        \infer
        {
                \ASSERTSURFACE{C}{M}{v}{C'\MSUBS{\LOGICINJ{}{1}{v}}{u}}{\VEC{e}}
        }
        {
                \ASSERTSURFACE{C}{\INJ{1}{M}}{u}{C'}{\VEC{e}}
        }
                \\[4.8mm]
        [\mbox{\emph{Case}}]
                \,
        \infer
        {
          \ASSERTSURFACE{C^{\minus \VEC{x}}}{M}{m}{C_0^{\minus \VEC{x}}}{\VEC{e}_1}
          \quad
          \ASSERTSURFACE{C_0\MSUBS{\LOGICINJ{}{i}{x_i}}{m}}{M_i}{u}{{C'\,}^{\minus \VEC{x}}}{\VEC{e}_2} 
        }
        {
          \ASSERTSURFACE{C}{\CASE{M}{x_i}{M_i}}{u}{C'}{{\VEC{e}_1\VEC{e}_2}}
        }    
        \\[4mm]
        [\mbox{\em Proj}_1]
                \,
        \infer
        {
                \ASSERTSURFACE{C}{M}{m}{C'\MSUBS{\pi_1(m)}{u}}{{\VEC{e}}}
        }
        {
                \ASSERTSURFACE{C}{\pi_1(M)}{u}{C'}{\VEC{e}}
        }    
        \\[4mm]
        [\mbox{\em Pair}]
                \,
        \infer
        {
          \begin{array}{l}
            \ASSERTSURFACE{C}{M_1}{{m_1}}{C_0}{\VEC{e}_1}
            \quad
            \ASSERTSURFACE{C_0}{M_2}{{m_2}}{C'\MSUBS{\ENCan{m_1,m_2}}{u}}{\VEC{e}_2}
          \end{array}
        }
        {
                \ASSERTSURFACE{C}{\ENCan{M_1,M_2}}{u}{C'}{\VEC{e}_1\VEC{e}_2}
        }
       \\[4.8mm]
       [\mbox{\emph{Abs}}]
        \infer
        {
          \ASSERTSURFACE
          {
            C \AND A^{\minus x\VEC{i}}
          }
          {M}
          {m}
          {C'}{\VEC{e}}
        }
        {
          \ASSERTSURFACE{A}{\lambda x.M}{u}{\allworlds \forall
            x\VEC{i}.(\ASET{C}\APP{u}{x}=m\ASET{C'})}
          {\,\emptyset}
        }
\quad 
        [\mbox{\em Rec-Ren}]
                \,
        \infer
        {
          \ASSERTSURFACE{A^{\minus x}}{\lambda y.M}{u}{B}{\VEC{e}}
	}
        {
        \ASSERTSURFACE{A^{\minus x}}{\mu x.\lambda y.M}{u}{B\MSUBS{u}{x}}{\VEC{e}}
        }
        \\[4.8mm]
        [\mbox{\emph{App}}]
        \infer
        {
        \ASSERTSURFACE{C}{\!M}{m}{C_0}{\VEC{e}}\quad 
        \ASSERTSURFACE{C_0\!}{\!N}{n}
        {\ONEEVAL{m}{n}{u}{C'}{\VEC{e}_2}}{\VEC{e}_1}
        }
        {\ASSERTSURFACE{C}{MN}{u}{C'}{\VEC{e}\VEC{e}_1\VEC{e}_2}}\\[4.8mm]
   [\mbox{\emph{If}}]
             \,
            \infer
       {
                 \ASSERT{C}{M}{b}{C_0}@\VEC{e}_1
                 \quad
                 \ASSERT{C_0\MSUBS{\true}{b}}{M_1}{u}{C'}@\VEC{e}_2
                 \quad
                 \ASSERT{C_0\MSUBS{\false}{b}}{M_2}{u}{C'}@\VEC{e}_2
        }
        {
                \ASSERT
                     {C}
                     {\IFTHENELSE{M}{M_1}{M_2}}
                     {u}
                     {C'}@\VEC{e}_1\VEC{e}_2
        }
        \\[4.8mm]
        %
        [\mbox{\emph{Deref}}]
                \,
        \infer
        {
          \ASSERTSURFACE{C}{M}{m}{C'\MSUBS{!m}{u}}{\VEC{e}}
        }
        {
          \ASSERTSURFACE{C}{!M}{u}{C'}{\VEC{e}}
        }
        \\[4mm]
        [\mbox{\emph{Assign}}]
                \,
        \infer
        {
          \ASSERTSURFACE{C}{M}{m}{C_0}{\VEC{e}_1}
          \quad
          \ASSERTSURFACE{C_0}{N}{n}{C'\LSUBS{n}{\,!m}}{\VEC{e}_2}
          \quad
          C_0\; \IMPLIES\; m=e' 
        }
        {
        \ASET{C}\ {M := N}\ \ASET{C'}@{\VEC{e}_1\VEC{e}_2e'}
        }
                \\[4.8mm]
        [\mbox{\em Ref}]
                \,
        \infer
        {
                \ASSERT{C}{M}{m}{C'}@\VEC{e}
        }
        {
                \ASSERT{C}{\refPrg{M}}{u}{
\HIDEe x.(C'\SUBST{!u}{m}\AND u\noreach i^{\TVXscript}\AND u=x)
}@\VEC{e}
        } 
                                  \\[4.8mm]
\end{array}
\]
}

\NI We require $C'$ is thin w.r.t. $m$ in [{\em Case}] and [{\em Deref}], and 
$C'$ is thin with respect to $m,n$ in [{\em App}, {\em Assign}].   
\caption{Derived Compositional Rules for Located Assertions}
\label{fig:rules:compositional:located}
\end{myfigure}

\begin{myfigure}
{\small
\[
\begin{array}{c}
\mbox{[{\em Inv}]}
        \infer
        {
          \ASSERT{C}{M}{m}{C'}@\VEC{w}
        }
        {
          \ASSERT{C\AND \allCon{\VEC{w}}{C_0}}{M}{m}{C'\AND \allCon{\VEC{w}}{C_0}}@\VEC{w}
        }
\\[6mm]
[\mbox{\emph{Inv-Val}}]
        \,
        \infer
        {
          \ASSERT{C}{V}{m}{C'}@\emptyset
        }
        {
          \ASSERT{C\AND C_0}{V}{m}{C'\AND C_0}@\emptyset
        }
\\[6mm]
[\mbox{\emph{Inv-$\noreach$}}]
        \,
        \infer
        {
          \ASSERT{C}{M}{m}{C'}@x \quad \text{no dereference occurs in $\VEC{e}$}
        }
        {
        \ASSERT{C\AND x \noreach \VEC{e}}{M}{m}{C'\AND x \noreach \VEC{e}}@x
        }
\\[6mm]
        [\mbox{\emph{Cons}}]
                \,
        \infer
        {
          C \ENTAILS C_0
          \ \ 
          \ASSERTSURFACE{C_0}{M}{u}{C_0'}{\VEC{e}}
          \ \ 
          C_0' \ENTAILS C'
        }
        {
          \ASSERTSURFACE{C}{M}{u}{C'}{\VEC{e}}
        }
\\[6mm]
        [\mbox{\emph{Cons-Eval}}]
\infer
{
\begin{array}{l}
\ASSERTSURFACE{C_0}{M}{m}{C'_0}{\VEC{e}}
\quad 
\ x\ \text{fresh};
\;\ \text{$\VEC{i}$ auxiliary}\\
\forall \VEC{i}.\EVAL{C_0}{x}{()}{m}{C'_0}
\ENTAILS
\forall \VEC{i}.\EVAL{C}{x}{()}{m}{C'}
\end{array}
}
{\ASSERTSURFACE{C}{M}{m}{C'}{\VEC{e}}}
        \\[6.4mm]
        [\mbox{\emph{$\AND$-$\ENTAILS$}}]
                 \,
        \infer
        {
          \ASSERTSURFACE{C\AND A}{V}{u}{C'}{\emptyset}
        }
        {
          \ASSERTSURFACE{C}{V}{u}{A\ENTAILS C'}{\emptyset}
        }
        \quad
        [\mbox{\emph{$\ENTAILS$-$\AND$}}]
                 \,
        \infer
        {
          \ASSERTSURFACE{C}{M}{u}{A\ENTAILS C'}{\VEC{e}}
        }
        {
          \ASSERTSURFACE{C\AND A}{M}{u}{C'}{\VEC{e}}
        }
        \\[6mm]
        [\mbox{\emph{$\OR$-Pre}}]
                 \,
        \infer
        {
          \ASSERTSURFACE{C_1}{M}{u}{C}{\VEC{e}}
          \quad
          \ASSERTSURFACE{C_2}{M}{u}{C}{\VEC{e}}
        }
        {
          \ASSERTSURFACE{C_1\OR C_2}{M}{u}{C}{\VEC{e}}
        }
        \quad
        [\mbox{\emph{$\AND$-Post}}]
                 \,
        \infer
        {
          \ASSERTSURFACE{C}{M}{u}{C_1}{\VEC{e}}
          \quad
          \ASSERTSURFACE{C}{M}{u}{C_2}{\VEC{e}}
        }
        {
          \ASSERTSURFACE{C}{M}{u}{C_1\AND C_2}{\VEC{e}}
        }
        \\[6mm]
        [\mbox{\emph{Aux}}_{\exists}]   
                \,
        \infer
        {
                \ASSERTSURFACE{C}{M}{u}{{C'\,}^{\minus i}}{\VEC{e}}
        }
        {
                \ASSERTSURFACE{\exists i.C}{M}{u}{C'}{\VEC{e}}
        }
\\[6mm]
        [\mbox{\emph{Aux}}_{\forall}V]   
                \,
        \infer
        {
                \ASSERTSURFACE{C^{\minus i}}{V}{u}{C'}{\VEC{e}}
        }
        {
                \ASSERTSURFACE{C}{V}{u}{\forall i^\alpha.C'}{\VEC{e}}
        }
\quad 
        [\mbox{\emph{Aux}}_{\forall}]   
                \,
        \infer
        {
                \ASSERTSURFACE{C^{\minus i}}{M}{u}{C'}{\VEC{e}}
		\quad
		\text{$\alpha$ is of a base type.}
        }
        {
                \ASSERTSURFACE{C}{M}{u}{\forall i^\alpha.C'}{\VEC{e}}
        }
        \\[6mm]
        [\mbox{\emph{Aux${}_{\text{inst}}$}}]
                 \,
        \infer
        {
          \ASSERTSURFACE{C(i^\alpha)}{M}{u}{C'(i^\alpha)}{\VEC{e}}
          \quad
          \text{$\alpha$ atomic}
        }
        {
        \ASSERTSURFACE{C(\mathsf{c}^\alpha)}{M}{u}{C'(\mathsf{c}^\alpha)}{\VEC{e}}}
        \qquad
        [\mbox{\emph{Aux${}_{\text{abst}}$}}]
                 \,
        \infer
        {
          \forall \mathsf{c}^\alpha.\;
          \ASSERTSURFACE{C(\mathsf{c}^\alpha)}{M}{u}{C'(\mathsf{c}^\alpha)}{\VEC{e}}
        }
        {
          \ASSERTSURFACE{C(i^\alpha)}{M}{u}{C'(i^\alpha)}{\VEC{e}} 
        }
        \\[6mm]
                              %
                                %
        [\mbox{\emph{Weak}}]
               \,
               \infer
        {\ASSERTSURFACE{C}{M}{m}{C'}{\VEC{e}}}
        {
          \ASSERTSURFACE{C}{M}{m}{C'}{\VEC{e}e'}
        }
        \quad
        \LEFTONEPREMISERULENAMED
        {Thinning}
        {
          \ASSERTSURFACE
          {C \AND !e' = i}
          {M}
          {m}
          {C' \AND !e' = i}
          {\VEC{e}e'}
          \quad
          \text{$i$ fresh}
        }
        {
          \ASSERTSURFACE
          {C}
          {M}
          {m}
          {C'}
          {\VEC{e}}
        }
\end{array}
\]
}
\hfill
\caption{Structural Rules for Located Judgements.} 
\label{fig:rules:structural:located}
\end{myfigure}

\begin{myfigure}
{\small
\[
\begin{array}{c}
         [\mbox{\em New}]
                 \,
         \infer
         {
                 \ASSERT{C}{M}{m}{C_0}@\VEC{e}_1
                 \quad 
                 \ASSERT{C_0\SUBST{!x}{m} \AND \NOTREACH{x}{\VEC{e}}}{N}{u}{C'}@\VEC{e}_2x
                 \quad 
                 x \notin \PFN{\VEC{e}} 
         }
         {
                \ASSERT{C}{\LET{x}{\REFPROG{M}}{N}}{u}{\nu x.C'}@\VEC{e}_1\VEC{e}_2
         }
\\[6mm]
         [\mbox{\em Rec}]
                 \,
         \infer
         {
                 \ASSERT{A^{\minus xi}\AND \forall j\lneq
         i.B(j)\MSUBS{x}{u}}{\lambda y.M}{u}{B(i)^{\minus
           x}}@\VEC{e}  
         }
         {
                 \ASSERT{A}{\mu x.\lambda y.M}{u}{\forall i.B(i)}@\VEC{e}
         }
\\[6mm]
        \LEFTTWOPREMISERULENAMED
        {Let}
        {
                \ASSERTSURFACE{C}{M}{x}{C_0}{\VEC{e}}
        }
        {
                \ASSERTSURFACE{C_0}{N}{u}{C'}{\VEC{e}'}
        }
        {
                \ASSERTSURFACE{C}{\LET{x}{M}{N}}{u}{C'}{\VEC{e}\VEC{e}'}
        }
\quad 
\mbox{[{\em LetOpen}]}\ 
        \infer 
        {
                \ASSERT{C}{M}{x}{\nu \VEC{y}.C_0}@\VEC{e}_1\quad 
                \ASSERT{C_0}{N}{u}{C'}@\VEC{e}_2
        }
        {
                \ASSERT{C}{\LET{x}{M}{N}}{u}{\nu \VEC{y}.C'}@\VEC{e}_1\VEC{e}_2
        } 
\\[6mm]
{\mbox{[{\em Simple}]}} 
\infer
{
  -
}
{
  \ASET{C\MSUBS{e}{u}}e:_u\ASET{C}@\VEC{e}
}
\quad 
        [\mbox{\em IfH}]
        \,
        \infer
        {
          \NOANCHORASSERTSURFACE{C \AND e}{M_1}{C'}{\VEC{e}}
          \quad
          \NOANCHORASSERTSURFACE{C \AND \neg e}{M_2}{C'}{\VEC{e}}
        }
        {
          \NOANCHORASSERTSURFACE{C}{\IFTHENELSE{e}{M_1}{M_2}}{C'}{\VEC{e}}
        }
        \\[6mm]
[\text{\it AppS}]
\infer
{
C\, \ENTAILS\,
\ASET{C}\, \APP{e}{(e_1..e_n)}= u\, \ASET{C'}@\VEC{e}'
}
{
\ASET{C}\ e(e_1..e_n):_u\,\ASET{C'}\,@\,\VEC{e}'
}\quad 
\mbox{[{\em Subs}]}\ 
        \infer 
        {
                \ASSERT{C}{M}{u}{C'}@\VEC{e}'\quad 
                u\not\in \PFN{e}
        }
        {
                \ASSERT{C\SUBST{e}{i}}{M}{u}{C'\SUBST{e}{i}}@\VEC{e}'
        } 
        \\[6mm]
        \LEFTTWOPREMISERULENAMED
        {Seq}
        {
                \NOANCHORASSERTSURFACE
                {C}{M}{C_0}{\VEC{e}}
        }
        {
                \NOANCHORASSERTSURFACE
                {C_0}{N}{C'}{\VEC{e}'}

        }
        {
                \NOANCHORASSERTSURFACE
                {C}{M; N}{C'}{\VEC{e}\VEC{e}'}
        }
        \quad
        \LEFTTWOPREMISERULENAMED
        {Seq-Inv}
        {
                \NOANCHORASSERTSURFACE
                {C_1}{M}{C'_1}{\VEC{e_1}}
        }
        {
                \NOANCHORASSERTSURFACE
                {C_2}{N}{C'_2}{\VEC{e_2}}

        }
        {
                \NOANCHORASSERTSURFACE
                {C_1 \ANDls \allCon{\VEC{e_1}}{C_2}}
                {M; N}
                {C'_2  \ANDls   \someCon{\VEC{e_2}}{C'_1}}
                {\VEC{e_1}\VEC{e_2}}
        }
\end{array}
\]
}
$C'$ is thin w.r.t. $m$ in [{\em New} and $x$ in [{\em Let, LetOpen}].
$C'_1$ and $C_2$ are tame in [{\em Seq-Inv}]. 

\caption{Other Located Proof Rules.} 
\label{figure:rules:derived}
\label{fig:rules:derived}
\end{myfigure}

\vfill\eject
\subsection{Proofs of Soundness}
\label{app:soundness}
We prove the soundness theorem. 
We use the following lemma. 

\begin{lemma}[Substitution and Thinning]
\label{lem:subs}\hfill
\begin{enumerate}[\em(1)]
\item $\MMM\models C \AND u = V$ iff $\MMM[u:V]\models C$.
\item Suppose $m,m_1,m_2\not\in \FV{\MMM,C}\cup\ASET{u,v}$. Then:
\begin{enumerate}[\em(a)]
\item
If $(\new \VEC{l})\MMM[m:V][u:{\INJ{i}{m}}]\models C$, 
then $(\new \VEC{l})\MMM[u:{\INJ{i}{V}}]\models C$.  

\item
If $(\new \VEC{l})\MMM[m:V][u:{\pi_1(m)}]\models C$, 
then $(\new \VEC{l})\MMM[u:{\pi_1(V)}]\models C$. 

\item
If $(\new \VEC{l})\MMM[m_1:V_1][m_2:V_2][u:{\TPL{m_1}{m_2}}]\models C$, 
then $(\new \VEC{l})\MMM[u:{\TPL{V_1}{V_2}}]\models C$. 

\item Suppose $l\not\in\FL{\MMM}$. 
Then $(\new \VEC{l}l)\MMM[m:l][u:V][l\mapsto V]\models C$
implies $(\new \VEC{l})\MMM[u:V]\models C$

\item
Suppose $l\not\in\FL{\MMM}$ and $\FV{V}\cup  \FL{V}=\emptyset$. 
Then $(\new l)\MMM[m:l][l\mapsto V]\models C$
implies $\MMM\models C$.

\item
Suppose $l\not\in\FL{\MMM}$ and $\FV{V}\cup  \FL{V}=\emptyset$. 
Then $\MMM[m:l][l\mapsto V]\models C$
implies $\MMM\models C$.

\end{enumerate}
\item 
$\MMM \models \exists m.(\someCon{x}(C\AND !x=m) \AND m=e)$ 
                iff 
$\updates{\MMM}{x}{\MAP{e}_{\xi,\sigma}} \models C$
\end{enumerate}
\end{lemma}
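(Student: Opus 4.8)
The final statement to prove is Lemma \ref{lem:subs} (Substitution and Thinning), which is a collection of three groups of auxiliary facts about the satisfaction relation under environment extension and store update. Let me plan a proof.

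The goal is to prove Lemma \ref{lem:subs}, a technical collection of substitution and thinning facts used in the soundness argument. My overall plan is to treat each clause by elementary manipulation of the satisfaction relation from Definition \ref{def:satisfaction}, leaning heavily on the already-established Lemma \ref{lemma:29410} (injective renaming, permutation, exchange, and especially partition-and-monotonicity in clause (\ref{lemma:29410:6})). The unifying theme is that extending a model with a fresh entry whose denotation is already determined by the existing model does not change which formulae hold, provided the fresh name is not free in the formula.

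For clause (1), I would observe that $\MMM[u:V]$ is by definition $\expands{\MMM}{u}{V}$, and that $\MMM\models u=V$ holds exactly when $\MAP{u}_{\MMM[u:V]}\WB\MAP{V}_{\MMM[u:V]}$, so the two directions follow by combining the definition of $=$ in Definition \ref{def:satisfaction}(\ref{model:eq}) with Lemma \ref{lemma:29410}(\ref{lemma:29410:6}): adding the entry $u:V$ is conservative on all formulae $C$ with $u\notin\FV{C}$, and the conjunct $u=V$ records precisely the value of the new entry. For clause (3), I would unfold the content-quantification semantics (Definition \ref{def:satisfaction}(\ref{model:allcon}) for $\someCon{x}{-}$, together with the shorthand $\someCon{e}{C}\DEFEQ\exists x.(x=e\AND\someCon{x}{C})$ from \S\ref{sub:assertion}) and match it against the store-update notation $\updates{\MMM}{x}{-}$; the equation $m=e$ pins the quantified witness to $\MAP{e}_{\xi,\sigma}$, and the existential over functional terms collapses to a single update, giving the stated equivalence.

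The substantive work is clause (2), the six thinning facts (a)--(f). Each says that an extension of $\MMM$ by some auxiliary entries (an injection, a projection, a pair, or a freshly allocated location) can be replaced by, or dropped in favour of, a simpler model without affecting satisfaction of $C$, under the freshness hypotheses $m,m_1,m_2\notin\FV{\MMM,C}$ and, where relevant, $l\notin\FL{\MMM}$. My plan is to handle (a)--(c) by a common pattern: the auxiliary names $m$ (resp.\ $m_1,m_2$) are not free in $C$, so by Lemma \ref{lemma:29410}(\ref{lemma:29410:6}) their presence is immaterial, and by Lemma \ref{lemma:29410}(\ref{lemma:29410:3}) (substitution) the denotation $\MAP{\INJ{i}{m}}$, $\MAP{\pi_1(m)}$, or $\MAP{\TPL{m_1}{m_2}}$ computed in the extended model coincides with $\MAP{\INJ{i}{V}}$, $\MAP{\pi_1(V)}$, $\MAP{\TPL{V_1}{V_2}}$ in the reduced model, so the entry for $u$ is the same value up to $\WB$ in both and Proposition \ref{prop:symmetry}(\ref{prop:symmetry:per:modula}) lets us transport satisfaction. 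For (d)--(f) I would use the same partition-and-monotonicity lemma to discard the freshly allocated location $l$ and its store binding $l\mapsto V$: since $l\notin\FL{\MMM}$ and (in (e),(f)) $V$ is closed, the closure $\ncl{\FL{\MMM}}{\sigma}$ does not touch $l$, so no formula over the names of $\MMM$ can observe it, and Lemma \ref{lemma:29410}(\ref{lemma:29410:6}) applied in the direction ``larger model satisfies iff smaller does'' removes both $m:l$ and the cell.

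The main obstacle will be clause (2)(d)--(f), and in particular getting the $\new$-binder bookkeeping exactly right: one must check that after removing the entry $m:l$ the location $l$ is genuinely unreferenced by the remaining environment and store, so that the hidden cell $l\mapsto V$ is garbage and Lemma \ref{lemma:29410}(\ref{lemma:29410:6}) (whose hypothesis is precisely $(\FL{\sigma'}\cup\FL{\xi'})\cap\ASET{\VEC{l}}=\emptyset$ for the \emph{added} part, together with the observation that no name of $\MMM$ can point into freshly added labels) applies in the correct direction. The subtlety is that in (d) the value $V$ need not be closed, so I must verify that the only occurrence of $l$ is through $m$, and that dropping $m$ therefore leaves $l$ unreachable before invoking the monotonicity step; here the remark following Lemma \ref{lemma:29410} --- that labels of $\MMM$ never refer to labels added on top --- is exactly what is needed. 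Once this directional reasoning is pinned down, all six cases reduce to routine applications of the cited parts of Lemma \ref{lemma:29410}, and I would present (a) in full as the template and indicate that (b),(c) are identical and (e),(f) are the closed-value specialisations of (d).
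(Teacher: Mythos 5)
Your handling of clauses (1), (3) and (2)(e)--(f) is workable, but the heart of the lemma---clause (2)(a)--(d)---has a genuine gap. You propose to dispose of (a)--(c) by viewing $(\new \VEC{l})\MMM[m:V][u:\INJ{i}{m}]$ as the model $(\new \VEC{l})\MMM[u:\INJ{i}{V}]$ extended by the single entry $m:V$, and then discarding that entry via Lemma~\ref{lemma:29410}(\ref{lemma:29410:6}) because $m\notin\FV{C}$. But that lemma carries the side condition $(\FL{\sigma'}\cup\FL{\xi'})\cap\ASET{\VEC{l}}=\emptyset$: the added environment and store must not mention the hidden labels of the \emph{base} model. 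Here the added value is $V$, and nothing in the statement prevents $\FL{V}\cap\ASET{\VEC{l}}\neq\emptyset$; on the contrary, exactly where these clauses are invoked (the soundness proofs for injection, pairing and reference generation) $V$ is the value of an arbitrary evaluated program and typically does point into the local store $\VEC{l}$. The same objection blocks your route to (d): the discarded cell $l\mapsto V$ may satisfy $\FL{V}\cap\ASET{\VEC{l}}\neq\emptyset$, so the real obstacle there is not, as you suggest, whether $l$ itself remains referenced after dropping $m$, but whether $V$ reaches the pre-existing hidden labels. Two further citations do not do the work you assign them: Proposition~\ref{prop:symmetry} cannot ``transport satisfaction'' between the two models, which have different domains and hence are not comparable under $\WB$ at all; and your reading of clause (1) assumes $u\notin\FV{C}$, whereas that clause is used precisely with $u$ free in $C$ (it is the anchor)---the paper proves it by unfolding the semantics of equality, not by conservativity.

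This is not a repairable bookkeeping slip, because unrestricted thinning of entries is simply false in this logic: the paper's own counterexample to thinness, $(\new ll')(u:\lambda ().!l',\ x:l,\ l\mapsto l',\ l'\mapsto 1)\models \someworld\ONEEVAL{u}{()}{z}{z=2}$, shows that deleting an entry whose value reaches hidden store can change satisfaction, so no side-condition-free discarding lemma can exist. What makes (a)--(d) true is the specific redundancy in their shape---the datum carried by $m$ remains available through $u$---and cashing that in is exactly the content of the proof. The paper accordingly argues by induction on the structure of $C$: the base case $e_1=e_2$ is an observational-indistinguishability claim for the two models under contexts that avoid $m$; the cases $\forall x.C'$, $\HIDEe x.C'$ and $\ONEEVAL{e}{e'}{z}{C'}$ replace quantifier witnesses, revealed sub-models and evaluation reducts that mention $m$ by ones that do not, and then apply the induction hypothesis. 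Your plan omits precisely this induction, which is where all the work lies. To repair it, prove (a)--(d) by structural induction on $C$ along these lines, and reserve Lemma~\ref{lemma:29410}(\ref{lemma:29410:6}) for clauses (e),(f), where closedness of $V$ and freshness of $l$ make its side condition genuinely hold.
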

\begin{proof}
For (1), we derive:
\[
\begin{array}{llll}
\MMM\models C \AND u = V
& \LITEQ & 
\MMM\models C \AND \MMM\models u = V\\
& \LITEQ & 
\MMM\models C \AND \MMM[u:V]\WB \MMM\\
& \LITEQ & 
\MMM[u:V]\models C
\end{array}
\]
(2) is mechanical by induction on $C$. 
We only show some interesting cases. Others are similar. 
For (2-a), let 
$\MMM_1\LITEQ (\new \VEC{l})\MMM[m:V][u:{\INJ{i}{m}}]$ 
and $\MMM_2 \LITEQ (\new \VEC{l})\MMM[u:{\INJ{i}{V}}]$. 

Assume $C=e_1 = e_2$. Then, with $w$ fresh and   
$m\not\in \FV{e_1,e_2}$, we have 
$\MMM_1[w:e_1]\WB \MMM_1[w:e_2]$ iff 
$\MMM_2[w:e_1]\WB \MMM_2[w:e_2]$. Hence 
$\MMM_1\models e_1 = e_2$ iff  
$\MMM_2\models e_1 = e_2$. 

Assume $C=\forall x.C'$.
Then we have:
\[
\begin{array}{lllll}
\MMM_1\models \forall x.C' & \LITEQ & 
\forall L\in \mathcal{F}.\MMM_1[x:L]\models C'\\
 & \LITEQ & 
\forall L'\in \mathcal{F}.\MMM_2[x:L']\models C'& 
\mbox{such that } m\not\in\FV{L'}\\
& \LITEQ & \MMM_2\models \forall x.C' 
\end{array}
\]
Assume $C=\HIDEe x.C'$.
Then we have:
\[
\begin{array}{lllll}
\MMM_1\models \HIDEe x.C' & \LITEQ & 
\exists \MMM_0.(\MMM_1\WB (\new l)\MMM_0\AND \MMM_0[x:l]\models C')\\
 & \LITEQ & 
\exists \MMM_0'.(\MMM_2\WB (\new l)\MMM_0'
\AND \MMM_0'[x:l]\models C') & 
\mbox{such that } 
\MMM_0'\LITEQ \MMM_0/m\MSUBS{V}{m}\\
 & \LITEQ & 
\MMM_2\models \forall x.C' 
\end{array}
\]
Assume $C=\ONEEVAL{x}{y}{z}{C'}$.
Then we derive: 
\[
\begin{array}{llll}
\MMM_1 \models  \ONEEVAL{x}{y}{z}{C'} & \LITEQ & 
\exists \MMM_1'.(\MMM_1[z:xy]\Downarrow \MMM'_1\AND \MMM'_1 \models  C')\\
&&\qquad\qquad\qquad \mbox{with\ } 
\MMM_1'\LITEQ (\new \VEC{l})\MMM'[m:V][u:{\INJ{i}{m}}]\\ 
& \LITEQ & 
\exists \MMM_2'.(\MMM_2[z:xy]\Downarrow \MMM'_2\AND \MMM'_2 \models  C')\\
&&\qquad\qquad\qquad \mbox{with\ } 
\MMM_2' \LITEQ (\new \VEC{l})\MMM'[u:{\INJ{i}{V}}]
\mbox{ and (IH)}\\ 
 & \LITEQ & 
\MMM_2 \models \ONEEVAL{x}{y}{z}{C'} 
\end{array}
\]
Others (b-f) are similar. 
(3) is from \cite{ALIAS}. 
\end{proof}
Below we write:  
\begin{center} 
$\MMM \converges \MMM' \models C'$
\ for \ 
$\MMM \converges \MMM' \ANDl \MMM' \models C'$
\end{center}
We start with [\RULENAME{Var}].
\[
\begin{array}{llll}
\MMM \models C\SUBST{x}{u} &\THEN & 
\MMM \models C\AND u=x\\
&\THEN & \expands{\MMM}{u}{x} \models C 
& \text{Lemma \ref{lem:subs}(1)} 
\end{array}
\]
Similarly for [\RULENAME{Const}] using Lemma \ref{lem:subs}(1). 
Next, [\RULENAME{Add}] is proved as follows:  
\[
\begin{array}{llllllll}
        \MMM \models C
                &\Rightarrow & 
        \expands{\MMM}{m_1}{M_1} \converges \MMM_1
\models C_0
& \text{IH}
                \\
          &\Rightarrow & 
        \expands{\MMM_1}{m_2}{M_2} \converges \MMM_2
\models C'\SUBST{m_1+m_2}{u}
& \text{IH}\\
          &\Rightarrow & 
\expands{\expands{\expands{\MMM}{m_1}{M_1}}{m_2}{M_2}}{u}{m_1+m_2}
\converges \MMM' \models C'                \\
          &\Rightarrow & 
\expands{\MMM}{u}{M_1+M_2}
\converges \MMM'/m_1 m_2 \models C'                
& \text{Proposition \ref{pro:thin} (\ref{pro:thin_wrt_y})} 
\end{array}
\]
[\RULENAME{Inj$_1$}] is proved as:
\[
\begin{array}{llllllll}
        \MMM \models C
                &\quad\Rightarrow\quad
        \expands{\MMM}{m}{M} \converges (\new \VEC{l})\MMM'[m:V] 
\models C'\SUBST{\INJ{1}{m}}{u}
& \text{IH}
                \\
                &\quad\Rightarrow\quad
        \expands{\expands{\MMM}{m}{M}}{u}{\INJ{1}{m}}\converges 
(\new \VEC{l})\MMM'[m:V][u:\INJ{1}{V}]\models C'
& \text{Lemma \ref{lem:subs}(1)} 
                \\
                &\quad\Rightarrow\quad
(\new \VEC{l})\MMM'[u:\INJ{1}{V}]\models C'
& \text{Lemma \ref{lem:subs}(2-a)} \\
                &\quad\Rightarrow\quad
        \expands{\MMM}{u}{\INJ{1}{M}}\models C'\\
\end{array}
\]
[\RULENAME{Proj}] and [\RULENAME{Pair}] are similarly proved 
using Lemma \ref{lem:subs}(2-b,c) respectively. 


For [\RULENAME{Case}], we reason: 
\begin{align*}
    \MMM \models C
        & \Rightarrow\quad 
        \expands{\MMM}{m}{M} \converges (\new \VEC{l}')\MMM_0[m:\INJ{i}{V}]
\models C_0\\
& 
\quad\quad\quad        \text{if}\ 
\MMM = (\new \VEC{l})(\xi, \sigma), 
\ 
 (\new \VEC{l})(M\xi, \sigma)\converges 
 (\new \VEC{l}')(\INJ{i}{V}, \sigma'), 
\   \text{and}\ 
\MMM_0  = (\xi, \sigma')
                \\
                & \Rightarrow\quad
        (\new \VEC{l}')\expands{\MMM_0}{m}{\INJ{i}{V}} 
                       \models C_0 \AND m = \INJ{i}{x_i}
                \\
                & \Rightarrow\quad
 \expands{\expands{(\new \VEC{l}')\MMM_0}{m}{\INJ{i}{x_i}}}{u}{M_i} 
              \converges 
 \expands{\expands{(\new \VEC{l}'')\MMM'}{m}{\INJ{i}{V}}}{u}{W} 
\models C'
                \\
                & \Rightarrow\quad
\expands{(\new \VEC{l}'')\MMM'}{u}{W}  
\models C'
\end{align*}
The last line is by the thinness of $C'$ with respect to $m$. 

Now we reason for [\RULENAME{Abs}]. 
We note, if $A$ is stateless (cf.~Definition \ref{def:stateless}) and  
$\MMM\models A$, then: 
$\expands{\MMM}{u}{M}\Downarrow \MMM'$ with $u$ fresh implies 
$\MMM'\models A$. 
\[
\begin{array}{llll}
&  \MMM \models A \ENTAILS \expands{\MMM}{u}{\lambda x.M}
       \models \allworlds \forall x\VEC{i}.\EVAL{C}{u}{x}{m}{C'}\\[1mm]
\LITEQ & \MMM \models A \ENTAILS 
\expands{\expands{\expands{\expands{\MMM}{u}{\lambda x.M}}{x}{N_x}}{\VEC{i}}{\VEC{N}}}{k}{N}
\converges \MMM' \ANDl \MMM\WB \MMM'/ x\VEC{i}
\ANDl \MMM' \models \EVAL{C}{u}{x}{m}{C'}\\[1mm]
\LITEQ & \MMM \models A \ENTAILS 
((\expands{\expands{\expands{\expands{\MMM}{u}{\lambda x.M}}{x}{N_x}}{\VEC{i}}{\VEC{N}}}{k}{N}
\converges \MMM' \ANDl \MMM\WB \MMM'/ x\VEC{i} 
\ANDl \MMM' \models C)\\
       & \hspace{8cm}\ENTAILS 
\MMM'[m:ux]\converges \MMM'' \ANDl \MMM''\models C')\\[1mm]
\LITEQ & \MMM \models A \ENTAILS 
((\expands{\expands{\expands{\expands{\MMM}{u}{\lambda x.M}}{x}{N_x}}
 {\VEC{i}}{\VEC{N}}}{k}{N}
\converges \MMM' \ANDl \MMM\WB \MMM'/ x\VEC{i} 
\ANDl \MMM'\models C\AND A)\\
       & \hspace{8cm}\ENTAILS 
\MMM'[m:ux]\converges \MMM'' \ANDl \MMM''\models C')\\[1mm]
\subset & 
\MMM' \models A \AND C
\ENTAILS (\expands{\MMM'}{m}{M}\Downarrow \MMM'' \AND \MMM'' \models  C')
\end{array}
\]
[\RULENAME{App}] is reasoned as follows.  Below $k$ fresh. 
\begin{align*}
        \MMM \models C
        &\quad\Rightarrow\quad
        \expands{\MMM}{m}{M} \converges \MMM_0 \models C_0 
                \\
        &\quad\Rightarrow\quad
        \expands{\MMM}{n}{N} \converges 
        \MMM_1 \models C_1 \AND \ONEEVAL{m}{n}{n}{C'} 
                \\
        &\quad\Rightarrow\quad
\expands{\expands{\expands{\MMM}{m}{M}}{n}{N}}{u}{m n}
            \converges \MMM' \models C'    
                \\
        &\quad\Rightarrow\quad
\expands{\expands{\expands{\MMM}{m}{M}}{n}{N}}{u}{MN}
                  \converges \MMM' \models C'    
                \\
        &\quad\Rightarrow\quad
    \expands{\MMM}{u}{MN} \converges \MMM' / mn \models C'
\end{align*}
The last line is derived by the thinness of $C'$ with respect to $m,n$. 

For [\RULENAME{Deref}], we infer:
\begin{align*}
        \MMM \models C
                &\quad\Rightarrow\quad
        \expands{\MMM}{m}{M} \converges \MMM' \models C'\SUBST{!m}{u}
                \\
                &\quad\Rightarrow\quad
        \expands{\MMM}{m}{!M} \converges \MMM' / m \models C'
\end{align*}
For [\RULENAME{Assign}]  
Assume $u$ is fresh.
\begin{align*}
       \MMM \models C
               &\quad\Rightarrow\quad
       \expands{\MMM}{m}{M}\converges \MMM_0\models C_0\\ 
              &\quad\Rightarrow\quad
       \expands{\MMM_0}{n}{N}\converges \MMM'\models C'\LSUBS{n}{!m}
               \\
               &\quad\Rightarrow\quad
       \updates{\MMM'}{m}{n} \converges \MMM'' \models C' 
\quad \text{Lemma \ref{lem:subs}(3)}
               \\
               &\quad\Rightarrow\quad
   \expands{\MMM}{u}{M := N} \converges \expands{\MMM'' / mn}{u}{()} \models C'
\ANDl u = ()
\end{align*}
For [{\em Rec-Ren}], 
\begin{align*}
\MMM \models A &\quad\Rightarrow\quad
        \expands{\MMM}{u}{\lambda x.M}\models {B}
                \\
                &\quad\Rightarrow\quad
        \expands{\expands{\MMM}{f}{\mu f.\lambda x.M}}{u}{\lambda x.M} \models A
                \\
                &\quad\Rightarrow\quad
\expands{\expands{\MMM}{f}{\mu f.\lambda x.M}}{u}{\mu f.\lambda x.M} \models A
                \\
                &\quad\Rightarrow\quad
         \expands{\MMM}{u}{\mu f.\lambda x.M} 
\models f = u \IMPLIES A 
                \\
                &\quad\Rightarrow\quad
        \expands{\MMM}{u}{\mu f.\lambda x.M} \models A\SUBST{u}{f}
\quad \text{Lemma \ref{lem:subs}(1)}
\end{align*}


[{\em If}] is similar with [{\em Add}] using 
Proposition \ref{pro:thin_wrt_y}. 

[{\em Ref}] appeared in the main text (the second last line 
uses Lemma \ref{lem:subs}(2-d) to delete $m$). 

We complete all cases. \qed 

\subsection{Soundness of the Invariant Rule}
\label{subsec:invsound:proof}
Among the structural rules, 
 we prove the soundness of the main invariance rule, 
[{\em Inv}] in Figure \ref{fig:rules:structural:located}.

\begin{lemma} \label{ALEALE}
Suppose $C$ is tame and $\MMM\models C$.
Suppose $\MMM\EVOLVESinc{u_1..u_n}\MMM'$ and $\MMM\WB\MMM/u_1..u_n$.
Then $\MMM'\models C$.
\end{lemma}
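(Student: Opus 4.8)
The plan is to prove Lemma~\ref{ALEALE} by induction on the structure of the tame formula $C$, since tameness is defined inductively and the statement must propagate through each formula-forming construct. The setup is this: we have $\MMM \models C$ with $C$ tame, an evolution $\MMM \EVOLVESinc{u_1..u_n} \MMM'$ whose increment is exactly the new variables $u_1,\ldots,u_n$, and the crucial hypothesis $\MMM \WB \MMM/u_1..u_n$, which says that deleting the newly-introduced variables recovers (up to $\WB$) the original model. Intuitively this last condition means the evolution added \emph{only} the entries $u_1..u_n$ and did not observably alter the rest of the model, so a tame formula --- one whose truth depends on state only through the guarded constructs $\allworlds$, $[!x]$, and $\ENCan{!x}$ --- cannot be disturbed.

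First I would handle the base cases. For $e_1 = e_2$ and $e_1 \neq e_2$, I would invoke Lemma~\ref{lemma:29410}(\ref{lemma:29410:6}) (partition and monotonicity): since $\MMM$ and $\MMM'$ differ only by the extra entries $u_1..u_n$ (with $\FV{e_i} \subset \FV{\MMM}$ and the new part not referenced by the old labels), satisfaction of equality is preserved. For $e_1 \reach e_2$ and $e_1 \noreach e_2$ the reachability is computed from the store reachable from $e_1$; because the extra variables do not lie below the data already present and $\MMM \WB \MMM/u_1..u_n$, the label-closure computation is unaffected, which follows from the definition of reachability (Definition~\ref{def:satisfaction}(\ref{model:reach})) together with Lemma~\ref{lem:ncl}. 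The case $\allworlds C_0$ is the cleanest: if $\MMM \models \allworlds C_0$ then since $\MMM \EVOLVES \MMM'$ and $\EVOLVES$ is transitive (and $\MMM \EVOLVESinc{u_1..u_n}\MMM'$ is a special case of evolution), every world reachable from $\MMM'$ is reachable from $\MMM$, hence satisfies $C_0$, giving $\MMM' \models \allworlds C_0$ directly --- no use of the $\WB$ hypothesis is even needed here.

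For the inductive cases I would treat the propositional connectives $C \AND C'$ and $C \OR C'$ by simply applying the induction hypothesis to each conjunct/disjunct, observing that the hypotheses $\MMM \WB \MMM/u_1..u_n$ and the increment structure are shared. The quantifier cases $\forall y^\alpha.C$, $\exists y^\alpha.C$, $\forall \TVX.C$, $\exists \TVX.C$ require pushing the evolution through the extension of the model by a functional term $L$ (or a type $\alpha$): given a witness/arbitrary $L \in \mathcal{F}$ for $\MMM'$, I would restrict it to a corresponding functional term over $\MMM$, use that $\MMM[y:L]$ and $\MMM'[y:L]$ stand in the same increment-plus-$\WB$ relationship (functional terms do not change state, by Definition~\ref{def:functional}), and apply the induction hypothesis. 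The content-quantifier cases $\allCon{y}{C}$ and $\someCon{y}{C}$ are similar, using that an update $\updates{\MMM}{y}{L}$ commutes appropriately with the deletion of $u_1..u_n$.

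The main obstacle I anticipate is the quantifier cases, specifically ensuring that the relation ``differs only by the increment $u_1..u_n$ and is $\WB$-equal after deleting them'' is preserved when we extend both models by a quantified entry $y:L$ or update a cell. The delicate point is that $L$ is a functional term evaluated \emph{in context}, and its value may depend on hidden state; I must check that the same $L$ works for both $\MMM$ and $\MMM'$ and that the resulting extended models retain the increment structure, which ultimately rests on Lemma~\ref{lemma:29410} (exchange, partition/monotonicity) and on statelessness of functional terms. I would organize this by first proving a small auxiliary observation --- that if $\MMM \WB \MMM/u_1..u_n$ and $\MMM \EVOLVESinc{u_1..u_n}\MMM'$, then for any functional $L$ over $\MMM$ we have $\MMM[y:L] \WB (\MMM'[y:L])/u_1..u_n$ --- and then feed this uniformly into every quantifier case, keeping the inductive argument routine.
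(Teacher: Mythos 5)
Your overall plan is the same as the paper's: the paper's entire proof of Lemma~\ref{ALEALE} is the single sentence ``by mechanical induction on $C$ noting it only contains evaluation formulae under $\allworlds$'', and your structural induction over the tame grammar --- with the $\allworlds C_0$ case discharged purely by transitivity of $\EVOLVES$ --- is exactly that argument made explicit. (One small repair in your quantifier cases: a witness $L'$ for $\MMM'$ may mention the new entries $u_1..u_n$, so rather than ``restricting'' it to a term over $\MMM$ you should enlarge the increment and apply the induction hypothesis to $\MMM\EVOLVESinc{u_1..u_n y}\MMM'[y\!:\!L']$, which is legitimate because functional terms leave the state unchanged.)

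The genuine gap, however, is in the base cases, not where you locate the main obstacle. You discharge $e_1=e_2$ by citing Lemma~\ref{lemma:29410}(\ref{lemma:29410:6}), but that lemma requires $\MMM'$ to be a \emph{literal} disjoint extension of $\MMM$: the old store untouched, and the added part not referring to $\MMM$'s hidden labels. The actual hypothesis --- $\MMM\EVOLVESinc{u_1..u_n}\MMM'$ with $\MMM\WB\MMM'/u_1..u_n$ (clearly the intended reading of the misprinted $\MMM\WB\MMM/u_1..u_n$) --- is strictly weaker: an evolution step may overwrite a cell with an observationally equivalent value whose distinguishing handle lives only in the new entries. Concretely, let $\MMM\DEFEQ(\ASET{x\!:\!l,\ f\!:\!\lambda n.n},\ \ASET{l\mapsto\lambda n.n})$ and evolve by $N\DEFEQ\LET{y}{\refPrg{\mathtt{f}}}{(x:=\lambda n.\IFTHENELSE{!y}{0}{n};\ \lambda ().(y:=\mathtt{t}))}$, binding the returned toggle to $u_1$, so that
\[
\MMM' \;=\; (\new l')\bigl(\ASET{x\!:\!l,\ f\!:\!\lambda n.n,\ u_1\!:\!\lambda ().(l':=\mathtt{t})},\ \ASET{l\mapsto \lambda n.\IFTHENELSE{!l'}{0}{n},\ l'\mapsto \mathtt{f}}\bigr).
\]
Without $u_1$ no context can ever write the hidden boolean $l'$, so $\MMM'/u_1\WB\MMM$; yet the tame formula $!x=f$ holds in $\MMM$ and fails in $\MMM'$ (invoke $u_1()$, then apply the exposed content of $x$ to $1$). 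So the equality case cannot be closed by a better citation: with active dereferences the statement itself fails under the stated hypothesis. What makes the paper's sole application of the lemma (Lemma~\ref{tame:xfree}) sound is that there the evolution \emph{literally} restores the content of $w$ and touches no other old cell, while dereferences of $w$ itself are guarded by $[!w]$. In fairness, the paper's one-line proof glosses over exactly the same point; but your write-up commits to a concrete lemma at that step, and that lemma does not apply, so the gap is identifiable in your version.
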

\begin{proof}
By mechanical induction on $C$ noting it only contains 
evaluation formulae under $\allworlds$.
\end{proof}

\begin{lemma}
\label{tame:xfree}
Suppose $\MMM\models \allCon{\VEC{w}}{C}$ and $C$ is tame. 
Then for each $M$ and $\MMM'$ if
$\expands{\MMM}{u}{M}\converges\MMM'$
and 
$\MMM[z:\LET{\VEC{x}}{!\VEC{w}}{\LET{y}{M}{\VEC{w}:=\VEC{x}}}]
\converges
\MMM''$
s.t. $\MMM''/z\WB \MMM$
then we have
$\MMM' \models C$.
\end{lemma}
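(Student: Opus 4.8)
The plan is to prove Lemma~\ref{tame:xfree} by reducing it to the already-established Lemma~\ref{ALEALE}, exploiting the fact that $C$ is tame and that the content quantifier $\allCon{\VEC{w}}{C}$ guarantees $C$ holds irrespective of what is stored at $\VEC{w}$. The key observation is that the auxiliary term $\LET{\VEC{x}}{!\VEC{w}}{\LET{y}{M}{\VEC{w}:=\VEC{x}}}$ is designed precisely so that its evaluation restores the original contents of $\VEC{w}$: the premise $\MMM''/z\WB\MMM$ tells us the net effect on the state (modulo $\WB$) is nil, so the only genuine change from $\MMM$ to the intermediate model $\MMM'$ obtained by running $M$ is at $\VEC{w}$ together with whatever fresh locations $M$ allocates.

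First I would fix notation, writing $\MMM=(\new\VEC{l})(\xi,\sigma)$ and recording the two evaluations given in the hypothesis: $\expands{\MMM}{u}{M}\converges\MMM'$ with $\MMM'=(\new\VEC{l}\VEC{l}')(\xi\CD u\!:\!V,\sigma')$, and the evaluation of the restoring term yielding $\MMM''$ with $\MMM''/z\WB\MMM$. The plan is to compare $\MMM'$ with $\MMM''$: since $\MMM''$ runs $M$ and then overwrites $\VEC{w}$ back to its initial values, $\MMM''$ differs from $\MMM'$ (after dropping the anchors $z,u$) only in the contents of $\VEC{w}$. Concretely I would argue that $\MMM'$ and a model of the shape $\updates{\MMM''}{\VEC{w}}{\MAP{!\VEC{w}}_{\MMM'}}$ coincide up to $\WB$, so that $\MMM'$ is obtained from $\MMM$ by an evolution $\MMM\EVOLVESinc{\VEC{l}'}\MMM'$ that changes nothing reachable except the contents of $\VEC{w}$ and the fresh increment $\VEC{l}'$.

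Next I would invoke the content quantifier semantics, Definition~\ref{def:satisfaction}(\ref{model:allcon}): from $\MMM\models\allCon{\VEC{w}}{C}$ we get $\updates{\MMM}{\VEC{w}}{L}\models C$ for every functional term $L\in\mathcal{F}$, in particular for the $L$ that installs at $\VEC{w}$ the contents they acquire under $\MMM'$. This produces a model $\MMM_0$ with $\MMM_0\models C$ whose contents at $\VEC{w}$ already agree with those of $\MMM'$. Then $\MMM_0\EVOLVES\MMM'$ is an evolution that preserves the contents of $\VEC{w}$ (the only remaining difference being the fresh locations $\VEC{l}'$ allocated by $M$, which are disjoint from $\ASET{\VEC{x}}$), so the hypotheses of Lemma~\ref{ALEALE} are met with the invariant set being exactly $\VEC{w}$, and we conclude $\MMM'\models C$.

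The main obstacle will be the bookkeeping that justifies ``$\MMM'$ and $\updates{\MMM''}{\VEC{w}}{\cdots}$ agree up to $\WB$'': I must be careful that the restoring term $\VEC{w}:=\VEC{x}$ genuinely reverts only the observable content at $\VEC{w}$ and does not interfere with the fresh locations $\VEC{l}'$ created during the evaluation of $M$, and that the anchors $z$ and $u$ can be thinned off harmlessly (using Lemma~\ref{lemma:29410}(\ref{lemma:29410:6}) and the congruence properties of $\WB$). Equally delicate is matching the functional-term formulation of content quantification with the operational restoring device, since $\MAP{!\VEC{w}}_{\MMM'}$ must be realised by a genuine $L\in\mathcal{F}$; here I would appeal to the fact that dereferenced values are themselves functional terms and that reachability to $\VEC{w}$ is unaffected by altering the contents of $\VEC{w}$ (Proposition~\ref{pro:notreach}~3-(5)). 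Once these identifications are in place, the reduction to Lemma~\ref{ALEALE} is immediate.
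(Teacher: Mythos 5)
Your overall strategy (the restore trick, tameness, Lemma~\ref{ALEALE}, then discharging the content quantifier) mirrors the paper's, but the pivotal step fails. You instantiate $\allCon{\VEC{w}}{C}$ at $\MMM$ with ``the $L\in\mathcal{F}$ that installs at $\VEC{w}$ the contents they acquire under $\MMM'$'', and such an $L$ need not exist. By Definition~\ref{def:satisfaction}(\ref{model:allcon}), the quantification ranges over functional terms via $\updates{\MMM}{w}{L}$, which is only defined when $L$ evaluates in $\MMM$, \emph{without extending the store}, to a value $V$ expressible over the locations already present in $\MMM$. But the hypothesis of the lemma does not prevent $M$ from writing \emph{freshly allocated} structure into $\VEC{w}$: take $w:\refType{\refType{\NAT}}$, $\MMM\DEFEQ(\ASET{w:l_w},\ \ASET{l_w\mapsto l_0,\ l_0\mapsto 0})$ and $M\DEFEQ (w:=\refPrg{7});()$. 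Running the restoring term still yields $\MMM''/z\WB\MMM$ (the fresh cell becomes hidden garbage once $w$ is reset), so $M$ is admissible; yet in $\MMM'$ the content of $w$ is a fresh, hidden location distinct from $l_0$, and the only value of type $\refType{\NAT}$ any $L\in\mathcal{F}$ over $\MMM$ can produce is $l_0$ (a context writing through $!w$ and then reading $!l_0$ separates the two models). So your $\MMM_0$ cannot be formed, and the subsequent claim that $\MMM_0\EVOLVES\MMM'$ also needs a witnessing program, since $\EVOLVES$ is generated by actual evaluations, which you never exhibit. A secondary slip: Lemma~\ref{ALEALE} has no ``invariant set'' parameter; it applies only to evolutions whose net state change is nil after thinning the new anchors, so it cannot be invoked ``with the invariant set being exactly $\VEC{w}$''.

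The paper's proof dodges exactly this obstacle by making the needed functional term exist \emph{inside} the evaluation rather than naming it from the outside: it runs the single combined program $\MMM[x:!w][y:\refPrg{V_0}][u:\LET{m}{M}{(y:=!w;\,w:=x;\,m)}]\converges\MMM''$, which snapshots the post-$M$ content of $w$ into a newly created reference $y$ before restoring $w$. Since this whole evaluation is state-preserving modulo anchors, Lemma~\ref{ALEALE} transports the tame formula $[!w]C$ from $\MMM$ to $\MMM''$; and at $\MMM''$ the dereference $!y$ \emph{is} a legitimate functional term, so instantiating $[!w]C$ with it produces, up to anchors, exactly $\MMM'$, giving $\MMM'\models C$. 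If you re-route your argument through such an internal snapshot reference, your proof plan goes through; as written, it does not.
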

\begin{proof}
  For simplicity we assume $\VEC{w}$ is a singleton (the general case
  is the same).  Let $\MMM\models \allCon{w}{C}$ and $C$ be tame.
  Suppose $\expands{\MMM}{u}{M}\converges\MMM'$ such that only the
  content of $w$ is affected.  We let with appropriate closed 
  $V_0$:
  \begin{equation}
  \MMM[x:!w][y:\refPrg{V_0}][u:\LET{m}{M}{(y:=!w; w:=x; m)}]\converges\MMM''
  \quad\quad 
  \MMM\WB \MMM''/xyu
  \end{equation}
  Hence by Lemma \ref{ALEALE} we have:
  \begin{equation}\label{RERERE}
    \MMM''\models [!w]C
  \end{equation}
  Further note
  \begin{equation} \label{PEPEPE}
    \MMM''[w\mapsto !y]\converges \MMM'''
    \quad\quad 
    \MMM'\WB \MMM'''/xy
  \end{equation}
  By (\ref{RERERE}) and (\ref{PEPEPE}) we obtain 
    $\MMM'''\models C$. 
  By Lemma \ref{ALEALE} and this, 
we have $\MMM'\models C$, as required. 
\end{proof}

We now prove:

\begin{proposition}
The following rule is sound.
\rm 
\[ 
\begin{array}{c}
\text{[{\em Inv}]}
         \infer
         {
           \ASSERT{C}{M}{m}{C'}@\VEC{w}
	  \qquad
          C_0\mbox{ is tame}
         }
         {
         \ASSERT
         {C\AND \allCon{\VEC{w}}{C_0}}
         {M}{m}
         {C'\AND \allCon{\VEC{w}}{C_0}}@\VEC{w}
         }
\end{array}
\]
\end{proposition}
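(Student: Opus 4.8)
The plan is to unfold the semantics of the located Hoare triple (via definition (\ref{below}) in \S\ref{subsec:soundness}) and reduce the soundness of [\emph{Inv}] to a statement purely about the satisfaction relation, namely that the tame invariant $\allCon{\VEC{w}}{C_0}$ survives the evaluation of $M$ whenever the only observable effect of $M$ is confined to $\VEC{w}$. Concretely, I would fix an arbitrary model $\MMM$ satisfying the premise's precondition together with $\allCon{\VEC{w}}{C_0}$, and assume $\expands{\MMM}{m}{M}\converges\MMM'$. The soundness of the premise $\ASSERT{C}{M}{m}{C'}@\VEC{w}$ gives me both that $\MMM'\models C'$ and, crucially, that the effect of evaluating $M$ is restricted to $\VEC{w}$ up to $\WB$ — this is exactly the second conjunct in the located-assertion clause (\ref{locatedassertion}) of Definition \ref{def:satisfaction}, which says that restoring the original contents of $\VEC{w}$ after running $M$ yields a model $\WB$-equivalent to the starting one.

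First I would establish that $\MMM\models\allCon{\VEC{w}}{C_0}$ is preserved, i.e.\ $\MMM'\models\allCon{\VEC{w}}{C_0}$. This is precisely the content of Lemma \ref{tame:xfree}, which I may assume: it states that if $\MMM\models\allCon{\VEC{w}}{C}$ with $C$ tame, and $\expands{\MMM}{u}{M}\converges\MMM'$ with the $\VEC{w}$-restoring wrapper program returning a model $\WB$-equivalent to $\MMM$, then $\MMM'\models C$. The hypothesis that $C_0$ is tame is exactly what licenses this invocation. So the key steps are: (1) unfold validity of the conclusion to a goal about $\expands{\MMM}{m}{M}\converges\MMM'$; (2) apply soundness of the premise to get $\MMM'\models C'$ and the located effect-confinement condition; (3) feed the latter into Lemma \ref{tame:xfree} (with $C_0$ for $C$) to obtain $\MMM'\models\allCon{\VEC{w}}{C_0}$; (4) conjoin to conclude $\MMM'\models C'\AND\allCon{\VEC{w}}{C_0}$.

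The main technical obstacle will be matching the precise form of the effect-confinement condition delivered by the located premise with the hypothesis Lemma \ref{tame:xfree} expects. The clause (\ref{locatedassertion}) phrases confinement in terms of the wrapper $\LET{\VEC{x}}{!\VEC{w}}{\LET{y}{ee'}{\VEC{w}:=\VEC{x}}}$ producing a model $\WB\MMM[z:()]$, whereas Lemma \ref{tame:xfree} uses $\LET{\VEC{x}}{!\VEC{w}}{\LET{y}{M}{\VEC{w}:=\VEC{x}}}$ with $\MMM''/z\WB\MMM$; so I must carefully translate between the function-application form of the validity definition (which uses $V\DEFEQ\lambda().M$ applied to $()$, as in (\ref{below})) and the direct evaluation of $M$ used in the lemma, checking that the auxiliary anchors $z$, $u$, $m$ are handled uniformly. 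Beyond this bookkeeping the argument is essentially a direct composition of the premise's soundness with Lemma \ref{tame:xfree}, and no genuinely new semantic reasoning is required — the difficulty is entirely in reconciling the internal representations of the two confinement conditions.
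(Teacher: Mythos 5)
Your overall route is the paper's: unfold the validity of the located triple, use the premise to obtain both $\MMM'\models C'$ and the effect-confinement condition, and then invoke Lemma \ref{tame:xfree} to carry the invariant across the evaluation of $M$. However, step (3) as you state it does not deliver what the conclusion of [\emph{Inv}] requires. Lemma \ref{tame:xfree}, instantiated ``with $C_0$ for $C$'', has hypothesis $\MMM\models\allCon{\VEC{w}}{C_0}$ and conclusion $\MMM'\models C_0$ --- \emph{not} $\MMM'\models\allCon{\VEC{w}}{C_0}$. Since $\allCon{\VEC{w}}{C_0}\ENTAILS C_0$ but not conversely, this is strictly weaker than the post-condition $C'\AND\allCon{\VEC{w}}{C_0}$ you must establish, so the proof as written stops short of the goal.

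The missing ingredient is the idempotence of content quantification. To conclude $\MMM'\models\allCon{\VEC{w}}{C_0}$ you must apply Lemma \ref{tame:xfree} to the formula $\allCon{\VEC{w}}{C_0}$ itself --- which is legitimate because tame formulae are closed under content quantification, so $\allCon{\VEC{w}}{C_0}$ is tame whenever $C_0$ is. The lemma's hypothesis then becomes $\MMM\models\allCon{\VEC{w}}{\allCon{\VEC{w}}{C_0}}$, and this is exactly where the axiom $\allCon{\VEC{w}}{\allCon{\VEC{w}}{C_0}}\LITEQ\allCon{\VEC{w}}{C_0}$ of Proposition \ref{pro:axiom:content} enters: it converts the assumed precondition $\MMM\models\allCon{\VEC{w}}{C_0}$ into the needed hypothesis. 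This is precisely the first move of the paper's proof; with it your argument goes through unchanged, whereas the instantiation you chose proves only invariance of $C_0$, not of $\allCon{\VEC{w}}{C_0}$. Your remaining concern --- reconciling the wrapper program in clause (\ref{locatedassertion}) with the one in Lemma \ref{tame:xfree} via the encoding (\ref{below}) --- is indeed just bookkeeping and is handled the same way in the paper.
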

\begin{proof}
Assume $\ASSERT{C}{M}{u}{C'}@\VEC{w}$. Then by definition, 
for each $\MMM$ such that $\MMM\models C$ we have:
\begin{eqnarray} \label{inv:rule:1}
& \expands{\MMM}{u}{M}\converges\MMM'\models C'\\
& \MMM[z:\LET{\VEC{x}}{!\VEC{w}}{\LET{y}{ee'}{\VEC{w}:=\VEC{x}}}]
\converges\MMM''\ \text{s.t.} \ \MMM''/z \WB \MMM
\label{inv:rule:1b}
\end{eqnarray}
Then: 
 \[
 \begin{array}{rll}
 \MMM \proves C \ANDl \rlap{$\allCon{\VEC{w}}{C_0}$} \\
 \Rightarrow &\MMM \proves
 \allCon{\VEC{w}}{\allCon{\VEC{w}}{{C_0}}} \quad\quad \text{(by the axiom 
 $\allCon{\VEC{w}}{\allCon{\VEC{w}}{{C_0}}} \LITEQ  \allCon{\VEC{w}}{{C_0}}$)}\\ 
 \Rightarrow &
\forall \MMM',M.((\expands{\MMM}{u}{M}\converges \MMM'\ANDl \\
&\quad \MMM[z:\LET{\VEC{x}}{!\VEC{w}}{\LET{y}{ee'}{\VEC{w}:=\VEC{x}}}]
\converges\MMM''\WB
\MMM[z:()]
 \ENTAILS \MMM'\models \allCon{\VEC{w}}{C_0})\\
 \Rightarrow &\MMM'\models C' \hfill\qquad 
\text{((\ref{inv:rule:1},\ref{inv:rule:1b}) above)}\\ 
 \Rightarrow & \MMM'\models C' \ANDl \allCon{\VEC{w}}{C_0} 
\end{array}
 \]
Hence we have 
 $\ASSERT{C\AND {\allCon{\VEC{w}}{C_0}}}{M}{m}{C'\AND {\allCon{\VEC{w}}{C_0}}}@\VEC{w}$, as required. 
\end{proof}

\subsection{Soundness of [{\em LetOpen}] and [{\em Subs}]}
\label{sub:letopen_subs}
We prove soundness of [{\em LetOpen}] and [{\em Subs}] used in 
\S~\ref{subsec:newvar}. 
For [{\em LetOpen}] (we prove the case that $\VEC{y}$ is a singleton), 
we derive: 
\[
\begin{array}{llllllll}
        \MMM \models C
             &\Rightarrow & 
\expands{\MMM}{x}{M} \converges \MMM'
\models \new y.C_0
& \text{Assumption}
                \\
                &\LITEQ & 
\expands{\MMM}{x}{M} \converges \MMM'
\AND \exists \MMM_0.(\MMM' \WB (\new l)\MMM_0 \AND \MMM_0 \models C_0)
\end{array}
\]
Also we have: 
\[
\begin{array}{llllllll}
        \MMM_0 \models C_0
             &\Rightarrow & 
\expands{\MMM}{u}{N} \converges \MMM_0'
\models C'
& \text{Assumption}
\end{array}
\]
Combining these two, 
we have: 
\[
\begin{array}{llllllll}
        \MMM \models C
             &\Rightarrow & 
\expands{\MMM}{x}{M} \converges \MMM'
\AND \exists \MMM_0.(\MMM' \WB (\new l)\MMM_0 \AND \MMM_0[u:N]
\Downarrow \MMM_0' \AND \MMM_0' \models C')\\
            &\Rightarrow & 
\expands{\MMM}{u}{\mathtt{let} \ x = M \ \mathtt{in} \ N} \converges 
\MMM'' \AND \MMM'' \models C' \ \mbox{with } \MMM''/x = \MMM_0'
\end{array}
\]
The last line is by thinness. 

For [{\em Subs}] (we prove the case that $\VEC{e}$ is a singleton), we have: 
\[
\begin{array}{lllllll}
\MMM \models C &\THEN & 
\MMM[u:M]\Downarrow \MMM' \AND \MMM'\models C\\
 &\THEN & 
\forall \MMM_0.(\MMM_0\models i=e \AND 
\MMM_0[u:M]\Downarrow \MMM' \THEN \MMM'\models i=e) & (u\not\in 
\PFN{e})\\
 &\THEN & \forall \MMM_0.(\MMM_0\models (C\AND i=e) \AND 
\MMM_0[u:M]\Downarrow \MMM'\models (C'\AND i=e))
\end{array}
\]

\section{Soundness of the Axioms}
\label{app:axioms}

This appendix lists 
the omitted proofs from Section \ref{sec:axioms}.  
We first prove the basic lemma and
propositions. In \S~\ref{subsec:axioms:cq}, we show the
axioms for the content quantifications.
In \S~\ref{app:proof:AHI}, 
we prove \AIHax-axioms.

\subsection{Proofs of Lemma \ref{lemma:29410}}
\label{subsec:axioms:eq}

\NI For (\ref{lem:sat:renaming}), both directions are simultaneously
established by induction on $C$, proving for both $C$ and its
negation. If $C$ is $e_1=e_2$, we have, letting $\MMM\DEFEQ(\new
\VEC{l})(\xi, \sigma)$, $\delta\DEFEQ\MMSUBS{u}{v}$ and
$\xi'\DEFEQ\xi\delta$:
\begin{reasoning}
\lefteqn{\MMM\models e_1=e_2}\\
&\;\THEN\;&
\MMM[x:e_1]
\congmodel
\MMM[x:e_2]\\
&\;\THEN\;&
(\new \VEC{l})(\xi\CD x:\MAP{e_1}_{(\xi,\sigma)},\, \sigma)
\cong_{\ID}
(\new \VEC{l})(\xi\CD x:\MAP{e_2}_{(\xi,\sigma)},\, \sigma)\\
&\;\THEN\;&
(\new \VEC{l})(\xi'\CDs x\!:\!\MAP{e_1\delta}_{(\xi',\sigma)},\, \sigma)
\cong_{\ID} 
(\new \VEC{l})(\xi'\CDs x\!:\!\MAP{e_2\delta}_{(\xi',\sigma)},\, \sigma)
\quad & (\ast)\\
&\;\THEN\;&
\MMM\delta[x:e_1\delta]
\congmodel
\MMM\rho[x:e_2\delta]\\
&\;\THEN\;&
\MMM\delta\models (e_1=e_2)\delta
\end{reasoning}
Above $(\ast)$ used $\MAP{e_i}_{(\xi,\sigma)}\DEFEQ \MAP{e_i\delta}_{(\xi',\sigma)}$.
Dually for its negation. The rest is easy by induction.
(\ref{lem:sat:perm}) is by precisely the same reasoning.
(\ref{lem:sat:exchange}) is immediate from (\ref{lem:sat:renaming}) and (\ref{lem:sat:perm}). 
(\ref{lemma:29410:6}) is similar, for which we again show a base case. 
\begin{reasoning}
\lefteqn{\MMM'\models e_1=e_2}\\
&\IFFs&
\MMM[x:e_1]\congmodel
\MMM[x:e_2]
\qquad\qquad&\text{(By Definition)}\\
&\IFFs&
\MMM[x:e_1][u:e]\congmodel
\MMM[x:e_2][u:e]
\qquad\qquad&\text{(congruency of $\WB$)}\\
&\IFFs&
\MMM[u:e][x:e_1]\congmodel
\MMM[u:e][x:e_2]
\qquad\qquad&\text{(By (\ref{lem:sat:exchange}))}
\end{reasoning}
Dually for the negation. 
For (\ref{lemma:29410:1}), 
the ``only if'' direction:
{
\begin{reasoning}
\lefteqn{\MMM \models e_1 = e_2}\\
&\IFFs &
\MMM[u:e_1]\congmodel \MMM[u:e_2]
\qquad&\text{(By Definition)}\\
&\IFFs &
\MMM[u:e_1][v:e_2]\congmodel \MMM[u:e_2][v:e_2]\ANDl\\
&  &
\MMM[u:e_2][v:e_2]\congmodel \MMM[u:e_2][v:e_1]
\qquad&\text{(By (\ref{lem:sat:exchange}))}\\
&\THENs &
\MMM[u:e_1][v:e_2]\congmodel \MMM[u:e_2][v:e_1].
\end{reasoning}}
Operationally, the encoding of models simply removes all references to
$u, v$ and replaces them by positional information: hence all relevant
difference is induced, if ever, by behavioural differences between
$e_1$ and $e_2$, which however cannot exist by assumption. The ``if''
direction is immediate by projection.

(\ref{lemma:29410:3}) 
is best argued using concrete models.
For the former, 
let $\MMM=(\new \VEC{l})(\xi, \sigma)$ and let $\xi(x)=W$. We infer:
\begin{eqnarray*}
\expands{\expands{\MMM}{u}{x}}{v}{e} 
& \DEFEQ &
(\new \VEC{l})(\xi\CDs u:W\CDs v:e\xi,\ \sigma)\\
& \DEFEQ &
(\new \VEC{l})(\xi\CDs u:W\CDs v:(e\SUBST{u}{x})\xi,\ \sigma)
\end{eqnarray*}
For the latter, 
let $\MMM=(\new \VEC{l})(\xi, \sigma)$ and  $W=\MAP{e}_{\xi, \sigma}$ (the
standard interpretation of $e$ by $\xi$ and $\sigma$).
We then have
\begin{eqnarray*}
\expands{\expands{\MMM}{u}{e}}{v}{e'}
& \congmodel &
(\new \VEC{l})(\xi\CDs u:W\CDs v:\MAP{e'}_{\xi, \sigma},\ \sigma)\\
& \DEFEQ &
(\new \VEC{l})(\xi\CDs u:W\CDs v:\MAP{e'\SUBST{e}{u}}_{\xi, \sigma},\ \sigma)
\end{eqnarray*}
The last line is because the interpretation is homomorphic.
\qed 

\subsection{Proof of Proposition \ref{pro:necessity}}
\label{subsec:axioms:necessity}

\begin{propositionapp}{\ref{pro:necessity}}{}\hfill
{
\begin{enumerate}[\em(1)]
\item $\allworlds (C_1\entails C_2) \ENTAILS \allworlds C_1\entails
\allworlds C_2$; 
$\allworlds C \ENTAILS C$; 
$\allworlds \allworlds C \LITEQ \allworlds C$; 
$C \entails \someworld C$. Hence $\allworlds C \entails
\someworld C$.
\item {\em (permutation and decomposition)}
\begin{enumerate}[\em(a)]
\item $\allworlds e_1 = e_2 \LITEQ e_1 = e_2$ 
and $\allworlds e_1 \not= e_2 \LITEQ e_1 \not= e_2$
if $e_i$ does not contain dereference. 
\item $\allworlds (C_1 \AND C_2) \LITEQ \allworlds C_1 \AND \allworlds C_2$. 

\item $\allworlds C_1 \OR \allworlds C_2\ENTAILS \allworlds (C_1 \OR C_2)$. 

\item $\allworlds \forall x.C \ENTAILS \forall x.\allworlds C$   
and $\allworlds \forall x. \allworlds C \LITEQ 
\allworlds \forall x.C$. 

\item $\exists x.\allworlds C \ENTAILS \allworlds \exists x.C$.

\item $\allworlds \HIDEa x.C \LITEQ \HIDEa x.\allworlds C$; 
and 
$\HIDEe x.\allworlds C \ENTAILS \allworlds \HIDEe x.C$.

\item $\allworlds \exists\TVX.C \LITEQ \exists\TVX.\allworlds C$ and 
$\allworlds \forall\TVX.C \LITEQ \forall\TVX.\allworlds C$. 

\item 
$\allworlds \allCon{x}{C} \LITEQ 
\allCon{x}{\allworlds C} \LITEQ \allworlds C$
and  
$\someCon{x}{\allworlds C}\LITEQ \allworlds C
\ENTAILS \allworlds \someCon{x}{C}$.
\end{enumerate}
\end{enumerate}
}
\end{propositionapp}
(1) is obvious by definition. For (2-a), suppose 
$\MMM\models e_1 = e_2$. Then by definition $\WB$, 
for all $\MMM'$ such that $\MMM\EVOLVES\MMM'$, 
we have $\MMM'[u:e_1] \WB \MMM'[u:e_2]$. 
Hence $\MMM\models \allworlds e_1 = e_2$, as required. 
Similarly for $e_1 \not = e_2$. For (2-b), we have:
\[ 
\begin{array}{rlll}
\MMM \models \allworlds (C_1 \AND C_2)  
\LITEQ & \forall \MMM'. (\MMM\EVOLVES\MMM' \ENTAILS 
\MMM' \models C_1 \AND \MMM' \models C_2)\\
\LITEQ & \forall \MMM'. (\MMM\EVOLVES\MMM' \ENTAILS 
\MMM' \models C_i) \quad (i=1,2)\\
\LITEQ & \MMM \models \allworlds C_1 \AND \allworlds C_2
\end{array}
\]
For (2-c), we derive:
\[ 
\begin{array}{rlll}
\MMM \models \allworlds C_1 \OR \allworlds C_2  
\LITEQ & \forall \MMM'. (\MMM\EVOLVES\MMM' \ENTAILS 
\MMM' \models C_i)\quad (i=1 \OR i=2)\\
\Rightarrow & \forall \MMM'. (\MMM\EVOLVES\MMM' \ENTAILS 
\MMM' \models C_1\OR C_2)\\
\LITEQ & \MMM \models \allworlds (C_1 \OR C_2)
\end{array}
\]
For (2-d(1)), we derive, 
with $u$ fresh:  
\[ 
\begin{array}{lllll}
\MMM \models \allworlds \forall x.C\\ 
\quad \LITEQ\quad 
 \forall \MMM'.(\MMM[u:N]\Downarrow \MMM' 
\ENTAILS \forall L\in \mathcal{F}. 
(\MMM'[x:L]\Downarrow \MMM'' \ENTAILS \MMM'' \models C))
\\
\quad \Rightarrow  
\quad \forall \MMM'_0,L'\in \mathcal{F}, \forall N.
(\MMM[x:L'][u:N\MSUBS{L'}{x}] \Downarrow \MMM_0' \ENTAILS \MMM_0' \models C)
\mbox{\ such that } u \not\in \FV{L'}\\
\quad \Rightarrow 
\quad \forall \MMM'_0,L'\in \mathcal{F},\forall N. 
(\MMM[x:L'][u:N] \Downarrow \MMM_0' \ENTAILS \MMM_0' \models C)
\mbox{\ such that } u\not\in \FV{L'}\\
\quad \Rightarrow \quad 
 \MMM\models \forall x.\allworlds C
\end{array}
\]
Note that $x\not\in \FV{N}$ in the second line. To derive the third
line, we use the fact for all $L\in\mathcal{F}$ such that 
$u\not\in \FV{L}$ and all $N$, 
if $\MMM[x:L][u:N\MSUBS{L}{x}]\Downarrow \MMM'$, then 
$\MMM[x:L][u:N]\Downarrow \MMM'$. 

For (2-d(2)), by $\allworlds C \ENTAILS C$, 
we have 
$\allworlds \forall x.\allworlds C \ENTAILS \allworlds \forall x. C$. 
The other direction is 
proved with $\allworlds \allworlds C \LITEQ \allworlds C$, 
as $\allworlds \forall x. C \LITEQ 
\allworlds \allworlds \forall x. C \ENTAILS 
\allworlds  \forall x. \allworlds C$.  

For (f-1), 
we derive, with $u$ fresh:  
\[ 
\begin{array}{llll}
\MMM\models \HIDEa x.\allworlds C\\ 
\quad \LITEQ 
\quad \forall \MMM'.((\new l)\MMM' \WB \MMM 
\ENTAILS \forall \MMM'',N.(\MMM'[x:l][u:N] \Downarrow \MMM''\DEFEQ (\new \VEC{l})\MMM'''[x:l][u:V] 
\ENTAILS \MMM''\models C))\\
\quad \LITEQ 
\quad \forall \MMM_0,N.(\MMM[u:N] \Downarrow \MMM_0 
\ENTAILS \forall \MMM_0'.(\MMM_0 \WB 
(\new l)\MMM_0'  \ENTAILS \MMM_0'[x:l]\models C)) \\
\quad \quad \quad \mbox{such
that $l,x\not\in \FV{N}\cup\FL{N}$}\\
\quad \quad \quad \mbox{with }\MMM_0 \DEFEQ  (\new \VEC{l})(\new l)\MMM'''[u:V], \quad 
\MMM_0' \DEFEQ (\new \VEC{l})\MMM''[u:V]\\
\quad \LITEQ 
\quad \MMM\models \allworlds \HIDEa x. C 
\end{array}
\]

For (f-2), 
we derive, with $u$ fresh:  
\[ 
\begin{array}{llll}
\MMM\models \HIDEe x.\allworlds C\\ 
\quad \LITEQ 
\quad \exists \MMM'.((\new l)\MMM' \WB \MMM 
\AND \forall \MMM''.(\MMM'[x:l][u:N] \Downarrow \MMM''\DEFEQ (\new \VEC{l})\MMM'''[x:l][u:V] 
\ENTAILS \MMM''\models C))\\
\quad \Rightarrow
\quad \forall \MMM_0.(\MMM[u:N] \Downarrow \MMM_0 
\ENTAILS \exists \MMM_0'.(\MMM_0 \WB 
(\new l)\MMM_0'  \AND \MMM_0'[x:l]\models C))\\
\quad\quad\quad \mbox{with }\MMM_0 \DEFEQ  (\new \VEC{l})(\new l)\MMM'''[u:V], \quad 
\MMM_0' \DEFEQ (\new \VEC{l})\MMM''[u:V]\\
\quad \LITEQ 
\quad \MMM\models \allworlds \HIDEe x. C 
\end{array}
\]
(g) is trivial. For (h), we prove the first equation. 
With $u$ fresh, we have: 
\[ 
\begin{array}{rlll}
\MMM\models \allworlds \allCon{x}{C} 
& \LITEQ & \forall N.(\MMM[u:N]\Downarrow \MMM' \ENTAILS 
\forall L\in \mathcal{F}.\updates{\MMM'}{x}{L}\models C)\\
& \Rightarrow & \forall L\in \mathcal{F},\forall N.
\MMM[u:((x:=L);N)]\Downarrow \MMM' \ENTAILS \MMM'\models C\\
& \LITEQ &
\MMM\models \allCon{x}{\allworlds C} 
\\
& \Rightarrow &
\MMM\models \allworlds C 
\end{array}
\]
The last line is by the axiom 
$\allCon{x}{C} \ENTAILS C$. 
For other direction, with $u$ fresh, 
\[ 
\begin{array}{rlll}
\MMM\models \allworlds C 
& \LITEQ &\forall N.(\MMM[u:N]\Downarrow \MMM'\ENTAILS \MMM'\models C)\\
& \Rightarrow & \forall L\in \mathcal{F},\forall N.
(\MMM[u:(x:=L;N)]\Downarrow \MMM'\ENTAILS \MMM'\models C)\\
& \Rightarrow & 
\MMM\models \allCon{x}{\allworlds C} \\
& \LITEQ & \forall L,L'\in \mathcal{F},\forall N.
(\updates{\MMM}{x}{L}[u:(N;x:=L')] \Downarrow \MMM' \ENTAILS \MMM'\models C)\\
& \LITEQ & \forall L'\in \mathcal{F},\forall N.
(\updates{\MMM}{x}{!x}[u:(N;x:=L')] 
\Downarrow \MMM' \ENTAILS \MMM'\models C)\\
& \LITEQ & \forall L'\in \mathcal{F},\forall N.
(\MMM[u:(N;x:=L')] 
\Downarrow \MMM' \ENTAILS \MMM'\models C)\\
& \Rightarrow & 
\MMM\models \allworlds \allCon{x}{C} 
\end{array}
\]
In Line 5, we use the fact $!x$ is a functional term. 
In Line 6, we note that $\updates{\MMM}{x}{!x}\LITEQ \MMM$. 
The equation for $\someCon{x}{C}$ is similar. 
This concludes the proofs.

\subsection{Axioms for Content Quantification}
\label{subsec:axioms:cq}
The axiomatisation of content quantification in \cite{ALIAS} uses 
the well-known axioms \cite[\S 2.3]{MENDELSON} for standard
quantifiers. Despite the  presence of local state, most of the axioms stay
valid. 
\begin{proposition}[Axioms for Content Quantifications] 
\label{pro:axiom:content}
Recall $A$ denotes the stateless formula.  
\begin{enumerate}[\em(1)]
\item $[!x]A \LITEQ A$
\item $[!x]!y=z \LITEQ x \not= y \AND !y = z$ 
\item $\allCon{x}{([!x]C_1\ENTAILS C_2)}\ENTAILS([!x]C_1\ENTAILS \allCon{x}{C_2})$.
\item $[!x][!x]C\LITEQ [!x]C$
\item $[!x][!y]C\LITEQ [!y][!x]C$
\item $[!x](C_1\AND C_2)\LITEQ [!x]C_1\AND [!x]C_2$
\item $[!x]C_1\OR [!x]C_2\ENTAILS [!x](C_1\OR C_2)$
\end{enumerate}
\end{proposition}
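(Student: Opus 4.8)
The plan is to unfold the semantics of content quantification, namely $\MMM \models \allCon{x}{C}$ iff $\updates{\MMM}{x}{L}\models C$ for every functional term $L\in\mathcal{F}$, and to reduce each clause to a manipulation of the family of updated models $\{\updates{\MMM}{x}{L} \mid L\in\mathcal{F}\}$. Since these axioms already hold in the value-based setting of \cite{ALIAS}, the real work is to check that replacing values by functional terms preserves them; I would follow the pattern of \cite{ALIAS} and supply the extra arguments concerning functional terms.

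First I would establish two structural lemmas about iterated updates, both up to $\WB$: (a) absorption, that $\{\updates{\updates{\MMM}{x}{L}}{x}{L'}\}$ and $\{\updates{\MMM}{x}{L''}\}$ coincide as $L,L',L''$ range over $\mathcal{F}$; and (b) commutation at distinct locations, that $\{\updates{\updates{\MMM}{x}{L}}{y}{L'}\}=\{\updates{\updates{\MMM}{y}{L'}}{x}{L}\}$ when $x\ne y$. The subtle point, absent in \cite{ALIAS}, is that an inner functional term $L'$ is evaluated in the already-updated model and so may read the new content of $x$. The key observation is that a functional term may temporarily modify the store and restore it with zero net effect (Definition \ref{def:functional}); hence given $L,L'$ one can build a single $L''\in\mathcal{F}$ that sets $x$ to the value of $L$, evaluates $L'$, restores $x$, and returns $L'$'s value, so that $\updates{\MMM}{x}{L''}\WB\updates{\updates{\MMM}{x}{L}}{x}{L'}$. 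The reverse inclusion is trivial via $L=\,!x$, for which $\updates{\MMM}{x}{!x}\WB\MMM$. This yields (a); and (b) is analogous, using that the two updates touch different locations.

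With these in hand the axioms follow. Axiom (4), $\allCon{x}{\allCon{x}{C}}\LITEQ\allCon{x}{C}$, is exactly Lemma (a), and axiom (5) is exactly Lemma (b). Axioms (6) and (7) are the standard distributions of a universal quantifier over $\AND$ and into $\OR$, proved directly from the semantics with no reference to functional terms. Axiom (3) is the content-quantifier form of $\forall x(A\ENTAILS B)\ENTAILS(A\ENTAILS\forall x.B)$: assuming $\MMM\models\allCon{x}{(\allCon{x}{C_1}\ENTAILS C_2)}$ and $\MMM\models\allCon{x}{C_1}$, I would show $\updates{\MMM}{x}{L}\models C_2$ for each $L$; the first hypothesis gives $\updates{\MMM}{x}{L}\models\allCon{x}{C_1}\ENTAILS C_2$, while axiom (4) guarantees that $\allCon{x}{C_1}$ is insensitive to the update, i.e.\ $\updates{\MMM}{x}{L}\models\allCon{x}{C_1}$, whence $C_2$ holds; thus (3) depends on (4). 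For axiom (1), $\allCon{x}{A}\LITEQ A$ with $A$ stateless, the direction $\allCon{x}{A}\ENTAILS A$ is immediate by instantiating $L=\,!x$; for the converse I would observe that any update $\updates{\MMM}{x}{L}$ is an evolution target, $\MMM\EVOLVES\updates{\MMM}{x}{L}$ (an external program can assign $L$ to $x$), so statelessness, $A\ENTAILS\allworlds A$, propagates $A$ to every updated model. Finally axiom (2), $\allCon{x}{\,!y=z\,}\LITEQ x\ne y\AND\,!y=z$, is a direct computation: when $x\ne y$ the content of $y$ is untouched by updates at $x$, whereas when $x=y$ some $L$ falsifies $!x=z$; the forward direction instantiates $L=\,!x$ to recover $!y=z$ and uses a content-changing $L$ to force $x\ne y$.

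I expect the main obstacle to be the functional-term bookkeeping in the two structural lemmas (a) and (b), specifically justifying up to $\WB$ that an inner functional term reading freshly-updated content can be simulated by a single outer functional term that modifies and restores the store, and checking that this interacts correctly with hidden locations. Everything else is routine propositional and first-order reasoning over the update semantics, essentially as in \cite{ALIAS}.
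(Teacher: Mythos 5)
Your proposal follows the paper's own route. For clause (1) the paper gives essentially your argument: an update $\updates{\MMM}{x}{L}$ is realised by an external assignment (the paper runs the program $x:=V;L$), hence is an evolution target, so statelessness $A\ENTAILS\allworlds A$ carries $A$ into every updated model; the converse direction is the trivial instantiation $L=\,!x$. For clauses (2)--(7) the paper simply defers to the value-based proofs of \cite[Appendix C]{ALIAS}, which is also your starting point; your absorption lemma (a), the quantifier-style arguments for (3), (6), (7), and the computation for (2) are reasonable ways of supplying the functional-term adaptations that the paper leaves implicit.

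The one step that would fail as written is lemma (b): commutation is \emph{not} ``analogous'' to absorption. For absorption your composite term works because the inner term's dependence on the \emph{new} content of $x$ can be replayed from scratch: save $x$, set it, run, restore. For commutation, to simulate $\updates{\updates{\MMM}{y}{K'}}{x}{K}$ by updates performed in the order $x$-then-$y$, you must produce, inside the already-updated model $\updates{\MMM}{x}{a}$, the value of a term ($K'$) that may have read the \emph{original} content of $x$ --- and that content has just been overwritten; save-set-restore cannot recover what is already destroyed. Concretely, let $!x$ be the sole pointer to a hidden cell $l\mapsto 9$ in $\MMM$: updating $y$ first with $!x$ and then overwriting $x$ yields a model in which $y$ holds a private cell containing $9$, whereas after overwriting $x$ first, $l$ is unreachable and no term evaluated in $\updates{\MMM}{x}{a}$ can return it. The axiom still holds, but only via the $\WB$-slack in the definition of $\updates{\MMM}{e}{L}$: one must argue that the lost cells are either still reachable through other paths (and can be re-read) or have become garbage (and can be reconstructed, up to contextual congruence, by a term allocating fresh $\refPrg{\cdot}$ cells). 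This definability-flavoured argument is genuinely absent from your sketch --- and, to be fair, also from the paper, which hides it behind the citation to \cite{ALIAS}, where updates store values that are never re-evaluated in intermediate models and commutation is immediate.
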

\begin{proof}
For (1), assume $\MMM\models \allworlds A$. By definition, 
for all $N$, if $\MMM[u:N]\Downarrow \MMM'$, then $\MMM'\models A$. 
This implies:  
for all $V$ and $L\in \mathcal{F}$, 
if $\MMM[u:x:=V;L]\Downarrow \MMM'$, then $\MMM'\models A$, 
which means $\MMM\models \allCon{x}{A}$. 
Others are 
proved as in \cite[Appendix C]{ALIAS}.
\end{proof}

\subsection{Proof of Proposition \ref{pro:hiding}}
\label{app:pro:hiding}
\begin{propositionapp}{Axioms for $\forall$, $\exists$ 
and $\HIDEe$} 

Below we assume there is no capture of variables in types and formulae.
\begin{enumerate}[\em(1)]
\item {\rm (introduction)}
$C \ENTAILS \HIDEe x.C$ if $x\not\in\FV{C}$ 

\item {\rm (elimination)}
$\HIDEe x.C \LITEQ C$
if $x\not\in\FV{C}$ and $C$ is monotone. 

\item 
For any $C$ we have
$C \ENTAILS \exists x.C$.
Given $C$ such that $x\not\in \FV{C}$
and $C$ is thin with respect to $x$, we have
$\exists x.C \ENTAILS C$. 

\item 
For any $C$ we have $\forall x.C \ENTAILS C$.
For $C$ such that $x\not\in \FV{C}$ and $C$ is thin with respect to $x$,
we have $C \ENTAILS \forall x.C$. 


\item $\HIDEe x.(C_1 \AND C_2) \ENTAILS \HIDEe x. C_1 \AND \HIDEe x. C_2 $. 
\item $\HIDEe x.(C_1 \OR C_2) \LITEQ \HIDEe x.C_1 \OR \HIDEe x.C_2$. 

\item $\HIDEe y.\forall x.C \ENTAILS \forall x.\HIDEe y.C$   

\item $\exists x.\HIDEe y. C \ENTAILS \HIDEe y. \exists x.C$  
and $\exists x^\alpha.\HIDEe y. C \LITEQ \HIDEe y.\exists x^\alpha.C$
with $\alpha\in \ASET{\UNIT,\BOOL,\NAT}$. 

\item 
$\HIDEe y.\HIDEa x.C \ENTAILS \HIDEa x.\HIDEe y.C$; 
and 
$\HIDEe y.\HIDEe x.C \LITEQ \HIDEe x.\HIDEe y.C$.    

\item $\HIDEe y.\exists\TVX.C \LITEQ \exists\TVX.\HIDEe y.C$; and 
$\HIDEe y.\forall\TVX.C \ENTAILS \forall\TVX.\HIDEe y.C$. 

\item 
$\HIDEe y.\allCon{x}{C} \ENTAILS \allCon{x}{\HIDEe y.C}$
and  
$\HIDEe y.\someCon{x}{C} \ENTAILS \someCon{x}{\HIDEe y. C}$
\end{enumerate}
\end{propositionapp}
(1) is by definition. For (2), we have: 
\[
\begin{array}{rcll}
\MMM\models \HIDEe x.C 
& \Rightarrow
& \exists \MMM',l.((\new l)\MMM' \cong \MMM \ANDl \MMM'[x:l]\models C)\\
& \Rightarrow
& \exists \MMM',l.((\new l)\MMM' \cong \MMM \ANDl \MMM'\models C) &
\text{Lemma \ref{lemma:29410} (\ref{lemma:29410:6})}\\
& \Rightarrow
& (\new l)\MMM'\models C 
& \text{$C$ is monotone}
\end{array}
\]
For (7), we derive: 
\[
\begin{array}{lclllll}
\MMM\models \HIDEe y. \forall x.C 
&\!\LITEQ\! & \exists \MMM_0,
\forall L\in \mathcal{F}.(\MMM\WB (\new l)\MMM_0 \AND 
(\MMM_0[x:L]\Downarrow \MMM_0'\ENTAILS \MMM_0'\models C))\\
&\!\Rightarrow\! & 
\forall L\in \mathcal{F}, \exists \MMM_0.(
\MMM[x:L]\WB ((\new l)\MMM_0)[x:L] \AND \MMM_0[x:L]\models C) 
\mbox{ such that } l\not\in \FL{L}\\
& & \MMM\models \forall x. \HIDEe y. C
\end{array}
\]
For (8-1), we derive:
\[
\begin{array}{lllllll}
\MMM\models \exists x.\HIDEe y. C 
& \LITEQ 
& \exists L\in \mathcal{F}.(\MMM[x:L]\Downarrow \MMM' \AND 
\exists \MMM_0.(\MMM'\WB (\new l)\MMM_0 \AND \MMM_0[y:l]\models C)\\
& \Rightarrow 
& \exists L\in \mathcal{F},\MMM_0.(\MMM\WB (\MMM[x:L])/x \WB 
((\new l)\MMM_0)/x \AND \MMM_0[y:l]\models C)\\
& \LITEQ & \exists L\in \mathcal{F},\MMM_0'.(\MMM\WB 
(\new l)\MMM_0' \AND \MMM_0'[y:l][x:L]\models C)\quad \mbox{with} \ \MMM_0'=\MMMM/x\\
& \LITEQ &
\MMM\models \HIDEe y. \exists x.C
\end{array}
\]
Note that the other direction does not generally hold. 
Consider $\MMM\models \HIDEe y. \exists x.C$. This is equivalent to: 
\[ 
\exists L\in \mathcal{F},\MMM_0'.(\MMM\WB 
(\new l)\MMM_0 \AND \MMM_0[y:l][x:L]\models C)
\] 
Since $L$ might contain the new reference $l$ hidden in $\MMM$, 
$\MMM[x:L\MSUBS{l}{y}]$ is undefined (hence we cannot permute 
$[y:l]$ and $[x:L\MSUBS{l}{y}]$). 

For (8-2), we only have to prove 
$\HIDEe y.\exists x^\alpha.C
\ENTAILS \exists x^\alpha.\HIDEe y. C$ with 
$\alpha\in \ASET{\UNIT,\BOOL,\NAT}$.  We derive:
\[
\begin{array}{lllllll}
\MMM\models \HIDEe y.\exists x.C
&\LITEQ 
\exists \MMM_0,\mathsf{c}.(\MMM\WB (\new l)\MMM_0\AND \MMM_0[y:l][x:\mathsf{c}]
\models C)\\
&\LITEQ 
\exists \mathsf{c},\MMM_0.
(\MMM[x:\mathsf{c}]\WB ((\new l)\MMM_0)[x:\mathsf{c}]\AND 
\MMM_0[x:\mathsf{c}][y:l]\models C)
& \LITEQ  \MMM\models \exists x.\HIDEe y.C
\end{array}
\]
For (9-2), we have:
\[
\begin{array}{lllllll}
\MMM \proves \new x.\new y.C 
& \LITEQ & \exists \MMM'.(\MMM \WB (\new l)\MMM' \AND \MMM'[x:l]
\models \new y.C )\\
& \LITEQ & \exists \MMM',\MMM''.(\MMM \WB (\new l)\MMM' \AND
\MMM'[x:l]
\WB (\new l')\MMM''\AND \MMM''[y:l']\models  C )\\
& \LITEQ & \exists \MMM',\MMM''.(\MMM \WB (\new l)\MMM' 
\WB (\new ll')\MMM'' \AND \MMM''[x:l] \WB (\new l')\MMM''\AND
\MMM''[y:l']\models  C)\\
& \LITEQ & \MMM \proves \new y.\new x.C 
\end{array}
\]
For (11), we derive: 
\[
\begin{array}{lclllll}
\MMM \proves \new y.\allCon{x}C 
& \LITEQ & \exists \MMM_0.(\MMM \WB (\new l)\MMM_0 \AND \forall L\in
\mathcal{F}.(\updates{\MMM_0[y:l]}{x}{L}\models C)) \\
& \LITEQ & \forall L\in
\mathcal{F}.\exists \MMM_0.(\MMM \WB (\new l)\MMM_0 \AND
\updates{\MMM_0[y:l]}{x}{L}\models C))\\
&&\qquad\mbox{ such that } l,y\not\in \FV{L}\cup\FL{L}\\
& \LITEQ &
\forall L\in
\mathcal{F}.\exists \MMM_0.
(\updates{\MMM}{x}{L} \WB (\new l)(\updates{\MMM_0}{x}{L}) \AND
\updates{\MMM_0}{x}{L}[y:l]\models C))\\
& \LITEQ &
\MMM \proves \allCon{x}\HIDEe y.C 
\end{array}
\]
The remaining claims are similar.

\subsection{Proof of Theorem \ref{theorem:elimination}}
\label{app:elimination}
\begin{theoremapp}{\ref{theorem:elimination}}{}  
Suppose all reachability predicates in $C$ are finite. 
Then there exists $C'$ such that $C\LITEQ C'$ and 
no reachability predicate occurs in $C'$. 
\end{theoremapp}
As the first step, we define a simple
inductive method for defining reachability from a datum of a finite
type. 

\begin{definition}{\rm ($i$-step reachability)} \
Let $\alpha$ be a finite type. Then the $i$-step reachability 
predicate $\iReachable{x^\alpha}{y^{\refType{\beta}}}{i^{\NAT}}$ 
(read:{\em ``a reference $y$ is reachable from $x$ in at most $i$-steps''})
is inductively
given as follows (below we assume $y$ is typed
$\refType{\beta}$, $C\in\ASET{\UNIT, \BOOL, \NAT}$, and omit types
when evident). 
\begin{eqnarray*}
\iReachable{x^\alpha}{y}{0} 
& \LITEQ & 
x=y\\
\iReachable{x^C}{y}{n+1} 
& \LITEQ & 
\falsity\\
\iReachable{x^{\alpha_1\times\alpha_2}}{y}{n+1} 
& \LITEQ & 
\OR_i \iReachable{\pi_i(x)}{y}{n}\ \OR\ \iReachable{x}{y}{n}\\
\iReachable{x^{\alpha_1+\alpha_2}}{y}{n+1} 
& \LITEQ & 
\exists x'.(x'=\INJtp{1}{x}{}\ \ANDl\ \iReachable{x'}{y}{n} )\ \OR\\
& & 
\exists x'.(x'=\INJtp{2}{x}{}\ \ANDl\ \iReachable{x'}{y}{n} )\ \OR\\
& & 
\iReachable{x}{y}{n}\\
\iReachable{x^{\refType{\alpha}}}{y}{n+1} 
& \LITEQ & 
\iReachable{!x}{y}{n} \ \OR\
\iReachable{x}{y}{n}
\end{eqnarray*} 
\end{definition}
With $C$ being a base type,
$\iReachable{x^C}{y}{0}\LITEQ x=y \LITEQ \falsity$
(since a reference $y$ cannot be equal to a datum of a base type).

A key lemma follows.

\begin{proposition} \label{prop:A}
If $\alpha$ is finite,
then the logical equivalence 
$x^\alpha\reachable y
\IFF
\exists i.\iReachable{x^\alpha}{y}{i}$ 
is valid, i.e. is true in any model.
\end{proposition}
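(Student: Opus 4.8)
The statement to prove is Proposition \ref{prop:A}: for a finite type $\alpha$, the equivalence $x^\alpha \reachable y \IFF \exists i. \iReachable{x^\alpha}{y}{i}$ holds in every model. The plan is to reduce both sides to a statement about the label closure $\ncl{\cdot}{\sigma}$ and then proceed by induction on the finite type $\alpha$. Recall that $\MMM \models x \reachable y$ means $\MAP{y}_{\xi,\sigma} \in \ncl{\FL{\MAP{x}_{\xi,\sigma}}}{\sigma}$ (modulo $\WB$), and that the label closure is the least set containing the seed labels and closed under the operation $l \mapsto \FL{\sigma(l)}$. The key conceptual point is that for data of finite type (no arrows, no type variables), the set of free labels $\FL{\MAP{x}_{\xi,\sigma}}$ is computed structurally by descending through pairs, injections and references, and the inductive clauses defining $\iReachable{x}{y}{i}$ mirror exactly one layer of this structural descent together with one unfolding step through the store at reference type.

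First I would fix a model $\MMM \WB (\new \VEC{l})(\xi, \sigma)$ and introduce a quantified variable $i$ ranging over $\NAT$. The proof proceeds by establishing, for each finite $\alpha$, the equivalence $\MMM \models x^\alpha \reachable y \IFF \exists i. \MMM \models \iReachable{x^\alpha}{y}{i}$. I would actually prove a slightly sharper inductive statement that ties the numeral $i$ to the number of closure steps: namely that $\MAP{y}_{\xi,\sigma}$ enters $\ncl{\FL{\MAP{x}}}{\sigma}$ after at most $i$ applications of the closure operation iff $\MMM \models \iReachable{x}{y}{i}$. This monotone-approximation formulation is what makes the induction go through cleanly, since $\ncl{S}{\sigma} = \bigcup_n \ncl{S}{\sigma}^{(n)}$ where $\ncl{S}{\sigma}^{(n)}$ is the $n$-th stage of the closure, and the existential over $i$ on the logical side corresponds precisely to this union. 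The finiteness hypothesis on $\alpha$ guarantees the structural induction terminates: at base types $\FL{\cdot}$ is empty, so $\iReachable{x^C}{y}{n+1} \LITEQ \falsity$ matches the fact that no reference is reachable in a further step from a base-type datum; at product, sum and reference types the defining clauses decompose the datum exactly one structural layer, invoking the induction hypothesis on the strictly smaller components (for $\refType{\alpha}$, on $!x$, whose type $\alpha$ is again finite).

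The main obstacle I anticipate is the handling of the reference-type clause, where reachability genuinely unfolds the store rather than merely descending through the syntax of the value. Here $\iReachable{x^{\refType{\alpha}}}{y}{n+1} \LITEQ \iReachable{!x}{y}{n} \OR \iReachable{x}{y}{n}$, and I must argue that taking one step through the store (replacing the label $\MAP{x}$ by $\FL{\sigma(\MAP{x})} = \FL{\MAP{!x}}$) corresponds exactly to one application of the closure operation in $\ncl{\cdot}{\sigma}$. The disjunct $\iReachable{x}{y}{n}$ is needed to absorb the case $x = y$ (the reference itself is reached) and to make the approximation monotone in $i$; I would verify that this bookkeeping matches the reflexivity and idempotence properties of $\ncl{\cdot}{\sigma}$ recorded in Lemma \ref{lem:ncl}. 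A secondary subtlety is that reachability is defined modulo $\WB$ (quantifying over all $(\new \VEC{l})(\xi,\sigma) \WB \MMM$), whereas $\iReachable{}{}{}$ is evaluated in a single representative; I would note that for finite types the set of reachable labels is a congruence invariant, so it suffices to reason in one representative and the $\WB$-quantification is harmless. Finally I would assemble the stage-indexed equivalences into the stated form by taking the union over $i$, using $\ncl{S}{\sigma} = \bigcup_i \ncl{S}{\sigma}^{(i)}$, which yields $x^\alpha \reachable y \IFF \exists i. \iReachable{x^\alpha}{y}{i}$ as required.
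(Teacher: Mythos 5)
Your overall architecture (reducing both sides to membership in the label closure and exploiting the stage decomposition $\ncl{S}{\sigma}=\bigcup_n \ncl{S}{\sigma}^{(n)}$) is reasonable, but the key invariant you propose to prove is false, and since you explicitly make it the engine of the induction, the proof as written does not go through. You claim: $\MAP{y}_{\xi,\sigma}$ enters $\ncl{\FL{\MAP{x}_{\xi,\sigma}}}{\sigma}$ within at most $i$ applications of the closure operation \emph{iff} $\MMM\models\iReachable{x}{y}{i}$. The left-to-right direction fails, because closure stages count only unfoldings through the store, whereas the index of $\iReachable{x}{y}{i}$ is also consumed by purely structural descent through pairs and injections. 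Concretely, take $x$ of type $\refType{\NAT}\times\refType{\NAT}$ with $\MAP{x}_{\xi,\sigma}=\PAIR{l_1}{l_2}$ and $y$ denoting $l_1$: then $l_1\in\FL{\MAP{x}_{\xi,\sigma}}=\ncl{\FL{\MAP{x}_{\xi,\sigma}}}{\sigma}^{(0)}$, so $y$ enters after zero closure steps, yet $\iReachable{x}{y}{0}\LITEQ x=y$ is false ($x$ is a pair, $y$ a reference, and equations between terms of unmatchable types are false). The correct relationship is asymmetric: the index of $\iReachable{}{}{}$ bounds the number of store unfoldings from above, but may exceed it by the structural depth of the types involved.

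The repair is to decouple the two counters: either prove the two implications with different indices (if $\iReachable{x}{y}{i}$ holds then $y$ enters within $i$ stages; conversely, if $y$ enters within $k$ stages then $\iReachable{x}{y}{j}$ holds for some $j$ bounded by $k$ plus a structural-depth bound, which exists because $\alpha$ is finite), or abandon exact stage-tracking altogether. The paper's proof does the latter and treats the two directions asymmetrically: the ``if'' direction is a plain induction on $i$, using only that each defining clause of $\iReachable{x}{y}{i}$ is absorbed by reachability (e.g.\ $\pi_1(x)\reachable y$ implies $x\reachable y$); the ``only if'' direction is argued by contraposition, observing that when $\alpha$ is finite any witness of $x\reachable y$ is a finite chain (either $x=y$, or $y$ is the content of a reference reachable from $x$), which yields some $i$ with $\iReachable{x}{y}{i}$. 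That route never needs a stage-exact correspondence, which is precisely where your version breaks. A secondary point: your induction should be organised on $i$ (or lexicographically on the pair of $i$ and $\alpha$) rather than on $\alpha$ alone, since the last disjunct of the product and sum clauses, $\iReachable{x}{y}{n}$, recurses at the \emph{same} type; only the index decreases in every recursive call.
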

\begin{proof} 
For the ``if'' direction,
we show, by induction on $i$, 
$\iReachable{x^\alpha}{y}{i} 
\ENTAILS
x^\alpha\reachable y$.
For the base case, we have $i=0$, in which case 
$\iReachable{x^\alpha}{y}{0} 
\ENTAILS 
x=y
\ENTAILS 
x\reachable y$. 

For induction, let the statement holds up to $n$. We only show the case of a
product. Other cases are similar.
\begin{reasoning}
\iReachable{x^{\alpha_1\times\alpha_2}}{y}{n+1} 
&\;\;\THENS\;\;&
\OR_i \iReachable{\pi_i(x)}{y}{n}\ \OR\
\iReachable{x}{y}{n}\\
&\;\;\THENS\;\;&
\OR_i \pi_i(x)\reachable y\ \OR\
x\reachable y
\end{reasoning}
But if 
$\pi_1(x)\reachable y$ 
then 
$x\reachable y$
by the definition of reachability. 
Similarly when $\pi_2(x)\reachable y$, hence done.

For the converse, we show the contrapositive, showing:
\[
\MMM\models \neg \exists i.\iReachable{x^\alpha}{y}{i} 
\quad\THENS\quad
\MMM\models \neg x^\alpha\reachable y
\]
If we have
$\MMM\models \neg \exists i.\iReachable{x^\alpha}{y}{i}$
with $\alpha$ finite,
then 
the reference $y$ is not among references
reachable from $x$ (if it is, then either $x=y$
or $y$ is the content of a reference reachable from $x$
because of the finiteness of $\alpha$, so that
we can find some $i$ such that $\MMM\models \iReachable{x^\alpha}{y}{i}$), 
hence done. 
\end{proof}
Now let us define the predicate $x^\alpha\reachable^\circ y^{\refType{\beta}}$
with $\alpha$ finite, by the axioms given in 
Proposition \ref{pro:notreach}
which we reproduce below ($C\in\ASET{\UNIT, \BOOL, \NAT}$).
\[
\begin{array}{rcl}
       \reachB{x^C}{y^{\refType{\beta}}}
                & \LITEQ & 
        \falsity\\
       \reachB{x^{\alpha_1\times\alpha_2}}{y^{\refType{\beta}}}
                & \LITEQ & 
       \exists x_{1, 2}.(x = \PAIR{x_1}{x_2} \AND \OOR_{i = 1, 2} \reachB{x_i}{y})\\
       \reachB{x^{\alpha_1 + \alpha_2}}{y^{\refType{\beta}}}
                & \LITEQ & 
       \exists x'.( (\OOR_{i = 1, 2} x=
\INJtp{i}{x'}{})\ANDl \reachB{x'}{y})\\
        \reachB{x^{\refType{\alpha}}}{y^{\refType{\beta}}}
                & \LITEQ & 
        x = y \ \OR \ \reachB{!x}{y}
\end{array}
\] 
The inductive definition is possible due to finiteness. We now show:

\begin{proposition}\label{prop:B}
If $\alpha$ is finite, then the logical equivalence,  
$\reachB{x^{\alpha}}{y^{\refType{\beta}}}
\LITEQ
\exists i.\iReachable{x^{\alpha}}{y^{\refType{\beta}}}{i}$, is
valid.
\end{proposition}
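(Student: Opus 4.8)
The plan is to prove Proposition~\ref{prop:B} by induction on the finite type $\alpha$, showing for each clause that the inductively-defined predicate $\reachB{x^\alpha}{y}$ and the step-indexed predicate $\exists i.\iReachable{x^\alpha}{y}{i}$ agree in every model. Since both definitions are given by recursion on the type structure of $\alpha$, the natural strategy is to match the two clause-by-clause, using the induction hypothesis that the equivalence already holds for the strictly smaller finite types $\alpha_1,\alpha_2$ (or the content type in the reference case). Note first that both predicates are only defined when $\alpha$ is finite, so the induction is well-founded on the (finite) type.

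First I would treat the base case $\alpha\in\ASET{\UNIT,\BOOL,\NAT}$. Here $\reachB{x^C}{y}\LITEQ\falsity$ by definition, while $\iReachable{x^C}{y}{0}\LITEQ x=y$ which is $\falsity$ because a reference $y$ of type $\refType{\beta}$ can never equal a datum of base type (equations between terms of unmatchable types evaluate to $\falsity$, as established in \S\ref{sub:assertion}), and $\iReachable{x^C}{y}{n+1}\LITEQ\falsity$ directly; hence $\exists i.\iReachable{x^C}{y}{i}\LITEQ\falsity\LITEQ\reachB{x^C}{y}$. For the inductive cases I would compare the recursion clauses. In the product case, $\reachB{x^{\alpha_1\times\alpha_2}}{y}\LITEQ\exists x_1x_2.(x=\PAIR{x_1}{x_2}\AND\OOR_i\reachB{x_i}{y})$, and I would unfold $\exists i.\iReachable{x}{y}{i}$ observing that $\iReachable{x}{y}{n+1}\LITEQ\OR_i\iReachable{\pi_i(x)}{y}{n}\OR\iReachable{x}{y}{n}$. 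Using that a datum of product type is always a pair (so $\pi_i(x)$ picks out $x_i$) together with the induction hypothesis $\reachB{x_i}{y}\LITEQ\exists i.\iReachable{x_i}{y}{i}$, both sides reduce to ``$y$ is reachable from one of the components,'' and the extra disjunct $\iReachable{x}{y}{n}$ only contributes the $x=y$ possibility, which at product type is again $\falsity$. The sum and reference cases go the same way, using Proposition~\ref{pro:notreach}~3-(2) (injection) and 3-(3) (dereference) respectively to align the clauses.

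The cleanest way to organise this may be to route the argument through Proposition~\ref{prop:A} rather than re-proving the semantic content: since Proposition~\ref{prop:A} already gives $x^\alpha\reachable y\IFF\exists i.\iReachable{x^\alpha}{y}{i}$ for finite $\alpha$, it suffices to show $\reachB{x^\alpha}{y}\LITEQ x^\alpha\reachable y$, i.e. that the axiomatic unfolding in Proposition~\ref{pro:notreach} faithfully computes reachability at finite types. This is exactly the content of the elimination Theorem~\ref{theorem:elimination}, whose inductive rules are the clauses defining $\reachB{\cdot}{\cdot}$; so Proposition~\ref{prop:B} can be seen as the key lemma feeding that theorem, and the induction on $\alpha$ establishes each defining clause against the semantics of $\reach$ directly from Proposition~\ref{pro:notreach}.

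The main obstacle I anticipate is the reference case $\reachB{x^{\refType{\alpha}}}{y}\LITEQ x=y\OR\reachB{!x}{y}$ versus $\iReachable{x^{\refType{\alpha}}}{y}{n+1}\LITEQ\iReachable{!x}{y}{n}\OR\iReachable{x}{y}{n}$: one must check that the semantic label-closure $\ncl{\FL{\MAP{x}}}{\sigma}$ really is captured by ``either $x$ itself is $y$, or $y$ is reachable from the content $!x$,'' which requires that the finiteness of $\alpha$ bounds the depth of dereferencing so that the step index $i$ can always be found; the subtle point is that $\refType{\alpha}$ is finite only when $\alpha$ is finite, so the recursion on $!x:\alpha$ stays within the finite fragment and the induction hypothesis applies. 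I would handle this by invoking Proposition~\ref{pro:notreach}~3-(3,4) to move between reachability to $x$ and reachability to $!x$, and Lemma~\ref{lem:ncl} to ensure the label closure decomposes correctly along the single dereference step.
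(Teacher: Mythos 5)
Your clause-matching strategy is essentially the paper's, but the paper splits the two directions across \emph{different} inductions, and that split is not cosmetic. The paper proves $\iReachable{x^{\alpha}}{y^{\refType{\beta}}}{i} \ENTAILS \reachB{x^{\alpha}}{y^{\refType{\beta}}}$ by induction on the index $i$, and only the converse by induction on $\alpha$. Your proposal runs both directions inside a single induction on $\alpha$, and there the forward direction has a gap: every unfolding $\iReachable{x}{y}{n+1}$ contains the self-referential disjunct $\iReachable{x}{y}{n}$, which concerns the \emph{same} type $\alpha$ at a smaller index, so your type-induction hypothesis says nothing about it. Your remark that this disjunct ``only contributes the $x=y$ possibility'' is true, but it is itself a claim that needs an induction on $n$ at fixed $\alpha$: unfolding $\iReachable{x}{y}{n}$ once more reproduces the component/dereference disjuncts \emph{and the same disjunct again} at index $n-1$. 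As organized, then, the argument is incomplete. The repair is easy: either nest an induction on the index inside each type case, or do what the paper does and prove the whole direction from $\iReachable{x}{y}{i}$ to $\reachB{x}{y}$ by induction on $i$ alone---there every disjunct of the unfolding is covered uniformly by the hypothesis at index $n$, and the clauses defining $\reachB{x}{y}$ absorb the results---reserving the type induction for the converse, where $\reachB{x}{y}$ recurses only on strictly smaller types and one merely has to exhibit a sufficiently large index.

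A further caution concerns your suggested shortcut via Proposition \ref{prop:A}: reducing the statement to $\reachB{x^{\alpha}}{y} \LITEQ x^{\alpha} \reachable y$ is legitimate provided you then prove that equivalence directly, by induction on $\alpha$ from the axioms of Proposition \ref{pro:notreach} (reflexivity, transitivity and 3-(1)--(4) do yield each clause of $\reachB{x^{\alpha}}{y}$, with 2-(1) giving the base case). But you cannot discharge it by citing Theorem \ref{theorem:elimination}: in the paper that theorem is established \emph{from} Propositions \ref{prop:A} and \ref{prop:B}, via the corollary that follows them, so appealing to it here would be circular.
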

\begin{proof}
$\iReachable{x^{\alpha}}{y^{\refType{\beta}}}{i}
\ENTAILS
\reachB{x^{\alpha}}{y^{\refType{\beta}}}$
is by induction on $i$. The converse is by induction on $\alpha$.
Both are mechanical and omitted.
\end{proof}

\begin{corollary}
If $\alpha$ is finite, then the logical equivalence
$
\REACH{x^{\alpha}}{y^{\refType{\beta}}}
\LITEQ
\reachB{x^{\alpha}}{y^{\refType{\beta}}}
$ 
is valid, i.e. $\reachable$ is completely characterised by the axioms 
for $\reachB{}{}$ given above.
\end{corollary}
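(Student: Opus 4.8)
The corollary is an immediate syllogism combining the two preceding propositions, both of which are already established for finite $\alpha$ and both of which factor the target equivalence through the common intermediate predicate $\exists i.\iReachable{x^{\alpha}}{y^{\refType{\beta}}}{i}$. The strategy is therefore simply to chain these equivalences: Proposition \ref{prop:A} supplies
\[
\REACH{x^{\alpha}}{y^{\refType{\beta}}}
\;\LITEQ\;
\exists i.\iReachable{x^{\alpha}}{y^{\refType{\beta}}}{i},
\]
valid in every model whenever $\alpha$ is finite, while Proposition \ref{prop:B} supplies
\[
\reachB{x^{\alpha}}{y^{\refType{\beta}}}
\;\LITEQ\;
\exists i.\iReachable{x^{\alpha}}{y^{\refType{\beta}}}{i},
\]
under the same finiteness hypothesis.

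First I would observe that $\LITEQ$ denotes logical equivalence (Notation \ref{con:assertions}(2)), which is symmetric and transitive as a relation on formulae because $\ENTAILS$ is reflexive and transitive and $C \LITEQ C'$ abbreviates $(C \ENTAILS C') \AND (C' \ENTAILS C)$. Then the single remaining step is to transport the equivalence of Proposition \ref{prop:B} backwards and paste it onto that of Proposition \ref{prop:A}, yielding
\[
\REACH{x^{\alpha}}{y^{\refType{\beta}}}
\;\LITEQ\;
\reachB{x^{\alpha}}{y^{\refType{\beta}}}.
\]
Since both input equivalences are asserted to hold in an arbitrary model, the composite equivalence holds in an arbitrary model, i.e.\ it is valid; this is exactly the claim. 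The final clause of the statement, that $\reachable$ is \emph{completely characterised} by the inductive axioms for $\reachB{}{}$, is then just a restatement: the axioms listed before Proposition \ref{prop:B} define $\reachB{}{}$ by structural recursion on the finite type $\alpha$, and the corollary says this definition coincides semantically with the primitive reachability predicate.

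\textbf{Where the work really sits.} There is no genuine obstacle at the level of the corollary itself — it is pure transitivity. The only point worth flagging explicitly is that the finiteness of $\alpha$ is load-bearing and enters through \emph{both} propositions, not just one: in Proposition \ref{prop:A} it is what guarantees that an unreachable $y$ fails $\iReachable{x^{\alpha}}{y}{i}$ for \emph{every} $i$ (so that the contrapositive argument closes), and in Proposition \ref{prop:B} it is what licenses the well-founded recursion defining $\reachB{}{}$ on the structure of $\alpha$. Accordingly, when writing the proof I would simply cite Propositions \ref{prop:A} and \ref{prop:B} and conclude by transitivity of $\LITEQ$, without reproving either; the substantive content — the unbounded-quantifier collapse $\exists i.\iReachable{}{}{i}$ and its matching against the recursive axioms — has already been discharged upstream and, by Proposition \ref{pro:notax}, genuinely requires the finiteness restriction since no such characterisation exists at arrow types.
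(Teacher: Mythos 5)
Your proof is correct and matches the paper's own argument, which likewise derives the corollary immediately from Propositions \ref{prop:A} and \ref{prop:B} by chaining the two equivalences through the common intermediate $\exists i.\iReachable{x^{\alpha}}{y^{\refType{\beta}}}{i}$. Your additional remarks on where finiteness enters are accurate but not needed for the step itself.
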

\begin{proof}
Immediate from Propositions \ref{prop:A} and  \ref{prop:B}.
\end{proof}

\subsection{Proof of Proposition \ref{pro:notreachfunc}}
\label{app:notreachfunc}
\begin{propositionapp}{\ref{pro:notreachfunc}}{}  
For an arbitrary $C$, the following is valid
with $i, \TVX$ fresh: 
$$
\allworlds\ASET{C \AND x\noreach
fy\VEC{w}}\APPs{f}{y}\!=\!z\ASET{C'}@\VEC{w} 
\; \ENTAILS\; 
\allworlds \forall \TVX,
i^{\TVXscript}.  \ASET{C \AND x\noreach fiy\VEC{w}}\APPs{f}{y}\!=\!z
\ASET{C'\AND x\noreach fiyz\VEC{w}}@\VEC{w}$$
\end{propositionapp}

\begin{proof} 
The proof traces the transition of state using the elementary fact
that the set of names and labels in a term always gets smaller as
reduction goes by. 
Suppose we have
$$
\MMM\models \allworlds \ASET{x\!\notreach\!fyw\AND
C}\APPs{f}{y}\!=\!z\ASET{C'}@w$$
The definition of the evaluation formula says:
\begin{center}
$
(\MMM\EVOLVES \MMM_0 \ANDl 
\MMM_0\models x\!\notreach\!fywi\AND C)\ 
\ENTAILS \ 
\exists \MMM'. (\MMM[z:fy]\converges\MMM'\AND \MMM'\models C')$. 
\end{center}
We prove such $\MMM'$ always satisfies  
$\MMM'\models x\notreach fiyzw$. 
Assume 
$$\MMM_0 \WB (\new \VEC{l})(\xi, \sigma_0\uplus \sigma_x)$$ 
with $\xi(x)=l$, $\xi(y)=V_y$, $\xi(f)=V_f$ and  
$\xi(w)=l_w$ such that 
$$\ncl{\FL{V_f,V_y,l_w}}{\sigma_0\uplus \sigma_x}=
\FL{\sigma_0}=\dom{\sigma_0}$$ 
and $l_x\in \dom{\sigma_x}$. 
By this partition, during evaluation of $z:fy$, 
$\sigma_{x}$ is unchanged, i.e.  
$$
(\new \VEC{l})(\xi\cdot z:fy, \sigma_0\uplus \sigma_x)
\MRED (\new \VEC{l})(\xi\cdot z:V_fV_y, \sigma_0\uplus \sigma_x)
\MRED (\new \VEC{l}')(\xi\cdot z:V_z, \sigma_0'\uplus \sigma_x)
$$
Then obviously there exists $\sigma_1$ such that 
$\sigma_1\subset \sigma_0'$ and 
$$\ncl{\FL{V_z,l_w}}{\sigma_0'\uplus \sigma_x}=
\FL{\sigma_1}=\dom{\sigma_1}$$
Hence by Proposition \ref{pro:partition}, we have 
$\MMM_0\models x\!\notreach\! fyiwz$, completing the proof. 
\end{proof}

\subsection{Proof of Propositions \ref{pro:localinv:firstorder}}
\label{app:proof:AHI}

\begin{propositionapp}{\ref{pro:localinv:higherorder}}{}
Assume 
$C_0\!\LITEQ C'_0\AND\VEC{x}\notreach iy\AND \VEC{g}\reachable^\bullet \VEC{x}$,\
$C'_0$ is stateless except $\VEC{x}$,\    
$C$ is anti-monotone,
$C'$ is monotone,
$i,m$ are fresh and 
$\{\VEC{x},\VEC{g}\} \cap (\FV{C, C'}\cup\ASET{\VEC{w}})=\emptyset$.  
Then the following is valid: 
\[
\begin{array}{ll}
\text{\AIHax}\quad \quad 
&
\forall \TVX.\forall i^\TVX.
\APP{m}{()}\!=\!u
\ASET{(\new\VEC{x}.\exists \VEC{g}.E_1)\AND E} 
\; \ENTAILS \; 
\forall \TVX.\forall i^\TVX.
\APP{m}{()}\!=\!u\ASET{E_2\AND E} 
\end{array}
\]
with
\begin{enumerate}[$\bullet$]
\item 
$E_1 
\LOGICEQ 
\LocalInv{u}{C_0}{\VEC{x}}
\AND 
\allworlds \forall yi.\eval{C_0\ANDs C}{u}{y}{z}{C'}@\VEC{w}\VEC{x}$\ 
\ and
\item   
$E_2 \LOGICEQ 
\allworlds 
\forall y.\eval{C}{u}{y}{z}{C'}@\VEC{w}$.
\end{enumerate}
\end{propositionapp}

\begin{proof}
W.l.o.g. we assume all vectors are unary, setting
$\VEC{r}=r$, 
$\VEC{w}=w$,
$\VEC{x}=x$  and 
$\VEC{g}=g$. 
The proof proceeds as follows, starting from the current model $\MMM_0$.
\begin{enumerate}[{\bf St{a}ge 1.}]
\item
We take $\MMM$ such that:
$$
\MMM_0\EVOLVESinc{m}\MMM
$$
We then take off the hiding, name it $x$ and the result is called
$\MMM_{\ast}$
$$
(\new l)(\MMM_{\ast}/x )\WB \MMM.
$$
\item
We further let $\MMM$ evolve so that:
$$
\MMM\EVOLVESinc{u}\MMM'
$$
We then again take off the corresponding hiding, name it $x$ and the result is called
$\MMM'_{\ast}$
$$
(\new l)(\MMM'_{\ast}/x )\WB \MMM'
$$
\item
We show if $\MMM_\ast$ satisfies $C_0$ then again $\MMM'_\ast$ satisfies $C_0$ again:
$$
\MMM_\ast\models C_0
\qquad\ENTAILS\qquad
\MMM'_\ast\models C_0
$$
using $\LocalInv{u}{C_0}{\VEC{x}}$ as well as the unreachability of $x$ from $u$.
\end{enumerate}
By reaching Stage 3, we know if 
$\MMM\models C$ then it is also the case
$\MMM_\ast\models C_0\AND C$ hence we can use the assumption 
(together with monotonicity of $C'$): 
$$
\forall yi.\eval{C_0\ANDs C}{u}{y}{z}{C'}@\VEC{w}\VEC{x} 
$$
hence we know we arrive at $C'$ as a result.

We now implement these steps. We set:
\begin{eqnarray}
E   & \LITEQ &\truth.
\end{eqnarray}
The trivialisation of $E$ (taken as truth) 
is just for simplicity and does not affect the argument. Now fix an arbitrary $\MMM_0$ and suppose we reach:
\begin{equation}
\MMM_0\EVOLVESinc{m}\MMM
\end{equation}
This gives the status of the post-condition of the whole formula
(to be precise this is through the encoding in (\ref{therethereagain})
in \S~\ref{subsec:soundness} to relate $\APP{m}{()}$ and 
the transition above). Assuming the hidden $x$ in the formula in $E_1$ is about 
a (fresh) $l$ we can set:
\begin{equation}
\MMM
\;\DEFEQ\;
(\new l)(\new \VEC{l}')(\xi,{\sigma}\cdot[{l}\mapsto{V}])
\;\models\;
\new x.\exists g.E_1
\end{equation}
as well as by revealing $l$:
\begin{equation}
\MMM_\ast
\;\DEFEQ\;
(\new \VEC{l}')(\xi\CD x:l\CD g:U,{\sigma}\cdot[{l}\mapsto{V}])
\;\models\;
E_1
\end{equation}
Note by assumption we have:
\begin{equation}
l\not\in\FL{\xi,\sigma}.
\end{equation}
Further $U$ does not contain any hidden or free locations from $\MMM$
by $g\reach^\bullet\VEC{x}$.

Now we consider 
the right-hand side of $E_1$, 
$\allworlds \forall yi.\eval{C_0\ANDs C}{u}{y}{z}{C'}@\VEC{w}\VEC{x}$
by taking for fresh $N$:
\begin{equation}
\MMM[f:N] \converges \MMM'
\end{equation}
Corresponding to the relationship between $\MMM$ and $\MMM_\ast$ we set:
\begin{equation}
\MMM_\ast[f:N] \converges \MMM'_\ast
\end{equation}
Note we have
\begin{equation}
(\new l)(\MMM'_\ast/xg) \WB \MMM'
\end{equation}
We now show:
\begin{equation}
\MMM_\ast \models C_0 
\quad\ENTAILS\quad
\MMM'_\ast \models C_0 
\end{equation}
that is $C_0$ is invariant under the evaluation (effects) of $N$.
Assume 
\begin{equation}
\MMM_\ast\models C_0
\end{equation}
First observe 
\begin{equation}
\MMM_\ast\models C_{0} \ANDl {x} \notreach y{r}{w}
\end{equation}
Now in the standard way $N$ can be approximated by a finite term, that is a term which
does not contain recursion except divergent programs. We take $N$ as such an approximation
without loss of generality. Such $N$ can be
written as a sequence of let expressions including assignments.
Without loss of generality we focus on a ``let'' expression which is
either a function call or an assignment. Then at each evaluation we have either:
\begin{enumerate}[$\bullet$]
\item The let has the form $\LET{x}{uV}{M'}$ that is it invokes $u$; 
\item The let has the form $\LET{x}{WV}{M'}$ where $W$ is not $u$.
\item The let has the form $w':=V;M'$.
\end{enumerate}
We observe:
\begin{enumerate}[$\bullet$]
\item
In the first case $u$ is directly invoked: thus
by the invariance $\LocalInv{u}{C_0}{\VEC{x}}$,
$C_0$ continues to hold. Note $w'$ is not $x$ since 
$N$ has no access to $x$ except through $u$.
\item
In the second case of the let (i.e. $u$ is not called),
since $x$ is disjoint from 
all visible data, by Proposition \ref{pro:notreachfunc} 
we know $x$ (hence the content of $x$) is never touched by 
the execution of the function body after the invocation,
until again
$u$ is called (if ever): since
$C_0$ is insensitive to state change except at $x$ (by
being stateless except $x$),
it continues to hold again in this case.
\item
In the third case again $x$ is not touched hence $C_0$ continues to hold.
\end{enumerate}
Thus we have:
\begin{equation} 
\MMM'_\ast\models C_0
\end{equation}
Now suppose we have
\begin{equation} 
\MMM\models C
\end{equation}
By anti-monotonicity of $C$ we have
\begin{equation} 
\MMM_\ast/xg\models C
\end{equation}
By Lemma \ref{lemma:29410} (\ref{lemma:29410:6}), 
we can arbitrarily weaken a disjoint extension (at $x$ and $g$) so that:
\begin{equation} 
\MMM_\ast\models C
\end{equation}
Thus we know:
\begin{equation} 
\MMM'_\ast\models C_0\ANDl C
\end{equation}
Now we can apply:
\begin{equation} 
\MMM'\models 
\new x.\exists g.\forall y.\eval{C_0\ANDs C}{u}{y}{z}{C'}@\VEC{w}x\ 
\end{equation}
by which we know:
\begin{equation} \label{thisisitthisisit}
\MMM'_\ast[z:uy]\converges \MMM''_\ast \models C'
\end{equation}
Accordingly let
\begin{equation} \label{thisisitthisisitanother}
\MMM'[z:uy]\converges \MMM'' \WB (\new l)(\MMM''_\ast/x)
\end{equation} 
for which we know, by (\ref{thisisitthisisit}) and (\ref{thisisitthisisitanother})
together with monotonicity of $C'$:
\begin{equation} 
\MMM'' \models C'
\end{equation}
Hence we know:
\begin{equation} \label{wwwC}
\MMM\models \ASET{C}\APP{u}{y}=z\ASET{C'}@{w}
\end{equation}
which is the required assertion.
\end{proof}

\subsection{Proof of Proposition \ref{pro:nuelim}}
\label{proof:pro:nuelim}
\NI 
\begin{propositionapp}{\ref{pro:nuelim}}{}
Let $x\not\in \FV{C}$ and $m, i, \TVX$ be fresh. Then the following is valid:
\[
\begin{array}{l}
\forall \TVX, i^\TVXscript.
\APP{m}{()}\!=\!u
\ASET{\HIDEe \VEC{x}.([!\VEC{x}]C\AND \VEC{x}\noreach ui^\TVXscript)}
\;\; \ENTAILS\;\; 
\APP{m}{()}\!=\!u
\ASET{C}
\end{array}
\]
\end{propositionapp}
\begin{proof}
For simplicity, set $\VEC{x}$ to be a singleton $x$.
Assume 
$$
\MMM[u:m()]
\converges 
\MMM'
$$
By assumption we can set
$$
\MMM'\WB (\new l)(\new \VEC{l}')(\xi\CD u:V, \sigma\CD l\mapsto W)
$$
such that
$$
(\new \VEC{l}')(\xi\CD u:V\CD x:l, \sigma\CD l\mapsto W)\models
[!{x}]C
$$
where $l$ is not reachable from anywhere else in the model.
By Lemma \ref{lem:subs} we obtain
$(\new \VEC{l}')(\xi\CD u:V, \sigma)\models C$, 
that is 
$\MMM'\models C$, 
as required.
\end{proof}

\subsection{Proof of Proposition \ref{pro:transfer}}
\label{proof:pro:transfer}
Assume  
${C_0}$ is 
stateless 
except
${\VEC{x}}$
and suppose:
\begin{equation}
\MMM\models
\LocalInv{f}{C_0}{\VEC{x}}
\ANDl
\ASET{\truth}\APP{g}{f}=z\ASET{\truth}.
\end{equation}
Further assume $\MMM\EVOLVES \MMM_0$ and 
\begin{equation}
\MMM_0\models {C_0\AND \VEC{x}\noreach g\VEC{r}}
\ \;\text{and}\;\ 
\MMM_0[z:fg]\converges \MMM'.
\end{equation}
By $\LocalInv{f}{C_0}{\VEC{x}}$ we know that once
$C_0$ holds and $f$ is invoked, it continues to hold.  By
$\ASET{\truth}\APP{g}{f}=z\ASET{\truth}$, we know the application
$gf$ always terminates. Now this application invokes $f$ zero or more
times. First time it can only apply $f$ to some $\VEC{x}$-unreachable
datum. Similarly for the second time, since the context cannot obtain
$\VEC{x}$-reachable datum (given $g$ itself is $\VEC{x}$-unreachable).
By induction the same holds up to the last invocation. In each
invocation, $C_0$ is invariant. Further, other
computations in $fg$ never touch the content of $\VEC{x}$,
hence because of ${C_0}$ being stateless except $\VEC{x}$,
we know $C_0$ is again invariant in such computations. Thus we conclude
that $C_0$ still holds in the post-condition, and that the return value
being $\VEC{x}$-unreachable, i.e. $\VEC{x}\noreach z$, as required.\qed

\section{Derivations for Examples in Section \ref{sec:example:1}}
\label{app:examples1}

\NI This appendix lists the derivations omitted in
Section \ref{sec:example:1}.

\subsection{Derivation for $\mathtt{mutualParity}$}

\label{app:mutualfact}
\NI Let us define: 
\[
\begin{array}{rcll}
M_x & \DEFEQ & \lambda n.\IFTHENELSE{y=0}{\mathtt{f}}{\mathtt{not}((!y)(n-1))}\\
M_y & \DEFEQ & \lambda n.\IFTHENELSE{y=0}{\mathtt{t}}{\mathtt{not}((!x)(n-1))}
\end{array}
\]
We also use: 
\[
\begin{array}{lllll}
IsOdd'(u,gh,n,xy) \;\ \LOGICEQ IsOdd(u,gh,n,xy) 
\AND\; !x=g \;\AND\; !y=h\\
IsEven'(u,gh,n,xy) \;\ \LOGICEQ\;
IsEven(u,gh,n,xy) \AND\; !x=g \;\AND\; !y=h
\end{array}
\]
{\small
\begin{myfiguretwo}
\begin{inference}
1.\quad
\ASET{
(n\geq 1\ENTAILS IsEven'(!y,gh,n-1,xy)) 
\ANDl n=0}
\ \mathtt{f}:_z
\ASET{z=Odd(n)\ANDl !x=g \ANDl !y=h}@\emptyset\\ 
\thankstoBig{Const}
2.\quad
\ASET{(n\geq 1\ENTAILS IsEven'(!y,gh,n-1,xy)) 
 \ANDl n\geq 1
}\\
\quad \quad \quad \mathtt{not}((!y)(n-1)):_z
\ASET{z=Odd(n)\ANDl !x=g \ANDl !y=h}@\emptyset 
\thankstoBig{Simple, App}
3.\quad
\ASET{
n\geq 1\ENTAILS IsEven'(!y,gh,n-1,xy)}\\ \quad \quad \quad 
 \IFTHENELSE{n=0}{\mathtt{f}}{\mathtt{not}((!y)(n-1))}:_m
\ASET{z=Odd(n)\ANDl !x=g \ANDl !y=h}@\emptyset 
\thankstoBig{IfH}
4.\quad 
\ASET{\truth}
\ \lambda n.\IFTHENELSE{n=0}{\mathtt{f}}{\mathtt{not}((!y)(n-1))}:_u\\
\quad \quad \quad
\ASET
{ \ 
\allworlds \forall gh, 
n\geq 1.\EVALEFFECT{IsEven'(h,gh,n-1,xy)}{u}{n}{z}
{z=Odd(n)\;\AND\; !x=g \;\AND\; !y=h}
{\emptyset}
\}
@\emptyset\\
\thankstoBig{Abs, $\forall$, Conseq}
5.\quad 
\ASET{\truth}
\ M_x :_u
\ASET
{ \ 
\forall gh, 
n\geq 1.(IsEven(h,gh,n-1,xy) \ENTAILS  IsOdd(u,gh,n,xy))}@\emptyset 
\thankstoBig{Conseq}
6.\quad 
\ASET{\truth}
\ x:=M_x
\ASET
{ \ 
\forall gh, 
n\geq 1.(IsEven(h,gh,n-1,xy) \ENTAILS  IsOdd(!x,gh,n,xy))\ANDl !x=g}@x\\
\thankstoBig{Assign}
7.\quad 
\ASET{\truth}
\ y:=M_y 
\ASET
{ \ 
\forall gh, 
n\geq 1.(IsOdd(g,gh,n-1,xy) \ENTAILS  IsEven(!y,gh,n,xy))\ANDl !y=h}@y
\nothanks
8.\quad 
\ASET{\truth}
\ \mathtt{mutualParity}\\
\quad\quad\{
\forall gh.n\geq 1.(
(IsEven(h,gh,n-1,xy)\AND 
IsOdd(g,gh,n-1,xy) ) \ENTAILS\\  
\quad\quad\quad
(IsEven(!y,gh,n,xy)\AND 
IsOdd(!x,gh,n,xy) \AND !x=g \AND !y=h) \ }@xy
\thankstoBig{$\AND$-Post}
9.\quad 
\ASET{\truth}
\ \mathtt{mutualParity}\\
\quad\quad\{
\forall n\geq 1gh.(
(IsEven(h,gh,n-1,xy)\AND 
IsOdd(g,gh,n-1,xy) \AND !x = g \AND !y = h) \ENTAILS\\  
\quad\quad\quad
(IsEven(!y,gh,n,xy)\AND 
IsOdd(!x,gh,n,xy)\AND !x = g \AND !y= h) \}@xy
\thankstoBig{Conseq}
10.\quad 
\ASET{\truth}
\ \mathtt{mutualParity}\\
\quad\quad\{
\forall n\geq 1gh.(
(IsEven(!y,gh,n-1,xy)\AND 
IsOdd(!x,gh,n-1,xy) \AND !x = g \AND !y = h) \ENTAILS\\  
\quad\quad\quad
(IsEven(!y,gh,n,xy)\AND 
IsOdd(!x,gh,n,xy)\AND !x = g \AND !y= h) \}@xy
\thankstoBig{Conseq}
11.\quad 
\ASET{\truth}
\ \mathtt{mutualParity}\\
\quad\quad\{
\forall n\geq 1.(
\exists gh.(IsEven(!x,gh,n-1,xy)\AND 
IsOdd(!y,gh,n-1,xy) \AND !x = g \AND !y = h) \ENTAILS\\  
\quad\quad\quad
\exists gh.(IsEven(!y,gh,n,xy)\AND 
IsOdd(!x,gh,n,xy)\AND !x = g \AND !y= h) \}@xy
\thankstoBig{Conseq}
12.\quad 
\ASET{\truth}
\ \mathtt{mutualParity}
\{\exists gh.IsOddEven(gh,!x!y,xy,n) \}@xy
\end{inference}
\caption{$\mathtt{mutualParity}$ derivations \label{fig:mutualparity}}
\end{myfiguretwo}
}\unskip%
\noindent{}Figure \ref{fig:mutualparity} lists the derivation for 
$\mathtt{MutualParity}$.
In Line 4,  $h$ in the evaluation formula
can be replaced by $!y$ and vice versa 
because of $!y = h $ and the universal 
quantification of $h$. 
\[ 
\forall h.(!y = h \AND \{ C \}h\bullet n = z  \{ C' \}) 
\quad \LITEQ \quad  
\forall h.(!y = h \AND \{ C \}(!y)\bullet n = z  \{ C' \}) 
\]
In Line 5, we use the following axiom for the evaluation 
formula from \cite{GLOBAL}: 
\[ 
\ASET{C\AND A}\ \APP{e_1}{e_2}=z\ASET{C'}
\quad \LITEQ \quad  
A\;\ENTAILS\;\ASET{C}\APP{e_1}{e_2}=z\ASET{C'}
\]
where $A$ is stateless
and we set $A= IsEven(h,gh,n-1,xy)$. 
Line 9 is derived as Line 4 by replacing $h$ and $g$ by $!y$ and $!x$,
respectively.
Line 11 is the standard logical implication 
($\forall x.(C_1 \ENTAILS C_2) 
   \ENTAILS (\exists x.C_1 \ENTAILS \exists x.C_2)$).

\subsection{Derivation for Meyer-Sieber}
\label{app:MS} 
\NI 
For the derivation of (\ref{MSassert}) we use: 
\begin{eqnarray*}
E & \LOGICEQ& \forall f.
(
\allworlds \ASET{\truth}\APP{f}{()}\ASET{\truth}@\emptyset
\ \ENTAILS \ 
\allworlds \ASET{C}\APP{g}{f} \ASET{C'}
)
\end{eqnarray*}
We use the following $[\mbox{\em LetRef}]$ which is derived 
by $[\mbox{\em Ref}]$ where $C'$ is replaced by $[!x]C'$.
\[
         [\mbox{\em LetRef}]
                 \,
        \infer
         {
                 \ASSERT{C}{M}{m}{C_0}
                 \quad 
                 \ASSERT{[!x]C_0\AND !x=m \AND 
 		  \NOTREACH{x}{\VEC{e}}}{N}{u}{C'}
                 \quad 
                 x \notin \PFN{\VEC{e}}
         }
         {
                 \ASSERT{C}{\LET{x}{\REFPROG{M}}{N}}{u}{\nu x.C'}
         }
\]
with $C'$ think w.r.t. $m$. 
The derivation follows. Below $M_{1,2}$ is the body of the
first/second lets, respectively. 
\begin{DERIVATION}
        \LINE
          {1.}
          {
          \ASET{Even(!x)\AND [!x]C'}
  	\ \IFTHENELSE{even(!x)}{()}{\DIVPROG()} \
  	\ASET{[!x]C'}
  	@\emptyset
          }
  	{(If)}
          \LINE
          {2.}
          {
          \ASET{[!x]C}
  	\ gf\ 
         \ASET{[!x]C'}	  
          }
  	{(cf. \S~\ref{sub:GF})}
          \LINE
          {3.}
          {
          \ASET{Even(!x)\AND [!x]C}
  	\ gf\ 
          \ASET{Even(!x)\AND [!x]C'}	  
          }
  	{(2, Inv)}
          \LINE
          {4.}
          {
          \ASET{E\AND [!x]C \AND Even(!x) \AND x\noreach gi}
  	\mathtt{let} \ f = ... \ \mathtt{in} \ (gf;...) 
  	\ASET{[!x]C'\AND x\noreach i}
          }
  	{(3, Seq, Let)}
          \LINE
          {5.}
          {
          \ASET{E\AND C}
  	 \ \mathtt{MeyerSieber}\ 
          \ASET{\HIDEe x.([!x]C' \AND  x\noreach i)}
  	}
  	{(4, LetRef)}
          \LASTLINE
          {6.}
          {
          \ASET{E\AND C}
  	\ \mathtt{MeyerSieber}\ 
          \ASET{C'}
  	}
 	{(9, Prop.~\ref{pro:nuelim})}
\end{DERIVATION}

\subsection{Derivation for Object}
\label{app:object}
We need the following generalisation. 
The procedure $u$ in {\sf (AIH)} is of  function type
$\alpha\Rightarrow \beta$: 
when values of other types such as $\alpha\times \beta$ 
or $\alpha+ \beta$ are returned,
we can make use of a generalisation.  
For simplicity we restrict our
attention to the case when types do not contain recursive or 
reference types.
\begin{eqnarray*}
\LocalInv{u^{\alpha\times\beta}}{C_0}{\VEC{x}}
&\LOGICEQ&
\AND_{i=1,2}\LocalInv{\pi_i(u)}{C_0}{\VEC{x}}\\[1mm]
\LocalInv{u^{\alpha+\beta}}{C_0}{\VEC{x}}
&\LOGICEQ&
\AND_{i=1,2}\forall y_i.(u=\injection_i(y_i)\ENTAILS
\LocalInv{y_i}{C_0}{\VEC{x}})\\
\LocalInv{u^{\alpha}}{C_0}{\VEC{x}}
&\LOGICEQ&
\truth\qquad (\alpha \in \{ \UNIT,\NAT,\BOOL\})
\end{eqnarray*}
Using this extension, we can generalise {\sf (AIH)} so that 
the cancelling of $C_0$ is possible for all components of $u$. 
For example, if $u$ is a pair of functions, those two functions
need to satisfy the same condition as in {\sf (AIH)}. This is what we
shall use for $\mathtt{cellGen}$. We call the resulting generalised axiom
{\sf (AIH${}_\mathsf{c}$)}.

Let $\mathtt{cell}$ be the internal $\lambda$-abstraction of
$\mathtt{cellGen}$. First, it is easy to obtain:
\begin{equation}
\ASET{\truth}
\  \mathtt{cell}:_o\
\ASET{
I_0 
\ANDl
G_1
\ANDl
G_2
\ANDl
E'
}
\end{equation}
where, with $I_0\LOGICEQ !x_0=!x_1 \ANDl x_0\noreach iw$ 
(noting $x \noreach v \LITEQ \truth$) 
and $E'\LOGICEQ !x_0=z$.
\begin{eqnarray*}
G_1 &\LOGICEQ & \allworlds \ASET{I_0}\APP{\pi_1(o)}{()}=v\ASET{v=!x_0\AND I_0}@\emptyset\\
G_2 &\LOGICEQ & \allworlds \forall w.\ASET{I_0}\APP{\pi_1(o)}{w}\ASET{!x_0=w\AND I_0}@x_0x_1
\end{eqnarray*}
which will become, after taking off the invariant $I_0$:
\begin{eqnarray*}
G'_1 &\LOGICEQ & \allworlds \APP{\pi_1(o)}{()}=v\ASET{v=!x_1}@\emptyset\\
G'_2 &\LOGICEQ & \allworlds \forall w.\APP{\pi_1(o)}{w}\ASET{!x_0=w}@x_0.
\end{eqnarray*}
Note $I_0$ is stateless except for $x_0$.
In $G_1$, notice the empty effect set means $!x_1$ does not change from
the pre to the post-condition. 
We now present the inference. 
Below we set 
$\mathtt{cell}'\DEFEQ\LET{y}{\REFPROG{0}}{\mathtt{cell}}$ 
and $i,k$ fresh. 
\begin{DERIVATION}
          \LINE
          {1.}
         {
 	   \ASET{\truth}
 	   \  \mathtt{cell}:_o\
 	   \ASET{
 	     I_0 
 	     \AND
 	     G_1
 	     \AND
 	     G_2
 	     \AND
 	     E'
 	   }
          }
  	 {}
          \LINE
 	 {2.}
 	 {
 	   \ASET{\truth}
 	   \  \mathtt{cell}':_o\
 	   \ASET{
 	     !x_0=!x_1
 	     \AND
	     G_1
 	     \AND
 	     G_2
 	     \AND
 	     E'
	   }
 	 }
  	 {(LetRef)}
          \LINE
 	 {3.}
 	 {
 	   \ASET{\truth}
 	   \,  \LET{x_1\!}{\!z}{\!\mathtt{cell}'}:_o
 	   \ASET{
 	     \new x_1.(I_0
	     \!\AND\!
 	     G_1
 	     \!\AND\!
 	     G_2
 	      )
 	     \AND
 	     E'
 	   }
 	   \!\!\!
 	   \!\!\!
 	   \!\!\!
 	 }
  	 {(LetRef)}
          \LINE
 	 {4.}
 	 {
 	   \ASET{\truth}
 	   \  \LET{x_1}{z}{\mathtt{cell}'}:_o\
 	   \ASET{
 	     G'_1
 	     \AND
 	     G'_2
 	     \AND
 	     E'
 	   }
 	   \!\!\!\!\!
 	 }
  	 {(AIH${}_\mathsf{c}$, ConsEval)}
         \LINE
 	 {5.}
 	 {
 	   \ASET{\truth}
 	   \  \LET{x_{0,1}}{z}{\mathtt{cell}'}:_o\
 	   \ASET{
 	     \new x.
	     (\
 	     x\noreach k \!\AND\!
 	     Cell(o,x) \AND !x = z 
 	     \ )
 	   }
 	   \!\!\!\!\!
 	 }
  	 {(LetRef)}
          \LASTLINE
 	 {6.}
 	 {
 	   \ASET{\truth}
 	   \  \mathtt{cellGen}:_u\
 	   \ASET{
 	     CellGen(u)	     
 	   }\ .
 	 }
  	 {(Abs)}
\end{DERIVATION}

\end{document}